\documentclass[12pt]{article}
\usepackage{amsmath,amsthm,amssymb,amsfonts}
\usepackage{mathtools}
\usepackage{graphicx}
\usepackage{booktabs,multirow,array}
\usepackage{enumitem}
\usepackage{natbib}
\usepackage[T1]{fontenc}
\usepackage{xcolor}
\usepackage[colorlinks,linkcolor=red,anchorcolor=blue,citecolor=blue,urlcolor=blue]{hyperref}
\usepackage{bookmark}
\usepackage[margin=1in]{geometry}
\usepackage[protrusion=false,expansion=true]{microtype}
\usepackage[title]{appendix}
\usepackage{tikz}
\usetikzlibrary{arrows.meta,positioning}
\usepackage[parfill]{parskip}
\usepackage{algorithm}
\usepackage[noend]{algpseudocode}

\theoremstyle{plain}
\newtheorem{theorem}{Theorem}[section]
\newtheorem{lemma}[theorem]{Lemma}
\newtheorem{proposition}[theorem]{Proposition}
\newtheorem{corollary}[theorem]{Corollary}
\theoremstyle{definition}

\newtheorem{assumption}[theorem]{Assumption}

\theoremstyle{remark}
\newtheorem{remark}[theorem]{Remark}

\newcommand{\E}{\mathbb{E}}
\newcommand{\R}{\mathbb{R}}
\newcommand{\Pn}{\mathbb{P}_n}
\newcommand{\Pstar}{\mathbb{P}^\star}
\newcommand{\ip}[2]{\left\langle #1,\,#2\right\rangle}
\newcommand{\norm}[1]{\left\lVert #1\right\rVert}

\newcommand{\infnorm}[1]{\left\lVert #1\right\rVert_{\infty}}
\newcommand{\Fnorm}[1]{\left\lVert #1\right\rVert_{F}}
\newcommand{\oneNorm}[1]{\left\lVert #1\right\rVert_1}
\newcommand{\twoninfnorm}[1]{\left\lVert #1\right\rVert_{2,\infty}}
\newcommand{\Thetastar}{\Theta^{\star}}
\newcommand{\Thetahat}{\widehat{\Theta}}
\newcommand{\dt}{d_{\mathrm{t}}}
\newcommand{\dm}{d_{\mathrm{m}}}

\newcommand{\dstar}{d^{\star}}
\newcommand{\rk}{\mathrm{rk}}
\newcommand{\TopK}{\mathcal{S}^{\star}_K}

\newcommand{\cL}{\mathcal{L}}
\newcommand{\rank}{\mathrm{rank}}
\newcommand{\Var}{\mathrm{Var}}

\newif\ifshowrevisions
\showrevisionsfalse
\definecolor{revisionred}{RGB}{190,25,25}
\newcommand{\revisioncolor}{\ifshowrevisions\color{revisionred}\fi}

\title{\Large{\textbf{Low Rank for Rank: Uncertainty-Aware Task-Specific LLM Ranking under Sparse Pairwise Comparisons}}}
 \author{
 Jiachun Li\thanks{Laboratory for Information and Decision Systems, MIT. Email: jiach334@mit.edu.}\qquad
 David Simchi-Levi\thanks{Laboratory for Information and Decision Systems, MIT. Email: dslevi@mit.edu.}\qquad
 Will Wei Sun\thanks{Daniels School of Business, Purdue University. Email: sun244@purdue.edu.}
 }

\date{}
\begin{document}
\maketitle

\begin{abstract}
%\begin{revisionblock}
Pairwise human-preference platforms such as Chatbot Arena have become central to large language model evaluation, yet reliable task-specific ranking remains challenging. Global leaderboards can mask substantial task heterogeneity, while independently estimating rankings for each fine-grained task is statistically unstable under sparse and imbalanced comparisons. We propose a low-rank framework for task-specific LLM ranking from sparse pairwise comparisons, assuming that the task-by-model ability matrix is low-rank so that information can be shared across related tasks while preserving task-specific differences. We introduce a ranking-recovery algorithm that combines convex initialization with pairwise-logistic row and column refinement, obtains max-norm accuracy, and yields task-wise top-$K$ recovery guarantees. Beyond point estimation, we construct cross-fitted one-step estimators for task-specific score gaps, characterize their full joint covariance, and show that they attain the semiparametric efficiency lower bound. For the high-dimensional ranking regime, where ranks and top-$K$ membership depend on many correlated gap signs, we apply Gaussian multipliers to the estimated influence matrix and invert the resulting simultaneous bands into rank intervals and three-way top-$K$ certificates. Experiments on synthetic data and real Arena data show that low-rank sharing improves sample efficiency over independent task-wise Bradley--Terry estimation and enables uncertainty-aware task-specific certification.
%\end{revisionblock}
\end{abstract}

\noindent%
{\it Keywords:} Large language model evaluation; pairwise comparisons; low-rank models; task-specific ranking; uncertainty quantification.
%\vfill

\newpage
\baselineskip=25pt % JASA allows at most 26 lines per page
\section{Introduction}\label{sec:intro}

Pairwise human-preference evaluation has become a central tool for comparing large
language models (LLMs). Platforms such as Chatbot Arena
\citep{chiang2024chatbot} collect side-by-side comparisons of model responses and
aggregate them into public leaderboards, providing a scalable alternative to fixed
benchmark scores. At the same time, modern LLM evaluation is increasingly
task-specific, with benchmarks and evaluation platforms reporting performance separately
for coding, mathematical reasoning, instruction following, multilingual tasks, creative
writing, and other fine-grained categories
\citep{white2025livebench,frick2025prompt,moslem2026dynamic}. This task-specific view
is essential because model strengths are heterogeneous: a model that is strong overall
may not be the best choice for a particular task, user group, or deployment domain.

The central statistical problem is therefore not only to estimate model abilities, but
to make reliable task-specific ranking decisions. Practitioners often want to know
which models can be trusted as top performers for a task, whether an apparent difference
between two models is statistically significant, and which leaderboard claims remain
valid after many comparisons are considered simultaneously \citep{avelar2026prompt, su2026large}. A point leaderboard alone
cannot answer these questions. When comparisons are sparse or imbalanced, especially
within fine-grained task categories, small estimated score gaps near the top-\(K\)
boundary may simply reflect sampling noise. Thus task-specific LLM evaluation calls for
uncertainty-aware ranking, with valid uncertainty quantification for ranks, top-\(K\) membership, and
task-specific model comparisons.

Existing approaches leave a gap. Independent task-wise Bradley-Terry-Luce (BTL)
estimation \citep{bradley1952rank,luce1959individual} respects task heterogeneity, but
can be statistically unstable when each task receives limited or uneven comparisons.
Fully pooling data across tasks reduces variance, but erases the task-specific
heterogeneity that motivates fine-grained evaluation in the first place. Recent
low-rank approach \citep{li2026llmevaluation} provides an attractive compromise by sharing information across
related tasks, but smooth score estimation or inference for fixed functionals does not
by itself solve the ranking problem. Ranks and top-\(K\) membership are nonsmooth
functionals determined by many dependent score-gap signs. Valid task-specific
leaderboards therefore require new tools for boundary-sensitive top-\(K\) recovery,
multiple testing over correlated score gaps, and simultaneous rank certification.

We develop a statistical framework for \emph{certified task-specific leaderboards} from
sparse pairwise comparisons. Let
\(\Theta^\star\in\mathbb{R}^{d_t\times d_m}\) denote the latent task-by-model ability
matrix, where \(\Theta^\star_{t,m}\) is the score of model \(m\) on task \(t\). Each
observation consists of a task \(t\), two models \(m,m'\), and a binary preference
outcome following a BTL-type model that depends on the score difference
\(\Theta^\star_{t,m}-\Theta^\star_{t,m'}\). We assume that \(\Theta^\star\) has
rank \(r\), reflecting shared latent capabilities, such as reasoning,
instruction following, or style sensitivity, across tasks. This setting is more
challenging than standard low-rank matrix completion
\citep{candes2009exact,negahban2012restricted,davenport20141bit}, because each
observation is a binary, within-task comparison depending only on a score difference,
rather than a noisy observation of an individual matrix entry. More importantly, the
main target is not merely low-rank score estimation, but statistically valid ranking and
certification under sparse, non-uniform, and dependent pairwise comparisons.

Moving from score estimation to ranking certification introduces additional
difficulties. First, top-\(K\) accuracy
requires entrywise control of task-model scores, rather than only Frobenius or
prediction-error accuracy. Second, rank and top-\(K\) decisions are determined by the
signs of many score gaps, so uncertainty quantification naturally becomes a multiple
hypothesis testing problem. Third, these score-gap statistics are strongly dependent:
they share tasks, models, comparisons, and low-rank latent factors. As a result,
pointwise confidence intervals or independent testing corrections are not sufficient for
valid leaderboard-level guarantees. 

\begin{figure}[h!]
\centering
\begin{tikzpicture}[
    node distance=0.8cm and 0.8cm,
    box/.style={
        rectangle, rounded corners,
        draw=black!70, thick,
        align=center,
        minimum width=2.55cm,
        minimum height=0.9cm,
        font=\small
    },
    widebox/.style={
        rectangle, rounded corners,
        draw=black!70, thick,
        align=center,
        minimum width=2.9cm,
        minimum height=0.9cm,
        font=\small
    },
    smallbox/.style={
        rectangle, rounded corners,
        draw=black!60,
        align=center,
        minimum width=2.55cm,
        minimum height=0.85cm,
        font=\small
    },
    arrow/.style={->, thick}
]

\node[box] (data) {
Sparse task-specific\\
pairwise comparisons\\
\(\{t_i,m_i,m_i',Y_i\}_{i=1}^n\)
};

\node[box, right=of data] (lowrank) {
Shared low-rank\\
ability matrix\\
\(\Theta^\star\in\mathbb{R}^{d_t\times d_m}\)
};

\node[widebox, right=of lowrank] (entrywise) {
Entrywise error control\\
\(\|\widehat\Theta-\Theta^\star\|_\infty\)
};

\node[smallbox, below=of entrywise] (testing) {
Multiple testing for\\
score-gap hypotheses\\
Sec.~4
};

\node[smallbox, left=0.3cm of testing] (topk) {
Task-wise top-\(K\)\\
accuracy\\
Sec.~3
};

\node[smallbox, right=0.3cm of testing] (certify) {
Simultaneous rank and \\top-$K$ certification\\
Sec.~5
};

\draw[arrow] (data) -- (lowrank);
\draw[arrow] (lowrank) -- (entrywise);
\draw[arrow] (entrywise) -- (testing);
\draw[arrow] (entrywise) -- (topk);
\draw[arrow] (entrywise) -- (certify);
\draw[arrow] (testing) -- (certify);

\end{tikzpicture}
\caption{Paper outline: from sparse pairwise comparisons to certified task-specific leaderboards.}
\label{fig:method-overview}
\end{figure}

Figure~\ref{fig:method-overview} summarizes how we turn low-rank preference estimates
into certified task-specific leaderboard decisions. Our contributions can be summarized as follows.

\begin{itemize}
%\begin{revisionblock}
    \item \textbf{Global-to-local refinement for ranking recovery.}
    We develop a multi-stage ranking procedure that first recovers the shared low-rank geometry through convex initialization and then performs model-row and task-column pairwise-logistic refinements to achieve the entrywise score accuracy. The refinement exploits low-dimensional score equations and model-side recentering for identifiability to upgrade global Frobenius accuracy to uniform max-norm control, which in turn yields task-wise top-\(K\) Hamming and exact-recovery guarantees.
    %We combine a convex low-rank initializer with model-row and task-column pairwise-logistic refinements, exact model centering, and a final top-\(K\) decision rule.  The accompanying analysis explains how local score equations upgrade global Frobenius accuracy to max-norm accuracy, and then converts this resolution into task-wise top-\(K\) Hamming and exact-recovery guarantees.

    \item \textbf{Efficient score-gap inference with an explicit covariance interface.}
    We construct cross-fitted one-step estimators for task-specific score gaps using the
    centered low-rank tangent space and restricted Fisher operator.  We prove the
    semiparametric efficiency bound, show that the estimator attains it jointly, and
    express the covariance of all correlated ranking hypotheses through a common matrix
    of estimated influence values.

   \item \textbf{Dependence-aware simultaneous inference for ranking certification.} We develop a multiplier-bootstrap procedure that propagates the full dependence structure of the estimated influence functions across many correlated score-gap comparisons, while avoiding explicit inversion or factorization of a large covariance matrix. The resulting simultaneous confidence bands are translated directly into rank intervals and three-way top-\(K\) certificates: certified top-\(K\), certified non-top-\(K\), or statistically unresolved.
    
    %\item \textbf{Covariance-adaptive multiple testing and rank certification.}
    %We give an implementable multiplier algorithm that applies common random weights to the influence matrix, thereby retaining the complete empirical dependence among studentized gaps without inverting a large covariance matrix.  Simultaneous bands are inverted into rank intervals and three-way top-\(K\) decisions: certified top-\(K\), certified non-top-\(K\), or statistically unresolved.
%\end{revisionblock}

    \item \textbf{Reliable ranking from sparse real-world comparisons.} Synthetic experiments validate the statistical gains predicted by our theory, while an LM Arena study shows that these gains persist in heterogeneous real-world leaderboard data. By sharing information across ten task categories and thirty leading models, the proposed method substantially improves top-\(K\) recovery when task-level comparisons are scarce and produces markedly more informative rank and top-\(K\) certificates than independent task-wise BTL estimation.

    %\item \textbf{Empirical validation on task-specific leaderboards.}
    %Synthetic and Arena experiments show that low-rank sharing improves top-\(K\) recovery and produces shorter, better-calibrated rank and membership certificates than independent task-wise BTL estimation.
\end{itemize}

\subsection{Related work and positioning.}

\textbf{Arena-style LLM evaluation.}
Arena-style platforms have made pairwise human-preference comparison a standard tool
for evaluating LLMs \citep{chiang2024chatbot,arena2026howitworks,arena2026max}.
Modern leaderboards increasingly report category-specific results for coding, math,
creative writing, and instruction following \citep{arena2026max},
reflecting substantial task-level heterogeneity. Recent work further studies
prompt-dependent leaderboards and routing: \cite{frick2025prompt} learn
prompt-dependent Bradley-Terry coefficients, while \cite{avelar2026prompt, neuhof2026quantifying} develop uncertainty quantification for LLM rankings across prompts or benchmark tasks. These works motivate
fine-grained and uncertainty-aware evaluation, but independent category- or prompt-wise
ranking does not share information across related tasks and can be inefficient under
sparse comparisons. Our framework uses low-rank structure to share information across
tasks while preserving task-specific rankings and providing simultaneous uncertainty
quantification for ranks and top-\(K\) membership.

\textbf{Pairwise ranking, uncertainty quantification, and low-rank inference.}
Our theory builds on pairwise ranking and top-\(K\) recovery under BTL-type models
\citep{bradley1952rank,luce1959individual,hunter2004mm,
chen2015spectral,chen2022partial}, as well as recent uncertainty quantification for
sparse, covariate-assisted, and heterogeneous ranking models
\citep{fan2024care,fan2025uncertainty,fan2026spectral, liu2025uncertainty, lee2026low}. More broadly, a growing literature develops statistical inference for low-rank models under incomplete or sequential observations
\citep{duan2026online,ma2026statistical,wen2026online}.
Related semiparametric ideas have also begun to appear in LLM evaluation, for example
through the use of auxiliary comparison signals to improve efficiency
\citep{dong2026evaluating}.  Closest to our work,
\cite{li2026llmevaluation} study low-rank LLM evaluation and semiparametric efficiency
for smooth functionals of a latent score tensor. In contrast, our paper focuses on
task-specific ranks and top-\(K\) membership, which are nonsmooth ranking functionals
determined by many dependent score-gap signs. Thus the main new ingredients are
ranking-specific: boundary-based top-\(K\) recovery, multiple testing for correlated
score-gap hypotheses, and simultaneous rank/top-\(K\) certification. These ingredients
allow us to move from efficient score estimation to certified task-specific
leaderboards.

\section{Problem Setup}\label{sec:setup}

We formalize task-specific LLM evaluation as a sparse pairwise-comparison problem.
There are \(\dt\) task categories and \(\dm\) candidate models. For each task-model
pair, let \(\Thetastar_{t,m}\) denote the latent ability score of model \(m\) on task
\(t\), and collect these scores in the matrix
$
    \Thetastar \in \R^{\dt\times\dm}.
$
The row index \(t\in[\dt]\) represents a task category and the column index
\(m\in[\dm]\) represents an LLM. We assume that \(\Thetastar\) has rank
\(r\ll \min\{\dt,\dm\}\) and singular value decomposition
\(
    \Thetastar = U^\star \Sigma^\star (V^\star)^\top .
\)
Let \(\sigma_1^\star\ge \cdots \ge \sigma_r^\star>0\) denote the nonzero singular
values collected in \(\Sigma^\star\), and define the
condition number $\kappa := \sigma_1^\star / \sigma_r^\star$.
Throughout the asymptotic theory, \(r\) is fixed, and we impose the signal-strength
condition
\(
    \|\Thetastar\|_F\ge c_{\rm sig}\sqrt{\dt\dm}
\)
for a constant \(c_{\rm sig}>0\). Together with bounded \(\kappa\), this keeps the
smallest nonzero singular value separated at the scale required for estimating the
rank-\(r\) singular subspaces.
This low-rank structure allows information to be shared across related tasks while preserving
task-specific model rankings. Let \(e_t\in\R^{\dt}\) and
\(e_m\in\R^{\dm}\) denote the standard basis vectors for task \(t\)
and model \(m\), respectively. We assume the singular vectors are \(\mu\)-incoherent:
\[
    \max_{t\in[\dt]}\|e_t^\top U^\star\|_2^2
    \le \frac{\mu r}{\dt},
    \qquad
    \max_{m\in[\dm]}\|e_m^\top V^\star\|_2^2
    \le \frac{\mu r}{\dm}.
\]
Such signal-strength and incoherence assumptions are standard in
low-rank estimation and inference, ensuring stable singular-subspace recovery and
preventing the latent factors from concentrating on a small number of coordinates
under incomplete observations
\citep{candes2009exact,negahban2012restricted,xia2021statistical,
duan2026online}.

\textbf{Observation model.}
For \(i=1,\ldots,n\), a task \(t_i\in[\dt]\) is sampled from a distribution \(\nu\).
Conditional on \(t_i\), a pair of distinct models \((m_i,m_i')\) is sampled from a task-dependent
distribution \(\pi_{t_i}\). We let \(Y_i=1\) indicate that model \(m_i\) is preferred to model \(m_i'\). Define the signed comparison design matrix
\(    X_i := e_{t_i}(e_{m_i}-e_{m_i'})^\top.
\)
It has a single nonzero row, corresponding to task \(t_i\), with \(+1\) in
column \(m_i\) and \(-1\) in column \(m_i'\). Hence
\(\langle X_i,\Thetastar\rangle=\Thetastar_{t_i,m_i}-\Thetastar_{t_i,m_i'}\). Conditional on \(X_i\), the preference follows a BTL model \citep{bradley1952rank,luce1959individual}
\[
    \Pr(Y_i=1\mid X_i)
    =
    \sigma(\langle X_i,\Thetastar\rangle)
    =
    \sigma\!\left(\Thetastar_{t_i,m_i}-\Thetastar_{t_i,m_i'}\right),
    \qquad
    \sigma(x)=(1+e^{-x})^{-1}.
\]

Because pairwise comparisons depend only on score differences, the matrix
\(\Thetastar\) is identifiable only up to task-specific additive shifts. We resolve this non-identifiability by imposing the row-centering constraint
\(
    \Thetastar \mathbf 1_{\dm}=0 .
\)

\textbf{Sampling design.}
We allow the task and model-pair sampling distributions to be non-uniform, reflecting
the uneven traffic patterns of real evaluation platforms. To obtain clean theoretical
rates, we assume this imbalance is controlled: there exist constants
\(0<c_\nu\le C_\nu<\infty\) and \(0<c_\pi\le C_\pi<\infty\), independent of
\(\dt,\dm,n\), such that for all \(t\in[\dt]\) and unordered pairs
\(\{m,m'\}\subset[\dm]\),
\[
    \frac{c_\nu}{\dt}\le \nu_t\le \frac{C_\nu}{\dt},
    \qquad
    \frac{c_\pi}{\binom{\dm}{2}}
    \le \pi_t(\{m,m'\})
    \le
    \frac{C_\pi}{\binom{\dm}{2}} .
\]
Together with the bounded-signal condition \(\|\Thetastar\|_\infty\le B\), the above
sampling assumptions ensure that every task and every model pair receives comparable statistical information. Indeed, writing
$
    \eta=\Thetastar_{t,m}-\Thetastar_{t,m'}
$
for a generic within-task score difference, the bounded-signal condition implies
\(|\eta|\le 2B\). Hence the BTL Fisher information
\(
    I(\eta):=\sigma(\eta)\{1-\sigma(\eta)\}
\)
is bounded away from zero and infinity, as in standard bounded dynamic-range assumptions
for BTL-type ranking models
\citep{chen2015spectral,chen2022partial,fan2024care,fan2025uncertainty}. The lower
bounds on \(\nu_t\) and \(\pi_t(\{m,m'\})\) prevent any task or pairwise comparison
direction from being asymptotically unobserved, in the same spirit as standard sampling conditions in low-rank matrix completion and one-bit matrix estimation
\citep{candes2009exact,negahban2012restricted,davenport20141bit}.

\textbf{Ranking targets.}
Built on the entrywise error bound for the latent score matrix, Section~\ref{sec:estimation}
studies task-wise top-\(K\) accuracy. To ensure that each model has a unique rank even
when two latent scores are exactly equal, we fix an arbitrary deterministic ordering
\(\prec_0\) of the model labels and use it only to break such ties. We then define
%Fix once and for all a deterministic total order \(\prec_0\) on the model labels, used only to break exact score ties, and define
\[
    \rk_t(m)
    :=1+\sum_{m'\ne m}\mathbf 1\{\Thetastar_{t,m'}>\Thetastar_{t,m}\}
      +\sum_{m'\ne m}\mathbf 1\{\Thetastar_{t,m'}=\Thetastar_{t,m},\ m'\prec_0m\}.
\]
The task-specific top-\(K\) set is
\(
    \TopK(t):=\{m:\rk_t(m)\le K\},
\)
and hence has exactly \(K\) elements. Section~\ref{sec:fdim}
develops efficient inference and multiple testing tools for task-specific score gaps
\(\Thetastar_{t,m}-\Thetastar_{t,m'}\), and Section~\ref{sec:ranking} converts these
gap inferences into simultaneous rank confidence sets and top-\(K\) membership
certificates.

%\begin{revisionblock}
\section{Ranking Recovery: Initialization, Refinement, and
\texorpdfstring{Top-\(K\)}{Top-K} Decisions}\label{sec:estimation}

This section establishes the first step toward certified task-specific leaderboards:
task-wise top-\(K\) accuracy from sparse pairwise comparisons.  We give the complete
procedure in Algorithm~\ref{alg:ranking-recovery}.  It separates three statistical
tasks that require different tools: locating the shared low-rank signal, correcting
row- and column-specific errors, and translating score resolution into a ranking
decision.  The output here is a point estimate of the leaderboard.  Sections
\ref{sec:fdim}-\ref{sec:ranking} subsequently attach uncertainty to the same ranking
claims.

\subsection{Global initialization and the need for local refinement}

\textbf{Low-rank score initializer.}
Let \(X_i=e_{t_i}(e_{m_i}-e_{m_i'})^\top\) and
\(\ell(y,\eta)=\log(1+\exp(\eta))-y\eta\). We compute a nuclear-norm penalized BTL
initializer
\[
\resizebox{0.95\textwidth}{!}{$ 
\widehat\Theta_0\in
\arg\min_{\Theta\in\mathcal C_B}
\left\{
\frac1{|I_0|}\sum_{i\in I_0}
\ell(Y_i,\langle X_i,\Theta\rangle)+\lambda\|\Theta\|_*
\right\},
\quad
\mathcal C_B=\{\Theta:\Theta\mathbf 1_{\dm}=0,\ \|\Theta\|_\infty\le B\}.$}
\]
This convex optimization provides a Frobenius-accurate initializer. Theorem~\ref{thm:convex-main-app} in Appendix shows that, with high probability,
$\|\widehat\Theta_0-\Thetastar\|_F    \lesssim    \sqrt{{r\,\dt\dm\bar d\,\mathrm{polylog}(n_0\bar d)}/{n_0}},
$ where $\bar d:=\max\{\dt,\dm\}$ and $n_0 = |I_0|$.
The nuclear norm shares information across tasks and the centering constraint fixes the
BTL shift indeterminacy. However, Frobenius control is an average guarantee over all \(\dt\dm\) entries and does not rule out substantially larger errors in a small number of model rows or task columns. This is insufficient for ranking, where a single poorly estimated score near the \(K\)-versus-\((K+1)\) boundary can alter the top-\(K\) set. 

\textbf{Local refinement.}
To obtain the entrywise accuracy required for ranking, we construct
\[\Thetahat=\mathsf{Refine}_r(\widehat\Theta_0)\]from the Frobenius-accurate initializer $\widehat\Theta_0$ followed by a three-split row-wise refinement. Specifically, we first split the comparison data into three independent, comparable subsets,
\(
\mathcal D_1,\mathcal D_2,\mathcal D_3,
\)
used respectively for low-rank initialization, model-side refinement, and
task-side refinement. For the refinement, it is convenient to work with the
transposed score matrix
\[
M^\star=(\Thetastar)^\top
       =L^\star(A^\star)^\top\in\R^{\dm\times\dt},
\qquad
(A^\star)^\top A^\star=I_r,
\qquad
\mathbf 1_{\dm}^\top L^\star=0,
\]
where model-specific latent vectors form the rows of \(L^\star\), task-specific
latent vectors form the rows of \(A^\star\), and the singular values are absorbed
into \(L^\star\). This factorization suggests a global-to-local refinement strategy. Starting from the nuclear-norm initializer on \(\mathcal D_1\) (Algorithm~\ref{alg:ranking-recovery}, line~2), we first estimate the shared task-side factor from the leading right singular vectors of \(\widehat M_0=\widehat\Theta_0^\top\) and project it onto an incoherent Stiefel set (line~3)
\begin{equation}
\mathcal A_\mu = \left\{A\in\mathbb R^{\dt\times r}:A^\top A=I_r, \ \max_{t\le\dt}\|A[t]\|_2 \le C_{\rm inc}\sqrt{\frac{\mu r}{\dt}}\right\}.
\label{eqn:stiefelset}
\end{equation}
Using the independent split
\(\mathcal D_2\), we then refine each model-side latent vector through a
low-dimensional pairwise-logistic score equation, with the initializer supplying
the opponent offset (lines~4--6). We subsequently recenter the resulting model factor using
$
R_{\rm m}
=
I_{\dm}-\dm^{-1}\mathbf1_{\dm}\mathbf1_{\dm}^\top,
$
thereby restoring the BTL identification constraint (line~7). Here when a comparison involves model \(u\), we reorient it so that \(u\) appears first. Write \(Z_{iu}=Y_i\) if \(u=m_i\), and \(Z_{iu}=1-Y_i\) if \(u=m_i'\);
let \(w_{iu}\) denote the opponent.  This convention uses every comparison involving
\(u\) in a common logistic score equation.

On the remaining split
\(\mathcal D_3\), we refine each task-side latent vector conditional on the centered
model factor (lines~8--10). Finally, we reconstruct the score matrix 
\(
\Thetahat=(\overline L\widetilde A^\top)^\top
\) (line~11)
and rank the models
within each task to obtain the estimated ranks and top-\(K\) sets
(lines~12--13). Thus, the refinement uses the initializer to recover the shared
low-rank structure and then performs local latent-coordinate updates to achieve the
entrywise accuracy needed for task-specific ranking. The resulting procedure is summarized in Algorithm~\ref{alg:ranking-recovery}, and its entrywise error guarantee is proved in Appendix~\ref{app:entrywise-proof}.

\begin{algorithm}[h!]
\caption{Ranking Recovery by Low-Rank Initialization and Pairwise Refinement}
\label{alg:ranking-recovery}
\begin{algorithmic}[1]
%\revisioncolor
\Require Comparisons \(\mathcal D=\{(t_i,m_i,m_i',Y_i)\}_{i=1}^n\), rank \(r\),
top-set size \(K\), bound \(B\), penalty \(\lambda\), and tie order \(\prec_0\).
\Ensure Refined matrix \(\Thetahat\), ranks \(\widehat{\rk}_t(m)\), and sets
\(\widehat{\TopK}(t)\).
\State Split \(\mathcal D\) into independent comparable parts
\(\mathcal D_1,\mathcal D_2,\mathcal D_3\).
\State On \(\mathcal D_1\), solve the nuclear-norm program above; set
\(\widehat M_0=\widehat\Theta_0^\top\).
\State Let \(Q_0\) contain the top-\(r\) right singular vectors of
\(\widehat M_0\); set \(\widehat A=\operatorname{Proj}_{\mathcal A_\mu}(Q_0)\).
\For{each model \(u\in[\dm]\)}
  \State Let \(\mathcal Q_u=\{i\in\mathcal D_2:u\in\{m_i,m_i'\}\}\) and
  reorient its observations so that \(u\) is first.
  \State Solve locally for \(\ell\in\mathbb R^r\):
  \[
  \sum_{i\in\mathcal Q_u}\widehat A[t_i]
  \left[Z_{iu}-\sigma\!\left\{
  \widehat A[t_i]^\top\ell-\widehat M_{0,w_{iu},t_i}
  \right\}\right]=0,
  \]
  and store the solution as row \(\widetilde L[u]\).
\EndFor
\State Restore the identification gauge by setting \(\overline L=R_{\rm m}\widetilde L\).
\For{each task \(t\in[\dt]\)}
  \State For \(i\in\mathcal I_t:=\{i\in\mathcal D_3:t_i=t\}\), set
  \(x_i=\overline L[m_i]-\overline L[m_i']\).
  \State Solve locally
  \(\sum_{i\in\mathcal I_t}x_i\{Y_i-\sigma(x_i^\top a)\}=0\),
  and store the solution as row \(\widetilde A[t]\).
\EndFor
\State Set \(\Thetahat=(\overline L\widetilde A^\top)^\top\).
\For{each task \(t\)}
  \State Sort \(\{\Thetahat_{t,m}:m\in[\dm]\}\), breaking ties by \(\prec_0\),
  and return \(\widehat{\rk}_t(m)\) and
  \(\widehat{\TopK}(t)=\{m:\widehat{\rk}_t(m)\le K\}\).
\EndFor
\end{algorithmic}
\end{algorithm}

\subsection{Refinement intuition and contribution}

\paragraph{The initializer estimates geometry; the row equations estimate local
coordinates.}
The rank-\(r\) singular vectors of \(\widehat M_0\) identify a common task-feature
coordinate system.  Projecting them onto \(\mathcal A_\mu\) prevents a small number of
tasks from acquiring excessive leverage while preserving the estimated subspace.  Once
this geometry is fixed, the unknown coefficient for any model has only \(r\) coordinates.
The model-row update is therefore a collection of low-dimensional likelihood equations,
not another ambient matrix optimization.  This dimension reduction is what makes a
uniform row-wise analysis possible.

\paragraph{The opponent offset removes first-order confounding from opponent quality.} A comparison involving model \(u\) depends on both \(u\)'s score and its opponent's
score.  The term \(\widehat M_{0,w_{iu},t_i}\) carries the estimated opponent ability
into the row equation.  Without this offset, variation in opponent quality would be
mistaken for variation in \(u\)'s latent coefficient.  Reorienting comparisons allows
the same equation to use observations in which \(u\) appeared on either side, and the
independent split makes the plug-in offset conditionally fixed in the row-wise
concentration argument.

\paragraph{Centering aligns the pairwise information geometry.}
BTL probabilities depend only on score differences, so the model-side factor is
identified only after fixing a centering constraint. Multiplication by
\(R_{\rm m}\) restores \(\mathbf1_{\dm}^\top\overline L=0\) while leaving every
pairwise difference \(\overline L[m]-\overline L[m']\) unchanged. This centering
also has an important statistical consequence. Under uniform pair sampling, the
population Gram matrix of the pairwise differences
\(
(\overline L[m]-\overline L[m'])
(\overline L[m]-\overline L[m'])^\top
\)
is proportional to the centered factor Gram matrix
\(\overline L^\top\overline L\). Under the near-uniform sampling design assumed
here, the population pairwise Gram matrix remains uniformly comparable to
\(\overline L^\top\overline L\), and its empirical counterpart concentrates around
this population quantity. Together with the bounded-score condition, which keeps
the logistic curvature weights bounded away from zero, this yields uniform
curvature for the subsequent task-side refinement.

\paragraph{The second update balances both sides of the factorization.}
After all model rows have been refined in a common coordinate system, data from a fixed
task form an ordinary \(r\)-dimensional logistic regression with feature
\(x_i=\overline L[m_i]-\overline L[m_i']\).  Solving the task equation removes the
remaining first-order subspace error.  The product
\((\overline L\widetilde A^\top)^\top\) is automatically low rank and exactly
row-centered. Uniform rowwise \(2,\infty\) control of both the model factor
\(\overline L\) and the task factor \(\widetilde A\) then combines through the
bilinear product decomposition to yield a max-norm bound for \(\Thetahat-\Thetastar\).

%The two rowwise \(2,\infty\) bounds then combine through an entrywise product decomposition to yield max-norm accuracy.

\begin{theorem}[Uniform entrywise estimation]\label{thm:entrywise}
Under the model and near-uniform sampling assumptions in
Section~\ref{sec:setup}, suppose \(r\) is fixed, \(\Thetastar\) has rank \(r\), is
\(\mu\)-incoherent, satisfies \(\|\Thetastar\|_\infty\le B\) and
\(\|\Thetastar\|_F\ge c_{\rm sig}\sqrt{\dt\dm}\), and has condition
number \(\kappa\). Denote $\bar d:=\max\{\dt,\dm\}$. If the pairwise comparison sample size
\(
    n\gtrsim \mathrm{poly}(\mu,r,\kappa,B)\,\bar d\,\log^c(n\bar d),
\)
then, for some large constant $a$, with probability at least \(1-n^{-a}\),
\[
  \resizebox{0.68\textwidth}{!}{$   \|\Thetahat-\Thetastar\|_\infty
    \le
    \varepsilon_n,
    \textrm{~~with~~}
    \varepsilon_n
    :=
    C\,\mathrm{poly}(a,\mu,r,\kappa,B)
    \sqrt{\frac{\bar d\,\log^c(n\bar d)}{n}}.$}
\]
\end{theorem}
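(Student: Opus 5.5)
We follow the three stages of Algorithm~\ref{alg:ranking-recovery}, conditioning at each stage on the previous splits so that the plug-in quantities act as fixed design inputs.

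\emph{Stage 1: the initializer and task geometry.}
By Theorem~\ref{thm:convex-main-app} applied to \(\mathcal D_1\) with \(n_0\asymp n\),
\[
\|\widehat\Theta_0-\Thetastar\|_F\lesssim \delta_F:=\sqrt{r\dt\dm\bar d\,\mathrm{polylog}(n\bar d)/n}.
\]
The signal condition and bounded \(\kappa\) give \(\sigma_r^\star\gtrsim\sqrt{\dt\dm}/(\kappa\sqrt r)\). Wedin's theorem then yields \(\|\sin\Theta(Q_0,A^\star)\|\lesssim \delta_F/\sigma_r^\star\asymp\sqrt{\bar d\,\mathrm{polylog}/n}=:\rho_n\), which is small under the sample-size condition. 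Projecting onto \(\mathcal A_\mu\) of \eqref{eqn:stiefelset} preserves this subspace error up to constants, since \(A^\star O\in\mathcal A_\mu\) for the optimal rotation \(O\). It also gives \(\twoninfnorm{\widehat A}\lesssim\sqrt{\mu r/\dt}\). We also record \(\infnorm{\widehat\Theta_0}\le B\), which holds by the constraint set.

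\emph{Stage 2: model rows.}
Fix \(u\) and let \(\ell_u^\circ:=O^\top L^\star[u]\) be the oracle coefficient in the \(\widehat A\) coordinates. The row score equation is a conditionally independent sum over \(|\mathcal Q_u|\asymp n/\dm\) terms. The Hessian is \(\sum_i I(\cdot)\widehat A[t_i]\widehat A[t_i]^\top\). By near-uniform \(\nu\), bounded curvature, incoherence of \(\widehat A\), and matrix Bernstein, it is \(\succeq c(n/\dm)\dt^{-1}I_r\), uniformly over \(u\) after a union bound. We expand the score at \(\ell_u^\circ\) and split it into three terms.
\begin{itemize}[leftmargin=*]
\item[(a)] \emph{Martingale noise.} By vector Bernstein this is \(\lesssim\sqrt{(n/\dm)\dt^{-1}\log}\).
\item[(b)] \emph{Subspace-approximation bias.} This comes from \(\widehat A^\top\) versus \(A^{\star\top}\) applied to \(L^\star[u]\).
\item[(c)] \emph{Opponent-offset bias.} This comes from \(\widehat M_0-M^\star\) in the opponent term.
\end{itemize}
Terms (b) and (c) are averages of errors over randomly sampled tasks and opponents. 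Because \(\mathcal D_1\) is independent of \(\mathcal D_2\), their conditional means are controlled by Frobenius-type quantities, namely \(\delta_F/\sqrt{\dt\dm}\) per entry in RMS and further reduced by projection onto \(\widehat A\). Their fluctuations concentrate at the noise scale. Inverting the Hessian (existence and uniqueness by a local strong-convexity/Brouwer argument in an \(r\)-ball) gives
\[
\max_u\|\widetilde L[u]-\ell_u^\circ\|_2\lesssim\sqrt{\dt}\,\big(\sqrt{\dm\log/n}+\rho_n\,\|L^\star\|_{2,\infty}/\sqrt{\dt}\cdot\text{(small)}\big).
\]
Centering by \(R_{\rm m}\) is a contraction in the relevant sense. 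The mean shift it removes is the average of the row errors, so \(\twoninfnorm{\overline L-L^\star O}\) remains of the same order. Here \(\|L^\star\|_{2,\infty}\lesssim B\sqrt{\dt}\) holds by incoherence together with \(\infnorm{\Thetastar}\le B\).

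\emph{Stage 3: task rows and assembly.}
Conditional on \(\overline L\) (independent of \(\mathcal D_3\)), each task solves an \(r\)-dimensional logistic regression with \(\asymp n/\dt\) samples. As explained in the centering paragraph, the empirical Gram matrix of the \(x_i\) is \(\succeq c\,\overline L^\top\overline L/\dm\succeq c'\sigma_r^{\star2}/\dm\). The same noise-plus-misspecification expansion then bounds \(\twoninfnorm{\widetilde A-A^\star O}\) by \(\rho_n/\sqrt{\dt}\) up to polylogs. Finally we decompose
\[
\Thetahat-\Thetastar=\big[(\overline L-L^\star O)\widetilde A^\top+L^\star O(\widetilde A-A^\star O)^\top\big]^\top,
\]
and bound each entry by products of \(2,\infty\) norms. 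This gives \(\sqrt{\dt}\cdot\epsilon_L/\sqrt{\dt}+B\sqrt{\dt}\cdot\epsilon_A\lesssim\varepsilon_n\). The union bounds over \(\bar d\) rows account for the \(\log^c\) and \(n^{-a}\) terms.

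\emph{Main obstacle.}
The delicate step is Stage 2, term (c) together with (b): showing that the plug-in opponent error \(\widehat M_{0,w,t}-M^\star_{w,t}\) enters the row estimate only through averaged, not worst-case, quantities. A naive bound uses \(\infnorm{\widehat\Theta_0-\Thetastar}\), which is unavailable. We would try to write the bias as \(\widehat A^\top D_u\,\mathrm{vec}(\widehat M_0-M^\star)\) with a nearly uniform sampling weight \(D_u\). Cauchy--Schwarz with Frobenius control then yields an error of order \(\delta_F/\sqrt{\dt\dm}\sim\varepsilon_n\) uniformly in \(u\), with Bernstein handling the deviation from the conditional mean. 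If this falls short by a \(\sqrt r\) or polylog factor, the poly factors in \(\varepsilon_n\) absorb it.
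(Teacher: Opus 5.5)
Your proposal is correct and follows essentially the same route as the paper's proof. The steps match: a Frobenius-accurate convex initializer, then Wedin and projection onto the incoherent Stiefel set; model-row score equations whose opponent-offset and subspace biases are controlled by conditional Cauchy--Schwarz against \(\|\widehat\Theta_0-\Thetastar\|_F\), with Bernstein for fluctuations; recentering and the centered pairwise Gram identity for task-side curvature \(\asymp n\); and a bilinear \(2,\infty\) product bound for the max-norm error. One small caveat: the extra ``(small)'' reduction you attribute to projecting the subspace bias onto \(\widehat A\) holds only when the Fisher weights \(\sigma'(\eta^\star)\) are constant across tasks, but you do not need it, since the first-order row-error bound \(\sqrt{\dt}\,\delta_A\) already gives the stated rate.
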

Theorem~\ref{thm:entrywise} is the main estimation result.  The nuclear-norm optimization provides a Frobenius-accurate
initializer, while the refinement step upgrades this global accuracy to uniform
max-norm control over all task-model scores.  For constant rank and condition number, the required sample size is
near-linear in \(\bar d\), so the method learns all task-specific scores jointly rather
than fitting \(d_t\) unrelated BTL models. In the balanced regime \(\dt\asymp\dm\asymp d\), this is
$
    \|\Thetahat-\Thetastar\|_\infty
    \lesssim
    \sqrt{d\,\mathrm{polylog}(nd)/n} .
$

%\paragraph{What is new relative to generic low-rank estimation.}
While the rate in Theorem~\ref{thm:entrywise} is consistent with the near-linear
sample complexity familiar from low-rank matrix completion \citep{xia2021statistical}, obtaining uniform
entrywise accuracy in the present pairwise-comparison model requires substantially
more than a standard Frobenius-norm oracle inequality.
The refinement is not a cosmetic post-processing of a nuclear-norm estimator. Its proof
combines four ingredients tailored to ranking from comparisons: spectral alignment of
the shared task geometry; local existence and stability of nonlinear row score roots;
uniform rowwise coverage and concentration; and an exact centered pairwise Gram identity
for the task update. A Frobenius oracle inequality alone contains none of these local
guarantees. The contribution is precisely the conversion from average low-rank accuracy
to the entrywise resolution required by rank and top-\(K\) decisions. See Appendix
\ref{app:entrywise-proof} for detailed proofs.

% \paragraph{What is new relative to generic low-rank estimation.}
% The refinement is not a cosmetic post-processing of a nuclear-norm estimator.  Its proof
% combines four ingredients tailored to ranking from comparisons: spectral alignment of
% the shared task geometry; local existence and stability of nonlinear row score roots;
% uniform rowwise coverage and concentration; and an exact centered pairwise Gram identity
% for the task update.  A Frobenius oracle inequality alone contains none of these local
% guarantees.  The contribution is precisely the conversion from average low-rank accuracy
% to the entrywise resolution required by rank and top-\(K\) decisions.  Appendix
% \ref{app:entrywise-proof} proves this chain directly under the assumptions of the present
% paper.

\subsection{From max-norm resolution to \texorpdfstring{top-\(K\)}{top-K} recovery}

We next translate the entrywise error bound into a task-specific top-\(K\) recovery
guarantee. Denote the estimated top-\(K\) set for task \(t\) as
\(
    \widehat{\TopK}(t)=\{m:\widehat{\rk}_t(m)\le K\},
\)
where the rank \(\widehat{\rk}_t(m)\)
is computed from the refined estimate \(\Thetahat_{t,m}\) using the same
deterministic tie-breaking order \(\prec_0\). To measure the
discrepancy between the estimated and true top-\(K\) sets, define the normalized Hamming
error
\[
    \mathsf{Ham}_{K,t}
    :=
    \frac{1}{2K}
    |\widehat{\TopK}(t)\triangle \TopK(t)|.
\]
Here \(\triangle\) denotes symmetric difference: it counts models that are included in
one top-\(K\) set but not the other. The normalization by \(2K\) makes
\(\mathsf{Ham}_{K,t}\in[0,1]\).

The difficulty of top-\(K\) recovery depends on how many models have scores close to
the top-\(K\) cutoff. Let
$
    \Thetastar_{t,(1)}\ge\cdots\ge\Thetastar_{t,(\dm)}
$
be the sorted true scores for task \(t\), and define the midpoint between the \(K\)-th
and \((K+1)\)-st scores as
$
    \tau_K(t)
    :=
    (\Thetastar_{t,(K)}+\Thetastar_{t,(K+1)})/2.
$
We call \(\tau_K(t)\) the top-\(K\) decision boundary. For a resolution level
\(\delta>0\), define the boundary profile
\[
    \mathcal R_{K,t}(\delta;\Thetastar)
    :=
    \frac{1}{2K}
    \bigl|\{m:|\Thetastar_{t,m}-\tau_K(t)|\le\delta\}\bigr|.
\]
This quantity measures the fraction of models lying within distance \(\delta\) of the
top-\(K\) boundary. If many models lie near \(\tau_K(t)\), then the task is intrinsically
hard to rank because small estimation errors can swap models across the top-\(K\)
cutoff. If few models lie near the boundary, top-\(K\) recovery is easier.

\begin{proposition}[Task-wise top-\(K\) Hamming accuracy]\label{prop:topk-hamming}
On the event \(\|\Thetahat-\Thetastar\|_\infty\le\varepsilon_n\), for every
\(t\in[\dt]\),
$
    \mathsf{Ham}_{K,t}
    \le
    \mathcal R_{K,t}(2\varepsilon_n;\Thetastar).
$
Therefore, under Theorem~\ref{thm:entrywise}, the above Hamming bound holds
simultaneously for all tasks with probability at least \(1-n^{-a}\).
\end{proposition}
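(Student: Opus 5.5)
The plan is to show that every model in the symmetric difference \(\widehat{\TopK}(t)\triangle\TopK(t)\) lies within \(2\varepsilon_n\) of the boundary \(\tau_K(t)\). This gives
\(|\widehat{\TopK}(t)\triangle\TopK(t)|\le|\{m:|\Thetastar_{t,m}-\tau_K(t)|\le2\varepsilon_n\}|\), and dividing by \(2K\) yields the Hamming bound. The argument is deterministic on the event \(\|\Thetahat-\Thetastar\|_\infty\le\varepsilon_n\). That event is uniform over all entries, so the bound holds for every task at once. The probability statement then follows directly from Theorem~\ref{thm:entrywise}.

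\emph{False inclusions.} Fix \(t\) and take \(m\in\widehat{\TopK}(t)\setminus\TopK(t)\). Since \(m\) is not in the true top-\(K\) set, \(\Thetastar_{t,m}\le\Thetastar_{t,(K+1)}\le\tau_K(t)\). This holds even with tied true scores, because the sorted values are unaffected by tie-breaking. Since \(\widehat{\rk}_t(m)\le K\), at most \(K-1\) models precede \(m\) in the estimated order. The set \(\TopK(t)\) has exactly \(K\) elements and does not contain \(m\). Hence some \(m'\in\TopK(t)\) comes after \(m\) in the estimated order, which gives \(\Thetahat_{t,m'}\le\Thetahat_{t,m}\). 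Because \(m'\in\TopK(t)\), we have \(\Thetastar_{t,m'}\ge\Thetastar_{t,(K)}\ge\tau_K(t)\). Applying the max-norm bound twice,
\[
\Thetastar_{t,m}\ge\Thetahat_{t,m}-\varepsilon_n\ge\Thetahat_{t,m'}-\varepsilon_n\ge\Thetastar_{t,m'}-2\varepsilon_n\ge\tau_K(t)-2\varepsilon_n .
\]
Combined with \(\Thetastar_{t,m}\le\tau_K(t)\), this gives \(|\Thetastar_{t,m}-\tau_K(t)|\le2\varepsilon_n\).

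\emph{False exclusions.} Now take \(m\in\TopK(t)\setminus\widehat{\TopK}(t)\). Then \(\Thetastar_{t,m}\ge\Thetastar_{t,(K)}\ge\tau_K(t)\). Since \(\widehat{\rk}_t(m)>K\), at least \(K\) models precede \(m\) in the estimated order. At most \(K-1\) of them can belong to \(\TopK(t)\setminus\{m\}\). So some \(m''\notin\TopK(t)\) satisfies \(\Thetahat_{t,m''}\ge\Thetahat_{t,m}\) and \(\Thetastar_{t,m''}\le\Thetastar_{t,(K+1)}\le\tau_K(t)\). The same chain of inequalities gives \(\Thetastar_{t,m}\le\Thetastar_{t,m''}+2\varepsilon_n\le\tau_K(t)+2\varepsilon_n\).

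The only delicate point is the counting step that produces the witness \(m'\) or \(m''\), since ties in either \(\Thetahat\) or \(\Thetastar\) could in principle break it. This is why I phrase everything in terms of position in the \(\prec_0\)-tie-broken orders. Preceding \(m\) in the estimated order always implies a weak inequality \(\Thetahat_{t,\cdot}\ge\Thetahat_{t,m}\), which is all the argument uses. Both top-\(K\) sets have cardinality exactly \(K\), so the pigeonhole step goes through regardless of ties.
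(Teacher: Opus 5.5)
Your proof is correct and follows essentially the same route as the paper's: you show \(\widehat{\TopK}(t)\triangle\TopK(t)\) is contained in the \(2\varepsilon_n\)-band around \(\tau_K(t)\) by a cardinality witness plus the chain \(\Thetastar_{t,m}\ge\Thetahat_{t,m}-\varepsilon_n\ge\Thetahat_{t,m'}-\varepsilon_n\ge\Thetastar_{t,m'}-2\varepsilon_n\), and you get the probability claim from the uniform max-norm event. The differences are cosmetic: the paper pairs each false positive with a false negative \(v\in\TopK(t)\setminus\widehat{\TopK}(t)\) and calls the other case symmetric, whereas you take any true top-\(K\) model ranked after \(m\) in the estimated order and write out both cases with explicit handling of ties.
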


Proposition~\ref{prop:topk-hamming} shows that top-\(K\) mistakes can only occur for
models whose true scores lie within the estimation resolution \(2\varepsilon_n\) of
the top-\(K\) boundary. This extends the boundary-resolution principle from single-task
BTL top-\(K\) ranking \citep{chen2015spectral,chen2022partial} to many task-specific
rankings coupled through shared low-rank structure. Consequently, exact recovery follows under a task-specific margin condition. Define the \(K\)-gap
$
    \Delta_K(t):=\Thetastar_{t,(K)}-\Thetastar_{t,(K+1)} .
$
If \(\Delta_K(t)>2\varepsilon_n\), then no model can cross the top-\(K\) boundary, so
\(
    \widehat{\TopK}(t)=\TopK(t).
\)
Thus, with high probability, exact top-\(K\) recovery holds
simultaneously for every task whose \(K\)-gap exceeds \(2\varepsilon_n\). Hence, low-rank sharing enables simultaneous task-wise top-\(K\) recovery at the same
entrywise resolution as score estimation, without requiring each task to be estimated
independently.

Two interpretations are useful.  First, the boundary profile gives a graded guarantee:
even without a global minimum gap, all errors are localized to models that are
within the estimation resolution of the cutoff.  Second, the exact-recovery condition
is task adaptive.  Easy tasks with a large \(K\)-gap can be recovered even when another
task has a dense cluster of nearly tied models.  Algorithm~\ref{alg:ranking-recovery}
therefore shares estimation strength globally without forcing all task-specific
leaderboards to have the same difficulty.  When a boundary gap is below \(2\varepsilon_n\), a point top-\(K\) set should not be interpreted as certified; the inferential procedures developed in later sections instead provide
inner, outer, and unresolved top-\(K\) sets.

\section{Efficient Score-Gap Inference}
\label{sec:fdim}

Task-specific ranking decisions are built from score-gap signs. For example, deciding
whether model \(m\) outranks model \(m'\) on task \(t\) requires inference on
$
    \Theta^\star_{t,m}-\Theta^\star_{t,m'} .
$
Similarly, ranks and top-\(K\) membership are determined by many such pairwise gaps.
We therefore
first develop efficient inference for score gaps and characterize their joint dependence,
which provides the multiple-testing foundations used later for simultaneous ranking
inference.

\subsection{Score-gap targets and the efficient direction}

We write a generic linear contrast as
\(
    \psi_\Gamma(\Thetastar)=\ip{\Gamma}{\Thetastar}.
\)
The canonical example is
\(\Gamma=e_t(e_m-e_{m'})^\top\), for which
$
    \psi_\Gamma(\Thetastar)=\Thetastar_{t,m}-\Thetastar_{t,m'} .
$
Testing whether model \(m\) is better than model \(m'\) on task \(t\) is therefore a
test on the sign of \(\psi_\Gamma(\Thetastar)\). Multiple ranking claims correspond to
testing many such contrasts jointly.

\paragraph{Low-rank tangent geometry.}
In the low-rank model, local perturbations of \(\Thetastar\) must lie in the tangent
space \(\mathbb T\) of the exact-rank-\(r\), row-centered manifold. Since
\((V^\star)^\top\mathbf 1_{\dm}=0\), this tangent space and its orthogonal
projector are
\[
    \mathbb T
    :=
    \left\{
      U^\star C^\top+D(V^\star)^\top:
      C^\top\mathbf 1_{\dm}=0,\ 
      C\in\R^{\dm\times r},\ D\in\R^{\dt\times r}
    \right\},
\]
\[
    P_{\mathbb T}(H)
    =
    P_U H R_{\rm m}+(I_{\dt}-P_U)H P_V,
    \qquad
    R_{\rm m}:=I_{\dm}-\frac{\mathbf 1_{\dm}\mathbf 1_{\dm}^\top}{\dm},
\]
where \(P_U=U^\star(U^\star)^\top\) and
\(P_V=V^\star(V^\star)^\top\). In particular,
\(P_{\mathbb T}(H)\mathbf 1_{\dm}=0\). Therefore only the
projected contrast direction \(P_{\mathbb T}\Gamma\) is locally identifiable. Define
the Fisher operator \(G\) by
\[
    \ip{G H_1}{H_2}
    =
    \E\!\left[
        I(\ip{X}{\Thetastar})\ip{H_1}{X}\ip{H_2}{X}
    \right],
    \qquad
    I(\eta)=\sigma(\eta)\{1-\sigma(\eta)\}.
\]
It measures how much information the pairwise-comparison design carries about
directions \(H_1,H_2\). Closely matched comparisons have larger \(I(\eta)\), while
lopsided comparisons carry less information.

Define the restricted Fisher information operator \(A\), the efficient direction
\(H_\Gamma^\star\), and the corresponding efficient variance \(V_{\rm eff}(\Gamma)\) by
\[
    A:=\left.P_{\mathbb T}GP_{\mathbb T}\right|_{\mathbb T}:
       \mathbb T\to\mathbb T,
    \qquad
    H_\Gamma^\star:=A^{-1}P_{\mathbb T}\Gamma,
    \qquad
    V_{\rm eff}(\Gamma)
    :=
    \ip{P_{\mathbb T}\Gamma}{A^{-1}P_{\mathbb T}\Gamma}.
\]
Here and below \(A^{-1}\) means the inverse only on \(\mathbb T\); whenever it is
viewed as an ambient operator, it means the extension
\(
    A^\dagger:=P_{\mathbb T}A^{-1}P_{\mathbb T}.
\)
The near-uniform design and bounded Fisher weights make \(A\) positive definite on
\(\mathbb T\), even though \(G\) is singular on the unidentifiable row-shift
directions in the ambient space.
The operator \(A\) describes how much information the observed pairwise comparisons
carry about locally admissible low-rank perturbations of \(\Thetastar\). The efficient
direction \(H_\Gamma^\star\) is the optimal weighting direction for converting
comparison residuals into an estimate of the target contrast \(\psi_\Gamma\). The
variance \(V_{\rm eff}(\Gamma)\) is the resulting semiparametric efficiency bound: it
is the smallest achievable asymptotic variance for regular estimators of
\(\psi_\Gamma\), accounting for the sampling design, the BTL Fisher information, and
the low-rank constraint.

The equation \(A H_\Gamma^\star=P_{\mathbb T}\Gamma\) has the same interpretation as a
weighted least-squares normal equation.  The projected contrast is the local target
direction, while \(A\) measures which tangent perturbations are visible through the
comparison design.  Applying \(A^{-1}\) produces the residual weight that estimates the
target with minimum variance.  Directions outside \(\mathbb T\) are not admissible local
changes of a rank-\(r\) matrix, and the task-specific row-shift directions are not identifiable;
both are removed before inversion.  This is why an ambient inverse Fisher matrix would
be either singular or inefficient.

For the ranking contrasts used below, let
\(\mathcal F_{\rm gap}:=\{e_t(e_m-e_{m'})^\top:t\in[\dt],\ m\ne m'\}\).
We impose the uniform alignment condition
\[
    \inf_{\Gamma\in\mathcal F_{\rm gap}}
    \frac{\|P_{\mathbb T}\Gamma\|_F}
         {\sqrt{\bar d/(\dt\dm)}\,\|\Gamma\|_F}
    \ge \alpha_0>0.
\]
This is a separate structural condition ensuring that no target score gap is
asymptotically orthogonal to the tangent space, so that inference is possible.

\subsection{Cross-fitted efficient score-gap algorithm}

%\begin{revisionblock}
We implement the one-step estimator in Algorithm~\ref{alg:efficient-gaps}
by fixed-\(K_{\rm cf}\) cross-fitting. Partition
\([n]\) into folds \(I_1,\ldots,I_{K_{\rm cf}}\), where \(K_{\rm cf}\ge3\) is fixed and
\(n_k:=|I_k|\asymp n\). For each \(k\), use the complement of \(I_k\)
to construct the exact-rank-\(r\), row-centered estimator
\(\widehat\Theta^{(-k)}\) and its tangent projector
\(\widehat P_{\mathbb T}^{(-k)}\).  %As is standard for a randomized comparison design, the task and pair sampling law \((\nu,\pi)\) is known.
For the inference theory below, we consider a randomized comparison design in which
the task and pair sampling law \((\nu,\pi)\) is specified by the randomization protocol
and hence known by design
\citep{imbens2015causal,chen2015spectral}.
Define the design-averaged plug-in Fisher operator
\[
    \ip{\widehat G_k H_1}{H_2}
    :=
    \E_X\!\left[
    I\!\left(\ip{X}{\widehat\Theta^{(-k)}}\right)
    \ip{H_1}{X}\ip{H_2}{X}\right],
    \qquad
    X\sim(\nu,\pi).
\]
This finite expectation can be evaluated by summing over the task-pair
support. %It is not an empirical covariance computed from the evaluation fold.
Section~\ref{app:notation-CA} shows that this design-averaged construction allows the plug-in inverse bounds to be derived from the population \(C_A\) condition.
An unknown observational design can be accommodated by estimating the design law from an auxiliary sample and plugging it into the Fisher operator, with an additional design-estimation error term in the analysis \citep{newey1994asymptotic,chernozhukov2018double}.
%An unknown observational design may instead be estimated on an external design-only sample, but then its relative operator error must be included explicitly.
%\end{revisionblock}
Then define
\[
    \widehat A_k
    :=
    \left.
    \widehat P_{\mathbb T}^{(-k)}
    \widehat G_k
    \widehat P_{\mathbb T}^{(-k)}
    \right|_{\widehat{\mathbb T}^{(-k)}} ,
    \qquad
    \widehat H_{\Gamma,k}
    :=
    \widehat A_k^{-1}
    \widehat P_{\mathbb T}^{(-k)}\Gamma .
\]
All inverses in this display are taken only on the indicated estimated tangent
space. With \(s(y,\eta)=y-\sigma(\eta)\), set
\[
    \widehat\psi_\Gamma^{(k)}
    :=
    \ip{\Gamma}{\widehat\Theta^{(-k)}}
    +
    \frac1{n_k}\sum_{i\in I_k}
    s\!\left(Y_i,\widehat\eta_i^{(-k)}\right)
    \ip{\widehat H_{\Gamma,k}}{X_i},
    \qquad
    \widehat\psi_\Gamma
    :=
    \sum_{k=1}^{K_{\rm cf}}\frac{n_k}{n}
    \widehat\psi_\Gamma^{(k)} .
\]

\begin{algorithm}[h!]
\caption{Cross-Fitted Efficient Score-Gap Estimation}
\label{alg:efficient-gaps}
\begin{algorithmic}[1]
%\revisioncolor
\Require Data \(\mathcal D\), rank \(r\), contrast family
\(\mathcal G=\{\Gamma_1,\ldots,\Gamma_q\}\), and folds
\(I_1,\ldots,I_{K_{\rm cf}}\).
\Ensure One-step estimates \(\widehat\psi\), influence matrix \(\widehat\Phi\),
covariance \(\widehat\Sigma\), and marginal standard errors.
\For{each fold \(k=1,\ldots,K_{\rm cf}\)}
  \State Apply Algorithm~\ref{alg:ranking-recovery} on \(I_k^c\) to obtain
  \(\widehat\Theta^{(-k)}\), its tangent space
  \(\widehat{\mathbb T}^{(-k)}\), and projector
  \(\widehat P_{\mathbb T}^{(-k)}\).
  \State Average \(I(\langle X,\widehat\Theta^{(-k)}\rangle)X\otimes X\)
  to form \(\widehat G_k\), and set
  \(\widehat A_k=\widehat P_{\mathbb T}^{(-k)}\widehat G_k
  \widehat P_{\mathbb T}^{(-k)}|_{\widehat{\mathbb T}^{(-k)}}.\)
  \For{each target \(j\in[q]\)}
    \State Solve
    \(\widehat A_k\widehat H_{j,k}
    =\widehat P_{\mathbb T}^{(-k)}\Gamma_j\) on
    \(\widehat{\mathbb T}^{(-k)}\).
    \For{each \(i\in I_k\)}
      \State Set \(\widehat\eta_i^{(-k)}=\langle X_i,\widehat\Theta^{(-k)}\rangle\) and
      \(\widehat\phi_{j,i}=\{Y_i-\sigma(\widehat\eta_i^{(-k)})\}
      \langle\widehat H_{j,k},X_i\rangle\).
    \EndFor
    \State Set
    \(\widehat\psi_j^{(k)}=\langle\Gamma_j,\widehat\Theta^{(-k)}\rangle
    +n_k^{-1}\sum_{i\in I_k}\widehat\phi_{j,i}\).
  \EndFor
\EndFor
\State Aggregate \(\widehat\psi_j=\sum_k(n_k/n)\widehat\psi_j^{(k)}\) and
center \(\widehat\Phi_{ij}=\widehat\phi_{j,i}-\overline\phi_j\).
\State Compute
\(\widehat\Sigma=n^{-1}\widehat\Phi^\top\widehat\Phi\) and
\(\widehat{\operatorname{SE}}_j=\sqrt{\widehat\Sigma_{jj}/n}\).
\State \Return \(\widehat\psi,\widehat\Phi,\widehat\Sigma\), and
\(\{\widehat{\operatorname{SE}}_j\}_{j=1}^q\).
\end{algorithmic}
\end{algorithm}

%\begin{revisionblock}
The score-matrix estimate, tangent space, and information direction are therefore
independent of the evaluation fold \(I_k\).
Conditional on the training sample, observations in \(I_k\) remain i.i.d.,
and the oracle scores \(s(Y_i,\eta_i^\star)\) are centered.  The plug-in scores
\(s(Y_i,\widehat\eta_i^{(-k)})\) need not be centered and are handled by the exact
population Taylor decomposition in Appendix~\ref{app:fdim-proof-decomp}.
%\end{revisionblock}

\paragraph{Why the one-step correction works.}
The plug-in contrast
\(\langle\Gamma,\widehat\Theta^{(-k)}\rangle\) inherits a first-order error from
estimating a high-dimensional low-rank matrix.  Expanding the conditional mean of the
score around \(\Thetastar\) produces the opposite first-order term
\(-G(\widehat\Theta^{(-k)}-\Thetastar)\).  Because
\(\widehat H_{\Gamma,k}\) solves the restricted information equation, its inner product
with this score drift cancels the tangent component of the plug-in error.  Exact-rank
geometry makes the remaining normal component second order.  Cross-fitting prevents the
same observations from choosing the nuisance direction and evaluating its score, so the
leading term becomes an average of conditionally centered influence values.

This cancellation is the bridge between recovery and inference.  Max-norm refinement
makes the Taylor remainder uniformly small over many localized score-gap targets, but
the first-order distribution no longer depends on the convergence rate of every nuisance
coordinate.  Instead, it is governed by the efficient influence function
\[
\phi_\Gamma(W)
=
\{Y-\sigma(\langle X,\Thetastar\rangle)\}
\langle H_\Gamma^\star,X\rangle.
\]
The construction is thus both debiased and design adaptive: comparisons that are more
informative for a target receive weights determined by the inverse restricted Fisher
operator.

%\begin{revisionblock}
Finally, we impose inverse-information stability \emph{on the population
restricted Fisher operator}.  Namely, for a structural constant \(C_A\ge1\),
\[
    \frac{\|A^\dagger\|_{\infty\to\infty}}{\dt\dm}\le C_A.
\]
Proposition~\ref{prop:CA-range-app} shows
\(
    c\ \le\ C_A\ \le\ C\sqrt{\bar d},
\)
and, under uniform task and pair sampling, \(C_A=O(1)\) for constant or near-constant Fisher weights.  The
inferential rates require only \(C_A r_n\to0\), where
\(r_n=\sqrt{\bar d\log^c(n\bar d)/n}\).
%\end{revisionblock}

\subsection{Joint covariance, efficiency, and multiple hypotheses}

For ranking, we need joint inference for many score gaps. For a fixed
collection \(\Gamma_1,\ldots,\Gamma_q\), let
\(\psi_j=\psi_{\Gamma_j}(\Thetastar)\) and define the population efficient
covariance matrix \(\Sigma_n\in\mathbb R^{q\times q}\) by
\[
    (\Sigma_n)_{jk}
    :=
    \ip{P_{\mathbb T}\Gamma_j}
       {A^{-1}P_{\mathbb T}\Gamma_k}.
\]
% The efficient covariance between the
% corresponding one-step estimators is
% \[
%     \Sigma_{jk}
%     =
%     \ip{P_{\mathbb T}\Gamma_j}{A^{-1}P_{\mathbb T}\Gamma_k}.
% \]
This covariance is generally non-diagonal because two score gaps may share a task, a
model, observed comparisons, or low-rank latent factors. Capturing this dependence is
essential for multiple testing: treating correlated score-gap tests as independent can
miscalibrate leaderboard-level uncertainty.

There are two equivalent computational views of this dependence.  Let
\(\widehat\Phi\in\mathbb R^{n\times q}\) be the centered influence matrix returned by
Algorithm~\ref{alg:efficient-gaps}.  One may explicitly form
\(\widehat\Sigma=n^{-1}\widehat\Phi^\top\widehat\Phi\), estimate the correlation
matrix, and simulate a Gaussian vector with that covariance.  Alternatively, one may
draw a single multiplier vector \(\xi\in\mathbb R^n\) and compute
\(n^{-1/2}\widehat\Phi^\top\xi\) directly.  Conditional on the observations,
\begin{equation}\label{eq:multiplier-covariance-identity}
\operatorname{Cov}_\xi\!\left(
n^{-1/2}\widehat\Phi^\top\xi\mid\mathcal D
\right)
=
\frac1n\widehat\Phi^\top\widehat\Phi
=\widehat\Sigma.
\end{equation}
Thus the multiplier bootstrap does not bypass covariance estimation; it realizes the
same estimated joint law without requiring a separate factorization of a potentially
large or singular \(q\times q\) matrix.  Explicit formation of \(\widehat\Sigma\) is
valuable for reporting correlations and checking efficiency.  Direct multiplier
products are often preferable when the hypothesis family is large because each draw
requires only a matrix--vector product and no covariance inversion.

\begin{theorem}[Efficient joint score-gap inference]\label{thm:fdim-clt}
Under the assumptions of
Theorem~\ref{thm:entrywise},
{the uniform alignment condition and the population
inverse-information condition above}, and the score and design regularity conditions
detailed in Appendix~\ref{app:fdim-proof}, there exists
\(Z_{\Gamma,n}\sim N(0,\Sigma_n)\) such that
\[
   \resizebox{0.95\textwidth}{!}{$  \sup_{B\in\mathcal R_q}
    \left|
    \Pr\!\left\{
        \sqrt n(\widehat\psi_1-\psi_1,\ldots,\widehat\psi_q-\psi_q)\in B
    \right\}
    -
    \Pr\{Z_{\Gamma,n}\in B\}
    \right|
    \lesssim
    C_A
    \sqrt{\frac{\bar d\,\log^c(n\bar d)}{n}}
    +
    \left\{\frac{\bar d\,\log^7(nq)}{n}\right\}^{1/6} .$}
\]
%\begin{revisionblock}
Here \(\mathcal R_q\) is the class of rectangles in \(\R^q\), and \(C_A\) is the
population inverse-information stability factor.
%\end{revisionblock}
Consequently, if
\(C_A\sqrt{\bar d\,\log^c(n\bar d)/n}\to0\), the displayed
triangular-array Gaussian approximation error tends to zero, with
\(Z_{\Gamma,n}\sim N(0,\Sigma_n)\) at sample size \(n\).
Indeed, Assumption~\ref{ass:sample-app} fixes \(c\ge c_0\ge7\), so the
displayed condition also makes the oracle CCK term
\(\{\bar d\log^7(nq)/n\}^{1/6}\) vanish for fixed \(q\).
Moreover, let \(k(i)\) denote the fold containing observation \(i\), and define
\[
    \widehat\phi_{j,i}
    :=
    s\!\left(Y_i,\widehat\eta_i^{(-k(i))}\right)
    \ip{\widehat H_{\Gamma_j,k(i)}}{X_i},
    \qquad
    \overline\phi_j:=\frac1n\sum_{i=1}^n\widehat\phi_{j,i}.
\]
The following centered empirical influence-function covariance is consistent for
\(\Sigma_{jk}\):
\[
    \widehat\Sigma_{jk}
    =
    n^{-1}\sum_{i=1}^n
    (\widehat\phi_{j,i}-\overline\phi_j)
    (\widehat\phi_{k,i}-\overline\phi_k).
\]
\end{theorem}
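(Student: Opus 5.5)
The plan is the standard one-step decomposition: write each cross-fitted estimator as an oracle influence-function average plus remainders, show the remainders are uniformly negligible, and then apply a high-dimensional CLT to the oracle average.

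\textbf{Decomposition of each fold.} Fix a fold $k$ and a target $\Gamma=\Gamma_j$, and write $\Delta_k:=\widehat\Theta^{(-k)}-\Thetastar$. Using $s(Y,\widehat\eta)=s(Y,\eta^\star)-\{\sigma(\widehat\eta)-\sigma(\eta^\star)\}$, I decompose
\[
\widehat\psi^{(k)}_\Gamma-\psi_\Gamma
=\frac1{n_k}\sum_{i\in I_k}\phi_\Gamma(W_i)+R_{1,k}+R_{2,k}+R_{3,k}.
\]
\begin{itemize}
\item $R_{1,k}$ is the centered empirical-process term
$\frac1{n_k}\sum_{i\in I_k}\bigl[s(Y_i,\widehat\eta_i)\langle\widehat H_{\Gamma,k},X_i\rangle-\phi_\Gamma(W_i)\bigr]$ minus its conditional mean.
\item $R_{2,k}$ is the bias term $\langle\Gamma,\Delta_k\rangle-\E_X[\{\sigma(\langle X,\widehat\Theta^{(-k)}\rangle)-\sigma(\langle X,\Thetastar\rangle)\}\langle\widehat H_{\Gamma,k},X\rangle]$.
\item $R_{3,k}$ collects the Taylor remainder.
\end{itemize}

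\textbf{The bias term $R_{2,k}$.} Expanding $\sigma$ to first order turns the expectation into $\langle \widehat G_k\Delta_k,\widehat H_{\Gamma,k}\rangle$ up to a second-order term bounded by $\|\Delta_k\|_\infty^2\,\E|\langle \widehat H,X\rangle|$. Because $\widehat A_k\widehat H_{\Gamma,k}=\widehat P_{\mathbb T}\Gamma$, this leaves $\langle\Gamma,\Delta_k\rangle-\langle \widehat P_{\mathbb T}\Gamma,\Delta_k\rangle=\langle\Gamma,(I-\widehat P_{\mathbb T})\Delta_k\rangle$. For exact-rank-$r$, row-centered matrices, the normal component $(I-\widehat P_{\mathbb T})\Delta_k$ is quadratic in the error. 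Combined with incoherence, this is bounded by $\|\Delta_k\|_\infty^2$ times dimension factors. I also need to absorb the error from replacing $\widehat G_k$ by $G$ in the linearization.

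\textbf{Controlling the plug-in direction.} For the efficient direction I use the $C_A$ condition. A perturbation bound $\|\widehat A_k^\dagger-A^\dagger\|_{\infty\to\infty}\lesssim C_A\,\dt\dm\cdot C_A\|\Delta_k\|_\infty$, valid on the event of Theorem~\ref{thm:entrywise} applied to $I_k^c$, gives two facts:
\begin{itemize}
\item $\|\widehat H_{\Gamma,k}-H_\Gamma^\star\|_\infty\lesssim C_A\,\dt\dm\, r_n\,\|P_{\mathbb T}\Gamma\|_\infty$;
\item $\langle H,X\rangle$ has scale $\sqrt{\dt\dm/\bar d}$, consistent with the alignment condition normalizing $V_{\rm eff}$.
\end{itemize}
With these, $\sqrt n\,|R_{2,k}+R_{3,k}|/\sqrt{V_{\rm eff}}\lesssim C_A r_n$. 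By cross-fitting, conditional on $I_k^c$ the summands of $R_{1,k}$ are i.i.d.\ and centered with variance $\lesssim (C_Ar_n)^2V_{\rm eff}$, so Bernstein's inequality gives $\sqrt n R_{1,k}=O_p(C_Ar_n)$. A union bound over the $q$ targets and the fixed $K_{\rm cf}$ folds costs only logarithmic factors.

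\textbf{Gaussian approximation.} Aggregating over folds, $\sqrt n(\widehat\psi-\psi)=n^{-1/2}\sum_i\phi(W_i)+\rho$ with $\|\rho\|_\infty\lesssim C_Ar_n$ with high probability. I then apply the Chernozhukov--Chetverikov--Kato hyperrectangle CLT to the i.i.d.\ oracle vectors $\phi(W_i)$, whose covariance is exactly $\Sigma_n$ since $\E[s^2\mid X]=I(\eta^\star)$ and $AH^\star=P_{\mathbb T}\Gamma$. Their normalized envelope is $\lesssim\sqrt{\bar d}$ in the scale fixed by alignment, which yields the $\{\bar d\log^7(nq)/n\}^{1/6}$ term. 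Gaussian anti-concentration (Nazarov) converts the sup-norm perturbation $\rho$ into an additive $C_Ar_n\sqrt{\log q}$ term, which is absorbed into the polylog.

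\textbf{Variance consistency.} Consistency of $\widehat\Sigma$ follows from $\max_i|\widehat\phi_{j,i}-\phi_j(W_i)|\lesssim C_Ar_n\cdot\mathrm{scale}$ together with a law of large numbers for $n^{-1}\sum\phi_j\phi_k$ and the vanishing of $\overline\phi_j$.

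\textbf{Main obstacle.} I expect the hardest step to be the second-order control of the normal component and of the estimated-tangent inverse $\widehat A_k^{-1}$ in $\infty\to\infty$ norm. This step requires a perturbation argument for projected inverses on differing tangent spaces. I would first try writing $\widehat A_k^\dagger-A^\dagger$ via a resolvent identity on the ambient space, using $\|\widehat P_{\mathbb T}-P_{\mathbb T}\|$ bounded through the $2\to\infty$ subspace errors implied by the refinement analysis.
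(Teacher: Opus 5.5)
Your architecture matches the paper's: foldwise one-step decomposition, cancellation of the tangent part via $\widehat A_k\widehat H_{\Gamma,k}=\widehat P_{\mathbb T}\Gamma$, second-order normal leakage from exact rank, conditional Bernstein, the CCK rectangle theorem with standardized envelope $\sqrt{\bar d}$, and anti-concentration. The genuine gap is the norm you use to control the plug-in direction in the variance-type terms.

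\textbf{Why entrywise control fails.} The efficient influence function is spiky. Since $\ip{H^\star_\Gamma}{\Gamma}=V_{\rm eff}(\Gamma)\asymp\dt+\dm$ and $\Gamma$ is itself a design atom, $\sup_X|\ip{H^\star_\Gamma}{X}|$ is of order $\bar d$. Its root-mean-square, however, is only $\sqrt{V_{\rm eff}}\asymp\sqrt{\bar d}$. (Incidentally, that is the typical scale of $\ip{H^\star_\Gamma}{X}$, not $\sqrt{\dt\dm/\bar d}$, unless $\dt\asymp\dm$.)
\begin{itemize}
\item Your entrywise bound is $\infnorm{\widehat H_{\Gamma,k}-H^\star_\Gamma}\lesssim C_A\dstar r_n\infnorm{P_{\mathbb T}\Gamma}\lesssim C_Ar_n\bar d$. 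It only gives $\E\ip{\widehat H_{\Gamma,k}-H^\star_\Gamma}{X}^2\lesssim C_A^2r_n^2\bar d^{\,2}\asymp(C_A^2r_n^2\bar d)\,V_{\rm eff}$.
\item That is a factor $\bar d$ worse than the $(C_Ar_n)^2V_{\rm eff}$ you assert for $R_{1,k}$. The studentized $\sqrt n R_{1,k}$ would then be of order $C_Ar_n\sqrt{\bar d}$, which does not vanish when $n\asymp\bar d\,\mathrm{polylog}(n\bar d)$.
\item Pairing the $\ell_\infty$ bound with an $\ell_1$ bound (self-adjointness gives $\norm{\cdot}_{1\to1}=\norm{\cdot}_{\infty\to\infty}$) recovers at best relative variance $(C_A^2r_n)^2$. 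The $\infty\to\infty$ resolvent perturbation of $\widehat{\mathcal A}_k^\dagger$ costs two inverse factors: that is the $C_A^2$ in your first display, which your second display drops. So this route proves the CLT only under $C_A^2r_n\to0$.
\end{itemize}

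\textbf{What is missing.} You need the $C_A$-free mean-square bound (D.1) of Appendix~\ref{app:fdim-proof-decomp}: $\Fnorm{\widehat H_\Gamma-H^\star_\Gamma}\le(Cr_n/\alpha_0)\Fnorm{H^\star_\Gamma}$. The paper proves it by operator-norm rather than $\infty\to\infty$ perturbation, using:
\begin{itemize}
\item restricted energy bounds of order $1/\dstar$ on $\mathbb T+\widehat{\mathbb T}$;
\item a Procrustes map $W$ with $W\mathbb T=\widehat{\mathbb T}$ and $\norm{W-I}_{\rm op}\le Cr_n$;
\item a resolvent identity on the common space, with alignment to normalize.
\end{itemize}
Since $\E\ip{D}{X}^2\asymp\Fnorm D^2/\dstar$ for row-centered $D$, this gives conditional variance $\lesssim r_n^2V_{\rm eff}$ for the centered term.

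\textbf{Covariance step.} The same issue breaks it: a max-over-$i$ bound on $|\widehat\phi_{j,i}-\phi_j(W_i)|$ is the wrong tool. Even the score plug-in part is controlled uniformly only by $Cr_n\sup_X|\ip{\widehat H_{\Gamma_j}}{X}|\asymp r_n\sqrt{\bar d}\,\sqrt{\Sigma_{jj}}$, which is not small relative to $\sqrt{\Sigma_{jj}}$ at near-linear sample size. What works is the averaged square-loss bound $\Pn(\widehat\phi_j-\phi_j)^2\le Cr_n^2\Sigma_{jj}$, from (D.1) and the evaluation-fold energy bound, followed by Cauchy--Schwarz, as in Propositions~\ref{prop:fdim-var-cons-app} and~\ref{prop:fdim-cov-cons-app}.

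\textbf{Bias step.} There is a slip in your algebra. The equation $\widehat A_k\widehat H_{\Gamma,k}=\widehat P_{\mathbb T}\Gamma$ fixes only the $\widehat{\mathbb T}$-component of $\widehat G_k\widehat H_{\Gamma,k}$. The leftover is therefore $\ip{\Gamma-\widehat G_k\widehat H_{\Gamma,k}}{(I-\widehat P_{\mathbb T})\Delta_k}$, not $\ip{\Gamma}{(I-\widehat P_{\mathbb T})\Delta_k}$. Bounding it needs $\oneNorm{\widehat G_k\widehat H_{\Gamma,k}}\lesssim C_A\oneNorm\Gamma$, the analogue of (D.3); this is where $C_A$ actually enters the bias.

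With that correction your bias step is sound and slightly leaner than the paper's. Because you linearize at $\widehat\eta$, the design-averaged $\widehat G_k$ appears exactly, so there is no $\widehat G_k$-versus-$G$ error to absorb. The paper linearizes at $\eta^\star$ and must bound $\ip{\widehat P_{\mathbb T}(\widehat G-G)\widehat H_\Gamma}{\widehat P_{\mathbb T}\Delta}$ separately. Your bias step also needs no control of $\widehat H_{\Gamma,k}-H^\star_\Gamma$ at all.
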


Theorem~\ref{thm:fdim-clt} yields confidence intervals and joint tests for fixed
collections of score-gap hypotheses. For a single contrast \(\Gamma_j\), the estimated
standard error is
\(
\widehat{\operatorname{SE}}_j
:=
\sqrt{\frac{\widehat\Sigma{jj}}{n}},
\)
so a pointwise \((1-\alpha)\)-confidence interval for
\(\psi_j=\psi_{\Gamma_j}(\Thetastar)\) is
\(
\widehat\psi_j
\pm
z_{1-\alpha/2}\widehat{\operatorname{SE}}_j .
\)
% For a single gap, with
% \(\widehat{\operatorname{SE}}_\Gamma=(\widehat V_\Gamma/n)^{1/2}\), a pointwise
% \((1-\alpha)\)-confidence interval is
% \(
%     \widehat\psi_\Gamma\pm z_{1-\alpha/2}\widehat{\operatorname{SE}}_\Gamma .
% \)
For multiple gaps, \(\widehat\Sigma\) captures their dependence, which is crucial for
valid joint testing of overlapping ranking claims. The covariance \(\Sigma\) is also the
semiparametric efficiency lower bound: any regular asymptotically Gaussian estimator
has limiting covariance no smaller than \(\Sigma\) in the positive-semidefinite order,
and the proposed one-step estimator attains this bound. 

The positive-semidefinite ordering is important.  Efficiency is not merely a collection
of optimal marginal variances: it states that every fixed linear combination of the
reported gaps has the smallest regular asymptotic variance.  Consequently the same
influence representation that gives efficient confidence intervals also supplies the
correct off-diagonal covariance for joint testing.  Redundant gap systems may make
\(\Sigma\) singular---for example, three pairwise gaps on the same task satisfy an exact
linear identity.  Neither Algorithm~\ref{alg:efficient-gaps} nor the multiplier
calibration requires \(\Sigma^{-1}\), so this redundancy causes no conceptual problem.

This part contributes a ranking-specific inferential interface.  The target
family consists of sparse score-gap directions, the nuisance lies on a centered low-rank
manifold, and the relevant inverse information operator is restricted to its tangent
space.  The resulting influence matrix simultaneously serves three roles: it debiases
each score gap, estimates the full joint covariance, and becomes the input to the
high-dimensional multiplier algorithm.  No external efficient estimator is assumed as
a black box; Appendix~\ref{app:fdim-proof} proves the information lower bound, the
one-step expansion, and covariance consistency in the present matrix model.

For ranking, however, the relevant family of score gaps grows with the number of models
and tasks. Thus fixed-dimensional inference must be strengthened to a uniform expansion
over growing collections of score-gap contrasts.

\textbf{Uniform expansion for ranking tests.}
For growing collections of score-gap contrasts, our analysis gives the uniform expansion
\[
   \resizebox{0.8\textwidth}{!}{$  \frac{\sqrt n(\widehat\psi_j-\psi_j)}{\sigma_j}
    =
    \frac1{\sqrt n}\sum_{i=1}^n \frac{\phi_j(W_i)}{\sigma_j}
    +
    r_j,
    \qquad
    \max_j |r_j|
    \lesssim
    C_A
    \sqrt{\frac{\bar d\,\log^c(n\bar d)}{n}},$}
\]
with high probability. This expansion reduces simultaneous testing of correlated score
gaps to a high-dimensional Gaussian approximation problem. Together with Theorem~\ref{thm:fdim-clt}, it provides the technical foundation
for the rank and top-\(K\) certification procedure developed next.

\section{Simultaneous Rank and \texorpdfstring{Top-\(K\)}{Top-K} Certification}
\label{sec:ranking}

We now convert score-gap inference into leaderboard-level certification. Fix a task
\(t\) and a model \(m\). The rank of \(m\) on task \(t\) is determined by the signs of
the \(\dm-1\) score gaps
\(
    \Delta_{t,\ell}^{(m)}
    :=
    \Thetastar_{t,\ell}-\Thetastar_{t,m}
    =
    \psi_{\Gamma_{t,\ell}^{(m)}}(\Thetastar),
    \, \ell\ne m,
\)
with $\Gamma_{t,\ell}^{(m)}:=e_t(e_\ell-e_m)^\top$, since
$
    \rk_t(m)
    =
    1+\sum_{\ell\ne m}\mathbf 1\{\Delta_{t,\ell}^{(m)}>0\}
     +\sum_{\ell\ne m}
       \mathbf 1\{\Delta_{t,\ell}^{(m)}=0,\ \ell\prec_0m\}.
$
Thus rank inference is a multiple-testing problem over the competitors of \(m\): each
positive gap corresponds to one model ranked above \(m\), while an exact zero is
resolved by the fixed order \(\prec_0\). The goal is not only to test
one gap at a time, but to produce a confidence set for the rank and a certified
top-\(K\) membership decision.

\subsection{Studentized gap statistics and multiplier calibration}

For each \(\ell\ne m\), denote the
one-step estimator from Section~\ref{sec:fdim} as \( \widehat\Delta_{t,\ell}^{(m)}
    :=
    \widehat\psi_{\Gamma_{t,\ell}^{(m)}}\), with foldwise estimated
influence-function summand and standard error
\[
   \resizebox{1\textwidth}{!}{$ \widehat\phi_{t,\ell,i}^{(m)}
    :=
    s\!\left(Y_i,\widehat\eta_i^{(-k(i))}\right)
    \ip{\widehat H_{\Gamma_{t,\ell}^{(m)},k(i)}}{X_i},
    \quad
    \overline\phi_{t,\ell}^{(m)}
    :=
    \frac1n\sum_{i=1}^n\widehat\phi_{t,\ell,i}^{(m)},
    \quad
    \widehat\sigma_{t,\ell}^{(m)}
    :=
    \left\{
        \frac1n\sum_{i=1}^n
        \bigl(\widehat\phi_{t,\ell,i}^{(m)}
              -\overline\phi_{t,\ell}^{(m)}\bigr)^2
    \right\}^{1/2}.$}
\]
A pointwise interval for each gap is not enough: the rank depends on all
\(\dm-1\) gap signs jointly, and these gap estimators are correlated through shared
tasks, models, comparisons, and low-rank factors. 
We need a joint calibration that accounts for this covariance structure. Equivalently, simultaneous confidence bands require controlling the maximum studentized error over all competitors,
$  \max_{\ell\ne m}
    \left|
        \frac{\sqrt n(\widehat\Delta_{t,\ell}^{(m)}-\Delta_{t,\ell}^{(m)})}
             {\widehat\sigma_{t,\ell}^{(m)}}
    \right|.$
Draw i.i.d. multipliers \(\xi_i\sim N(0,1)\) and define the multiplier-bootstrap
statistic
\[
   \resizebox{0.4\textwidth}{!}{$  T_{t,m}^{\ast}
    :=
    \max_{\ell\ne m}
    \left|
        \frac1{\sqrt n}
        \sum_{i=1}^n
        \xi_i
        \frac{\widehat\phi_{t,\ell,i}^{(m)}
              -\overline\phi_{t,\ell}^{(m)}}
             {\widehat\sigma_{t,\ell}^{(m)}}
    \right|.$}
\]
Conditional on the data, this bootstrap process preserves the empirical dependence
among the gap statistics. Let \(c_{t,m}(1-\alpha)\) be the conditional
\((1-\alpha)\)-quantile of \(T_{t,m}^{\ast}\). We form simultaneous confidence bands
\[
   \resizebox{0.5\textwidth}{!}{$  \widehat I_{t,\ell}^{(m)}
    :=
    \left[
        \widehat\Delta_{t,\ell}^{(m)}
        \pm
        c_{t,m}(1-\alpha)
        \frac{\widehat\sigma_{t,\ell}^{(m)}}{\sqrt n}
    \right]
    =
    [\widehat L_{t,\ell}^{(m)},\widehat U_{t,\ell}^{(m)}].$}
\]

For a general family \(\mathcal J\) of ranking claims, index the contrasts by
\(j\in\mathcal J\), let
\(\widehat\sigma_j^2=\widehat\Sigma_{jj}\), and define the observed studentized
statistics
\[
T_j=\frac{\sqrt n\,\widehat\psi_j}{\widehat\sigma_j}.
\]
For the one-sided null \(H_{0j}:\psi_j\le0\), a large positive \(T_j\) is evidence that
the first model in \(\Gamma_j\) is better.  Two-sided simultaneous bands use the maximum
absolute centered error; directional ranking decisions then follow by checking whether
the resulting interval lies entirely above or below zero.

\begin{algorithm}[H]
\caption{Multiplier Calibration and Simultaneous Ranking Certification}
\label{alg:multiplier-ranking}
\begin{algorithmic}[1]
%\revisioncolor
\Require Outputs \((\widehat\psi,\widehat\Phi,\widehat\Sigma)\) from
Algorithm~\ref{alg:efficient-gaps}, contrast index set \(\mathcal J\), level \(\alpha\),
and number of draws \(B_{\rm mb}\).
\Ensure Simultaneous score-gap bands, rejected signs, rank intervals, and top-\(K\)
certificates.
\State Set \(\widehat\sigma_j=\sqrt{\widehat\Sigma_{jj}}\) and
\(T_j=\sqrt n\widehat\psi_j/\widehat\sigma_j\) for every \(j\in\mathcal J\).
\For{\(b=1,\ldots,B_{\rm mb}\)}
  \State Draw one common vector \(\xi^{(b)}=(\xi_1^{(b)},\ldots,\xi_n^{(b)})\),
  with independent \(N(0,1)\) coordinates.
  \State For every \(j\), compute
  \[
  Z_j^{\ast(b)}
  =\frac1{\sqrt n\,\widehat\sigma_j}
   \sum_{i=1}^n\xi_i^{(b)}\widehat\Phi_{ij},
  \qquad
  M^{\ast(b)}=\max_{j\in\mathcal J}|Z_j^{\ast(b)}|.
  \]
\EndFor
\State Let \(c_{1-\alpha}^\ast\) be the empirical \((1-\alpha)\)-quantile of
\(\{M^{\ast(b)}\}_{b=1}^{B_{\rm mb}}\).
\State For every \(j\), form
\(I_j=[\widehat\psi_j\pm c_{1-\alpha}^\ast\widehat\sigma_j/\sqrt n]\).
\State Certify \(\psi_j>0\) when \(\inf I_j>0\), certify \(\psi_j<0\) when
\(\sup I_j<0\), and otherwise mark the sign unresolved.
\For{each requested task--model pair \((t,m)\)}
  \State Set \(A_t(m)=\#\{\ell\ne m:\widehat L_{t,\ell}^{(m)}>0\}\) and
  \(B_t(m)=\#\{\ell\ne m:\widehat U_{t,\ell}^{(m)}<0\}\).
  \State Return the rank interval
  \(\widehat{\mathcal R}_t(m)=[1+A_t(m),\dm-B_t(m)]\).
  \State If \(\dm-B_t(m)\le K\), certify top-\(K\); if \(1+A_t(m)>K\),
  certify non-top-\(K\); otherwise return unresolved.
\EndFor
\end{algorithmic}
\end{algorithm}

The same multiplier \(\xi_i^{(b)}\) must be used across all targets in draw \(b\).
This is what preserves their empirical correlation.  Drawing independent multipliers
for different columns would replace \(\widehat\Sigma\) by a diagonal covariance and
would generally invalidate the joint calibration.  If \(q=|\mathcal J|\) is moderate,
one may equivalently simulate a Gaussian vector with correlation
\(\widehat D^{-1}\widehat\Sigma\widehat D^{-1}\), where
\(\widehat D=\operatorname{diag}(\widehat\sigma_1,\ldots,\widehat\sigma_q)\).
Algorithm~\ref{alg:multiplier-ranking} avoids forming or factorizing that matrix and
remains well defined when some contrasts are redundant.

The critical value is the quantile of the maximum of many correlated studentized
errors, not the usual pointwise Gaussian quantile. Under the uniform expansion in Section~\ref{sec:fdim}, high-dimensional Gaussian
approximation and multiplier-bootstrap theory justify calibrating the maximum
studentized score-gap error over the growing contrast family
\citep{chernozhukov2013gaussian,chernozhukov2017central}.

The inferential target is family-wise coverage rather than a collection of marginal
\(p\)-values.  Ranking is a logical function of many signs: one false sign declaration
can make a rank interval too narrow or certify the wrong top-\(K\) status.  Calibrating
the maximum produces a single event on which every sign statement used by the ranking
map is correct.  Compared with Bonferroni calibration, the multiplier critical value
adapts to the effective number of distinct directions.  Strongly correlated or redundant
gaps need not pay the same price as independent tests, while the guarantee remains valid
without guessing which correlations are negligible.

The choice of \(\mathcal J\) specifies the scientific claim.  For one model on one task,
it contains \(\dm-1\) competitors.  For one model across all tasks, it contains
\(\dt(\dm-1)\) gaps.  Simultaneous certification of every model and every task can use
the full family, with a wider critical value reflecting the stronger claim.  This
separation between estimating all gaps and choosing the claim family makes multiplicity
transparent: changing the family changes the calibration, not the underlying efficient
estimator.

\subsection{Inverting simultaneous gap bands into ranks and
\texorpdfstring{top-\(K\)}{top-K} sets}

On the simultaneous coverage event, the score-gap bands can be inverted into a rank
confidence band. Define
\[
    A_t(m):=\#\{\ell\ne m:\widehat L_{t,\ell}^{(m)}>0\},
    \qquad
    B_t(m):=\#\{\ell\ne m:\widehat U_{t,\ell}^{(m)}<0\}.
\]
Here \(A_t(m)\) is the number of competitors certified to be above \(m\), and
\(B_t(m)\) is the number certified to be below \(m\). The remaining competitors have
confidence bands crossing zero and are therefore unresolved. Thus a valid confidence
band for the rank is
\[
    \widehat{\mathcal R}_t(m)
    :=
    [\,1+A_t(m),\ \dm-B_t(m)\,].
\]

\begin{theorem}[Rank confidence band for one task]\label{thm:rank-one-task}
Under the assumptions of Theorem~\ref{thm:fdim-clt}, the uniform remainder condition
from Section~\ref{sec:fdim}, and the high-dimensional Gaussian approximation conditions
in Appendix~\ref{app:ranking-proof},
\[
    \Pr\{\rk_t(m)\in\widehat{\mathcal R}_t(m)\}
    \ge
    1-\alpha-o(1).
\]
\end{theorem}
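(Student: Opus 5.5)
The plan has two parts: a deterministic inversion step and a probabilistic coverage step. Define the simultaneous coverage event
\[
\mathcal E_{t,m}:=\Bigl\{\Delta_{t,\ell}^{(m)}\in\widehat I_{t,\ell}^{(m)}\ \text{for all}\ \ell\ne m\Bigr\}
=\Bigl\{\max_{\ell\ne m}\Bigl|\tfrac{\sqrt n(\widehat\Delta_{t,\ell}^{(m)}-\Delta_{t,\ell}^{(m)})}{\widehat\sigma_{t,\ell}^{(m)}}\Bigr|\le c_{t,m}(1-\alpha)\Bigr\}.
\]
I will first show that on $\mathcal E_{t,m}$ we have $\rk_t(m)\in\widehat{\mathcal R}_t(m)$ deterministically. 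Suppose $\widehat L_{t,\ell}^{(m)}>0$. Then $\Delta_{t,\ell}^{(m)}>0$, so $\ell$ is strictly above $m$ and contributes to $\rk_t(m)$ through the first indicator sum. Hence $\rk_t(m)\ge 1+A_t(m)$. Symmetrically, suppose $\widehat U_{t,\ell}^{(m)}<0$. Then $\Delta_{t,\ell}^{(m)}<0$, so $\ell$ contributes to neither indicator sum; the strict inequality also rules out the tie term. There are at most $\dm-1-B_t(m)$ competitors that can contribute, so $\rk_t(m)\le \dm-B_t(m)$. Ties are harmless because a zero gap is never certified in either direction. It therefore suffices to show $\Pr(\mathcal E_{t,m})\ge 1-\alpha-o(1)$.

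For the coverage step I will reduce the studentized maximum to a Gaussian maximum. Set $q=\dm-1$ and $\sigma_\ell^2=(\Sigma_n)_{\ell\ell}$. By the uniform expansion of Section~\ref{sec:fdim},
\[
\frac{\sqrt n(\widehat\Delta_\ell-\Delta_\ell)}{\sigma_\ell}=n^{-1/2}\sum_i\frac{\phi_\ell(W_i)}{\sigma_\ell}+r_\ell,\qquad \max_\ell|r_\ell|\lesssim C_A r_n
\]
with high probability. Next I replace $\sigma_\ell$ by $\widehat\sigma_\ell$. This needs $\max_\ell|\widehat\sigma_\ell/\sigma_\ell-1|=o_p(1/\log q)$, which I will obtain by upgrading the covariance consistency in Theorem~\ref{thm:fdim-clt} to a uniform bound over $\ell$. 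The tools are cross-fitting (so nuisance errors are fold-independent), the max-norm bound of Theorem~\ref{thm:entrywise} applied to $\widehat\eta^{(-k)}$, and a perturbation bound on $\widehat A_k^{-1}$ controlled by $C_A$. Finally, the Chernozhukov--Chetverikov--Kato high-dimensional CLT, applied to the oracle studentized sum with influence values bounded by $\|H^\star\|$-type constants, gives
\[
\sup_x\Bigl|\Pr\bigl(\max_\ell|\cdot|\le x\bigr)-\Pr\bigl(\max_\ell|Z_\ell|\le x\bigr)\Bigr|\lesssim\Bigl\{\bar d\log^7(nq)/n\Bigr\}^{1/6},
\]
where $Z\sim N(0,D^{-1}\Sigma_nD^{-1})$. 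The perturbations $r_\ell$ and the studentization error are absorbed by Gaussian anti-concentration (Nazarov's inequality), which costs a factor $\sqrt{\log q}$ times the perturbation size.

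The bootstrap step then shows that $c_{t,m}(1-\alpha)$ approximates the $(1-\alpha)$-quantile of $\max_\ell|Z_\ell|$. Conditionally on the data, $T^\ast_{t,m}$ is the maximum of a centered Gaussian vector whose covariance is the estimated correlation matrix, by \eqref{eq:multiplier-covariance-identity}. The CCK Gaussian comparison inequality bounds the Kolmogorov distance between $T^\ast_{t,m}$ and $\max|Z|$ by $(\Delta_\infty\log^2 q)^{1/3}$, where $\Delta_\infty$ is the max-entry error of the estimated correlation matrix. Combining this with the previous paragraph via the standard quantile-sandwich argument gives $\Pr(\mathcal E_{t,m})\ge1-\alpha-o(1)$, which proves the theorem.

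The main obstacle is the uniform entrywise covariance consistency, i.e.\ showing $\Delta_\infty=o_p(\log^{-2}q)$. This requires controlling $\widehat\phi_{\ell,i}-\phi_\ell(W_i)$ uniformly over $\ell$ and $i$, which splits into two pieces: the score plug-in error, bounded by $\varepsilon_n$, and the direction error $\widehat H_{\ell,k}-H_\ell^\star$. The direction error in turn needs tangent-projector perturbation bounds together with the $C_A$ inverse stability, and those need $C_A r_n\log^2 q\to0$. I would first try to bound $\|\widehat H_{\ell,k}-H^\star_\ell\|_\infty$ by $C_A r_n\|H^\star_\ell\|_\infty$, using a Neumann-series expansion of $\widehat A_k^{-1}$ around $A^{-1}$. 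The other ingredients are standard and should follow from results already stated.
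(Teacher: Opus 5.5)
Your deterministic inversion is exactly the paper's proof of Theorem~\ref{thm:rank-one-task-app}. Your coverage architecture is an acceptable rearrangement of the paper's CCK approximate-means argument: uniform expansion, uniform studentization, oracle CCK approximation, anti-concentration, then a conditional Gaussian comparison of $T^\ast_{t,m}$ with $\max_\ell|Z_\ell|$ through $\max|\widehat\rho-\rho|$. The paper instead compares $T^\ast$ with the oracle multiplier statistic through $c_n^2=\max_j\Pn(\widehat Z_j^{\rm c}-Z_j^{\rm c})^2$ and $q_n$, and compares the oracle multiplier with the Gaussian through $\Delta_n$. Both routes hinge on the same estimate of $\widehat\phi_\ell-\phi_\ell$, and that is where your plan breaks. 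The way you propose to handle your ``main obstacle'' does not close in the regime the theorem covers, namely $n\gtrsim\bar d\,\mathrm{polylog}(n\bar d)$ with $C_Ar_n\to0$.

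The problem is that you control $\widehat\phi_{\ell,i}-\phi_\ell(W_i)$ uniformly in $i$, via $\norm{\widehat H_{\ell,k}-H_\ell^\star}_\infty\le C_Ar_n\norm{H_\ell^\star}_\infty$ and $|\widehat s_i-s_i^\star|\le2\varepsilon_n$. This brings in the envelope instead of the $L_2$ scale.

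- For a score gap, $\sigma_\ell\asymp\sqrt{\bar d}$ (Lemma~\ref{lem:Sigma-scale-app}).
- But $\norm{H^\star_\ell}_\infty$ and $\sup_X|\ip{H^\star_\ell}{X}|$ are generically of order $\bar d$. Already under uniform sampling with constant Fisher weights, $H^\star_\ell=\lambda_0^{-1}P_{\mathbb T}\Gamma$ with $\lambda_0^{-1}\asymp\dstar$, and the entries of $P_{\mathbb T}\Gamma$ on the support of $\Gamma$ are of order $\dt^{-1}+\dm^{-1}$. In other words, $L_n\asymp\sqrt{\bar d}$.
- Your pointwise bounds therefore give only $|\widehat\phi_{\ell,i}-\phi_\ell(W_i)|/\sigma_\ell\lesssim C_Ar_n\sqrt{\bar d}$, and hence a correlation error of that order.
- This vanishes only if $n\gg C_A^2\bar d^{2}\,\mathrm{polylog}(n\bar d)$.

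The missing idea is to work in $L_2(\Pn)$ and Frobenius norm, as in Proposition~\ref{prop:eif-sqr-bound}. Set $D_\ell:=\widehat H_\ell-H^\star_\ell$. Since $\widehat H_\ell$ is fixed given the training folds and both $\widehat H_\ell$ and $D_\ell$ lie in $\mathbb T+\widehat{\mathbb T}$, the evaluation-fold restricted-energy bound \eqref{eq:foldwise-energy-app} gives $\Pn\ip{\widehat H_\ell}{X}^2\lesssim\Fnorm{\widehat H_\ell}^2/\dstar\asymp\sigma_\ell^2$ and $\Pn\ip{D_\ell}{X}^2\lesssim\Fnorm{D_\ell}^2/\dstar$. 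Combine these with $\infnorm{\Thetahat-\Thetastar}\lesssim r_n$ and the Frobenius relative direction bound \eqref{eq:relative-direction-app}. That bound comes from Procrustes transport plus an operator-norm resolvent identity using only the $c/\dstar$--$C/\dstar$ spectral bounds. The result is $\max_\ell\Pn(\widehat\phi_\ell-\phi_\ell)^2/\sigma_\ell^2\lesssim r_n^2$.

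Cauchy--Schwarz and Bernstein on the oracle products then give $\max|\widehat\rho-\rho|\lesssim r_n$ and $\max_\ell|\widehat\sigma_\ell/\sigma_\ell-1|\lesssim r_n$, with no $C_A$ factor (Propositions~\ref{prop:fdim-var-cons-app} and~\ref{prop:fdim-cov-cons-app}). $C_A$ enters only through the one-step remainder. So the bootstrap step needs neither an $\infty\to\infty$ Neumann expansion of $\widehat A_k^{-1}$ nor a condition like $C_Ar_n\log^2q\to0$.
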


\textbf{Top-\(K\) membership certification.}
The rank band directly yields a three-way top-\(K\) decision. If
$
    \dm-B_t(m)\le K,
$
then even the worst rank compatible with the confidence bands is at most \(K\), so we
certify \(m\in\TopK(t)\). If
$
    1+A_t(m)>K,
$
then even the best compatible rank is larger than \(K\), so we certify
\(m\notin\TopK(t)\). Otherwise, the membership decision is statistically unresolved. In LLM evaluation, this three-way output: $\{\textit{certified top-}K,\,
\textit{certified non-top-}K,\,
\textit{unresolved}\}$
separates reliable
leaderboard claims from comparisons that remain too noisy to certify.

\textbf{Simultaneous inference across tasks.}
Similar construction can be applied simultaneously across tasks. For a fixed model
\(m\), replace the single-task family \(\{\ell:\ell\ne m\}\) by
\[
    \mathcal J(m):=\{(t,\ell):t\in[\dt],\ \ell\ne m\},
    \qquad
    |\mathcal J(m)|=\dt(\dm-1).
\]
We compute the multiplier-bootstrap critical value for the maximum over
\(\mathcal J(m)\), and use the resulting bands to construct
\(\widehat{\mathcal R}_t(m)\) for every task \(t\). Since \(|\mathcal J(m)|\) is
polynomial in \(\bar d\), the same high-dimensional Gaussian approximation applies up
to logarithmic factors.

\begin{corollary}[Simultaneous task-wise rank inference]\label{cor:rank-all-tasks}
Under the conditions of Theorem~\ref{thm:rank-one-task}, with the bootstrap maximum
taken over \(\mathcal J(m)\),
\[
    \Pr\left\{
        \rk_t(m)\in\widehat{\mathcal R}_t(m)
        \ \text{for all }t\in[\dt]
    \right\}
    \ge
    1-\alpha-o(1).
\]
Consequently, the certified top-\(K\), certified non-top-\(K\), and unresolved
decisions for model \(m\) are simultaneously valid across all tasks.
\end{corollary}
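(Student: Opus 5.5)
The plan is to reduce the corollary to a single simultaneous-coverage event for the enlarged family \(\mathcal J(m)\). Then we show that on this event every rank statement holds deterministically. Let
\[
\mathcal E_m:=\Bigl\{\Delta_{t,\ell}^{(m)}\in\widehat I_{t,\ell}^{(m)}\ \text{for all }(t,\ell)\in\mathcal J(m)\Bigr\},
\]
where the bands use the common critical value \(c^\ast_{1-\alpha}\) computed from the maximum over \(\mathcal J(m)\) in Algorithm~\ref{alg:multiplier-ranking}.

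\textbf{Step 1 (deterministic inversion).} On \(\mathcal E_m\), fix any task \(t\).

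\begin{itemize}[leftmargin=*]
\item If \(\widehat L_{t,\ell}^{(m)}>0\), then \(\Delta_{t,\ell}^{(m)}>0\). So \(\ell\) contributes to \(\rk_t(m)\) through the strict indicator, and hence \(\rk_t(m)\ge 1+A_t(m)\).
\item If \(\widehat U_{t,\ell}^{(m)}<0\), then \(\Delta_{t,\ell}^{(m)}<0\). So \(\ell\) contributes to neither sum in the rank formula. Exact ties have \(\Delta=0\) and can never fall in either certified set, so the tie order \(\prec_0\) plays no role. This gives \(\rk_t(m)\le \dm-B_t(m)\).
\end{itemize}

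This argument is identical to the one behind Theorem~\ref{thm:rank-one-task}, now applied for every \(t\) on the same event. The top-\(K\) statements follow immediately. If \(\dm-B_t(m)\le K\), then \(\rk_t(m)\le K\). If \(1+A_t(m)>K\), then \(\rk_t(m)>K\). Hence it suffices to prove \(\Pr(\mathcal E_m)\ge 1-\alpha-o(1)\).

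\textbf{Step 2 (coverage of the max statistic).} Write \(q=|\mathcal J(m)|=\dt(\dm-1)\le \bar d^2\), so \(\log q\lesssim\log\bar d\). Note that \(\mathcal E_m\) is exactly the event
\[
\max_{j\in\mathcal J(m)}\sqrt n\,|\widehat\psi_j-\psi_j|/\widehat\sigma_j\le c^\ast_{1-\alpha}.
\]
I would argue in three parts.

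\begin{itemize}[leftmargin=*]
\item \emph{Linearization.} Invoke the uniform expansion from Section~\ref{sec:fdim}, which is stated for growing families and therefore covers \(\mathcal J(m)\). It gives \(\sqrt n(\widehat\psi_j-\psi_j)/\sigma_j=n^{-1/2}\sum_i\phi_j(W_i)/\sigma_j+r_j\) with \(\max_j|r_j|\lesssim C_A r_n\) with high probability.
\item \emph{Gaussian approximation.} Apply the high-dimensional CLT of \citet{chernozhukov2013gaussian,chernozhukov2017central} over rectangles to the oracle influence sum. The error is \(\{\bar d\log^7(nq)/n\}^{1/6}\), with the same form as in Theorem~\ref{thm:fdim-clt} up to constants in the logarithm.
\item \emph{Multiplier step.} Use the Gaussian comparison and multiplier bootstrap consistency results. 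These require a bound on \(\max_{j,k}|\widehat\Sigma_{jk}/(\widehat\sigma_j\widehat\sigma_k)-\Sigma_{jk}/(\sigma_j\sigma_k)|\) that is \(o(\log^{-2}q)\), uniformly over the family, together with \(\max_j|\widehat\sigma_j/\sigma_j-1|\,\sqrt{\log q}\to0\).
\end{itemize}

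To absorb the remainder, apply anti-concentration (Nazarov's inequality). The perturbation \(\max_j|r_j|+\) (studentization error)\(\cdot\sqrt{\log q}\) costs at most \((C_A r_n+o(\log^{-1/2} q))\sqrt{\log q}\) in probability. This vanishes under \(C_A r_n\to0\), because \(r_n\) already carries the polylog factor \(\log^c(n\bar d)\) with \(c\ge7\), and these are exactly the Gaussian approximation conditions of Appendix~\ref{app:ranking-proof}.

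\textbf{Main obstacle.} The step I expect to be hardest is the uniform consistency of the studentized covariance over \(q\asymp\bar d^2\) contrasts. This requires controlling the cross-fitted influence values \(\widehat\phi_{j,i}-\phi_j(W_i)\) uniformly in \(j\). My first attempt would proceed as follows.

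\begin{itemize}[leftmargin=*]
\item Decompose the difference into the score-plug-in error \(\sigma(\widehat\eta_i)-\sigma(\eta_i^\star)\), bounded by \(\varepsilon_n\) via Theorem~\ref{thm:entrywise}, and the direction error \(\langle\widehat H_{j,k}-H_j^\star,X_i\rangle\).
\item Bound the direction error through \(\|\widehat A_k^\dagger-A^\dagger\|_{\infty\to\infty}\). The population inverse stability condition \(C_A\) and the perturbation estimate used in Section~\ref{app:notation-CA} should control this.
\item Finish with a union bound over \(j\) using Bernstein's inequality conditional on the training folds.
\end{itemize}

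Because the entries are bounded, the union bound loses only \(\log q\). Once this uniform control is established, the rest follows the proof of Theorem~\ref{thm:rank-one-task} verbatim with \(\dm-1\) replaced by \(\dt(\dm-1)\).
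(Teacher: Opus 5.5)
Your outer argument matches the paper's. You use one simultaneous-coverage event over \(\mathcal J(m)\) with a common critical value, invert it deterministically task by task into \(1+A_t(m)\le\rk_t(m)\le\dm-B_t(m)\), and read off the top-\(K\) consequences; your handling of exact ties is correct. The paper proves the corollary by applying Theorem~\ref{thm:cck-aggregate-app}, which is stated for any polynomial-size family, with \(p=\dt(\dm-1)\le\bar d^{2}\), and then inverting. You instead re-derive bootstrap validity through a Gaussian-to-Gaussian comparison on the estimated correlation matrix. The paper uses the approximate-means CCK theorem, comparing \(T^\ast\) with the oracle multiplier statistic through the square loss \(c_n\). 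Your route is a legitimate alternative. However, the plan for the step you call the main obstacle would fail as written.

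The failing step is the direction error. You propose to control \(\langle\widehat H_{j,k}-H_j^\star,X_i\rangle\) through \(\|\widehat{\mathcal A}_k^\dagger-\mathcal A^\dagger\|_{\infty\to\infty}\). But \(\|\mathcal A^\dagger\|_{\infty\to\infty}\ge c\,\dstar\), since it dominates the spectral radius (Proposition~\ref{prop:CA-range-app}). So the perturbation available from the \(C_A\) machinery is an absolute error of order \(C_A^{2}\dstar r_n\). With \(\infnorm{\Gamma_j}=1\), this gives only \(\infnorm{\widehat H_j-H_j^\star}\lesssim C_A^2\dstar r_n\). Since \(\sigma_j\asymp\sqrt{\dt+\dm}\) (Lemma~\ref{lem:Sigma-scale-app}), the studentized error is of order \(C_A^2\dstar r_n/\sqrt{\bar d}\). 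Routing instead through \(\ell_1\) norms and \(P|\langle D,X\rangle|\le C\oneNorm{D}/\dstar\) still leaves a squared studentized error of order \(C_A^4\min(\dt,\dm)\,r_n^2\). Either version requires at least \(n\gg\dt\dm\), not the near-linear \(n\gtrsim\bar d\,\mathrm{polylog}(n\bar d)\). The \(\infty\to\infty\) norm prices a dense worst-case input, whereas the efficient direction of a sparse gap lives at a much smaller entrywise scale.

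The missing ingredient is a relative Frobenius bound, \(\Fnorm{\widehat H_j-H_j^\star}\le Cr_n\Fnorm{H_j^\star}/\alpha_0\), as in \eqref{eq:relative-direction-app}. It comes from the Procrustes transport in \eqref{eq:foldwise-transport-app} and the restricted-information resolvent, with uniform alignment preventing division by a vanishing \(P_{\mathbb T}\Gamma_j\). Combine it with the two-sided energy bound \eqref{eq:foldwise-energy-app} on \(\mathbb T+\widehat{\mathbb T}\) under the evaluation-fold empirical measure. This gives \(\Pn\langle\widehat H_j-H_j^\star,X\rangle^2\le Cr_n^2\sigma_j^2\) with no \(C_A\) factor (Lemma~\ref{lem:no-ca-direction-square} and \eqref{eq:if-square-var-app}). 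Proposition~\ref{prop:fdim-cov-cons-app} then yields \(\max_{j,k}|\widehat\rho_{jk}-\rho_{jk}|\le Cr_n\) uniformly over the family; diagonal comparability holds because every \(\Sigma_{jj}\asymp\dt+\dm\). That is exactly the input your comparison step needs. Alternatively, you can bypass the re-derivation and invoke Theorem~\ref{thm:cck-aggregate-app} directly, as the paper does.
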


Appendix~\ref{app:topk-set} extends the same score-gap band inversion to the entire
task-specific top-\(K\) set, producing inner and outer confidence sets satisfying
$
    \widehat{\TopK}_{\rm in}(t)
    \subseteq
    \TopK(t)
    \subseteq
    \widehat{\TopK}_{\rm out}(t),
$
simultaneously over tasks.

%\end{revisionblock}

\section{Experiments}\label{sec:experiments}

We next complement our theory with simulation studies and an LM Arena case study.
Throughout, our estimator uses the convex initializer of
Section~\ref{sec:estimation} followed by pairwise-logistic refinement,
and inference uses the cross-fitted one-step debiased estimator combined with
a Gaussian multiplier bootstrap calibrated against the high-dimensional CLT
of Section~\ref{sec:ranking}. We compare against per-task
Bradley-Terry (BTL), which fits each task row independently and uses the
analogous Wald-type plug-in influence functions for its multiplier bootstrap.
Unless otherwise noted, each experimental cell reports a Monte Carlo summary over $N=200$ trials,
formed by bootstrap resampling from the runs we executed; uncertainty
intervals in tables are 95\% bootstrap intervals for each metric. For both
simulation and LM Arena we report initialization accuracy, top-$K$ recovery,
the joint asymptotic distribution of contrast estimators, and rank inference
both for a single task and simultaneously across tasks.

\subsection{Simulation}\label{sec:exp-sim}

We use a square setting with $\dt=\dm=50$, true rank
$r^\star=5$, and amplitude $\alpha=5$, generating
$\Theta^\star=L^\star(A^\star)^\top$
from i.i.d.\ Gaussian factors and rescaling to
$\|\Theta^\star\|_\infty\le\alpha$.
Pairwise comparisons are sampled uniformly over (task, model-pair) and
generated from the BTL model with temperature $\tau=1$. We sweep the total
number of comparisons $n\in\{4{,}000,\,8{,}000,\,16{,}000,\,32{,}000\}$.
Cross-fitting uses $K_{\rm cf}=6$ folds.

\textbf{Estimation error decay.}
Theorem~\ref{thm:entrywise} proves that our refined estimator
attains $\Fnorm{\Thetahat-\Thetastar}\lesssim 1/\sqrt n$ and
$\infnorm{\Thetahat-\Thetastar}\lesssim 1/\sqrt n$ up to $\mathrm{polylog}$
factors when $n\gtrsim r\bar d\log\bar d$.
Figure~\ref{fig:app-sim-recovery} plots both errors versus $n$ on a log-log
scale and overlays a $1/\sqrt n$ reference. Our estimator's empirical
errors track the predicted rate, while per-task BTL, which fits each
task row independently using only $n/\dt$ comparisons, has a far larger
constant. The gap is largest at $n=4{,}000$ (per-task BTL Frobenius error
exceeds $1000$, almost two orders of magnitude above our estimator)
and shrinks as $n$ grows.

\begin{figure}[!htbp]
\centering
\includegraphics[width=0.85\linewidth]{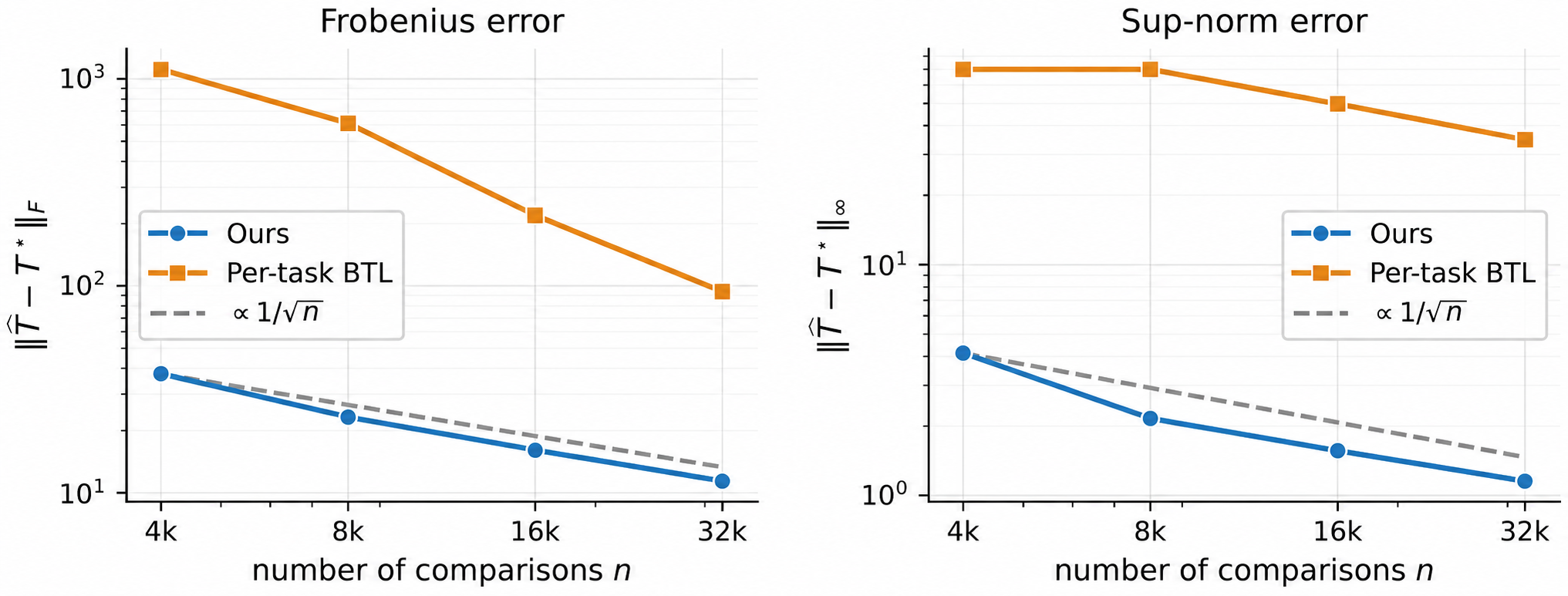}
\caption{Estimation error vs.\ $n$ at $\dt=\dm=50$, $r^\star=5$, $\alpha=5$.
Our estimator's Frobenius and sup-norm errors decay at the predicted
$1/\sqrt n$ rate; per-task BTL, which does not pool across tasks, incurs a
much larger constant.}
\label{fig:app-sim-recovery}
\end{figure}

 \textbf{Top-$K$ Recovery.} Table~\ref{tab:sim-topk} reports the
per-task top-$K$ set Hamming distance, averaged over the $\dt$ tasks, for
$K\in\{5,10\}$. Our estimator yields uniformly lower Hamming distance
than per-task BTL across all $n$, with the gap shrinking as $n$ grows.

\begin{table}[t]
\centering
\small
\setlength{\tabcolsep}{4pt}
\caption{Simulation, mean per-task top-$K$ Hamming distance across $\dt=50$
tasks, $N=200$ trials. Smaller is better; entries are mean (95\% CI).}
\label{tab:sim-topk}
\begin{tabular}{c cc cc}
\toprule
& \multicolumn{2}{c}{$K=5$} & \multicolumn{2}{c}{$K=10$} \\
\cmidrule(lr){2-3}\cmidrule(lr){4-5}
$n$ & Ours & Per-task BTL & Ours & Per-task BTL \\
\midrule
4{,}000  & 0.482 (0.478, 0.486) & 0.730 (0.728, 0.733) & 0.388 (0.385, 0.390) & 0.596 (0.594, 0.598) \\
8{,}000  & 0.339 (0.335, 0.342) & 0.617 (0.613, 0.620) & 0.257 (0.254, 0.259) & 0.489 (0.487, 0.491) \\
16{,}000 & 0.237 (0.234, 0.239) & 0.479 (0.476, 0.482) & 0.181 (0.179, 0.182) & 0.366 (0.363, 0.368) \\
32{,}000 & 0.167 (0.164, 0.169) & 0.360 (0.357, 0.362) & 0.129 (0.127, 0.130) & 0.269 (0.267, 0.271) \\
\bottomrule
\end{tabular}
\end{table}

\textbf{Joint asymptotic Gaussianity of two contrasts.}
Theorem~\ref{thm:fdim-clt} shows that for any fixed collection of
score-gap contrasts, the cross-fitted one-step estimator
\((\widehat\psi_1,\dots,\widehat\psi_q)\) is jointly asymptotically
Gaussian, with covariance equal to the inverse of the semiparametric
information. We verify this claim directly with two contrasts. Pick
models $a,b,c\in[\dm]$ on a single task $t\in[\dt]$ and form
\(\psi_1=\Thetastar_{t,a}-\Thetastar_{t,b}\) and
\(\psi_2=\Thetastar_{t,a}-\Thetastar_{t,c}\); the shared model $a$ induces
nontrivial correlation between the two contrast estimators.
Over $N=500$ Monte Carlo trials at $n=16{,}000$ we compute
\((\widehat\psi_1,\widehat\psi_2)\) and the per-trial plug-in covariance
\(\widehat\Sigma\). The empirical covariance \(\Sigma_{\mathrm{emp}}\) of
\((\widehat\psi_1,\widehat\psi_2)\) across trials and the
mean-of-plug-in $\Sigma_{\mathrm{thy}}$ are reported in
Table~\ref{tab:app-joint-sigma}. The two are close in entrywise scale and
correlation, and the $95\%$ confidence ellipse derived from
$\Sigma_{\mathrm{thy}}$ achieves nominal coverage on the empirical samples
(0.950 vs.\ target 0.95).
Figure~\ref{fig:app-joint-gaussian} shows the centered Monte Carlo samples
together with the theoretical 95\% Gaussian ellipse and the standardized
marginals overlaid on $\mathcal N(0,1)$, both consistent with bivariate
Gaussianity.

\begin{table}[!htbp]
\centering
\small
\setlength{\tabcolsep}{6pt}
\caption{Empirical vs.\ theoretical (mean plug-in) covariance of the
two-contrast cross-fitted one-step estimators across $N=500$ trials at
$\dt=\dm=50$, $r^\star=5$, $\alpha=5$, $n=16{,}000$.}
\label{tab:app-joint-sigma}
\begin{tabular}{l cc cc}
\toprule
& \multicolumn{2}{c}{$\Sigma_{\mathrm{emp}}$} & \multicolumn{2}{c}{$\Sigma_{\mathrm{thy}}$} \\
\cmidrule(lr){2-3}\cmidrule(lr){4-5}
& $\widehat\psi_1$ & $\widehat\psi_2$ & $\widehat\psi_1$ & $\widehat\psi_2$ \\
\midrule
$\widehat\psi_1$ & 0.276 & 0.114 & 0.343 & 0.161 \\
$\widehat\psi_2$ & 0.114 & 0.356 & 0.161 & 0.385 \\
\bottomrule
\end{tabular}
\\[0.4em]
\begin{tabular}{l c c c c c}
correlation & $\rho_{\mathrm{emp}}=0.365$ & $\rho_{\mathrm{thy}}=0.443$ & 95\% ellipse cov.\ & $0.950$ & (target $0.95$) \\
\end{tabular}
\end{table}

\begin{figure}[!htbp]
\centering
\includegraphics[width=0.92\linewidth]{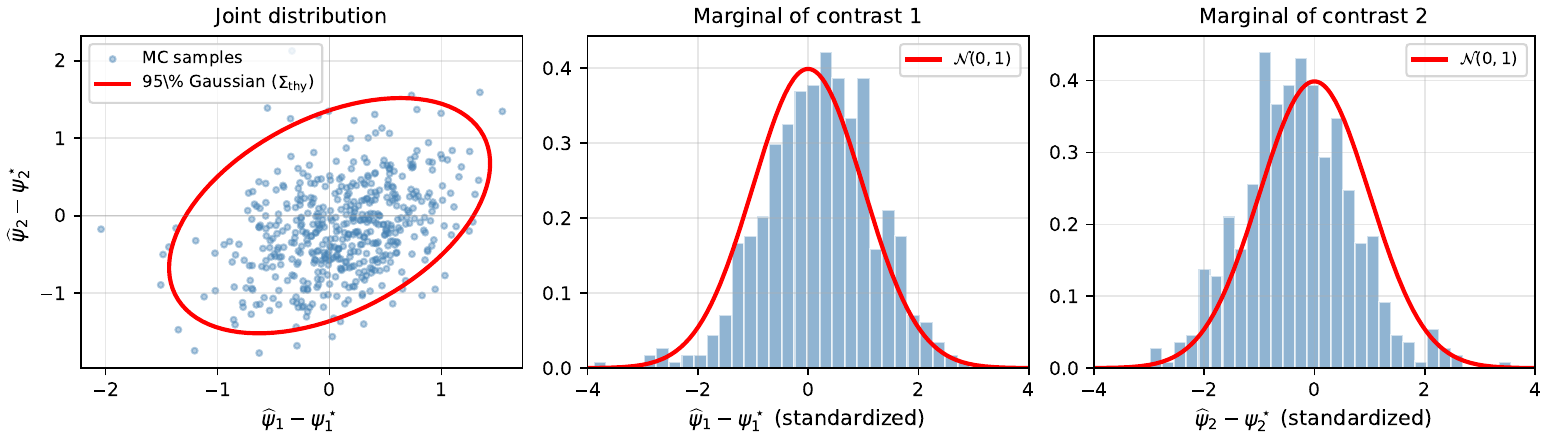}
\caption{Joint asymptotic Gaussianity at $n=16{,}000$. \textbf{Left}:
centered Monte Carlo samples $(\widehat\psi_1-\psi_1^\star,\widehat\psi_2-\psi_2^\star)$
and the theoretical 95\% Gaussian ellipse derived from
$\Sigma_{\mathrm{thy}}$. \textbf{Middle/Right}: standardized marginals
overlaid on $\mathcal N(0,1)$.}
\label{fig:app-joint-gaussian}
\end{figure}

\textbf{Single-task rank inference.}
Theorem~\ref{thm:rank-one-task} provides a multiplier-bootstrap rank
confidence set for a single model on a single task. To exercise the
procedure for an arbitrary target, we draw one model $m_t$ uniformly at
random for each task $t\in[\dt]$ (fixed across trials) and form the
$1-\alpha$ rank confidence set for $m_t$ on task $t$ at $\alpha=0.05$,
$K_{\text{top}}=10$, and $n=16{,}000$. Table~\ref{tab:app-sim-single}
reports the per-task coverage of the true rank, the fraction of trials at
which the rule correctly certifies in/out of top-$K$, and the mean rank-CI
width. Our method preserves coverage at the nominal level and
correctly certifies $\approx 29\%$ of (task, trial) instances; per-task BTL
correctly certifies only $\approx 8\%$ since at this sample size each task
has roughly $n/\dt=320$ comparisons, which is too sparse for per-task inference
to produce informative rank CIs.

\begin{table}[!htbp]
\centering
\small
\setlength{\tabcolsep}{4pt}
\caption{Simulation, single-task rank confidence set for a randomly chosen
model on each task at $n=16{,}000$, $K_{\text{top}}=10$, $N=200$ trials.}
\label{tab:app-sim-single}
\begin{tabular}{l cc}
\toprule
& Ours & Per-task BTL \\
\midrule
Coverage of true rank          & 1.000 (1.000, 1.000) & 0.967 (0.964, 0.970) \\
Correct top-$K$ certification & 0.289 (0.284, 0.293) & 0.081 (0.077, 0.086) \\
Mean rank-CI width             & 36.7 (36.6, 36.8)    & 44.6 (44.5, 44.8) \\
\bottomrule
\end{tabular}
\end{table}

 \textbf{Simultaneous inference.} We also show simultaneous inference across tasks for a fixed model. Fixing one model
$m^\star$ chosen at random, we form a single multiplier-bootstrap critical
value over the joint family $\{(t,\ell):\ell\ne m^\star,\,t\in[\dt]\}$ at
$n=32{,}000$ and $K_{\text{top}}=10$, yielding simultaneous rank-CIs for
$m^\star$ across all $\dt=50$ tasks. Table~\ref{tab:sim-simul} reports the
fraction of (trial, task) instances on which the procedure produces a
non-trivial decision (\emph{resolved}, meaning either in top $K$ or not in top $K$), and the average rank-CI width. Our method resolves over $5\times$ more cases than per-task BTL while
maintaining nominal coverage. 

\begin{table}[t]
\centering
\small
\setlength{\tabcolsep}{4pt}
\caption{Simulation, simultaneous rank confidence set for one model across
all $\dt=50$ tasks, $n=32{,}000$, $K_{\text{top}}=10$, $N=200$ trials.}
\label{tab:sim-simul}
\begin{tabular}{l cc}
\toprule
& Ours & Per-task BTL \\
\midrule
Per-task coverage            & 1.000 (1.000, 1.000) & 0.990 (0.988, 0.992) \\
Resolved (non-trivial)       & 0.315 (0.310, 0.321) & 0.056 (0.053, 0.059) \\
Mean rank-CI width           & 31.3 (31.1, 31.4)    & 46.4 (46.3, 46.5) \\
\bottomrule
\end{tabular}
\end{table}

\subsection{LM Arena}\label{sec:exp-arena}

We use the Arena 140K dataset \citep{lmarena_human_preference_140k}, restrict to
the top-30 most frequently compared models, and assign each comparison to one
of $\dt=10$ task categories using the platform's category metadata. After
preprocessing this leaves $n=81{,}150$ pairwise comparisons across $\dm=30$
models and $\dt=10$ tasks. We define the full-data reference score matrix $\Theta^\star$ by fitting a separate BTL model to
each task using all available comparisons. For the low-rank method, we set \(r=3\)
based on the empirical singular value spectrum of this reference matrix, whose
leading three singular values account for more than \(94\%\) of the total energy. We use the subsampling fractions reported for each experiment below,
where $f=1$ corresponds to using all $n$ comparisons. Per-task BTL fits each
task independently on the available subsample.

\textbf{Top-$K$ recovery sweep.}
We report the per-task top-$K$ set Hamming
distance of our estimator and per-task BTL across all subsampling
fractions $f\in\{0.02,0.05,0.10,0.20,0.50\}$
(Table~\ref{tab:app-real-topk}). Our estimator dominates per-task
BTL by a wide margin at $f\le0.10$ and the two methods become comparable
around $f=0.5$. The pattern mirrors the
simulation: pooling across tasks via low-rank structure is most beneficial
when the per-task data is sparse.

\begin{table}[!htbp]
\centering
\small
\setlength{\tabcolsep}{4pt}
\caption{LM Arena, mean per-task top-$K$ set Hamming distance across the
$\dt=10$ task categories, $N=200$ trials.}
\label{tab:app-real-topk}
\begin{tabular}{c r cc cc}
\toprule
& & \multicolumn{2}{c}{$K=5$} & \multicolumn{2}{c}{$K=10$} \\
\cmidrule(lr){3-4}\cmidrule(lr){5-6}
$f$ & $n_{\text{sub}}$ & Ours & Per-task BTL & Ours & Per-task BTL \\
\midrule
0.02 & 1{,}623  & 0.577 & 0.654 & 0.411 & 0.477 \\
0.05 & 4{,}057  & 0.463 & 0.528 & 0.311 & 0.376 \\
0.10 & 8{,}115  & 0.364 & 0.427 & 0.251 & 0.296 \\
0.20 & 16{,}230 & 0.294 & 0.333 & 0.205 & 0.233 \\
0.50 & 40{,}575 & 0.221 & 0.198 & 0.144 & 0.134 \\
\bottomrule
\end{tabular}
\end{table}

\textbf{Single-task ranking: gemini-2.5-pro on \texttt{math}.}
We report the single-task rank-CI for
\texttt{gemini-2.5-pro} on the \texttt{math} task (true rank $1$).
Table~\ref{tab:app-real-single-math} sweeps the subsampling fraction; in
the sparse regime our joint method certifies in top-$10$ on a much larger
fraction of trials than per-task BTL ($22\%$ vs.\ $0\%$ at $f=0.20$),
and at $f=1$ both methods correctly certify with comparable widths. Our method's advantage is precisely in the sparse regime where
single-task BTL has too few comparisons to pin down the rank.

\begin{table}[!htbp]
\centering
\small
\setlength{\tabcolsep}{4pt}
\caption{LM Arena, single-task rank CI for \texttt{gemini-2.5-pro} on
\texttt{math} (true rank $1$), $K_{\text{top}}=10$, $N=200$ trials.}
\label{tab:app-real-single-math}
\begin{tabular}{c ccc ccc}
\toprule
& \multicolumn{3}{c}{Ours} & \multicolumn{3}{c}{Per-task BTL} \\
\cmidrule(lr){2-4}\cmidrule(lr){5-7}
$f$ & Coverage & Cert.\ rate & Width & Coverage & Cert.\ rate & Width \\
\midrule
0.20 & 1.000 & 0.220 & 15.9 & 1.000 & 0.000 & 21.6 \\
0.30 & 1.000 & 0.500 & 11.6 & 1.000 & 0.160 & 16.3 \\
0.50 & 1.000 & 0.520 & 10.5 & 1.000 & 0.320 & 12.1 \\
1.00 & 1.000 & 1.000 &  7.4 & 1.000 & 1.000 &  6.1 \\
\bottomrule
\end{tabular}
\end{table}

\textbf{Simultaneous ranking: Gemini-2.5-pro across all tasks.} For Gemini 2.5 pro, which performs in top $3$ in all tasks, we form simultaneous rank CIs across all $\dt=10$ tasks via a
single multiplier-bootstrap critical value over the family
$\{(t,\ell):t\in[\dt],\,\ell\ne m^\star\}$. Table~\ref{tab:real-simul}
reports per-task coverage, the per-task fraction of resolved decisions, and
mean rank-CI width. Our method resolves $\approx 87\%$ of (task,trial)
cases at $f=1$ versus $61\%$ for per-task BTL, with smaller CI widths.

A representative phenomenon is the \texttt{analytical} task, on which the
per-task BTL CI never certifies in top-$10$ at any subsampling fraction we
considered (the rank-CI width is consistently above $10$). With low-rank
pooling, our estimator certifies in top-$10$ on every trial at $f=1$ for
the same task. Tasks where the chosen model is consistently best (rank $1$
on most categories) similarly benefit from the cross-task information: in
the sparse regime, a single task simply does not have enough comparisons to
certify, but pooling across tasks resolves the question.
\vspace{-2em}
\begin{table}[h!]
\centering
\small
\setlength{\tabcolsep}{4pt}
\caption{LM Arena, simultaneous rank confidence set for gemini-2.5-pro across
all $\dt=10$ tasks, $K_{\text{top}}=10$, $N=200$ trials. \textit{Resolved}
counts (task, trial) instances on which the rule certifies in or out of
top-$K$.}
\label{tab:real-simul}
\begin{tabular}{c ccc ccc}
\toprule
& \multicolumn{3}{c}{Ours} & \multicolumn{3}{c}{Per-task BTL} \\
\cmidrule(lr){2-4}\cmidrule(lr){5-7}
$f$ & Coverage & Resolved & Width & Coverage & Resolved & Width \\
\midrule
0.20 & 0.998 & 0.218 & 17.4 & 1.000 & 0.044 & 23.2 \\
0.30 & 0.998 & 0.434 & 12.7 & 1.000 & 0.124 & 19.6 \\
0.50 & 0.984 & 0.742 &  7.9 & 1.000 & 0.270 & 15.1 \\
1.00 & 0.958 & 0.866 &  6.7 & 1.000 & 0.610 &  9.1 \\
\bottomrule
\end{tabular}
\end{table}

Both the simulation and LM Arena studies show that
exploiting low-rank structure across tasks reduces estimation error and, more importantly, produces substantially
tighter rank confidence sets than per-task BTL. % %while maintaining the bootstrap coverage guarantees of Theorem~\ref{thm:rank-one-task} and Corollary~\ref{cor:rank-all-tasks}. 
The advantage is largest in the sparse
regime, which is common in real LLM benchmark
deployment, where pooling information across tasks is an effective way to
get informative rank inference.

% \section{Conclusion}\label{sec:conclusion}

% We develop a statistical framework for uncertainty-aware task-specific LLM ranking from sparse pairwise comparisons using rank-$r$ structure to share information across related tasks while preserving task-level heterogeneity, yielding entrywise score control, task-wise top-K recovery guarantees, efficient score-gap inference, and simultaneous rank/top-K certification. Experiments on synthetic data and LM Arena comparisons show that low-rank sharing improves sample efficiency over independent per-task Bradley--Terry estimation and produces tighter, better-calibrated ranking certificates in sparse regimes. 

\phantomsection
\label{sec:score-whitening-ref}

%\section*{Broader Impact}
%Task-specific LLM rankings increasingly influence model selection, procurement, deployment, and public perceptions of model capability. By reporting uncertainty-aware rank and top-K certificates, our framework helps reduce overconfident claims based on sparse or imbalanced preference data and makes clear when apparent leaderboard differences are statistically unresolved. 

{\baselineskip=23pt
\bibliographystyle{asa}
\bibliography{citation}
}

\newpage
\appendix
\baselineskip=24pt
\setcounter{equation}{0}
\setcounter{section}{0}
\renewcommand{\thesection}{S.\arabic{section}}
\renewcommand{\theequation}{S\arabic{equation}}
\renewcommand{\theHsection}{S.\arabic{section}}
\renewcommand{\theHsubsection}{S.\arabic{section}.\arabic{subsection}}

\begin{center}    
{\large\bf Supplementary Material for \\``Low Rank for Rank: Uncertainty-Aware Task-Specific LLM Ranking under Sparse Pairwise Comparisons"}
\end{center}
\bigskip

This supplementary material provides all technical proofs for
the main results. Appendix~\ref{app:notation} collects the notation, assumptions,
rate conditions, and master high-probability event used throughout the analysis.
Appendix~\ref{app:entrywise-proof} proves the uniform entrywise estimation result
in Theorem~\ref{thm:entrywise}, including the convex low-rank initializer and the
model- and task-side refinement analysis. Appendix~\ref{app:topk-hamming} proves
the task-wise top-\(K\) Hamming and exact-recovery guarantees in
Proposition~\ref{prop:topk-hamming}. Appendix~\ref{app:fdim-proof} develops the
semiparametric efficiency theory and proves the finite-dimensional score-gap
inference result in Theorem~\ref{thm:fdim-clt}. Appendix~\ref{app:ranking-proof}
establishes the multiplier-bootstrap theory for simultaneous rank and top-\(K\)
certification.

For readers' convenience, the following table summarizes the correspondence between the main-text results and their appendix proofs.
\begin{center}
\small
\begin{tabular}{@{}lll@{}}
\toprule
\textbf{Main-text result} & \textbf{Statement} & \textbf{Appendix proof}\\
\midrule
Theorem~\ref{thm:entrywise} & uniform entrywise estimation & Appendix~\ref{app:entrywise-proof-final} (Theorem~\ref{thm:pairwise-max-app})\\
\quad convex initialization & Frobenius rate \(\sqrt{r\dt\dm\bar d\log(n\bar d)/n}\) & Appendix~\ref{app:rsc-main} (Theorem~\ref{thm:convex-main-app})\\
\quad model-row refinement & \(\ell_{2,\infty}\)-bound on model factor & Appendix~\ref{app:entrywise-proof-leftdet} (Proposition~\ref{prop:left-final-app})\\
\quad task-column refinement & \(\ell_{2,\infty}\)-bound on task factor & Appendix~\ref{app:entrywise-proof-right} (Proposition~\ref{prop:right-final-app})\\
Proposition~\ref{prop:topk-hamming} & taskwise top-\(K\) Hamming & Appendix~\ref{app:topk-hamming-prop} (Proposition~\ref{prop:topk-hamming-app})\\
Theorem~\ref{thm:fdim-clt} & joint efficient CLT, fixed \(q\) & Appendix~\ref{app:fdim-proof-clt} (Theorem~\ref{thm:fdim-clt-app})\\
\quad single-contrast remainder & remainder \(\le C_A\bar d\log^c\bar d/n\) & Appendix~\ref{app:fdim-proof-decomp} (Theorem~\ref{thm:single-contrast-remainder-app})\\
\quad uniform remainder & remainder \(\le C_A\sqrt{\bar d\log^c/n}\) over \(\mathcal F\) & Appendix~\ref{app:fdim-proof-uniform} (Theorem~\ref{thm:uniform-remainder-app})\\
\quad joint efficiency & Loewner lower bound \(\bar\Sigma\succeq\Sigma\) & Appendix~\ref{app:fdim-proof-loewner} (Proposition~\ref{prop:fdim-loewner-app})\\
\quad variance consistency & relative-error \(O(r_n)\) & Appendix~\ref{app:fdim-proof-varcons} (Proposition~\ref{prop:fdim-var-cons-app})\\
\quad covariance consistency & correlation-error \(O(r_n)\) & Appendix~\ref{app:fdim-proof-cov} (Proposition~\ref{prop:fdim-cov-cons-app})\\
Theorem~\ref{thm:rank-one-task} & rank confidence band for one task & Appendix~\ref{app:ranking-proof-app1} (Theorem~\ref{thm:rank-one-task-app})\\
Corollary~\ref{cor:rank-all-tasks} & simultaneous taskwise rank inference & Appendix~\ref{app:ranking-proof-app2} (Corollary~\ref{cor:rank-all-tasks-app})\\
Top-\(K\) set extension & inner/outer top-\(K\) confidence sets & Appendix~\ref{app:topk-set} (Theorem~\ref{thm:topk-set-app})\\
\bottomrule
\end{tabular}
\end{center}

\section{Notation, assumptions, and master good event}\label{app:notation}

This appendix collects the notation, assumptions, and probability
calibrations used throughout
Appendices~\ref{app:entrywise-proof}--\ref{app:topk-set}.  All assumptions
are stated directly for the present \((\dt\times\dm)\) pairwise-ranking model.
Each condition used by a main-text theorem is listed here and subsequently verified
or invoked at the precise step where it enters the proof.

\subsection{Notation}\label{app:notation-crosswalk}

We retain the notation of Section~\ref{sec:setup}.  The latent ability
matrix is \(\Thetastar\in\R^{\dt\times\dm}\), row-centered
(\(\Thetastar\mathbf 1_{\dm}=0\)), of rank \(r\) with reduced singular value
decomposition
\[
    \Thetastar=U^\star\Sigma^\star(V^\star)^\top,
    \qquad
    U^\star\in\R^{\dt\times r},\
    V^\star\in\R^{\dm\times r},\
    \Sigma^\star=\mathrm{diag}(\sigma_1^\star,\ldots,\sigma_r^\star).
\]
The singular vectors are \(\mu\)-incoherent, the condition number is
\(\kappa:=\sigma_1^\star/\sigma_r^\star\), and the entrywise bound is
\(\infnorm{\Thetastar}\le B\).  We write
\(\bar d:=\max(\dt,\dm)\) for the maximum mode dimension and
\(\dstar:=\dt\dm\) for the ambient cardinality of the matrix.  The
\emph{effective comparison dimension}, which appears throughout the
analysis as the natural normalization for pairwise contrasts, is
\[
    \dstar_{\rm eff}
    :=
    \frac{\dt\,(\dm-1)}{2}
    \asymp
    \frac{\dt\,\dm}{2}
    =
    \frac{\dstar}{2},
\]
and we will use \(\dstar\) and \(\dstar_{\rm eff}\) interchangeably up to
the absolute constant \(2\).

\textbf{Design and observation model.}
For each round \(i\in[n]\), the design tensor is
\(X_i:=e_{t_i}(e_{m_i}-e_{m_i'})^\top\in\R^{\dt\times\dm}\), where the
task index \(t_i\in[\dt]\) is sampled from a distribution \(\nu\) and
the unordered model pair \(\{m_i,m_i'\}\subset[\dm]\) is sampled from a
task-dependent distribution \(\pi_{t_i}\).  The predictor is
\(\eta_i^\star:=\ip{X_i}{\Thetastar}=\Thetastar_{t_i,m_i}-\Thetastar_{t_i,m_i'}\),
and the observation is \(Y_i\sim\mathrm{Bernoulli}(\sigma(\eta_i^\star))\)
with \(\sigma(x):=(1+e^{-x})^{-1}\) the logistic link.  The pairwise score
and Fisher weight are
\[
    s_\eta(y,\eta):=-\partial_\eta\ell(y,\eta)
    =
    y-\sigma(\eta),
    \qquad
    I(\eta):=\E[s_\eta(Y,\eta)^2\mid\eta]
    =
    \sigma(\eta)\{1-\sigma(\eta)\}\in(0,1/4].
\]
Thus \(s_\eta\) is the log-likelihood score; the gradient of the convex
negative log-likelihood used for estimation is \(-s_\eta=\sigma(\eta)-y\).
Under \(\infnorm{\Thetastar}\le B\), the predictor satisfies
\(|\eta^\star|\le 2B\), and consequently the Fisher information is
bounded above and below: there exist constants
\(0<c_B\le C_B<\infty\) depending only on \(B\) such that
\(c_B\le I(\eta^\star)\le C_B\) almost surely.

\textbf{Tangent space and operators.}
The signal tangent space at \(\Thetastar\) intersected with the
row-centering identification constraint is
\[
    \mathbb T
    :=
    \{U^\star A^\top + C(V^\star)^\top
    :\, A\in\R^{\dm\times r},\ C\in\R^{\dt\times r},\
       A^\top\mathbf 1_{\dm}=0\}.
\]
Indeed, \(\Thetastar\mathbf 1_{\dm}=0\) and
\(\sigma_r^\star>0\) imply
\((V^\star)^\top\mathbf 1_{\dm}=0\).  Put
\(R_{\rm m}:=I_{\dm}-\dm^{-1}\mathbf 1_{\dm}\mathbf 1_{\dm}^\top\).
The orthogonal projector onto \(\mathbb T\) is
\[
 P_{\mathbb T}H
 =P_{U^\star}H R_{\rm m}+(I-P_{U^\star})H P_{V^\star};
\]
Appendix~\ref{app:fdim-proof-projector} verifies this formula.  The Fisher
(information) operator
\(G:\R^{\dt\times\dm}\to\R^{\dt\times\dm}\) is defined by
\begin{equation}\label{eq:G-operator-app}
    \ip{G U}{V}
    :=\E^\star\!\left[I(\eta^\star)\ip{U}{X}\ip{V}{X}\right],
    \qquad U,V\in\R^{\dt\times\dm},
\end{equation}
and the restricted information operator on the tangent space is
\(A_{\mathbb T}:=(P_{\mathbb T}GP_{\mathbb T})|_{\mathbb T}:\mathbb T\to\mathbb T\).
Lemma~\ref{lem:weighted-moment-app} shows that this operator is positive
definite.  We use the ambient Moore--Penrose extension
\[
 \mathcal A^\dagger:=P_{\mathbb T}A_{\mathbb T}^{-1}P_{\mathbb T}.
\]
For any contrast \(\Gamma\in\R^{\dt\times\dm}\), the efficient direction
is \(H_\Gamma^\star:=\mathcal A^\dagger\Gamma\) and the efficient
influence function is
\[
    \phi_\Gamma(W_i)
    :=
    s_\eta(Y_i,\eta_i^\star)\ip{H_\Gamma^\star}{X_i},
    \qquad
    V_{\rm eff}(\Gamma)
    :=
    \E^\star[\phi_\Gamma^2]
    =
    \ip{P_{\mathbb T}\Gamma}{A_{\mathbb T}^{-1}P_{\mathbb T}\Gamma}.
\]

\textbf{Plug-in operators.}
%\begin{revisionblock}
Given an initial estimator \(\widehat\Theta\) of \(\Thetastar\) and the
estimated singular subspaces \(\widehat U,\widehat V\) (obtained, e.g.,
from the rank-\(r\) SVD of \(\widehat\Theta\)), we define the
\emph{estimated tangent projector} \(\widehat P_{\mathbb T}\) using
\(\widehat U,\widehat V\) in place of \(U^\star,V^\star\), and the
\emph{foldwise design-averaged plug-in information operator}
\begin{equation}\label{eq:Ghat-def-app}
    \ip{\widehat G U}{V}
    :=
    \E_X\!\left[
    I(\ip{\widehat\Theta}{X})\,\ip{U}{X}\,\ip{V}{X}\right],
    \qquad X\sim(\nu,\pi),
\end{equation}
where the randomized design law \((\nu,\pi)\) is known and the
expectation is a finite sum over its support.  Thus \(\widehat G\)
depends on the training data only through \(\widehat\Theta\), and is
independent of the evaluation fold.  The estimated restricted
operator is
\(\widehat A_{\widehat{\mathbb T}}
:=(\widehat P_{\mathbb T}\widehat G\widehat P_{\mathbb T})|_{\widehat{\mathbb T}}\),
and its ambient extension and the estimated direction are
\[
 \widehat{\mathcal A}^{\dagger}
 :=\widehat P_{\mathbb T}
   \widehat A_{\widehat{\mathbb T}}^{-1}\widehat P_{\mathbb T},
 \qquad
 \widehat H_\Gamma:=\widehat{\mathcal A}^{\dagger}\Gamma.
\]
%\end{revisionblock}
\textbf{Norms.}
We use the following norms.  For a matrix \(M\in\R^{\dt\times\dm}\),
\(\Fnorm M\) is the Frobenius norm, \(\infnorm M:=\max_{t,m}|M_{t,m}|\)
the entrywise norm, \(\norm M_*\) the nuclear norm,
\(\norm M_{\rm op}\) the spectral norm,
\(\twoninfnorm M:=\max_t\norm{e_t^\top M}_2\) the row \(\ell_2/\ell_\infty\)
norm, and \(\norm M_{\infty\to\infty}\) the induced
\(\ell_\infty\to\ell_\infty\) operator norm.  For a vector \(v\),
\(\oneNorm v:=\sum_i|v_i|\), \(\norm v_2\) the Euclidean norm,
\(\infnorm v:=\max_i|v_i|\).  For tensors / linear operators acting on
\(\R^{\dt\times\dm}\), these norms apply to the tensor when flattened
as a vector.

\subsection{Score regularity}\label{app:notation-score}

The following assumption records the score identities and uniform bounds used in the
BTL likelihood expansions below.

\begin{assumption}[Score regularity for the BTL link]\label{ass:score-app}
The BTL log-likelihood satisfies, with \(s_\eta\) and \(I\) as above:
\begin{enumerate}[label=(\roman*),leftmargin=2.4em,topsep=2pt,itemsep=2pt]
\item \emph{Centering.}
\(\E^\star[s_\eta(Y,\eta^\star)\mid X]=0\).
\item \emph{Bounded support.}
\(\infnorm{\Thetastar}\le B\) for a constant \(B>0\), so
\(|\eta^\star|\le 2B\) almost surely, and hence the Fisher weight
satisfies
\[
    c_B\le I(\eta^\star)\le C_B,
\]
where \(c_B:=\sigma(2B)\sigma(-2B)>0\) and
\(C_B:=1/4\) depend only on \(B\).
\item \emph{Bounded score derivatives.} For every \((y,\eta)\) in the
relevant range,
\(|\partial_\eta s_\eta(y,\eta)|\le 1/4\) and
\(|\partial_\eta^2 s_\eta(y,\eta)|\le L_3\) for an absolute constant
\(L_3>0\) (one can take \(L_3=1/(6\sqrt 3)\)).
\item \emph{Sub-Gaussian / sub-exponential tail.} Since
\(s_\eta(Y,\eta)=Y-\sigma(\eta)\in[-1,1]\) almost surely, every moment is
bounded uniformly: \(\norm{s_\eta(Y,\eta)}_{\psi_2}\le 1\) and
\(\norm{s_\eta(Y,\eta)}_{\psi_1}\le 1\).
\end{enumerate}
\end{assumption}

\begin{remark}[BTL verification]\label{rem:btl-verification-app}
For the BTL model, \(s_\eta(y,\eta)=y-\sigma(\eta)\),
\(\partial_\eta s_\eta=-\sigma'(\eta)=-I(\eta)\in[-1/4,0]\), and
\(\partial_\eta^2 s_\eta=-\sigma''(\eta)\) is uniformly bounded.  The
boundedness of \(I(\eta^\star)\) under \(\infnorm{\Thetastar}\le B\) is
the consequence of \(|\eta^\star|\le 2B\) plus the strict positivity of
\(\sigma'\) on any compact interval of \(\R\).
\end{remark}

\subsection{Sampling design}\label{app:notation-design}

We work under the near-uniform sampling design of
Section~\ref{sec:setup}.

\begin{assumption}[Near-uniform sampling]\label{ass:design-app}
There exist constants \(0<c_\nu\le C_\nu<\infty\) and
\(0<c_\pi\le C_\pi<\infty\), independent of \(\dt,\dm,n\), such that
for every task \(t\in[\dt]\) and every unordered pair
\(\{m,m'\}\subset[\dm]\),
\[
    \frac{c_\nu}{\dt}\le\nu_t\le\frac{C_\nu}{\dt},
    \qquad
    \frac{c_\pi}{\binom{\dm}{2}}\le\pi_t(\{m,m'\})\le\frac{C_\pi}{\binom{\dm}{2}}.
\]
\end{assumption}

In the analysis we routinely use the following two consequences of
Assumption~\ref{ass:design-app}.

\begin{lemma}[Pairwise Frobenius reduction in the matrix case]\label{lem:frob-reduction-app}
Let \(H\in\R^{\dt\times\dm}\) satisfy the row-centering condition
\(H\mathbf 1_{\dm}=0\).  Then under
Assumption~\ref{ass:design-app},
\begin{equation}\label{eq:frob-reduction-app}
    \E^\star\!\bigl[\ip{H}{X}^2\bigr]
    \asymp
    \frac{\Fnorm H^2}{\dstar},
    \qquad
    \dstar=\dt\dm.
\end{equation}
\end{lemma}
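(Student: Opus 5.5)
We compute \(\E^\star[\ip{H}{X}^2]\) exactly by conditioning on the task and the pair, and then compare the result with the uniform-design quantity using Assumption~\ref{ass:design-app}. Since \(X=e_t(e_m-e_{m'})^\top\), we have \(\ip{H}{X}=H_{t,m}-H_{t,m'}\). The square does not depend on orientation. Hence
\[
\E^\star\bigl[\ip{H}{X}^2\bigr]=\sum_{t}\nu_t\sum_{\{m,m'\}}\pi_t(\{m,m'\})\,(H_{t,m}-H_{t,m'})^2 .
\]
Every weight is nonnegative. The near-uniform bounds therefore sandwich this expression between \(c_\nu c_\pi\) and \(C_\nu C_\pi\) times
\[
\frac{1}{\dt\binom{\dm}{2}}\sum_t\sum_{\{m,m'\}}(H_{t,m}-H_{t,m'})^2 .
\]
So it suffices to evaluate the uniform sum exactly.

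The key step is the pairwise-difference identity, applied row by row. For a vector \(h\in\R^{\dm}\),
\[
\sum_{\{m,m'\}}(h_m-h_{m'})^2=\dm\|h\|_2^2-\Bigl(\sum_m h_m\Bigr)^2 .
\]
This follows by expanding the ordered double sum \(\sum_{m,m'}(h_m-h_{m'})^2=2\dm\|h\|_2^2-2(\mathbf 1^\top h)^2\) and halving it. The row-centering condition \(H\mathbf 1_{\dm}=0\) removes the second term for every row \(t\). The uniform sum therefore equals
\[
\frac{\dm\Fnorm H^2}{\dt\,\dm(\dm-1)/2}=\frac{2\Fnorm H^2}{\dt(\dm-1)}=\frac{\Fnorm H^2}{\dstar_{\rm eff}} .
\]
This is the step where centering is essential: without it, the row-shift directions would contribute to \(\Fnorm H^2\) but be invisible to the comparisons.

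Combining the two steps gives
\[
c_\nu c_\pi\,\frac{\Fnorm H^2}{\dstar_{\rm eff}}\le\E^\star\bigl[\ip{H}{X}^2\bigr]\le C_\nu C_\pi\,\frac{\Fnorm H^2}{\dstar_{\rm eff}} .
\]
Using \(\dstar_{\rm eff}\asymp\dstar/2\), valid for \(\dm\ge2\), yields \eqref{eq:frob-reduction-app}, with constants depending only on \(c_\nu,C_\nu,c_\pi,C_\pi\). There is no real obstacle. The only points to handle carefully are the orientation convention, which is irrelevant because the quantity is squared, and making sure the constants are free of \(\dt,\dm,n\).
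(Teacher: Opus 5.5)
Your proposal is correct and follows the paper's proof: both condition on the task and the pair, apply the centered pairwise-difference identity $\sum_{m<m'}(h_m-h_{m'})^2=\dm\|h\|_2^2$ row by row, and absorb the near-uniform design weights into constants. The only difference is that you make the sandwich constants $c_\nu c_\pi$ and $C_\nu C_\pi$ explicit, where the paper simply writes $\asymp$.
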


\begin{proof}
Conditional on a task \(t\) and an unordered pair \(\{m,m'\}\),
\(\ip{H}{X}=H_{t,m}-H_{t,m'}\).  Fixing the task \(t\) and letting
\(z(t)\in\R^{\dm}\) be the \(t\)-th row of \(H\), with
\(\sum_m z_m(t)=0\) by hypothesis, we use the elementary identity
\[
    \sum_{m<m'}(z_m(t)-z_{m'}(t))^2
    =
    \dm\sum_m z_m(t)^2.
\]
Under near-uniform pair sampling,
\(\E_{\{m,m'\}}[(z_m(t)-z_{m'}(t))^2]\asymp 2\norm{z(t)}_2^2/(\dm-1)\).
Averaging over \(t\sim\nu\) using \(\nu_t\asymp 1/\dt\) yields
\(\E^\star[\ip{H}{X}^2]\asymp\Fnorm H^2/(\dt(\dm-1))\asymp\Fnorm H^2/\dstar\).
\end{proof}

\begin{lemma}[Weighted second-moment for the Fisher operator]\label{lem:weighted-moment-app}
Under Assumptions~\ref{ass:score-app}--\ref{ass:design-app}, for any
row-centered \(H\),
\(c_B\Fnorm H^2/\dstar\lesssim\E^\star[I(\eta^\star)\ip{H}{X}^2]\lesssim C_B\Fnorm H^2/\dstar\).
In particular, under \(\infnorm{\Thetastar}\le B\), the operator
\(G\) restricted to the row-centered subspace \(\mathbb S
:=\{H:H\mathbf 1_{\dm}=0\}\) satisfies
\[
 \frac{c}{\dstar}\Fnorm H^2
 \le \ip{H}{GH}
 \le \frac{C}{\dstar}\Fnorm H^2,
 \qquad H\in\mathbb S.
\]
Consequently \(A_{\mathbb T}\) is positive definite on \(\mathbb T\),
with all eigenvalues between \(c/\dstar\) and \(C/\dstar\).
\end{lemma}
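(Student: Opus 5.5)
The plan is to sandwich the Fisher weight between constants and then reduce to Lemma~\ref{lem:frob-reduction-app}. The statement is about the quadratic form \(\ip{H}{GH}=\E^\star[I(\eta^\star)\ip{H}{X}^2]\), with \(I\) evaluated at the true predictor \(\eta^\star=\ip{X}{\Thetastar}\). By Assumption~\ref{ass:score-app}(ii), \(\infnorm{\Thetastar}\le B\) gives \(|\eta^\star|\le 2B\) almost surely, so \(c_B\le I(\eta^\star)\le C_B\) pointwise. Since \(\ip{H}{X}^2\ge0\), this pointwise bound can be pulled out of the expectation:
\[
c_B\,\E^\star[\ip{H}{X}^2]\le \E^\star[I(\eta^\star)\ip{H}{X}^2]\le C_B\,\E^\star[\ip{H}{X}^2].
\]

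For row-centered \(H\), Lemma~\ref{lem:frob-reduction-app} gives \(\E^\star[\ip{H}{X}^2]\asymp\Fnorm H^2/\dstar\), with constants depending only on \(c_\nu,C_\nu,c_\pi,C_\pi\). Combining the two displays yields the first claim. Absorbing \(c_B,C_B\) into \(c,C\) gives the two-sided bound on \(\mathbb S\); these constants depend only on \(B\) and the design constants.

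For the consequence about \(A_{\mathbb T}\), I will first check that \(\mathbb T\subseteq\mathbb S\). Any element \(U^\star A^\top+C(V^\star)^\top\) of \(\mathbb T\) satisfies \(U^\star A^\top\mathbf 1_{\dm}=0\) because \(A^\top\mathbf 1_{\dm}=0\). It also satisfies \(C(V^\star)^\top\mathbf 1_{\dm}=0\) because \((V^\star)^\top\mathbf 1_{\dm}=0\), which is noted right after the definition of \(\mathbb T\). Hence for \(H\in\mathbb T\), using \(P_{\mathbb T}H=H\) and the self-adjointness of \(P_{\mathbb T}\),
\[
\ip{H}{A_{\mathbb T}H}=\ip{P_{\mathbb T}H}{GP_{\mathbb T}H}=\ip{H}{GH}\in\Bigl[\tfrac{c}{\dstar}\Fnorm H^2,\ \tfrac{C}{\dstar}\Fnorm H^2\Bigr].
\]
The operator \(A_{\mathbb T}\) is self-adjoint on \(\mathbb T\), since \(G\) is symmetric by definition. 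Its Rayleigh quotients therefore lie in \([c/\dstar,C/\dstar]\), so all its eigenvalues lie there as well, and it is positive definite.

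I do not expect a real obstacle, since the argument is essentially two inequalities. The one point to verify carefully is that the constants from Lemma~\ref{lem:frob-reduction-app} are uniform in \(\dt\) and \(\dm\), including the small-\(\dm\) case; that case is handled by \(\dm-1\asymp\dm\) for \(\dm\ge2\).
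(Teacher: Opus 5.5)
Your proof is correct and is the argument the paper intends. The paper gives no separate written proof: it treats the lemma as an immediate consequence of the pointwise bound $c_B\le I(\eta^\star)\le C_B$ together with the Frobenius reduction in Lemma~\ref{lem:frob-reduction-app}. Your two extra checks are exactly the steps it leaves implicit: that $\mathbb T\subseteq\mathbb S$ (from $A^\top\mathbf 1_{\dm}=0$ and $(V^\star)^\top\mathbf 1_{\dm}=0$), and that the Rayleigh-quotient bounds for the self-adjoint $A_{\mathbb T}$ pass to its eigenvalues.
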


\subsection{Initial estimator and signal strength}\label{app:notation-init}

The following assumption records the exact output required from the initial estimator.

\begin{assumption}[Initial estimator]\label{ass:initial-app}
The initial estimator \(\widehat\Theta\) computed on an auxiliary
sample \(\mathcal D_{\rm aux}\) of size
\(n_{\rm aux}\asymp n\) satisfies the row-centering gauge
\(\widehat\Theta\mathbf 1_{\dm}=0\) and the entrywise rate
\begin{equation}\label{eq:init-entrywise-app}
    \infnorm{\widehat\Theta-\Thetastar}
    \le
    C_1\sqrt{\frac{\bar d\,\log^c\bar d}{n_{\rm aux}}}
\end{equation}
for absolute constants \(C_1,c>0\).
\end{assumption}

The estimator constructed in Appendix~\ref{app:entrywise-proof} satisfies
Assumption~\ref{ass:initial-app}.  In particular, the
nuclear-norm penalized convex initializer of
Appendix~\ref{app:entrywise-proof-rsc} produces a Frobenius-accurate
intermediate estimator, and the three-split row-wise refinement of
Appendices~\ref{app:entrywise-proof-algorithm}--\ref{app:entrywise-proof-final}
upgrades it to the entrywise rate~\eqref{eq:init-entrywise-app}.

\begin{assumption}[Signal strength]\label{ass:signal-app}
The Frobenius norm of the latent score matrix satisfies
\begin{equation}\label{eq:signal-app}
    \Fnorm{\Thetastar}\ge c_{\rm sig}\sqrt{\dstar}
\end{equation}
for an absolute constant \(c_{\rm sig}>0\).
\end{assumption}

By the rank constraint and the upper bound
\(\Fnorm{\Thetastar}^2=\sum_{i=1}^r(\sigma_i^\star)^2\le r(\sigma_1^\star)^2\)
together with \(\kappa=\sigma_1^\star/\sigma_r^\star\),
Assumption~\ref{ass:signal-app} gives
\(\sigma_r^\star\asymp\sqrt{\dstar}\) up to factors of \(r\) and
\(\kappa\), which are bounded constants.  Hence the spectral signal-to-noise
ratio scales as \(\sqrt{\dstar}\) since the noise level under the BTL
model is of constant order.

\subsection{Functional regularity for inference}\label{app:notation-functional}

For inference on a contrast \(\psi_\Gamma(\Theta)=\ip{\Gamma}{\Theta}\), we
require structural assumptions on \(\Gamma\) so that the score-gap
contrasts considered in Section~\ref{sec:ranking} are admissible.

\begin{assumption}[Bounded \(\ell_1\) gradient and finite support]\label{ass:bounded-Gamma-app}
The gradient \(\Gamma=\nabla\psi(\Thetastar)\in\R^{\dt\times\dm}\) has
\(|\mathrm{supp}(\Gamma)|\le M\) and \(\oneNorm\Gamma\le C_\psi\) for
absolute constants \(M,C_\psi>0\).
\end{assumption}

For the score-gap contrasts
\(\Gamma=e_t(e_m-e_{m'})^\top\) used in
Sections~\ref{sec:fdim}--\ref{sec:ranking},
Assumption~\ref{ass:bounded-Gamma-app} holds with
\(M=2\) and \(C_\psi=2\).

\begin{assumption}[Uniform score-gap alignment]\label{ass:alignment-app}
There exists a constant \(\alpha_0>0\), independent of
\((t,m,m')\), such that, for every \(t\in[\dt]\) and all distinct
\(m,m'\in[\dm]\),
\begin{equation}\label{eq:alignment-app}
 \Fnorm{P_{\mathbb T}\{e_t(e_m-e_{m'})^\top\}}
 \ge
 \alpha_0\sqrt{\frac1{\dt}+\frac1{\dm}}.
\end{equation}
\end{assumption}

Assumption~\ref{ass:alignment-app} ensures that every score-gap contrast
used for ranking has a non-negligible component in the correctly centered
tangent space.  It is imposed as a structural condition: incoherence gives
upper bounds on leverage scores but, by itself, does not imply this lower
bound.  The scale in~\eqref{eq:alignment-app} is natural because
Lemma~\ref{lem:tangent-envelope-app} below gives the matching upper bound
\(\Fnorm{P_{\mathbb T}\Gamma}\lesssim
\sqrt{\dt^{-1}+\dm^{-1}}\).

\subsection{Sample size and simplified rates}\label{app:notation-samplesize}

We work under the near-optimal sample-size scaling.

\begin{assumption}[Sample size]\label{ass:sample-app}
Fix an absolute exponent \(c\ge c_0\ge7\), where \(c_0\) is chosen once
and for all large enough for the concentration, studentization,
Gaussian-approximation, and polynomial-family union bounds below.
The sample size satisfies
\begin{equation}\label{eq:sample-size-app}
    n\ge C_0\,\bar d\,\log^c(n\bar d)
\end{equation}
for a sufficiently large absolute constant \(C_0>0\) depending only on
the structural parameters \((\mu,r,\kappa,B,c_\nu,C_\nu,c_\pi,C_\pi,c_{\rm sig},\alpha_0)\).
\end{assumption}

Under Assumption~\ref{ass:sample-app}, the spectral signal-to-noise
ratio \(\sigma_r(\Thetastar)/\sigma\asymp\sqrt{\dstar}\) (since
\(\sigma=O(1)\) under the BTL model) automatically dominates the noise
scale required by the subspace-perturbation theory, and all explicit
\(\sigma_r(\Thetastar)\)-dependent factors in the perturbation and
inverse-information bounds below collapse
into the structural constants.  We therefore state all rates in the
simplified form
\begin{equation}
    r_n
    :=
    \sqrt{\frac{\bar d\,\log^c(n\bar d)}{n}}
    =
    \sqrt{\frac{\bar d\,\mathrm{polylog}(n\bar d)}{n}}.
    \label{eqn:r_n}
\end{equation}

\subsection{The constant \texorpdfstring{$C_A$}{C\_A}}\label{app:notation-CA}

%\begin{revisionblock}
The entrywise regularity of the population inverse-information
operator is summarized by the following condition.

\begin{assumption}[Population inverse-information stability]
\label{ass:population-CA-app}
Let
\begin{equation}\label{eq:population-CA-app}
 C_A
 :=
 1\vee
 \frac{\norm{\mathcal A^\dagger}_{\infty\to\infty}}{\dstar}.
\end{equation}
We allow \(C_A\) to depend on the population Fisher geometry and assume
that \(C_A r_n=o(1)\).
\end{assumption}

The normalization by \(\dstar\) agrees with the Fisher-information
scale in
Lemma~\ref{lem:weighted-moment-app}: the eigenvalues of the restricted
Fisher operator are of order \(1/\dstar\).  The next proposition gives
the general range of \(C_A\) and identifies a dimension-free regime.

\begin{proposition}[Range and calibration of \(C_A\)]
\label{prop:CA-range-app}
Under Assumptions~\ref{ass:score-app}--\ref{ass:design-app}, the
bounded-signal condition, and fixed rank and incoherence constants,
\begin{equation}\label{eq:CA-range-app}
       c\ \le\ C_A\ \le\ C\sqrt{\bar d}.
\end{equation}
If task and pair sampling are exactly uniform and
\(I(\eta^\star)\equiv w_0\) is constant, then
\begin{equation}\label{eq:CA-constant-app}
 G_0H=\lambda_0H,
 \quad H\in\mathbb S,
 \qquad
 \lambda_0=\frac{2w_0}{\dt(\dm-1)},
 \qquad
 \mathcal A_0^\dagger=\lambda_0^{-1}P_{\mathbb T},
\end{equation}
so \(C_A\le C\norm{P_{\mathbb T}}_{\infty\to\infty}=O(1)\).
More generally, under the same uniform design, if
\begin{equation}\label{eq:CA-near-constant-app}
 \sup_{t,m\ne m'}
 \left|I(\Thetastar_{t,m}-\Thetastar_{t,m'})-w_0\right|
 \le \epsilon w_0
\end{equation}
for a sufficiently small dimension-independent \(\epsilon\), then
\(C_A=O(1)\).  For the logistic link, a sufficiently small
dimension-independent bound on \(\infnorm{\Thetastar}\) implies
\eqref{eq:CA-near-constant-app} with \(w_0=1/4\).
\end{proposition}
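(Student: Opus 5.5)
The plan is to work with the ambient extension \(\mathcal A^\dagger=P_{\mathbb T}A_{\mathbb T}^{-1}P_{\mathbb T}\) and compare it to a scaled projector. I treat the constant case first, then the near-constant case, then the general range.

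\textbf{Constant case.} Under exact uniform sampling with \(I\equiv w_0\), Lemma~\ref{lem:frob-reduction-app} becomes an exact identity. Every row-centered \(H\) satisfies
\[
\E[\ip{H}{X}^2]=\frac{1}{\dt}\sum_t\binom{\dm}{2}^{-1}\sum_{m<m'}(H_{t,m}-H_{t,m'})^2=\frac{2\Fnorm H^2}{\dt(\dm-1)}.
\]
Polarizing this identity gives \(\ip{G_0H_1}{H_2}=\lambda_0\ip{H_1}{H_2}\) on \(\mathbb S\). Since \(\mathbb T\subset\mathbb S\), we get \(A_{\mathbb T}=\lambda_0 I\) on \(\mathbb T\) and hence \(\mathcal A_0^\dagger=\lambda_0^{-1}P_{\mathbb T}\).

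It remains to bound \(\norm{P_{\mathbb T}}_{\infty\to\infty}\). I expand
\[
P_{\mathbb T}H=P_UHR_{\rm m}+HP_V-P_UHP_V .
\]
Incoherence gives \(\max_t\sum_{t'}|(P_U)_{tt'}|\le\sqrt{\dt}\,\max_t\|e_t^\top U\|_2\,\|U\|_{\rm op}\cdot O(1)\le\sqrt{\mu r}\), and similarly for \(P_V\). The map \(H\mapsto HR_{\rm m}\) has \(\infty\to\infty\) norm at most \(2\). Composing these left- and right-multiplication bounds (the entrywise \(\infty\)-norm is controlled by products of the row and column \(\ell_1\) norms) gives \(O(\mu r)\). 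Since \(\lambda_0^{-1}/\dstar=O(1/w_0)\), this yields \(C_A=O(1)\).

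\textbf{Near-constant case.} I write \(G=G_0+E\) with \(\ip{EH_1}{H_2}=\E[(I-w_0)\ip{H_1}{X}\ip{H_2}{X}]\). Then \(A_{\mathbb T}=\lambda_0(I+\lambda_0^{-1}P_{\mathbb T}EP_{\mathbb T})\) and I use a Neumann series. The series must converge in the \(\infty\to\infty\) norm, not only in Frobenius norm, and this is the main obstacle.

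\(E\) is explicit. \((EH)_{t,m}\) is a weighted sum over \(m'\) of \((I_{t,mm'}-w_0)(H_{t,m}-H_{t,m'})\), with weights of order \(\nu_t/\binom{\dm}{2}\). This gives \(\norm{E}_{\infty\to\infty}\le C\epsilon w_0/(\dt\dm)\), i.e.\ \(\norm{\lambda_0^{-1}E}_{\infty\to\infty}\le C\epsilon\). Combined with \(\norm{P_{\mathbb T}}_{\infty\to\infty}=O(1)\), the series
\[
\sum_k(-\lambda_0^{-1}P_{\mathbb T}EP_{\mathbb T})^k
\]
converges for small \(\epsilon\). Because \(P_{\mathbb T}EP_{\mathbb T}\) maps into \(\mathbb T\), the series represents \(A_{\mathbb T}^{-1}\) there. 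This gives \(C_A=O(1)\).

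For the logistic claim, \(|I(\eta)-1/4|\le\eta^2/16\), so a small bound on \(\infnorm{\Thetastar}\) gives the near-constant condition with \(w_0=1/4\).

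\textbf{General range, lower bound.} I test \(\mathcal A^\dagger\) on \(\Gamma=e_t(e_m-e_{m'})^\top\). By Lemma~\ref{lem:weighted-moment-app}, \(\ip{\Gamma}{\mathcal A^\dagger\Gamma}\ge c\,\dstar\Fnorm{P_{\mathbb T}\Gamma}^2\). The left side is at most \(\oneNorm\Gamma\,\infnorm{\mathcal A^\dagger\Gamma}\le 2\cdot2\norm{\mathcal A^\dagger}_{\infty\to\infty}\). This yields \(C_A\gtrsim\Fnorm{P_{\mathbb T}\Gamma}^2\). If that is only of order \(1/\bar d\), I fall back on the definition \(C_A=1\vee(\cdot)\), which gives \(C_A\ge1\) trivially.

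\textbf{General range, upper bound.} For \(\infnorm H\le1\) I bound
\[
\infnorm{\mathcal A^\dagger H}\le\max_{t,m}\Fnorm{P_{\mathbb T}(e_te_m^\top)}\cdot\Fnorm{A_{\mathbb T}^{-1}P_{\mathbb T}H}.
\]
The first factor is \(\lesssim\sqrt{\mu r/\bar d_{\min}}\) by the tangent envelope. The second is at most \(C\dstar\Fnorm{P_{\mathbb T}H}\) by the eigenvalue bound of Lemma~\ref{lem:weighted-moment-app}. For \(\Fnorm{P_{\mathbb T}H}\) I use the low-rank structure of \(P_{\mathbb T}H\) together with incoherence, aiming for \(\Fnorm{P_{\mathbb T}H}\lesssim\sqrt{r\bar d}\,\infnorm{H}\cdot O(1)\). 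With a balanced \(\dt\asymp\dm\) normalization this gives \(C_A\lesssim\sqrt{\bar d}\), possibly after tuning which incoherence factor is used.
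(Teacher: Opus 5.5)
Your constant-weight and near-constant arguments are correct and follow the paper's route. The exact pairwise-difference identity gives $\mathcal A_0^\dagger=\lambda_0^{-1}P_{\mathbb T}$. The row/column $\ell_1$ bounds on $P_{U^\star}$, $P_{V^\star}$ and $R_{\rm m}$ give $\norm{P_{\mathbb T}}_{\infty\to\infty}=O(\mu r)$. The bound $\norm{\lambda_0^{-1}E}_{\infty\to\infty}\le C\epsilon$ then makes the Neumann series converge in the induced $\infty$-norm. You are in fact more explicit than the paper on these steps.

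For the lower bound, falling back on the $1\vee$ in the definition of $C_A$ is legitimate. The paper, however, bounds the un-truncated ratio below directly: $\mathcal A^\dagger$ is self-adjoint with spectral radius $\lambda_{\max}(A_{\mathbb T}^{-1})\ge c\dstar$, and every induced norm dominates the spectral radius. Your test against one score gap only gives a lower bound of order $\Fnorm{P_{\mathbb T}\Gamma}^2\lesssim 1/\dt+1/\dm$.

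The genuine problem is the general upper bound. The inequality you aim for, $\Fnorm{P_{\mathbb T}H}\lesssim\sqrt{r\bar d}\,\infnorm H$, fails for every incoherent $\Thetastar$. Take $H=sv^\top$, where $s=\operatorname{sign}(u_1)$, $u_1$ is the first column of $U^\star$, and $v\in\{\pm1\}^{\dm}$ is balanced so that $R_{\rm m}v=v$. Then $\infnorm H=1$. Since the two summands of $P_{\mathbb T}H$ are orthogonal,
\[
\Fnorm{P_{\mathbb T}H}\ge\Fnorm{P_{U^\star}HR_{\rm m}}=\norm{(U^\star)^\top s}_2\norm{v}_2\ge\oneNorm{u_1}\norm{v}_2 .
\]
Incoherence forces $\oneNorm{u_1}\ge\norm{u_1}_2^2/\infnorm{u_1}\ge\sqrt{\dt/(\mu r)}$, and $\norm{v}_2\asymp\sqrt{\dm}$, so $\Fnorm{P_{\mathbb T}H}\gtrsim\sqrt{\dstar}$. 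Even granting your inequality, your three factors multiply to $\dstar\sqrt{\bar d/\min(\dt,\dm)}$. That would give $C_A=O(1)$ in the balanced case, not $\sqrt{\bar d}$, so your stated conclusion does not follow from what you wrote.

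The fix is to use only the trivial bound $\Fnorm{P_{\mathbb T}H}\le\Fnorm H\le\sqrt{\dstar}\,\infnorm H$. Combine it with your other two factors, $\max_{t,m}\Fnorm{P_{\mathbb T}(e_te_m^\top)}^2\le\mu r(1/\dt+1/\dm)$ and $\norm{A_{\mathbb T}^{-1}}_{\rm op}\le C\dstar$. This gives $\infnorm{\mathcal A^\dagger H}\le C\sqrt{\mu r}\,\dstar\sqrt{\dt+\dm}\,\infnorm H$, hence $C_A\le C\sqrt{\bar d}$ with no balance assumption. This is exactly the paper's argument, which writes the same Cauchy--Schwarz step as $\max_\omega\oneNorm{\mathcal A^\dagger E_\omega}\le\sqrt{\dstar}\max_\omega\Fnorm{A_{\mathbb T}^{-1}P_{\mathbb T}E_\omega}$. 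The $\sqrt{\bar d}$ loss relative to the constant-weight case comes from passing between $\ell_1$ and Frobenius norms in that step.
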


\begin{proof}
The spectral bounds in Lemma~\ref{lem:weighted-moment-app} imply
\(\norm{A_{\mathbb T}^{-1}}_{\rm op}\asymp\dstar\).  Since an induced
matrix norm dominates the spectral radius,
\(\norm{\mathcal A^\dagger}_{\infty\to\infty}\ge c\dstar\), proving
the lower bound in~\eqref{eq:CA-range-app}.  For a canonical ambient
matrix atom \(E_\omega\), the closed form
\eqref{eq:tangent-projector-matrix} and incoherence give the tangent
leverage bound
\[
 \Fnorm{P_{\mathbb T}E_\omega}^2
 \le C\frac{\bar d}{\dstar}.
\]
Self-adjointness and Cauchy--Schwarz therefore yield
\[
 \begin{split}
 \norm{\mathcal A^\dagger}_{\infty\to\infty}
 &=\max_\omega\oneNorm{\mathcal A^\dagger E_\omega}\\
 &\le \sqrt{\dstar}\max_\omega
       \Fnorm{A_{\mathbb T}^{-1}P_{\mathbb T}E_\omega}
 \le C\sqrt{\dstar}\,\dstar
       \sqrt{\frac{\bar d}{\dstar}}
 =C\dstar\sqrt{\bar d},
 \end{split}
\]
which proves the upper bound.

Under constant weights, the pairwise-difference identity gives
\eqref{eq:CA-constant-app}.  Moreover,
\(\norm{P_{\mathbb T}}_{\infty\to\infty}\le C\): indeed, by
self-adjointness its induced norm is
\(\max_\omega\oneNorm{P_{\mathbb T}E_\omega}\), and substituting the
incoherent factor rows into~\eqref{eq:tangent-projector-matrix} bounds
this quantity by a fixed-rank constant.  This proves the constant-weight
claim.

For the perturbative claim write \(G=G_0+E\) and
\(A_{\mathbb T}=\lambda_0 I_{\mathbb T}+P_{\mathbb T}EP_{\mathbb T}\).
Condition~\eqref{eq:CA-near-constant-app}, the pairwise graph form of
the information operator, and the preceding projector bound give
\[
 \norm{\lambda_0^{-1}P_{\mathbb T}EP_{\mathbb T}}
       _{\infty\to\infty}
 \le C\epsilon.
\]
Choosing \(\epsilon\) so that the right side is at most \(1/2\), the
Neumann expansion gives
\[
 \mathcal A^\dagger
 =\left(I+\lambda_0^{-1}P_{\mathbb T}EP_{\mathbb T}\right)^{-1}
   \lambda_0^{-1}P_{\mathbb T},
 \qquad
 \norm{\mathcal A^\dagger}_{\infty\to\infty}\le C\dstar.
\]
Finally, \(I(x)=1/4+O(x^2)\) near zero for the logistic link, so a
sufficiently small uniform signal bound implies
\eqref{eq:CA-near-constant-app}.
\end{proof}

The pairwise graph representation and near-uniform
sampling give \(\norm G_{\infty\to\infty}\le C/\dstar\), while the
projector calculation above gives
\(\norm{P_{\mathbb T}}_{\infty\to\infty}\le C\).  Hence
\begin{equation}\label{eq:population-composite-CA-app}
 \norm{\mathcal A^\dagger P_{\mathbb T}G}_{\infty\to\infty}
 \le C C_A.
\end{equation}
Foldwise analogues are established in
Theorem~\ref{thm:derived-foldwise-CA-app}.
%\end{revisionblock}

\subsection{Probability calibration to \texorpdfstring{$1-n^{-a}$}{1-n^-a}}\label{app:notation-probcalib}

Throughout, all "high-probability" statements are uniform over a free
constant \(a>0\) that may be taken arbitrarily large at the cost of an
absolute constant prefactor absorbed into \(\mathrm{polylog}(n\bar d)\).
Specifically, every high-probability bound below has the form
\(\Pr(\text{good event})\ge 1-n^{-a}\) for any fixed \(a>0\).

\textbf{Origin of the calibration.}
Every concentration statement below is proved with a free tail parameter
\(x\).  We set \(x=(a+c_0)\log(n\bar d)\), where \(c_0\) covers the
displayed finite union bounds.  Assumption~\ref{ass:sample-app} is stated
with the same logarithm, so no probability statement from another source
is silently strengthened.

\textbf{The master good event \(\mathcal E_n\).}
Let \(a>0\) be any fixed (large) constant.  Define
\[
    \mathcal E_n
    :=
    \mathcal E_{\rm init}\cap
    \mathcal E_{\rm refine}\cap
    \mathcal E_{\rm rem}\cap
    \mathcal E_{\rm var}\cap
    \mathcal E_{\rm cov}\cap
    \mathcal E_{\rm re}\cap
    \mathcal E_A,
\]
where the constituent events are as follows.
\begin{enumerate}[label=(\arabic*),leftmargin=2.4em,topsep=2pt,itemsep=2pt]
\item \(\mathcal E_{\rm init}\) --- Frobenius accuracy of the convex
initializer (Theorem~\ref{thm:convex-main-app}).
\item \(\mathcal E_{\rm refine}\) --- entrywise accuracy of the refined
estimator (Theorem~\ref{thm:pairwise-max-app}); this implies
\(\mathcal E_\infty=\{\infnorm{\Thetahat-\Thetastar}\le\varepsilon_n\}\)
with \(\varepsilon_n=Cr_n\).
\item \(\mathcal E_{\rm rem}\) --- uniform single-contrast one-step
remainder (Theorem~\ref{thm:uniform-remainder-app}).
\item \(\mathcal E_{\rm var}\) --- relative variance consistency for
plug-in standard errors
(Proposition~\ref{prop:fdim-var-cons-app}).
\item \(\mathcal E_{\rm cov}\) --- entrywise covariance consistency
(Proposition~\ref{prop:fdim-cov-cons-app}).
\item \(\mathcal E_{\rm re}\) --- the foldwise tangent geometry
and information-concentration event
(Lemma~\ref{lem:foldwise-geometry-app}).
\item {\(\mathcal E_A\) --- the foldwise inverse-
information and relative plug-in perturbation bounds in
Theorem~\ref{thm:derived-foldwise-CA-app}.}
\end{enumerate}
Each event holds with probability at least \(1-n^{-a}\); by the union
bound,
\[
    \Pr(\mathcal E_n^c)
    \le
    7\,n^{-a}
    \le
    n^{-a/2},
\]
and we relabel \(a\) so that \(\Pr(\mathcal E_n)\ge 1-n^{-a}\).
All deterministic statements in the proofs are made on \(\mathcal E_n\);
the deficit \(\Pr(\mathcal E_n^c)=o(1)\) is absorbed into the
\(o(1)\) slack of every coverage statement in
Appendix~\ref{app:ranking-proof}.

\subsection{Rectangular dimensions}\label{app:notation-balanced}

No balance condition on \((\dt,\dm)\) is imposed.  In the transposed
refinement below, a model-row update has curvature
\(n/(\dt\dm)\) and uniform row error \(\sqrt{\dt\bar d/n}\), whereas a task-row of
the orthonormal factor has norm \(O(\dt^{-1/2})\).  The task-column
update has curvature \(n\) and error \(n^{-1/2}\).  Consequently its two
entrywise reconstruction terms are, respectively,
\(O(\sqrt{\bar d/n})\) and \(O(\sqrt{\dt/n})\), giving the stated
\(O(\sqrt{\bar d/n})\) rate.  Hidden constants depend
only on the structural parameters \((\mu,r,\kappa,B,c_B,C_B)\) of the
score and signal, the design constants
\((c_\nu,C_\nu,c_\pi,C_\pi)\), the alignment \(\alpha_0\), the
signal-strength constant \(c_{\rm sig}\), and the
probability-calibration constant \(a\).

\section{Proof of Theorem~\ref{thm:entrywise}: convex initializer and entrywise refinement}\label{app:entrywise-proof}

This appendix proves Theorem~\ref{thm:entrywise} of
Section~\ref{sec:estimation} via the two-stage estimator
\(\Thetahat=\mathsf{Refine}_r(\widehat\Theta_0)\) constructed from a
nuclear-norm penalized convex MLE \(\widehat\Theta_0\) (initializer)
followed by a three-split row-wise refinement.  The structure of the
appendix is as follows.

\begin{itemize}[leftmargin=1.6em,topsep=2pt,itemsep=3pt]
\item Appendix~\ref{app:entrywise-proof-rsc} establishes the convex
stage.  We adapt the restricted strong convexity (RSC) framework of
\cite{negahban2012restricted} to the pairwise logistic loss,
obtaining a Frobenius-accurate initializer
\(\widehat\Theta_0\) with rate
\(\Fnorm{\widehat\Theta_0-\Thetastar}\lesssim\sqrt{r\,\dt\,\dm\,\bar d\,\log\bar d/n}\)
under the row-centering identifiability constraint.
\item Appendix~\ref{app:entrywise-proof-algorithm} sets up the
three-split refinement algorithm and the proof roadmap (six blocks).
\item Appendix~\ref{app:entrywise-proof-brouwer} establishes the
Brouwer inward-pointing zero lemma which underlies the
deterministic existence steps.
\item Appendix~\ref{app:entrywise-proof-leftdet} (Block~II) gives the
deterministic local-existence statement for the row update.
\item Appendix~\ref{app:entrywise-proof-leftdet} (Block~III) verifies the
existence condition probabilistically: six concentration lemmas
(coverage, noise, two bias terms, higher-order moments, curvature)
and a uniform proposition delivering
\(\twoninfnorm{\widetilde L-L^\star}\lesssim
\sqrt{\dt\bar d/n_2}\,\mathrm{polylog}(n\bar d)\).
\item Appendix~\ref{app:entrywise-proof-recenter} (Block~IV) handles
the gauge re-centering and proves the pairwise Gram identity used to
obtain the task-column curvature \(\lambda_t\asymp n_3\).
\item Appendix~\ref{app:entrywise-proof-right} (Blocks~V and VI)
deals with the right factor, yielding
the bound in~\eqref{eq:task-factor-rate-app} for
\(\twoninfnorm{\widetilde A-A^\star}\).
\item Appendix~\ref{app:entrywise-proof-final} (final assembly)
combines the two factor bounds into the entrywise rate
\(\infnorm{\Thetahat-\Thetastar}\lesssim\sqrt{\bar d\,\mathrm{polylog}(n\bar d)/n}\),
proving Theorem~\ref{thm:entrywise}.
\end{itemize}

\subsection{Stage 1: convex initializer via pairwise-logistic RSC}\label{app:entrywise-proof-rsc}

We split the comparison sample into three independent parts
\(\mathcal D_1,\mathcal D_2,\mathcal D_3\) of sizes
\(n_1,n_2,n_3\asymp n\).  The convex stage uses \(\mathcal D_1\) only.
For convenience write \(M^\star:=\Thetastar\), \(d_1:=\dt\), and \(d_2:=\dm\)
in this subsection to keep the two matrix modes explicit.

\subsubsection{Estimator}\label{app:rsc-estimator}

Let \(\ell(y,\eta)=\log(1+e^\eta)-y\eta\) be the logistic negative
log-likelihood and define the empirical risk
\[
    \cL_{n_1}(M)
    :=
    \frac{1}{n_1}\sum_{i\in\mathcal D_1}
    \ell\bigl(Y_i,\ip{X_i}{M}\bigr).
\]
The convex estimator is the nuclear-norm penalized MLE under the
row-centering and entrywise-bound constraints:
\begin{equation}\label{eq:penalized-program-app}
    \widehat M\in
    \arg\min_{M\in\mathcal C_B}
    \bigl\{\cL_{n_1}(M)+\lambda\norm M_*\bigr\},
    \qquad
    \mathcal C_B
    :=
    \{M\in\R^{d_1\times d_2}:\,M\mathbf 1_{d_2}=0,\,\infnorm M\le B\}.
\end{equation}
Since \(M^\star\in\mathcal C_B\) by Assumption~\ref{ass:score-app}(ii),
the truth is feasible.  The choice of \(\lambda\) is fixed in
Theorem~\ref{thm:convex-main-app} below.  We retain the bounded,
row-centered convex solution \(\widehat M\) as the offset estimator in
the refinement.  Its rank-\(r\) SVD is used only to estimate singular
subspaces.  No entrywise clipping is applied after rank truncation,
because such clipping need not preserve rank or row centering.

The role of the row-centering constraint is essential: pairwise
comparisons depend only on within-task differences, so any matrix of
the form \(c\mathbf 1_{d_2}^\top\), \(c\in\R^{d_1}\), lies in the null
space of the design.
Without the constraint, no RSC statement can hold.

\subsubsection{Step 1: the population pairwise quadratic identity}\label{app:rsc-popid}

We begin with the one genuinely pairwise-specific algebraic identity.

\begin{lemma}[Population pairwise quadratic identity]\label{lem:population-isometry-app}
Let \(X=e_t(e_m-e_{m'})^\top\) where \(t\in[d_1]\) is sampled from
\(\nu\) and \(\{m,m'\}\subset[d_2]\) from a task-dependent distribution
\(\pi_t\) satisfying Assumption~\ref{ass:design-app}.  For every
\(\Delta\in\R^{d_1\times d_2}\) with \(\Delta\mathbf 1_{d_2}=0\),
\begin{equation}\label{eq:population-isometry-app}
    \E\bigl[\ip{X}{\Delta}^2\bigr]
    \asymp
    \frac{2}{d_1\,(d_2-1)}\Fnorm\Delta^2
    \asymp
    \frac{\Fnorm\Delta^2}{d_1\,d_2}.
\end{equation}
\end{lemma}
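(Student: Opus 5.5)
The plan is to condition on the task and reduce everything to a single row. Conditional on \(t\) and the unordered pair \(\{m,m'\}\), we have \(\ip{X}{\Delta}^2=(\Delta_{t,m}-\Delta_{t,m'})^2\). This quantity is symmetric in \(m\) and \(m'\), so the orientation of the pair does not matter. Hence
\[
\E\bigl[\ip{X}{\Delta}^2\bigr]=\sum_{t}\nu_t\sum_{m<m'}\pi_t(\{m,m'\})\,(\Delta_{t,m}-\Delta_{t,m'})^2 .
\]
The whole argument is to bound this double sum above and below by constant multiples of the uniform-design value.

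The key algebraic step is the row identity already used in Lemma~\ref{lem:frob-reduction-app}. Write \(z=\Delta_{t,\cdot}^\top\in\R^{d_2}\), which satisfies \(\sum_m z_m=0\) by the centering constraint. Expanding the square gives
\[
\sum_{m<m'}(z_m-z_{m'})^2=\tfrac12\sum_{m,m'}(z_m-z_{m'})^2=d_2\norm z_2^2-\Bigl(\sum_m z_m\Bigr)^2=d_2\norm z_2^2 .
\]
This is exactly where row centering enters. Without it, the all-ones direction would lie in the kernel and no lower bound could hold.

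Next we sandwich the pair weights. Each term \((z_m-z_{m'})^2\) is nonnegative, so the design bounds \(c_\pi/\binom{d_2}{2}\le\pi_t\le C_\pi/\binom{d_2}{2}\) give, for each task,
\[
\frac{c_\pi\,d_2\norm z_2^2}{\binom{d_2}{2}}\le\sum_{m<m'}\pi_t(\{m,m'\})(z_m-z_{m'})^2\le\frac{C_\pi\,d_2\norm z_2^2}{\binom{d_2}{2}} .
\]
Here \(d_2/\binom{d_2}{2}=2/(d_2-1)\), which is the per-task uniform value. We then average over tasks using \(c_\nu/d_1\le\nu_t\le C_\nu/d_1\) and sum the row norms into \(\Fnorm\Delta^2\). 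This yields the first \(\asymp\) with constants \(c_\nu c_\pi\) and \(C_\nu C_\pi\). Under exactly uniform sampling the first relation is an equality.

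The second \(\asymp\) is just \(d_2-1\asymp d_2\) for \(d_2\ge2\), which holds since pairs of distinct models exist. There is no real obstacle here. The only point requiring care is that the sandwich bound must be applied termwise to nonnegative summands before invoking the identity, because the identity itself relies on the uniform weights.
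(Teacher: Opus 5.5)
Your proposal is correct and follows the paper's proof essentially step for step. You condition on the task, apply the centered pairwise-difference identity \(\sum_{m<m'}(z_m-z_{m'})^2=d_2\norm{z}_2^2\), sandwich the near-uniform pair and task weights, and sum the row norms into \(\Fnorm\Delta^2\); your version is slightly more explicit than the paper's in that it tracks the constants \(c_\nu c_\pi\), \(C_\nu C_\pi\) and applies the sandwich termwise to nonnegative summands.
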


\begin{proof}
Conditional on \(t\), \(\ip{X}{\Delta}=\Delta_{t,m}-\Delta_{t,m'}\).
Letting \(z(t)\in\R^{d_2}\) denote the \(t\)-th row of \(\Delta\), with
\(\sum_m z_m(t)=0\) by hypothesis, the standard sum-of-squared-pairwise-differences identity gives
\[
    \sum_{m<m'}(z_m(t)-z_{m'}(t))^2
    =
    d_2\sum_{m=1}^{d_2}z_m(t)^2.
\]
Under near-uniform pair sampling
(Assumption~\ref{ass:design-app}), the conditional expectation
satisfies
\(\E_{\{m,m'\}}[(z_m(t)-z_{m'}(t))^2\mid t]\asymp 2\norm{z(t)}_2^2/(d_2-1)\).
Averaging over \(t\sim\nu\) using \(\nu_t\asymp 1/d_1\) and summing
\(\norm{z(t)}_2^2\) over \(t\) yields
\(\E[\ip{X}{\Delta}^2]\asymp 2\Fnorm\Delta^2/(d_1(d_2-1))\).
\end{proof}

The identity~\eqref{eq:population-isometry-app} is the pairwise
analogue of the Negahban--Wainwright population norm-equivalence
identity for matrix completion.  The centered gauge
\(\Delta\mathbf 1_{d_2}=0\) is essential: \(\ip{X}{\Delta}=0\) identically
for any \(\Delta=c\mathbf 1_{d_2}^\top\), so without centering the left-hand
side of~\eqref{eq:population-isometry-app} would vanish on a non-trivial
subspace.

\subsubsection{Step 2: pairwise quadratic restricted strong convexity}\label{app:rsc-rsc}

Define the spikiness ratio
\(\alpha_{\rm sp}(\Delta):=\sqrt{d_1 d_2}\,\infnorm\Delta/\Fnorm\Delta\)
and the rank surrogate
\(\beta_{\rm ra}(\Delta):=\norm\Delta_*/\Fnorm\Delta\), and let
\[
    \mathcal C_{\rm pw}(n_1;c_0)
    :=
    \Bigl\{\Delta\in\R^{d_1\times d_2}:\,\Delta\mathbf 1_{d_2}=0,\;
    \alpha_{\rm sp}(\Delta)\,\beta_{\rm ra}(\Delta)
    \le \frac{1}{c_0}\sqrt{\frac{n_1}{\bar d\log(n\bar d)}}\Bigr\}.
\]

\begin{theorem}[Pairwise quadratic RSC]\label{thm:pairwise-quadratic-rsc-app}
Fix any \(a>0\).  There exist absolute constants \(c_0,c,C>0\) such that
whenever
\(n_1\ge C\bar d\log(n\bar d)\),
with probability at least \(1-n^{-a}\),
\begin{equation}\label{eq:pairwise-rsc-app}
    \frac{1}{n_1}\sum_{i\in\mathcal D_1}\ip{X_i}{\Delta}^2
    \;\ge\;
    \frac{c}{d_1\,d_2}\Fnorm\Delta^2
    \qquad\text{for all }\Delta\in\mathcal C_{\rm pw}(n_1;c_0).
\end{equation}
\end{theorem}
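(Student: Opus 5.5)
The plan is to follow the peeling argument of Negahban and Wainwright for matrix completion, with Lemma~\ref{lem:population-isometry-app} playing the role of the population norm equivalence. Because both sides of \eqref{eq:pairwise-rsc-app} are homogeneous of degree two, and the cone condition is scale invariant, it suffices to consider the normalization \(\infnorm\Delta=1/\sqrt{d_1d_2}\cdot\alpha_{\rm sp}\)-type scaling. Concretely, we fix \(\sqrt{d_1d_2}\,\infnorm\Delta\le 1\), so every summand \(\ip{X_i}{\Delta}^2\le 4/(d_1d_2)\) is bounded. We then define
\[
Z(D):=\sup_{\Delta\in\mathcal C_{\rm pw},\,\sqrt{d_1d_2}\infnorm\Delta\le1,\ \Fnorm\Delta\le D}\Bigl|\frac1{n_1}\sum_{i}\ip{X_i}{\Delta}^2-\E\ip{X}{\Delta}^2\Bigr|.
\]
The goal is to show \(Z(D)\le \tfrac{c'}{2}D^2/(d_1d_2)+\text{small}\) uniformly over dyadic shells of \(D\). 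Combined with the lower bound \(\E\ip{X}{\Delta}^2\ge c'\Fnorm\Delta^2/(d_1d_2)\) from Lemma~\ref{lem:population-isometry-app}, this yields the claim for all \(\Delta\) in the cone whose Frobenius norm is not tiny relative to the normalization. The tiny ones are excluded because the cone condition together with \(\alpha_{\rm sp}\le\) the allowed ratio forces \(\Fnorm\Delta\) to be bounded below once \(\infnorm\Delta\) is normalized.

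\textbf{Step 1.} For a single shell, I would apply Talagrand/Bousquet concentration for bounded empirical processes. The envelope is \(4/(d_1d_2)\), and the variance proxy is \(\sup\E\ip{X}{\Delta}^4\le \tfrac{4}{d_1d_2}\E\ip X\Delta^2\lesssim D^2/(d_1d_2)^2\). This gives \(Z(D)\le 2\E Z(D)+C\bigl(D\sqrt{x/n_1}+x/n_1\bigr)/(d_1d_2)\), with \(x=(a+c_0)\log(n\bar d)\).

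\textbf{Step 2.} To bound \(\E Z(D)\), I would symmetrize and then use the Ledoux--Talagrand contraction. The map \(u\mapsto u^2\) is \(4/\sqrt{d_1d_2}\)-Lipschitz on the bounded range. This reduces the problem to \(\tfrac{C}{\sqrt{d_1d_2}}\E\sup_\Delta|\ip{\tfrac1{n_1}\sum\varepsilon_iX_i}{\Delta}|\le \tfrac{C}{\sqrt{d_1d_2}}\,\E\norm{\tfrac1{n_1}\sum\varepsilon_iX_i}_{\rm op}\sup\norm\Delta_*\). The operator-norm term is controlled by matrix Bernstein. Each \(X_i\) has \(\norm{X_i}_{\rm op}=\sqrt2\), \(\norm{\E X_iX_i^\top}_{\rm op}\lesssim 1/d_1\), and \(\norm{\E X_i^\top X_i}_{\rm op}\lesssim 1/d_2\) by near-uniform sampling (the diagonal of \(\E X^\top X\) has entries of order \(1/d_2\), and the off-diagonal part is \(-O(1/d_2^2)\) per entry, giving operator norm \(O(1/d_2)\)). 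Hence \(\E\norm{\cdot}_{\rm op}\lesssim\sqrt{\log\bar d/(n_1\min(d_1,d_2))}+\log\bar d/n_1\). On the cone, \(\norm\Delta_*\le\beta_{\rm ra}D\), and the normalization gives \(\beta_{\rm ra}\le \alpha_{\rm sp}\beta_{\rm ra}\cdot \Fnorm\Delta/(\sqrt{d_1d_2}\infnorm\Delta)\). Combining these produces \(\E Z(D)\lesssim \tfrac{D}{d_1d_2}\cdot\alpha_{\rm sp}\beta_{\rm ra}\sqrt{\bar d\log\bar d/n_1}\cdot(\text{norm. factor})\). The cone constraint \(\alpha_{\rm sp}\beta_{\rm ra}\le c_0^{-1}\sqrt{n_1/(\bar d\log(n\bar d))}\) is exactly what makes this at most \(\tfrac{c'}{8}D^2/(d_1d_2)\) for large \(c_0\).

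\textbf{Step 3.} I would peel over \(D\in[\mu_0,2^{-k}\cdot\sqrt{d_1d_2}\cdot \text{max}]\) with \(O(\log(n\bar d))\) shells and take a union bound at level \(x\). The condition \(n_1\ge C\bar d\log(n\bar d)\) absorbs the \(x/n_1\) and \(D\sqrt{x/n_1}\) terms into \(\tfrac{c'}{8}D^2/(d_1d_2)\) for \(D\) above the lower cutoff implied by the cone.

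The main obstacle is Step 2's bookkeeping: getting the spikiness normalization and the Bernstein variance (with \(\min(d_1,d_2)\) versus \(\bar d\)) to combine precisely into the product \(\alpha_{\rm sp}\beta_{\rm ra}\sqrt{\bar d\log/n_1}\). If direct contraction loses a factor, I would instead bound the quadratic process via the Rademacher complexity of \(\Delta\mapsto\ip X\Delta\) squared, through the cone's nuclear-norm ball at the appropriate radius, as in Negahban--Wainwright Lemma~1.
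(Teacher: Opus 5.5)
Your plan is correct and uses the same toolkit as the paper: symmetrization, contraction of $u\mapsto u^2$ on a bounded range, nuclear/operator duality, rectangular matrix Bernstein with variance proxy $C/(d_1\wedge d_2)$, Bousquet's inequality, and a union bound over logarithmically many shells. The difference lies in what you normalize and what you peel over.

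\textbf{The two peeling schemes.} The paper fixes $\Fnorm\Delta=1$, which pins the population term at order $1/(d_1d_2)$. It then peels over dyadic shells in the two parameters $(\alpha_{\rm sp},\|\Delta\|_*)$, and kills the Bousquet deviation terms via $\alpha_{\rm sp}\le\alpha_{\rm sp}\beta_{\rm ra}$, i.e.\ $\beta_{\rm ra}\ge1$.

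You instead fix the $\ell_\infty$ scale and peel over the single parameter $\Fnorm\Delta$, as in Negahban--Wainwright. Here the cone ties the nuclear radius to the Frobenius radius. The fact $\beta_{\rm ra}\ge1$ then appears as an explicit lower cutoff $\Fnorm\Delta\ge1/\rho$, with $\rho:=c_0^{-1}\sqrt{n_1/(\bar d\log(n\bar d))}$. Since also $\Fnorm\Delta\le1$, you need only $O(\log n)$ dyadic shells.

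It is this cutoff, not the sample-size condition, that absorbs the $D\sqrt{x/n_1}$ and $x/n_1$ terms once $c_0$ is large. The condition $n_1\ge C\bar d\log(n\bar d)$ is needed only for the $\log/n_1$ part of matrix Bernstein and to make $1/\rho\le1$. Your route uses one-parameter peeling; the paper's avoids tracking how the nuclear radius scales across shells.

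\textbf{One fix needed in Step 2.} The normalization must be the equality $\sqrt{d_1d_2}\,\infnorm\Delta=1$, not $\le1$. The shell should then be the convex set
\[
\bigl\{\sqrt{d_1d_2}\,\infnorm\Delta\le1,\ \Fnorm\Delta\le D,\ \|\Delta\|_*\le\rho D^2\bigr\}.
\]
With only ``$\le1$'' inside $Z(D)$, rescaled cone elements from larger shells enter the supremum. For those, the cone gives only
\[
\|\Delta\|_*\le\rho\Fnorm\Delta^2/(\sqrt{d_1d_2}\infnorm\Delta)\le\rho D,
\]
using $\Fnorm\Delta\le\sqrt{d_1d_2}\infnorm\Delta$. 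The duality bound then degrades to $O\{D/(c_0d_1d_2)\}$, which is not dominated by $D^2/(d_1d_2)$ near the cutoff $D\approx1/\rho$.

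With the equality normalization, $\alpha_{\rm sp}=1/\Fnorm\Delta$, so the cone gives $\|\Delta\|_*\le\rho D^2$. Using $\bar d\,(d_1\wedge d_2)=d_1d_2$,
\[
\E Z(D)\lesssim\frac{\rho D^2}{\sqrt{d_1d_2}}\sqrt{\frac{\log(d_1+d_2)}{n_1(d_1\wedge d_2)}}
=\frac{D^2}{d_1d_2}\,\rho\sqrt{\frac{\bar d\log(d_1+d_2)}{n_1}}\lesssim\frac{D^2}{c_0\,d_1d_2}.
\]
This supplies the missing ``norm.\ factor'' $D$ in your display. The rest of your argument then goes through as written.
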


\begin{proof}
Set \(D=d_1d_2\), \(d_{\min}=d_1\wedge d_2\), and
\(Q_n(\Delta)=n_1^{-1}\sum_i\ip{X_i}{\Delta}^2\).
Both sides are homogeneous, so fix \(\Fnorm\Delta=1\).  For
\[
 \mathcal S(\alpha,R)=
 \{\Delta:\Delta\mathbf1_{d_2}=0,\ \Fnorm\Delta=1,\
 \sqrt D\,\infnorm\Delta\le\alpha,\ \norm\Delta_*\le R\},
\]
Lemma~\ref{lem:population-isometry-app} gives
\(c/D\le Q(\Delta):=\E Q_n(\Delta)\le C/D\).

Let \(Z_\varepsilon=n_1^{-1}\sum_i\varepsilon_iX_i\).  Symmetrization,
contraction on \(|\ip{X_i}{\Delta}|\le2\alpha/\sqrt D\), and
operator/nuclear duality give
\[
 \E\sup_{\mathcal S(\alpha,R)}|Q_n-Q|
 \le \frac{8\alpha R}{\sqrt D}\E\norm{Z_\varepsilon}_{\rm op}.
\]
Here \(\norm{X_i}_{\rm op}\le\sqrt2\), and near-uniform sampling gives
\[
 \max\{\norm{\E X_iX_i^\top}_{\rm op},
        \norm{\E X_i^\top X_i}_{\rm op}\}\le C/d_{\min}.
\]
Rectangular matrix Bernstein, integrated over its tail, therefore yields
\[
 \E\norm{Z_\varepsilon}_{\rm op}
 \le C\left\{\sqrt{\frac{\log(d_1+d_2)}{n_1d_{\min}}}
             +\frac{\log(d_1+d_2)}{n_1}\right\}.
\]

Bousquet's inequality supplies the required tail bound directly for
this empirical process.  Indeed, for
\(f_\Delta(X)=\ip{X}{\Delta}^2\),
\(0\le f_\Delta\le4\alpha^2/D\) and
\(\sup_\Delta\Var(f_\Delta)\le C\alpha^2/D^2\).  Thus, with probability
at least \(1-2e^{-x}\),
\begin{align}
 \sup_{\mathcal S(\alpha,R)}|Q_n-Q|
 \le C\bigg[
 &\frac{\alpha R}{\sqrt D}
   \sqrt{\frac{\log(d_1+d_2)}{n_1d_{\min}}}
 +\frac{\alpha}{D}\sqrt{\frac{x}{n_1}}
 +\frac{\alpha^2x}{Dn_1}\bigg].
 \label{eq:rsc-local-process-app}
\end{align}
This argument is on the independent matrices \(X_i\); it does not use
a product-space Lipschitz assertion for dependent transformed
coordinates.

Apply~\eqref{eq:rsc-local-process-app} on dyadic \((\alpha,R)\) shells
and take \(x=(a+C)\log(n\bar d)\).  Feasibility bounds the number of
nontrivial shells by \(O(\log(nD))\).  If
\[
 \alpha_{\rm sp}(\Delta)\beta_{\rm ra}(\Delta)
 \le c_0^{-1}\sqrt{\frac{n_1}{\bar d\log(n\bar d)}},
\]
then \(D/d_{\min}\le\bar d\) and
\(\beta_{\rm ra}(\Delta)\ge1\) make the right side of
\eqref{eq:rsc-local-process-app} at most \(c/(2D)\), for sufficiently
large \(c_0\) and sample-size constant.  Hence
\(Q_n(\Delta)\ge c\Fnorm\Delta^2/(2D)\).
Rescaling proves~\eqref{eq:pairwise-rsc-app}.  When the restricted
inequality fails, its negation and the nuclear-norm cone below give the
small-radius alternative used in Theorem~\ref{thm:convex-main-app}.
\end{proof}
\subsubsection{Step 3: from quadratic RSC to logistic RSC}\label{app:rsc-logistic}

The pairwise logistic loss is not quadratic, but its Bregman divergence
inherits the quadratic curvature on the bounded-signal feasible set.

\begin{lemma}[Logistic curvature reduction]\label{lem:likelihood-curvature-app}
For any \(\Delta\) with \(M^\star+\Delta\in\mathcal C_B\),
\begin{equation}\label{eq:bregman-lower-bound-app}
    \delta\cL_{n_1}(M^\star;\Delta)
    :=
    \cL_{n_1}(M^\star+\Delta)-\cL_{n_1}(M^\star)-\ip{\nabla\cL_{n_1}(M^\star)}{\Delta}
    \;\ge\;
    \frac{c_B}{2 n_1}\sum_{i\in\mathcal D_1}\ip{X_i}{\Delta}^2,
\end{equation}
where \(c_B:=\inf_{|x|\le 2B}\sigma'(x)>0\).  Consequently, on the event
of Theorem~\ref{thm:pairwise-quadratic-rsc-app},
\[
    \delta\cL_{n_1}(M^\star;\Delta)
    \;\ge\;
    \kappa_{\rm pw}\Fnorm\Delta^2,
    \qquad\kappa_{\rm pw}\asymp\frac{1}{d_1 d_2},
\]
holds for all \(\Delta\in\mathcal C_{\rm pw}(n_1;c_0)\) with
\(M^\star+\Delta\in\mathcal C_B\).
\end{lemma}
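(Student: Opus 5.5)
The plan is to reduce the Bregman divergence of the empirical risk to a sum of one-dimensional Bregman divergences of the scalar logistic loss. Then I use the exact integral form of the scalar remainder together with the lower bound on \(\sigma'\) over the bounded range of linear predictors.

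\textbf{Step 1: decomposition.} Since \(\cL_{n_1}(M)=n_1^{-1}\sum_i\ell(Y_i,\ip{X_i}{M})\) and \(\nabla\cL_{n_1}(M^\star)=n_1^{-1}\sum_i\partial_\eta\ell(Y_i,\eta_i^\star)X_i\), linearity gives
\[
\delta\cL_{n_1}(M^\star;\Delta)=\frac1{n_1}\sum_{i\in\mathcal D_1}\Bigl\{\ell(Y_i,\eta_i^\star+h_i)-\ell(Y_i,\eta_i^\star)-\partial_\eta\ell(Y_i,\eta_i^\star)h_i\Bigr\},
\]
where \(h_i:=\ip{X_i}{\Delta}\) and \(\eta_i^\star=\ip{X_i}{M^\star}\).

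\textbf{Step 2: scalar curvature.} The term \(-y\eta\) in \(\ell(y,\eta)\) is linear in \(\eta\), so \(\partial_\eta^2\ell(y,\eta)=\sigma'(\eta)\), independently of \(y\). Taylor's theorem with integral remainder then writes each summand as
\[
h_i^2\int_0^1(1-s)\,\sigma'(\eta_i^\star+s h_i)\,ds .
\]
The key point is that every intermediate predictor \(\eta_i^\star+sh_i=\ip{X_i}{M^\star+s\Delta}\) stays in \([-2B,2B]\). Indeed, \(\mathcal C_B\) is convex and contains both \(M^\star\) and \(M^\star+\Delta\), so it contains \(M^\star+s\Delta\). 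Each \(X_i\) has one \(+1\) and one \(-1\) entry, so \(|\ip{X_i}{M}|\le2\infnorm M\le 2B\) for \(M\in\mathcal C_B\). Hence \(\sigma'\ge c_B\) along the whole segment. Since \(\int_0^1(1-s)\,ds=1/2\), each summand is at least \(c_Bh_i^2/2\), and summing over \(i\) gives \eqref{eq:bregman-lower-bound-app}.

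\textbf{Step 3: consequence.} For \(\Delta\in\mathcal C_{\rm pw}(n_1;c_0)\), Theorem~\ref{thm:pairwise-quadratic-rsc-app} gives \(n_1^{-1}\sum_i h_i^2\ge c\Fnorm\Delta^2/(d_1d_2)\) on its event. Combining this with Step 2 yields \(\kappa_{\rm pw}=c\,c_B/(2d_1d_2)\). The matching upper scale \(\kappa_{\rm pw}\lesssim 1/(d_1d_2)\) is only notational, following from \(\sigma'\le1/4\) and Lemma~\ref{lem:population-isometry-app}.

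The only nontrivial step is confirming that the segment between \(M^\star\) and \(M^\star+\Delta\) remains in \(\mathcal C_B\). This is exactly where convexity of the constraint set and the two-nonzero structure of the pairwise design enter. Without them, \(\sigma'\) could degenerate along the path and the constant \(c_B\) would be lost.
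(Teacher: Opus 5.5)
Your proposal is correct and follows the paper's proof. Both reduce the Bregman divergence to per-observation scalar Taylor remainders with second derivative \(\sigma'\), bound \(\sigma'\ge c_B\) on \([-2B,2B]\), and then combine with the pairwise quadratic RSC of Theorem~\ref{thm:pairwise-quadratic-rsc-app}. The only differences are minor: you use the integral form of the remainder rather than the Lagrange form, and you explicitly justify that the intermediate predictors stay in \([-2B,2B]\) (via the \(\pm1\) structure of \(X_i\) and convexity), which the paper simply asserts.
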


\begin{proof}
Taylor's theorem applied to \(\eta\mapsto\ell(y,\eta)\) at
\(\ip{X_i}{M^\star}\) gives, for some intermediate
\(\xi_i\) with \(|\xi_i|\le 2B\),
\[
    \delta\cL_{n_1}(M^\star;\Delta)
    =
    \frac{1}{2 n_1}\sum_{i\in\mathcal D_1}\sigma'(\xi_i)\ip{X_i}{\Delta}^2.
\]
By definition of \(c_B\), \(\sigma'(\xi_i)\ge c_B>0\), giving the first
inequality.  Combining with~\eqref{eq:pairwise-rsc-app} yields the
second.
\end{proof}

\subsubsection{Step 4: gradient operator-norm bound}\label{app:rsc-gradient}

\begin{lemma}[Gradient operator-norm bound]\label{lem:gradient-bound-app}
Fix any \(a>0\).  With probability at least \(1-n^{-a}\),
\begin{equation}\label{eq:gradient-bound-app}
    \norm{\nabla\cL_{n_1}(M^\star)}_{\rm op}
    \le
    C\sqrt{\frac{\log\{n(d_1+d_2)\}}{n_1\,(d_1\wedge d_2)}}.
\end{equation}
\end{lemma}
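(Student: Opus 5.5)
The gradient at the truth is a sum of independent, mean-zero, bounded random matrices, so I would bound its operator norm with the rectangular matrix Bernstein inequality. Concretely,
\[
\nabla\cL_{n_1}(M^\star)=\frac1{n_1}\sum_{i\in\mathcal D_1}\{\sigma(\ip{X_i}{M^\star})-Y_i\}X_i=-\frac1{n_1}\sum_{i\in\mathcal D_1}s_\eta(Y_i,\eta_i^\star)X_i .
\]
Set \(S_i:=s_\eta(Y_i,\eta_i^\star)X_i\). By Assumption~\ref{ass:score-app}(i), \(\E[S_i\mid X_i]=0\), and the \(S_i\) are i.i.d. across \(i\in\mathcal D_1\).

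The first step is the uniform bound on each summand. Since \(|s_\eta|\le1\) and \(X_i=e_{t_i}(e_{m_i}-e_{m_i'})^\top\) is rank one with \(\norm{X_i}_{\rm op}=\sqrt2\), we get \(\norm{S_i}_{\rm op}\le\sqrt2=:L\).

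The second step is the matrix variance proxy. We have \(S_iS_i^\top=s^2\,2\,e_{t_i}e_{t_i}^\top\). Using \(s^2\le1\) and \(\nu_t\le C_\nu/d_1\),
\[
\norm{\E S_iS_i^\top}_{\rm op}\le 2\max_t\nu_t\le 2C_\nu/d_1 .
\]
Similarly, \(S_i^\top S_i=s^2(e_{m_i}-e_{m_i'})(e_{m_i}-e_{m_i'})^\top\). Its expectation is a weighted graph Laplacian over pairs with weights at most \(C_\pi/\binom{d_2}{2}\), summed over the tasks with weights \(\nu_t\). The Laplacian of the complete graph scaled by \(w\) has operator norm \(d_2w\). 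Hence
\[
\norm{\E S_i^\top S_i}_{\rm op}\le d_2\,C_\pi/\binom{d_2}{2}\lesssim 1/d_2 .
\]
Together these give \(v:=\max\{\norm{\sum\E S_iS_i^\top},\norm{\sum\E S_i^\top S_i}\}\le Cn_1/(d_1\wedge d_2)\).

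The third step applies matrix Bernstein:
\[
\Pr\Big(\Big\|\sum_i S_i\Big\|_{\rm op}\ge u\Big)\le(d_1+d_2)\exp\Big(\frac{-u^2/2}{v+Lu/3}\Big).
\]
Choose \(u\asymp\sqrt{v\,x}+Lx\) with \(x=(a+1)\log\{n(d_1+d_2)\}\), so the failure probability is at most \(n^{-a}\). Dividing by \(n_1\) yields
\[
\norm{\nabla\cL_{n_1}(M^\star)}_{\rm op}\le C\Big\{\sqrt{\frac{\log\{n(d_1+d_2)\}}{n_1(d_1\wedge d_2)}}+\frac{\log\{n(d_1+d_2)\}}{n_1}\Big\}.
\]

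The only point needing care is absorbing the second, subexponential term into the first. This holds exactly when \(n_1\gtrsim (d_1\wedge d_2)\log\{n(d_1+d_2)\}\), which is implied by Assumption~\ref{ass:sample-app} since \(n_1\asymp n\ge C_0\bar d\log^c(n\bar d)\). I expect the variance computation, specifically the Laplacian bound under the near-uniform rather than exactly uniform design, to be the main, though mild, obstacle. The rest is routine.
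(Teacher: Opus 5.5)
Your proposal is correct and takes the same route as the paper. Both arguments use rectangular matrix Bernstein with envelope $\sqrt2$, the row variance proxy $\E X_iX_i^\top\preceq 2\,\mathrm{diag}(\nu)$, and a column variance proxy of order $1/d_2$ (the paper states the centered form $C d_2^{-1}R_{\rm m}$, where your Laplacian argument bounds only its operator norm, which is all Bernstein needs).

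You are more careful than the paper on one point. Absorbing the $\log\{n(d_1+d_2)\}/n_1$ term requires $n_1\gtrsim (d_1\wedge d_2)\log\{n(d_1+d_2)\}$. The paper uses this tacitly when it lets the $\sqrt2\,t/3$ term in the Bernstein denominator disappear, and the condition is supplied by the sample-size assumption.
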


\begin{proof}
The gradient at the truth is
\(\nabla\cL_{n_1}(M^\star)=n_1^{-1}\sum_i(\sigma(\ip{X_i}{M^\star})-Y_i)X_i\).
Each summand is mean zero (by the model) and has operator norm at most
\(\sqrt 2\) (since \(\norm{X_i}_{\rm op}=\sqrt 2\) and the scalar prefactor
\(\sigma(\ip{X_i}{M^\star})-Y_i\in[-1,1]\)).  The matrix variance proxy
on the right is
\[
    \E[(\sigma(\ip{X_i}{M^\star})-Y_i)^2 X_i X_i^\top]
    \preceq
    \E[X_i X_i^\top]
    =2\,\mathrm{diag}(\nu)
    \preceq \frac{C}{d_1}I_{d_1};
\]
the other variance is
\(\E[X_i^\top X_i]\preceq C d_2^{-1}R_{\rm m}\).
Hence \(\sigma_X^2:=\max(\norm{\sum_i\E[X_i X_i^\top]}_{\rm op},\norm{\sum_i\E[X_i^\top X_i]}_{\rm op})\le C n_1/(d_1\wedge d_2)\).
By the rectangular matrix Bernstein inequality (Tropp 2015, Theorem 6.1.1)
applied to the rescaled sum,
\[
    \Pr\Bigl[\norm{n_1^{-1}\textstyle\sum_i(\sigma-Y)X_i}_{\rm op}\ge t\Bigr]
    \le
    (d_1+d_2)\exp\Bigl(-\frac{n_1\,t^2/2}{C/(d_1\wedge d_2)+\sqrt 2 t/3}\Bigr).
\]
Setting \(t=C\sqrt{a\log\{n(d_1+d_2)\}\,n_1^{-1}/(d_1\wedge d_2)}\) for \(C\)
large enough yields the claimed bound with probability \(1-n^{-a}\).
\end{proof}

\subsubsection{Step 5: main convex initialization theorem}\label{app:rsc-main}

\begin{theorem}[Frobenius bound for the convex initializer]\label{thm:convex-main-app}
Fix any \(a>0\).  Under the model assumptions of
Section~\ref{sec:setup} and Assumption~\ref{ass:design-app}, set
\(\lambda:=2C\sqrt{\log\{n(d_1+d_2)\}/(n_1\,(d_1\wedge d_2))}\) where \(C\) is the
constant from Lemma~\ref{lem:gradient-bound-app}, and assume
\(n_1\ge C\,\mathrm{poly}(\mu,r,\kappa,B)\,\bar d\log^c(n\bar d)\).
Then with probability at least \(1-n^{-a}\),
\begin{equation}\label{eq:convex-main-frob-app}
    \Fnorm{\widehat M-M^\star}
    \;\le\;
    \frac{C\,\lambda\sqrt r}{\kappa_{\rm pw}}
    \;\le\;
    C'\sqrt{\frac{r\,d_1\,d_2\,\bar d\,\log(n\bar d)}{n_1}}.
\end{equation}
Consequently, writing \(\widehat M_r=\mathrm{SVD}_r(\widehat M)\),
\(\Fnorm{\widehat M_r-M^\star}\le2\Fnorm{\widehat M-M^\star}\).
The bounded matrix \(\widehat M\) is retained for offsets and
\(\widehat M_r\) is used only for its singular subspaces.
\end{theorem}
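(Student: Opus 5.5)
The proof follows the Negahban--Wainwright template for decomposable regularizers, with two adjustments. First, the restricted strong convexity of Lemma~\ref{lem:likelihood-curvature-app} only holds on the spikiness/rank cone \(\mathcal C_{\rm pw}(n_1;c_0)\). Second, everything lives on the row-centered subspace. We work on the intersection of the events of Theorem~\ref{thm:pairwise-quadratic-rsc-app} and Lemma~\ref{lem:gradient-bound-app}, which has probability at least \(1-2n^{-a}\); relabeling \(a\) then gives \(1-n^{-a}\).

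\textbf{Step 1: basic inequality and cone condition.} Put \(\Delta:=\widehat M-M^\star\). Since both matrices lie in \(\mathcal C_B\), we have \(\Delta\mathbf 1_{d_2}=0\) and \(\infnorm\Delta\le2B\). Optimality of \(\widehat M\) and feasibility of \(M^\star\) give
\[
\delta\cL_{n_1}(M^\star;\Delta)\le -\ip{\nabla\cL_{n_1}(M^\star)}{\Delta}+\lambda(\norm{M^\star}_*-\norm{\widehat M}_*).
\]
By duality and Lemma~\ref{lem:gradient-bound-app}, the first term is at most \((\lambda/2)\norm\Delta_*\). Let \(\bar{\mathcal M}\) be the rank-\(r\) tangent-type subspace built from \(U^\star,V^\star\), and split \(\Delta=\Delta'+\Delta''\) into its component in \(\bar{\mathcal M}\) and its orthogonal complement; here \(\Delta'\) has rank at most \(2r\). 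Decomposability of the nuclear norm then gives \(\norm{M^\star}_*-\norm{\widehat M}_*\le\norm{\Delta'}_*-\norm{\Delta''}_*\). Combining this with \(\delta\cL\ge0\), which holds by convexity, yields the cone condition \(\norm{\Delta''}_*\le3\norm{\Delta'}_*\). Hence \(\norm\Delta_*\le4\sqrt{2r}\Fnorm\Delta\), i.e.\ \(\beta_{\rm ra}(\Delta)\le C\sqrt r\).

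\textbf{Step 2: case split on spikiness.} Suppose first that \(\Delta\in\mathcal C_{\rm pw}(n_1;c_0)\). Lemma~\ref{lem:likelihood-curvature-app} gives \(\kappa_{\rm pw}\Fnorm\Delta^2\le\delta\cL\le(3\lambda/2)\norm{\Delta'}_*\le C\lambda\sqrt r\Fnorm\Delta\), so \(\Fnorm\Delta\le C\lambda\sqrt r/\kappa_{\rm pw}\). Substituting \(\kappa_{\rm pw}\asymp1/(d_1d_2)\) and the value of \(\lambda\) gives
\[
d_1d_2\sqrt{r\log(n\bar d)/(n_1d_{\min})}=\sqrt{r\,d_1d_2\,\bar d\log(n\bar d)/n_1},
\]
using \(d_1d_2/d_{\min}=\bar d\).

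Otherwise \(\alpha_{\rm sp}\beta_{\rm ra}>c_0^{-1}\sqrt{n_1/(\bar d\log(n\bar d))}\). Using \(\infnorm\Delta\le2B\) and \(\beta_{\rm ra}\le C\sqrt r\), this reads \(\sqrt{d_1d_2}\cdot2B\cdot C\sqrt r/\Fnorm\Delta>c_0^{-1}\sqrt{n_1/(\bar d\log)}\). Rearranging gives \(\Fnorm\Delta\le CB\sqrt{r\,d_1d_2\,\bar d\log(n\bar d)/n_1}\), which is the same rate with a \(B\)-dependent constant absorbed into \(C'\). This case is the main obstacle. The key point is that the box constraint \(\infnorm M\le B\) is what makes the "spiky" alternative harmless, so the two cases must be checked to produce matching rates. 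The \(\mathrm{poly}(\mu,r,\kappa,B)\) sample-size condition is used only to place us inside the regime where Theorem~\ref{thm:pairwise-quadratic-rsc-app} applies.

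\textbf{Step 3: rank truncation.} \(\widehat M_r\) is the best rank-\(r\) approximation of \(\widehat M\) and \(M^\star\) has rank \(r\), so \(\Fnorm{\widehat M-\widehat M_r}\le\Fnorm{\widehat M-M^\star}\). The triangle inequality then gives \(\Fnorm{\widehat M_r-M^\star}\le2\Fnorm{\widehat M-M^\star}\).
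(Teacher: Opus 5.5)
Your proposal is correct and follows the paper's proof essentially step for step. It uses the basic inequality with \(\lambda\ge 2\norm{\nabla\cL_{n_1}(M^\star)}_{\rm op}\), nuclear-norm decomposability to obtain the cone bound \(\norm{\Delta}_*\le 4\sqrt{2r}\,\Fnorm{\Delta}\), the same two-case split on membership in \(\mathcal C_{\rm pw}(n_1;c_0)\) (where the box constraint \(\infnorm{\Delta}\le 2B\) makes the spiky alternative give the same rate), and Eckart--Young plus the triangle inequality for the truncation claim.
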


\begin{proof}
Set \(\widehat\Delta:=\widehat M-M^\star\) and let \(\mathbb T_r\) denote
the rank-\(r\) tangent space at \(M^\star\) (without the row-centering
constraint), with associated decomposition
\(\widehat\Delta=\widehat\Delta_{\mathbb T_r}+\widehat\Delta_{\mathbb T_r^\perp}\).

\textbf{Basic inequality.}
Since \(M^\star\in\mathcal C_B\) is feasible, optimality of
\(\widehat M\) gives
\[
    \cL_{n_1}(M^\star+\widehat\Delta)+\lambda\norm{M^\star+\widehat\Delta}_*
    \le
    \cL_{n_1}(M^\star)+\lambda\norm{M^\star}_*.
\]
Rearranging,
\[
    \delta\cL_{n_1}(M^\star;\widehat\Delta)
    \le
    -\ip{\nabla\cL_{n_1}(M^\star)}{\widehat\Delta}
    +\lambda\bigl(\norm{M^\star}_*-\norm{M^\star+\widehat\Delta}_*\bigr).
\]
By the operator/nuclear-norm duality and our choice
\(\lambda\ge 2\norm{\nabla\cL_{n_1}(M^\star)}_{\rm op}\) (which holds on
the event of Lemma~\ref{lem:gradient-bound-app}),
\(|\ip{\nabla\cL_{n_1}(M^\star)}{\widehat\Delta}|\le(\lambda/2)\norm{\widehat\Delta}_*\).
By the standard nuclear-norm decomposability,
\(\norm{M^\star}_*-\norm{M^\star+\widehat\Delta}_*\le\norm{\widehat\Delta_{\mathbb T_r}}_*-\norm{\widehat\Delta_{\mathbb T_r^\perp}}_*\).
Combining,
\[
    \delta\cL_{n_1}(M^\star;\widehat\Delta)
    \le
    \frac{3\lambda}{2}\norm{\widehat\Delta_{\mathbb T_r}}_*
    -\frac{\lambda}{2}\norm{\widehat\Delta_{\mathbb T_r^\perp}}_*.
\]
Since \(\delta\cL_{n_1}\ge 0\), this forces the cone condition
\(\norm{\widehat\Delta_{\mathbb T_r^\perp}}_*\le 3\norm{\widehat\Delta_{\mathbb T_r}}_*\),
and using \(\rank(\widehat\Delta_{\mathbb T_r})\le 2r\) we obtain
\(\norm{\widehat\Delta_{\mathbb T_r}}_*\le\sqrt{2r}\Fnorm{\widehat\Delta}\)
and hence
\(\norm{\widehat\Delta}_*\le 4\sqrt{2r}\,\Fnorm{\widehat\Delta}\).

\textbf{Applying RSC: two cases.}
If \(\widehat\Delta\in\mathcal C_{\rm pw}(n_1;c_0)\), then on the event of
Theorem~\ref{thm:pairwise-quadratic-rsc-app},
\[
    \delta\cL_{n_1}(M^\star;\widehat\Delta)
    \ge
    \kappa_{\rm pw}\Fnorm{\widehat\Delta}^2,
    \qquad
    \kappa_{\rm pw}\asymp \frac{c_B}{d_1 d_2}.
\]

\textbf{Combining.}
On the same event,
\[
    \kappa_{\rm pw}\Fnorm{\widehat\Delta}^2
    \le
    \delta\cL_{n_1}(M^\star;\widehat\Delta)
    \le
    \frac{3\lambda}{2}\norm{\widehat\Delta_{\mathbb T_r}}_*
    \le
    \frac{3\lambda\sqrt{2r}}{2}\Fnorm{\widehat\Delta}.
\]
Dividing by \(\Fnorm{\widehat\Delta}\),
\[
    \Fnorm{\widehat\Delta}
    \le
    \frac{3\lambda\sqrt{2r}}{2\kappa_{\rm pw}}
    \le
    C\,\lambda\sqrt r\,d_1\,d_2.
\]
Substituting
\(\lambda\asymp\sqrt{\log(n\bar d)/(n_1(d_1\wedge d_2))}\) and using
\((d_1d_2)^2/(d_1\wedge d_2)=d_1d_2\bar d\) gives
\[
 \Fnorm{\widehat\Delta}
 \le C\sqrt{\frac{r\,d_1d_2\bar d\log(n\bar d)}{n_1}}.
\]

If on the other hand \(\widehat\Delta\notin\mathcal C_{\rm pw}(n_1;c_0)\),
then the negation of its defining inequality, the cone bound
\(\beta_{\rm ra}(\widehat\Delta)\le4\sqrt{2r}\), and feasibility
\(\infnorm{\widehat\Delta}\le2B\) give
\[
 {2B\sqrt{d_1d_2}\over\Fnorm{\widehat\Delta}}
 4\sqrt{2r}
 \ge \alpha_{\rm sp}(\widehat\Delta)
       \beta_{\rm ra}(\widehat\Delta)
 >{1\over c_0}\sqrt{\frac{n_1}{\bar d\log(n\bar d)}}.
\]
Rearranging yields
\[
 \Fnorm{\widehat\Delta}
 \le C(B,r)\sqrt{\frac{d_1d_2\bar d\log(n\bar d)}{n_1}}.
\]
Thus the same displayed Frobenius rate holds in both cases.

The post-processing claim follows because rank-\(r\) SVD truncation
increases Frobenius distance to the rank-\(r\) truth by at most a factor
of two.  No entrywise post-processing is used.
\end{proof}

\subsection{Stage 2: three-split refinement algorithm and roadmap}\label{app:entrywise-proof-algorithm}

\subsubsection{Algorithm}\label{app:refine-algorithm}

We describe the refinement in the orientation in which comparisons act
on rows.  Transpose the signal and write
\[
 M^\star:=(\Thetastar)^\top=L^\star(A^\star)^\top,
 \qquad L^\star\in\R^{\dm\times r},\quad
 A^\star\in\R^{\dt\times r},\quad
 (A^\star)^\top A^\star=I_r,\quad
 \mathbf 1_{\dm}^\top L^\star=0.
\]
Thus model indices are rows and task indices are columns; the singular
values are absorbed into \(L^\star\).  Let
\(R_{\rm m}=I_{\dm}-\dm^{-1}\mathbf1_{\dm}\mathbf1_{\dm}^\top\).

\textbf{Stage~A: initialization and task-factor construction.}
On \(\mathcal D_1\), compute the bounded, row-centered convex estimator
\(\widehat\Theta_0\) of Theorem~\ref{thm:convex-main-app} and set
\(\widehat M_0=\widehat\Theta_0^\top\).  Let \(Q_0\) contain its top-\(r\)
right singular vectors.  Project \(Q_0\) onto the rotation-invariant set
\(\{A:\twoninfnorm A\le C_A'\dt^{-1/2},\ A^\top A=I_r\}\) to obtain
\(\widehat A\).  The untruncated bounded matrix \(\widehat M_0\) is used
for offsets.

\textbf{Stage~B: model-row refinement.}
On \(\mathcal D_2\), fix \(\widehat A\).  For each model \(u\in[\dm]\),
reorient every comparison involving \(u\) so that \(u\) is first and
solve
\[
    S_u(\ell)
    :=
    \sum_{q=1}^{M_u}\widehat A[t_q]
    \{Z_q-\sigma(\widehat A[t_q]^\top\ell-\widehat o_q)\}=0,
    \qquad \widehat o_q:=\widehat M_{0,w_q,t_q},
\]
where \(t_q\) is the task and \(w_q\) the opponent model.  Stack the
solutions as \(\widetilde L\in\R^{\dm\times r}\).

\textbf{Stage~C: model centering and task-column refinement.}
Set \(\overline L=R_{\rm m}\widetilde L\).  This restores
\(\mathbf1_{\dm}^\top\overline L=0\) and preserves every model-row
difference.  For each task \(t\), use the observations in \(\mathcal D_3\)
having task \(t\), put
\(x_i=\overline L[m_i]-\overline L[m_i']\), and solve
\[
    S_t(a)
    :=
    \sum_{i\in\mathcal I_t}x_i\{Y_i-\sigma(x_i^\top a)\}=0,
    \qquad \mathcal I_t=\{i\in\mathcal D_3:t_i=t\}.
\]
Stack the solutions as \(\widetilde A\in\R^{\dt\times r}\).

\textbf{Final estimator.}
Set \(\widetilde M=\overline L\widetilde A^\top\) and
\(\Thetahat=\widetilde M^\top\).  It satisfies
\(\Thetahat\mathbf1_{\dm}=0\) exactly; the final assembly below shows
that it has exactly rank \(r\) on the same high-probability event.

\subsubsection{Roadmap}\label{app:refine-roadmap}

The proof proceeds in six self-contained blocks, each corresponding to one operation
of Algorithm~\ref{alg:ranking-recovery}.

\begin{enumerate}[leftmargin=2em,itemsep=2pt,topsep=2pt,label=\textbf{Block~\Roman*:}]
\item \emph{Frobenius-error transfer to the task factor.}
After the orthogonal Procrustes alignment,
Wedin's theorem gives
\(\Fnorm{\widehat A-A^\star}\lesssim
\Fnorm{\widehat M_0-M^\star}/\sigma_r(M^\star)\), while the projection
step gives \(\twoninfnorm{\widehat A}\lesssim\dt^{-1/2}\) and
\(\widehat A^\top\widehat A=I_r\).
\item \emph{Deterministic local existence for the row update.}
We prove an inward-pointing zero-existence lemma for \(S_u\)
(Lemma~\ref{lem:left-deterministic-app}) which states that under a
sufficient condition on a "noise plus bias plus higher-order" combination
\(R_u\le\lambda_u^2/(4L_3\gamma_u)\) for a suitable curvature
\(\lambda_u\), the score equation has a solution close to the truth.
\item \emph{Probabilistic verification of the row condition.}
On \(\mathcal D_2\), we verify the sufficient condition uniformly over
\(u\in[\dm]\) using six concentration lemmas: row-wise coverage,
noise envelope, two bias terms, higher-order moments \(\beta_u,\gamma_u\),
and curvature \(\lambda_u\).  This gives
\(\twoninfnorm{\widetilde L-L^\star}\lesssim
\sqrt{\dt\bar d/n_2}\,\mathrm{polylog}(n\bar d)\).
\item \emph{Re-centering and pairwise Gram identity.}
The model-centering projection \(R_{\rm m}\) preserves pairwise differences and
preserves entrywise / Frobenius distance (Lemma~\ref{lem:centering-app}).
A pairwise Gram identity (Lemma~\ref{lem:pairwise-gram-app}) then
upgrades the column-update curvature to
\(\lambda_t\gtrsim n_3\) by capturing the gain from sampling pairs
on the row factor.
\item \emph{Deterministic local existence for the column update.}
The same Brouwer argument (Lemma~\ref{lem:right-deterministic-app})
gives a sufficient condition for the column score equation
\(S_t(a)=0\) to have a solution near the truth.
\item \emph{Probabilistic verification of the column condition.}
We verify uniformly in \(t\in[\dt]\) using parallel concentration
lemmas, obtaining
\(\twoninfnorm{\widetilde A-A^\star}\lesssim n^{-1/2}\,\mathrm{polylog}(n\bar d)\).
\end{enumerate}

\subsection{Block I: Brouwer inward-pointing zero lemma}\label{app:entrywise-proof-brouwer}

We restate the deterministic existence lemma underlying Blocks~II and V.

\begin{lemma}[Brouwer inward-pointing zero]\label{lem:inward-pointing-app}
Let \(F:\R^r\to\R^r\) be continuous, fix \(\vartheta^\star\in\R^r\) and
\(\xi>0\).  If
\((\vartheta-\vartheta^\star)^\top F(\vartheta)\le 0\)
for every \(\vartheta\) on the sphere
\(\{\vartheta:\norm{\vartheta-\vartheta^\star}=\xi\}\), then there
exists \(\widetilde\vartheta\) with \(F(\widetilde\vartheta)=0\) and
\(\norm{\widetilde\vartheta-\vartheta^\star}\le\xi\).
\end{lemma}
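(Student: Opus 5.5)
We argue by contradiction, reducing the statement to Brouwer's fixed-point theorem on the closed ball \(\bar B:=\{\vartheta:\norm{\vartheta-\vartheta^\star}\le\xi\}\). The difficulty is that the boundary condition only gives a non-strict inequality, and the map \(\vartheta\mapsto\vartheta+F(\vartheta)\) need not send \(\bar B\) into itself. So we do not use that map. Instead we use a retraction-based construction in the standard style of the Poincaré–Miranda / acute-angle lemma.

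Suppose, for contradiction, that \(F(\vartheta)\neq0\) for every \(\vartheta\in\bar B\). Then the map
\[
g(\vartheta):=\vartheta^\star+\xi\,\frac{F(\vartheta)}{\norm{F(\vartheta)}}
\]
is well defined and continuous on \(\bar B\). It takes values in the sphere \(\partial\bar B\subset\bar B\). Since \(\bar B\) is a compact convex set, Brouwer's theorem gives a fixed point \(\vartheta_0=g(\vartheta_0)\).

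Because \(g\) maps into the sphere, \(\norm{\vartheta_0-\vartheta^\star}=\xi\), so \(\vartheta_0\) lies on the boundary where the hypothesis applies. The fixed-point equation says \(\vartheta_0-\vartheta^\star=\xi F(\vartheta_0)/\norm{F(\vartheta_0)}\). Taking the inner product with \(F(\vartheta_0)\) gives
\[
(\vartheta_0-\vartheta^\star)^\top F(\vartheta_0)=\xi\norm{F(\vartheta_0)}>0,
\]
which contradicts \((\vartheta_0-\vartheta^\star)^\top F(\vartheta_0)\le0\). Hence \(F\) vanishes somewhere in \(\bar B\), which is exactly the claim: some \(\widetilde\vartheta\) satisfies \(F(\widetilde\vartheta)=0\) and \(\norm{\widetilde\vartheta-\vartheta^\star}\le\xi\).

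The key choice is the outward-normalized map \(g\). Simpler candidates such as \(\vartheta+F(\vartheta)\), or the radial projection of \(\vartheta+tF(\vartheta)\), would need a strict inequality or a uniform step-size argument. The normalized-direction trick handles the weak inequality directly, because a fixed point on the sphere forces a strictly positive inner product. The only points to check are continuity of \(g\), which uses the contradiction hypothesis \(F\neq0\) on \(\bar B\), and that \(g(\bar B)\subseteq\bar B\); both are immediate.

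In the applications (Blocks II and V), \(F\) is taken to be \(S_u\) or \(S_t\), or their negatives, so that the sign convention matches the inward-pointing condition.
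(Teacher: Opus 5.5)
Your proof is correct and is the paper's own argument. Both assume $F$ has no zero on the closed ball, apply Brouwer to $\vartheta\mapsto\vartheta^\star+\xi F(\vartheta)/\norm{F(\vartheta)}$, and observe that the fixed point lies on the sphere with $(\vartheta_0-\vartheta^\star)^\top F(\vartheta_0)=\xi\norm{F(\vartheta_0)}>0$, contradicting the inward-pointing hypothesis.
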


\begin{proof}
Suppose for contradiction that \(F\) has no zero in the closed ball
\(B_\xi(\vartheta^\star)\).  Define the continuous map
\(G:B_\xi(\vartheta^\star)\to B_\xi(\vartheta^\star)\) by
\(G(\vartheta):=\vartheta^\star+\xi F(\vartheta)/\norm{F(\vartheta)}\),
which lands on the sphere.  By Brouwer's fixed-point theorem, \(G\) has
a fixed point \(\vartheta^\dagger\), which satisfies
\(\vartheta^\dagger-\vartheta^\star=\xi F(\vartheta^\dagger)/\norm{F(\vartheta^\dagger)}\),
and so
\((\vartheta^\dagger-\vartheta^\star)^\top F(\vartheta^\dagger)=\xi\norm{F(\vartheta^\dagger)}>0\),
contradicting the inward-pointing hypothesis on the sphere.
\end{proof}

\subsection{Blocks II--III: model-row update}\label{app:entrywise-proof-leftdet}

We condition on \(\mathcal D_1\).  Choose the orthogonal Procrustes
matrix \(O\) minimizing \(\Fnorm{\widehat A-A^\star O}\), and relabel
\((L^\star O,A^\star O)\) as \((L^\star,A^\star)\).  Wedin's theorem and
comparison with the feasible point \(A^\star O\) in the Stage~A
projection give
\begin{equation}\label{eq:task-factor-initial-app}
 \delta_A:=\Fnorm{\widehat A-A^\star}
 \le {C\Fnorm{\widehat M_0-M^\star}\over\sigma_r(M^\star)},
 \quad
 \twoninfnorm{\widehat A}\le C\dt^{-1/2},
 \quad \widehat A^\top\widehat A=I_r .
\end{equation}
For a fixed model \(u\), reorient the observations involving \(u\) and
write them as \((t_q,w_q,Z_q)_{q=1}^{M_u}\).  Put
\[
 \eta_q^\star=(a_{t_q}^\star)^\top\ell_u^\star-M^\star_{w_q,t_q},
 \quad \widehat o_q=\widehat M_{0,w_q,t_q},\quad
 \varepsilon_q=Z_q-\sigma(\eta_q^\star).
\]
The score is the one displayed in Appendix~\ref{app:refine-algorithm}.
For its Taylor expansion define
\[
 N_u=\sum_q\widehat a_{t_q}\varepsilon_q,\qquad
 H_u=\sum_q\sigma'(\eta_q^\star)\widehat a_{t_q}\widehat a_{t_q}^\top,
 \qquad \lambda_u=\lambda_{\min}(H_u),
\]
\[
 d_q=(\widehat a_{t_q}-a_{t_q}^\star)^\top\ell_u^\star
       -(\widehat o_q-M^\star_{w_q,t_q}),
\quad
 B_u=\sum_q\sigma'(\eta_q^\star)\widehat a_{t_q}d_q,
\]
\[
 \beta_u=\sup_{\norm v=1}\sum_q|\widehat a_{t_q}^\top v|d_q^2,
 \qquad
 \gamma_u=\sup_{\norm v=1}\sum_q|\widehat a_{t_q}^\top v|^3 .
\]

\begin{lemma}[Model-row deterministic existence]\label{lem:left-deterministic-app}
If
\begin{equation}\label{eq:left-sufficient-app}
 R_u:=\norm{N_u}+\norm{B_u}+L_3\beta_u
 \le {\lambda_u^2\over4L_3\gamma_u},
\end{equation}
then \(S_u(\ell)=0\) has a solution satisfying
\(\norm{\widetilde\ell_u-\ell_u^\star}\le2R_u/\lambda_u\).
\end{lemma}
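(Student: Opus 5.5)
The plan is to apply the Brouwer inward-pointing zero lemma (Lemma~\ref{lem:inward-pointing-app}) to the continuous map \(F(\ell):=S_u(\ell)\), centered at \(\vartheta^\star=\ell_u^\star\) with radius \(\xi=2R_u/\lambda_u\). Continuity of \(S_u\) is immediate from smoothness of \(\sigma\). So everything reduces to checking the sign condition \((\ell-\ell_u^\star)^\top S_u(\ell)\le0\) on the sphere \(\norm{\ell-\ell_u^\star}=\xi\).

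\textbf{Exact decomposition.} Write \(\ell=\ell_u^\star+h\). The argument of \(\sigma\) in the \(q\)-th summand is
\[
\widehat a_{t_q}^\top\ell-\widehat o_q=\eta_q^\star+\widehat a_{t_q}^\top h+d_q .
\]
This holds because
\[
\widehat a_{t_q}^\top\ell_u^\star-\widehat o_q-\eta_q^\star=(\widehat a_{t_q}-a^\star_{t_q})^\top\ell_u^\star-(\widehat o_q-M^\star_{w_q,t_q})=d_q .
\]
Set \(\delta_q:=\widehat a_{t_q}^\top h+d_q\). A second-order Taylor expansion of \(\sigma\) at \(\eta_q^\star\) gives
\[
\sigma(\eta_q^\star+\delta_q)=\sigma(\eta_q^\star)+\sigma'(\eta_q^\star)\delta_q+\tfrac12\sigma''(\bar\eta_q)\delta_q^2,
\qquad |\sigma''|\le L_3
\]
(after adjusting \(L_3\) by an absolute factor if needed, consistent with Assumption~\ref{ass:score-app}(iii)). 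Substituting yields
\[
S_u(\ell)=N_u-H_uh-B_u-\tfrac12\sum_q\sigma''(\bar\eta_q)\widehat a_{t_q}\delta_q^2 .
\]

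\textbf{Bounding the inner product.} Take the inner product with \(h\) and put \(v=h/\norm h\). The linear part gives \(-h^\top H_uh\le-\lambda_u\norm h^2\). The noise and bias parts contribute at most \((\norm{N_u}+\norm{B_u})\norm h\). For the remainder, use \(\delta_q^2\le 2(\widehat a_{t_q}^\top h)^2+2d_q^2\). Together with the definitions of \(\beta_u\) and \(\gamma_u\), this gives
\[
\tfrac12\Bigl|\sum_q\sigma''(\bar\eta_q)(\widehat a_{t_q}^\top h)\delta_q^2\Bigr|
\le L_3\norm h\sum_q|\widehat a_{t_q}^\top v|\bigl\{(\widehat a_{t_q}^\top h)^2+d_q^2\bigr\}
\le L_3\gamma_u\norm h^3+L_3\beta_u\norm h .
\]
Hence, on the sphere,
\[
h^\top S_u(\ell)\le\norm h\bigl(R_u-\lambda_u\norm h+L_3\gamma_u\norm h^2\bigr).
\]

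\textbf{Choosing the radius.} With \(\norm h=\xi=2R_u/\lambda_u\), the bracket equals
\[
R_u-2R_u+4L_3\gamma_uR_u^2/\lambda_u^2=R_u\bigl(-1+4L_3\gamma_uR_u/\lambda_u^2\bigr),
\]
which is \(\le0\) exactly under condition \eqref{eq:left-sufficient-app}. Lemma~\ref{lem:inward-pointing-app} then yields a root \(\widetilde\ell_u\) with \(\norm{\widetilde\ell_u-\ell_u^\star}\le 2R_u/\lambda_u\). If \(R_u=0\), a shrinking-radius version of the same argument gives the root \(\ell_u^\star\) itself.

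The only delicate point is the constant bookkeeping in the cubic remainder: the factor \(\tfrac12\) from Taylor's formula must cancel the factor \(2\) from splitting \(\delta_q^2\). This cancellation is what lets the condition read exactly \(\lambda_u^2/(4L_3\gamma_u)\). Otherwise the argument is routine.
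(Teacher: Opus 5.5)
Your proposal is correct and is exactly the paper's argument: expand \(S_u(\ell_u^\star+\delta)=N_u-H_u\delta-B_u-\mathcal R_u(\delta)\), bound \(|\delta^\top\mathcal R_u(\delta)|\le L_3\{\gamma_u\norm\delta^3+\beta_u\norm\delta\}\), and apply the Brouwer inward-pointing lemma on the sphere of radius \(2R_u/\lambda_u\). You add the \(\tfrac12\)-versus-\(2\) constant bookkeeping and the \(R_u=0\) edge case that the paper leaves implicit, and no rescaling of \(L_3\) is needed, since \(\partial_\eta^2 s_\eta=-\sigma''\) means Assumption~\ref{ass:score-app}(iii) already gives \(|\sigma''|\le L_3\).
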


\begin{proof}
For \(\delta=\ell-\ell_u^\star\), Taylor's theorem gives
\[
 S_u(\ell_u^\star+\delta)
 =N_u-H_u\delta-B_u-\mathcal R_u(\delta),
\quad
 |\delta^\top\mathcal R_u(\delta)|
 \le L_3\{\gamma_u\norm\delta^3+\beta_u\norm\delta\}.
\]
On the sphere \(\norm\delta=2R_u/\lambda_u\), condition
\eqref{eq:left-sufficient-app} makes
\(\delta^\top S_u(\ell_u^\star+\delta)\le0\).
Lemma~\ref{lem:inward-pointing-app} supplies a zero in the ball.
\end{proof}

\begin{proposition}[Uniform model-factor bound]\label{prop:left-final-app}
Let \(\Delta_F=\Fnorm{\widehat M_0-M^\star}\) and
\(x=(a+C)\log(n\bar d)\).  On the initialization event, simultaneously
for all \(u\in[\dm]\),
\begin{align}
 M_u&\asymp n_2/\dm,\qquad
 \lambda_u\ge c n_2/(\dt\dm),\nonumber\\
 \norm{N_u}&\le
 C\{\sqrt{n_2x/(\dt\dm)}+x/\sqrt\dt\},\nonumber\\
 \norm{B_u}+L_3\beta_u
 &\le C\left\{
 {n_2\delta_A\over\dm\sqrt\dt}
 +{n_2\Delta_F\over\dt\dm^{3/2}}
 +\sqrt{n_2x/(\dt\dm)}\right\}\operatorname{polylog}(n\bar d),
 \label{eq:model-row-master-app}\\
 \gamma_u&\le Cn_2/(\dm\dt^{3/2}).\nonumber
\end{align}
Consequently, with probability at least \(1-n^{-a}\),
\begin{equation}\label{eq:model-factor-rate-app}
 \twoninfnorm{\widetilde L-L^\star}
 \le C\sqrt{\frac{\dt\bar d}{n_2}}\,
       \operatorname{polylog}(n\bar d).
\end{equation}
\end{proposition}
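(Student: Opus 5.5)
The proof works conditionally on $\mathcal D_1$, on the intersection of the initialization event of Theorem~\ref{thm:convex-main-app} and \eqref{eq:task-factor-initial-app}. Conditionally, $\widehat A$ and $\widehat M_0$ are fixed, and the $\mathcal D_2$ observations are i.i.d. The strategy is to bound each ingredient of Lemma~\ref{lem:left-deterministic-app} uniformly over $u\in[\dm]$, verify the sufficient condition \eqref{eq:left-sufficient-app}, and read off $\norm{\widetilde\ell_u-\ell_u^\star}\le 2R_u/\lambda_u$. Every concentration statement is used at tail level $x=(a+C)\log(n\bar d)$ and combined by a union bound over the $\dm$ rows. This costs only a constant shift in $a$.

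\textbf{Coverage, curvature and $\gamma_u$.} A fixed model $u$ appears in a comparison with probability $\asymp 1/\dm$ under Assumption~\ref{ass:design-app}, so scalar Bernstein gives $M_u\asymp n_2/\dm$ once $n_2\gtrsim \dm x$. Conditional on $M_u$, the tasks $t_q$ are i.i.d. with law $\asymp\nu$, so $\E[\widehat a_t\widehat a_t^\top]\succeq c\,\widehat A^\top\widehat A/\dt=cI_r/\dt$. The summands are bounded by $\twoninfnorm{\widehat A}^2\lesssim 1/\dt$. Since $\sigma'(\eta_q^\star)\ge c_B$ for $|\eta_q^\star|\le 2B$, matrix Bernstein in dimension $r$ then yields $\lambda_u\ge c\,M_u/\dt\asymp n_2/(\dt\dm)$. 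The bound on $\gamma_u$ follows deterministically from $|\widehat a_t^\top v|^3\le \twoninfnorm{\widehat A}\,|\widehat a_t^\top v|^2$ together with the upper matrix Bernstein bound, giving $\gamma_u\lesssim (n_2/\dm)\dt^{-3/2}$.

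\textbf{Noise.} The variables $\varepsilon_q$ are conditionally centered and bounded, so $N_u$ is a sum of independent vectors with variance $\lesssim M_u/\dt$ and range $\lesssim \dt^{-1/2}$. Vector Bernstein gives $\norm{N_u}\lesssim\sqrt{n_2x/(\dt\dm)}+x/\sqrt{\dt}$.

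\textbf{Bias.} Split $d_q=d_q^{A}-d_q^{o}$, with $d_q^A=(\widehat a_{t_q}-a^\star_{t_q})^\top\ell_u^\star$ and $d_q^o=\widehat o_q-M^\star_{w_q,t_q}$. For the task-factor part, Cauchy--Schwarz over the sampled tasks and a Bernstein bound on task counts give $\norm{\sum_q \widehat a_{t_q}d_q^A}\lesssim (M_u/\dt)\cdot\sqrt{\dt}\,\twoninfnorm{\widehat A}\,\delta_A\,\norm{\ell_u^\star}$. Incoherence and the signal condition give $\norm{\ell_u^\star}\lesssim\sqrt{\dt}$, which produces the $n_2\delta_A/(\dm\sqrt{\dt})$ term. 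For the offset part, the opponents $w_q$ are near-uniform, so the expected squared offset error is $\lesssim\Delta_F^2/(\dt\dm)$. Its mean contribution is at most $(M_u/\sqrt{\dt})\,\Delta_F/\sqrt{\dt\dm}$, which yields the $n_2\Delta_F/(\dt\dm^{3/2})$ term. The fluctuation around that mean is controlled by Bernstein using $\infnorm{\widehat M_0-M^\star}\le 2B$, giving the $\sqrt{n_2x/(\dt\dm)}$ term up to polylog factors. The same splitting handles $\beta_u$, since $d_q^2\le 2B\,|d_q|$ because both $\widehat M_0$ and $M^\star$ lie in the $B$-box and $|d_q^A|\lesssim 1$.

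\textbf{Assembly.} Substituting $\delta_A\lesssim\Delta_F/\sigma_r(M^\star)$, $\sigma_r\asymp\sqrt{\dt\dm}$ and $\Delta_F\lesssim\sqrt{\dt\dm\bar d\log/n_1}$ reduces both bias terms to order $(n_2/(\dm\sqrt{\dt}))\sqrt{\bar d\log/n}$. Dividing $R_u$ by $\lambda_u\asymp n_2/(\dt\dm)$ then gives $\sqrt{\dt}\sqrt{\dm\bar d/n}\cdot\sqrt{\dt\dm}/\sqrt{\dm}\cdot(\cdots)$, which after simplification is the target $\sqrt{\dt\bar d/n_2}\,$polylog. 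Condition \eqref{eq:left-sufficient-app} requires $R_u\lesssim\lambda_u^2/\gamma_u\asymp n_2\dt^{-1/2}/\dm$. This holds exactly when $\sqrt{\bar d\,\mathrm{polylog}/n}$ is small, which is Assumption~\ref{ass:sample-app}.

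The main obstacle is the offset bias $B_u$. The offset errors at $(w_q,t_q)$ are only Frobenius-controlled, so the argument must rely on the near-uniformity of opponents to average them; a worst-case bound would lose a factor of $\sqrt{\dt\dm}$. Justifying the $\dm^{-3/2}$ scaling uniformly over $u$, with the correct polylog factors, is the delicate step, and it is the one to check most carefully.
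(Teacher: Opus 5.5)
Your architecture matches the paper's: condition on $\mathcal D_1$, control coverage, curvature, noise, the two bias pieces, $\beta_u$ and $\gamma_u$ row by row at tail level $x$, take a union bound over $u\in[\dm]$, and feed the result into Lemma~\ref{lem:left-deterministic-app}. Several of your local choices differ from the paper's but are fine and still land inside \eqref{eq:model-row-master-app}:
\begin{itemize}
\item you reduce $\beta_u$ to a first-order sum via $d_q^2\le C|d_q|$, whereas the paper uses a net and keeps the sharper quadratic terms;
\item you get the offset mean from Jensen and $\twoninfnorm{\widehat A}$, whereas the paper uses $\norm{\widehat A}_{\rm op}=1$ and Cauchy--Schwarz on $s_{u,t}$;
\item you bound $\gamma_u$ through $\twoninfnorm{\widehat A}\,\lambda_{\max}(\sum_q\widehat a_{t_q}\widehat a_{t_q}^\top)$;
\item you use crude $O(1)$ envelopes for the bias fluctuations.
\end{itemize}

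The step that fails as written is the task-factor bias. The factor $M_u/\dt$ in your bound $\norm{\sum_q\widehat a_{t_q}d_q^A}\lesssim(M_u/\dt)\sqrt{\dt}\,\twoninfnorm{\widehat A}\,\delta_A\norm{\ell_u^\star}$ indicates that you bound each realized per-task count $N_{u,t}=\#\{q:t_q=t\}$ by $M_u/\dt$. But $\E N_{u,t}\asymp n_2/(\dt\dm)$, and at the sample sizes the proposition covers (e.g.\ $n\asymp\bar d\log^c(n\bar d)$ with $\dt\asymp\dm$) this tends to zero. Most tasks then receive no comparison of $u$, while the occupied ones receive at least one, so $\max_tN_{u,t}\ge1\gg M_u/\dt$; Bernstein only gives $\max_tN_{u,t}\lesssim x$. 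Your route then yields $x\sqrt{\dt}\,\delta_A$. This exceeds the correct term $n_2\delta_A/(\dm\sqrt{\dt})$ by a factor $x\dt\dm/n_2\to\infty$, so it cannot deliver \eqref{eq:model-factor-rate-app}.

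The repair is the treatment you already give the offset term, and it is what the paper does in \eqref{eq:model-bias-means-app}--\eqref{eq:model-bias-tails-app}.
\begin{itemize}
\item Your expression is a valid bound for the conditional mean $\norm{\E_2B_u^{(A)}}$, where only expected counts enter.
\item $B_u^{(A)}-\E_2B_u^{(A)}$ is handled by vector Bernstein with envelope $C/\sqrt{\dt}$ (since $|q_{t,u}|\le C$) and variance proxy $\lesssim n_2/(\dt\dm)$. This costs $\sqrt{n_2x/(\dt\dm)}+x/\sqrt{\dt}$, which \eqref{eq:model-row-master-app} already allows once $n_2\gtrsim\dm x$.
\item The $d_q^A$ part of your $\beta_u$ bound needs the same mean-plus-scalar-Bernstein treatment.
\end{itemize}
With that change your assembly goes through. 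Note that the quotient is $2R_u/\lambda_u\lesssim\{\sqrt{\dt}\,\delta_A+\Delta_F/\sqrt{\dm}+\sqrt{\dt\dm x/n_2}\}\operatorname{polylog}(n\bar d)$, not the garbled product you display.
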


\begin{proof}
A comparison contains a fixed model with probability between constant
multiples of \(1/\dm\); scalar Bernstein gives the first display for
\(M_u\).  Conditional on \(\mathcal D_1\), the summands in \(N_u\) are
centered, have norm at most \(C/\sqrt\dt\), and total conditional
variance at most \(Cn_2/(\dt\dm)\), proving its displayed bound by
vector Bernstein.

For curvature, near-uniform task and pair sampling and
\(\widehat A^\top\widehat A=I_r\) give
\[
 \E(H_u\mid\mathcal D_1)
 \succeq {c n_2\over\dt\dm}
          \sum_{t=1}^{\dt}\widehat a_t\widehat a_t^\top
 ={c n_2\over\dt\dm}I_r.
\]
Each summand has norm \(C/\dt\).  Matrix Bernstein with tail parameter
\(x\) gives, conditionally on \(\mathcal D_1\),
\[
 \Pr_2\!\left[
 \norm{H_u-\E_2H_u}_{\rm op}>
 C\left\{\sqrt{\frac{n_2x}{\dt^2\dm}}+\frac{x}{\dt}\right\}
 \right]\le 2r e^{-x},
\]
where \(\E_2,\Pr_2\) denote conditional expectation and probability.
Since \(n_2\ge C\dm x\), this proves the curvature bound uniformly in
\(u\).

We next give the full bias calculation.  Put
\[
 q_{t,u}:=(\widehat a_t-a_t^\star)^\top\ell_u^\star,
 \qquad e_{w,t}:=\widehat M_{0,w,t}-M^\star_{w,t},
\]
and split \(B_u=B_u^{(A)}-B_u^{(O)}\), where
\[
 B_u^{(A)}=\sum_q\sigma'(\eta_q^\star)\widehat a_{t_q}q_{t_q,u},
 \qquad
 B_u^{(O)}=\sum_q\sigma'(\eta_q^\star)\widehat a_{t_q}e_{w_q,t_q}.
\]
Incoherence and~\eqref{eq:task-factor-initial-app} imply
\[
 \sum_{t=1}^{\dt}q_{t,u}^2\le C\dt\delta_A^2,
 \qquad |q_{t,u}|\le C,
 \qquad |e_{w,t}|\le 2B.
\]
Near-uniform sampling and Cauchy--Schwarz over the task and opponent
indices therefore give the two conditional means
\begin{equation}\label{eq:model-bias-means-app}
 \norm{\E_2B_u^{(A)}}
 \le {Cn_2\delta_A\over\dm\sqrt\dt},
 \qquad
 \norm{\E_2B_u^{(O)}}
 \le {Cn_2\Delta_F\over\dt\dm^{3/2}}.
\end{equation}
For the second inequality, if
\(s_{u,t}=\sum_{w\ne u}c_{u,w,t}e_{w,t}\) with
\(|c_{u,w,t}|\le C\), then
\(\sum_t s_{u,t}^2\le C\dm\Delta_F^2\), which supplies precisely the
factor \(\dm^{-3/2}\).

Let \(\zeta_{i,u}^{(A)}\) and \(\zeta_{i,u}^{(O)}\) be the centered
vector summands of the two biases.  Their conditional variance proxies
and envelopes are
\begin{align}
 \sum_{i\in\mathcal D_2}\E_2\norm{\zeta_{i,u}^{(A)}}^2
 &\le {Cn_2\delta_A^2\over\dt\dm},
 &\max_i\norm{\zeta_{i,u}^{(A)}}&\le {C\over\sqrt\dt},\nonumber\\
 \sum_{i\in\mathcal D_2}\E_2\norm{\zeta_{i,u}^{(O)}}^2
 &\le {Cn_2\Delta_F^2\over\dt^2\dm^2},
 &\max_i\norm{\zeta_{i,u}^{(O)}}&\le {C\over\sqrt\dt}.
 \label{eq:model-bias-moments-app}
\end{align}
Consequently, vector Bernstein yields, for every fixed \(u\),
\begin{align}
 \Pr_2\!\left[
  \norm{B_u^{(A)}-\E_2B_u^{(A)}}>
  C\left\{\delta_A\sqrt{\frac{n_2x}{\dt\dm}}
           +\frac{x}{\sqrt\dt}\right\}\right]&\le2e^{-x},\nonumber\\
 \Pr_2\!\left[
  \norm{B_u^{(O)}-\E_2B_u^{(O)}}>
  C\left\{\frac{\Delta_F\sqrt{n_2x}}{\dt\dm}
           +\frac{x}{\sqrt\dt}\right\}\right]&\le2e^{-x}.
 \label{eq:model-bias-tails-app}
\end{align}

For \(\beta_u\), use \((q-e)^2\le2q^2+2e^2\) and write
\(\beta_u\le2\beta_u^{(A)}+2\beta_u^{(O)}\).  Let
\(\mathcal N\) be a \(1/4\)-net of \(\mathbb S^{r-1}\), with
\(|\mathcal N|\le9^r\).  For fixed \(v\in\mathcal N\), denote the
corresponding nonnegative summands by
\[
 Y_{i,u,v}^{(A)}=\mathbf1_{\{i\ni u\}}
       |\widehat a_{t_i}^\top v|q_{t_i,u}^2,
 \quad
 Y_{i,u,v}^{(O)}=\mathbf1_{\{i\ni u\}}
       |\widehat a_{t_i}^\top v|e_{w_i,t_i}^2.
\]
Their conditional means, variance proxies, and envelopes satisfy
\begin{align}
 \sum_i\E_2Y_{i,u,v}^{(A)}&\le {Cn_2\delta_A^2\over\dm\sqrt\dt},
 &\sum_i\Var_2(Y_{i,u,v}^{(A)})&\le {Cn_2\delta_A^2\over\dt\dm},
 &\max_iY_{i,u,v}^{(A)}&\le {C\over\sqrt\dt},\nonumber\\
 \sum_i\E_2Y_{i,u,v}^{(O)}&\le {Cn_2\Delta_F^2\over\dt^{3/2}\dm^2},
 &\sum_i\Var_2(Y_{i,u,v}^{(O)})&\le {Cn_2\Delta_F^2\over\dt^2\dm^2},
 &\max_iY_{i,u,v}^{(O)}&\le {C\over\sqrt\dt}.
 \label{eq:model-beta-moments-app}
\end{align}
Scalar Bernstein and the fixed-dimensional net argument hence give
\begin{align}
 \Pr_2\Bigg[\beta_u>C\Bigg\{
 &{n_2\delta_A^2\over\dm\sqrt\dt}
 +{n_2\Delta_F^2\over\dt^{3/2}\dm^2}
 +\delta_A\sqrt{\frac{n_2x}{\dt\dm}}
 +{\Delta_F\sqrt{n_2x}\over\dt\dm}
 +{x\over\sqrt\dt}\Bigg\}\Bigg]
 \le C_r e^{-x}.
 \label{eq:model-beta-tail-app}
\end{align}
Finally, on the coverage event,
\[
 \gamma_u\le M_u\twoninfnorm{\widehat A}^3
 \le {Cn_2\over\dm\dt^{3/2}}.
\]
Taking a union bound in
\eqref{eq:model-bias-tails-app}--\eqref{eq:model-beta-tail-app} and the
curvature and coverage events gives total conditional failure
probability at most
\(C_r\dm e^{-x}\le n^{-a}\).  On the initialization event,
\(\delta_A=o(1)\) and \(\Delta_F=o(\sqrt{\dt\dm})\); thus the displayed
terms reduce to~\eqref{eq:model-row-master-app}, after absorbing powers
of \(x\) into \(\operatorname{polylog}(n\bar d)\).
These calculations also verify~\eqref{eq:left-sufficient-app} under
\(n_2\ge C\bar d\log^c(n\bar d)\).

By Assumption~\ref{ass:signal-app},
\(\sigma_r(M^\star)\asymp\sqrt{\dt\dm}\); Theorem
\ref{thm:convex-main-app} and~\eqref{eq:task-factor-initial-app} give
\(\delta_A\lesssim\sqrt{\bar d/n_1}\).
Substitution in \(2R_u/\lambda_u\), with \(n_1\asymp n_2\), yields
\eqref{eq:model-factor-rate-app}.  All inequalities used the free
parameter \(x\), so the finite union bound has failure probability
at most \(n^{-a}\).
\end{proof}

\subsection{Block IV: model centering and pairwise Gram identity}\label{app:entrywise-proof-recenter}

\begin{lemma}[Recentering preserves model differences]\label{lem:centering-app}
If \(\mathbf1_{\dm}^\top L^\star=0\) and
\(\overline L=R_{\rm m}\widetilde L\), then
\[
 \overline\ell_u-\overline\ell_v
 =\widetilde\ell_u-\widetilde\ell_v,\qquad
 \twoninfnorm{\overline L-L^\star}
 \le2\twoninfnorm{\widetilde L-L^\star},\qquad
 \Fnorm{\overline L-L^\star}\le\Fnorm{\widetilde L-L^\star}.
\]
\end{lemma}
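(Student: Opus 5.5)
The plan rests on one observation: $R_{\rm m}=I_{\dm}-\dm^{-1}\mathbf1_{\dm}\mathbf1_{\dm}^\top$ is the orthogonal projector onto $\{x:\mathbf1_{\dm}^\top x=0\}$, and by hypothesis $L^\star$ lies in its range columnwise. Hence $R_{\rm m}L^\star=L^\star$. Writing $E:=\widetilde L-L^\star$, we then have
\[
\overline L-L^\star=R_{\rm m}\widetilde L-R_{\rm m}L^\star=R_{\rm m}E=E-\mathbf1_{\dm}\bar e^\top,
\qquad
\bar e:=\dm^{-1}E^\top\mathbf1_{\dm}\in\R^r .
\]
All three claims will follow from this identity.

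\emph{Differences.} Row $u$ of $R_{\rm m}\widetilde L$ equals $\widetilde\ell_u-\bar\ell$, where $\bar\ell:=\dm^{-1}\widetilde L^\top\mathbf1_{\dm}$. This common shift cancels in $\overline\ell_u-\overline\ell_v$, which therefore equals $\widetilde\ell_u-\widetilde\ell_v$.

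\emph{Row bound.} Row $u$ of $\overline L-L^\star$ is $e_u-\bar e$, where $e_u^\top$ denotes row $u$ of $E$. By the triangle inequality,
\[
\norm{e_u-\bar e}_2\le\norm{e_u}_2+\norm{\bar e}_2 .
\]
Jensen's inequality (or the triangle inequality applied to the average) gives
\[
\norm{\bar e}_2\le\dm^{-1}\sum_v\norm{e_v}_2\le\twoninfnorm{E}.
\]
Taking the maximum over $u$ yields the factor $2$.

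\emph{Frobenius bound.} Since $R_{\rm m}$ is an orthogonal projector, $\norm{R_{\rm m}x}_2\le\norm{x}_2$ for every column $x$ of $E$. Summing the squares over the $r$ columns gives
\[
\Fnorm{R_{\rm m}E}\le\Fnorm{E}.
\]

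I do not expect a real obstacle. The one point that needs care is confirming $R_{\rm m}L^\star=L^\star$. This uses exactly the gauge $\mathbf1_{\dm}^\top L^\star=0$, and that gauge is still valid after the Procrustes relabeling $L^\star\mapsto L^\star O$, because right multiplication by $O$ preserves $\mathbf1_{\dm}^\top L^\star=0$.
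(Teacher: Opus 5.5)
Your proof is correct and follows essentially the same argument as the paper: the common row-mean shift cancels in differences, the $(2,\infty)$ bound follows from the triangle inequality, and the Frobenius bound holds because $R_{\rm m}$ is an orthogonal projector. You also state explicitly the step $R_{\rm m}L^\star=L^\star$, which gives $\overline L-L^\star=R_{\rm m}(\widetilde L-L^\star)$ and survives the Procrustes relabeling; the paper uses this step without stating it.
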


\begin{proof}
The first identity follows because \(R_{\rm m}\) subtracts the same row
mean from every row.  The \(2,\infty\) bound follows by the triangle
inequality, and the Frobenius bound because \(R_{\rm m}\) is an
orthogonal projector.
\end{proof}

\begin{lemma}[Pairwise Gram identity]\label{lem:pairwise-gram-app}
For any \(L=(\ell_1,\ldots,\ell_{\dm})^\top\) satisfying
\(\mathbf1_{\dm}^\top L=0\),
\[
 \sum_{u<v}(\ell_u-\ell_v)(\ell_u-\ell_v)^\top
 =\dm L^\top L,\qquad
 \E_{u<v}[(\ell_u-\ell_v)(\ell_u-\ell_v)^\top]
 ={2\over\dm-1}L^\top L.
\]
\end{lemma}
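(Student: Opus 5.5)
The plan is a direct algebraic computation, after which the expectation statement follows by normalization. For the first identity, expand each outer product:
\[
(\ell_u-\ell_v)(\ell_u-\ell_v)^\top=\ell_u\ell_u^\top+\ell_v\ell_v^\top-\ell_u\ell_v^\top-\ell_v\ell_u^\top .
\]
Summing over unordered pairs \(u<v\) is the same as summing over all ordered pairs \((u,v)\in[\dm]^2\) and dividing by two, since the diagonal terms \(u=v\) contribute zero.

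In the ordered-pair sum, each fixed \(u\) appears with \(\dm\) choices of \(v\). The two squared terms therefore contribute
\[
\dm\sum_u\ell_u\ell_u^\top+\dm\sum_v\ell_v\ell_v^\top=2\dm L^\top L .
\]
The cross terms factor as
\[
-\Big(\sum_u\ell_u\Big)\Big(\sum_v\ell_v\Big)^\top-\Big(\sum_v\ell_v\Big)\Big(\sum_u\ell_u\Big)^\top .
\]
Each of these vanishes because \(\sum_u\ell_u=L^\top\mathbf1_{\dm}=0\) by the centering hypothesis. Halving the total gives \(\sum_{u<v}(\ell_u-\ell_v)(\ell_u-\ell_v)^\top=\dm L^\top L\). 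This is the matrix analogue of the scalar identity already used in Lemma~\ref{lem:population-isometry-app}. Equivalently, one can write the sum as \(L^\top(\dm I-\mathbf1\mathbf1^\top)L=\dm L^\top R_{\rm m}L\) and use \(R_{\rm m}L=L\).

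For the second identity, interpret \(\E_{u<v}\) as the uniform average over the \(\binom{\dm}{2}\) unordered pairs. Dividing the first identity by \(\binom{\dm}{2}=\dm(\dm-1)/2\) gives \(\tfrac{2}{\dm-1}L^\top L\).

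There is no real obstacle here. The only point needing care is to state the normalization of \(\E_{u<v}\) (uniform over unordered pairs) explicitly. Under the near-uniform design the same calculation yields only a two-sided Loewner comparison, which is how the lemma is used in Block~IV.
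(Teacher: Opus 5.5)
Your proposal is correct and follows the paper's own proof: expand the ordered double sum, note that the cross terms vanish because \(\mathbf 1^\top L=0\), halve to get the unordered identity, and divide by \(\binom{d_m}{2}\) for the uniform average. Your explicit note on the normalization of the expectation, and the equivalent form \(L^\top(d_m I-\mathbf 1\mathbf 1^\top)L\), are harmless additions.
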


\begin{proof}
Expand the ordered double sum.  The two cross terms vanish because the
rows sum to zero; dividing by two gives the unordered identity.
\end{proof}

\begin{corollary}[Gram lower bound for \(\overline L\)]\label{cor:pairwise-gram-bar-app}
On the event of Proposition~\ref{prop:left-final-app},
\(\lambda_{\min}(\overline L^\top\overline L)\ge c\dt\dm\).
\end{corollary}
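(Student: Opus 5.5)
We need to show \(\lambda_{\min}(\overline L^\top\overline L)\ge c\dt\dm\). The plan is a perturbation argument: \(\overline L^\top\overline L\) is close to \((L^\star)^\top L^\star\), and the latter has smallest eigenvalue of order \(\dt\dm\).

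\emph{Population Gram matrix.} Since \(M^\star=L^\star(A^\star)^\top\) with \((A^\star)^\top A^\star=I_r\), the factor \(L^\star\) equals \(V^\star\Sigma^\star\) up to the orthogonal Procrustes rotation fixed in Appendix~\ref{app:entrywise-proof-leftdet}. Hence
\[
(L^\star)^\top L^\star = O^\top(\Sigma^\star)^2O,
\qquad
\lambda_{\min}\bigl((L^\star)^\top L^\star\bigr)=(\sigma_r^\star)^2 .
\]
By Assumption~\ref{ass:signal-app} and bounded \(\kappa\), \((\sigma_r^\star)^2\ge \Fnorm{\Thetastar}^2/(r\kappa^2)\ge c_{\rm sig}^2\dt\dm/(r\kappa^2)\). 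Moreover \(\norm{L^\star}_{\rm op}=\sigma_1^\star\le\Fnorm{\Thetastar}\le B\sqrt{\dt\dm}\).

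\emph{Perturbation.} Write \(E=\overline L-L^\star\). By Lemma~\ref{lem:centering-app} and \eqref{eq:model-factor-rate-app},
\[
\norm{E}_{\rm op}\le\Fnorm{E}\le\sqrt{\dm}\,\twoninfnorm{E}
\le 2\sqrt{\dm}\,\twoninfnorm{\widetilde L-L^\star}
\le C\sqrt{\dt\dm}\,\sqrt{\bar d/n_2}\,\operatorname{polylog}(n\bar d).
\]
Weyl's inequality applied to
\[
\overline L^\top\overline L-(L^\star)^\top L^\star=(L^\star)^\top E+E^\top L^\star+E^\top E
\]
bounds the deviation in operator norm by
\[
2\norm{L^\star}_{\rm op}\norm E_{\rm op}+\norm E_{\rm op}^2
\le C\dt\dm\bigl(\sqrt{\bar d/n_2}\,\operatorname{polylog}+\bar d\operatorname{polylog}/n_2\bigr).
\]
Under Assumption~\ref{ass:sample-app}, with \(C_0\) large and \(c\) at least the polylog exponent, the bracket is at most \(c_{\rm sig}^2/(2r\kappa^2)\). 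This gives \(\lambda_{\min}(\overline L^\top\overline L)\ge c_{\rm sig}^2\dt\dm/(2r\kappa^2)\).

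\emph{Main obstacle.} The delicate step is controlling the polylog factor in \eqref{eq:model-factor-rate-app} against the \(\log^c\) in the sample-size condition. This requires choosing \(c_0\) large enough, which Assumption~\ref{ass:sample-app} permits. Everything holds on the event of Proposition~\ref{prop:left-final-app}, so no further probability is spent.

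As a remark for later use, Lemma~\ref{lem:pairwise-gram-app} then transfers this bound to the pairwise Gram matrix \(\E[x_ix_i^\top]\asymp\dm^{-1}\overline L^\top\overline L\succeq c\dt I_r\).
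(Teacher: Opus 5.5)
Your proof is correct and follows the same route as the paper. Both apply Weyl's inequality to \(\overline L^\top\overline L-(L^\star)^\top L^\star=(L^\star)^\top E+E^\top L^\star+E^\top E\), using \(\lambda_{\min}\{(L^\star)^\top L^\star\}=(\sigma_r^\star)^2\asymp\dt\dm\) from the signal-strength condition and bounded \(\kappa\). You spell out what the paper leaves implicit: the conversion \(\Fnorm{E}\le\sqrt{\dm}\,\twoninfnorm{E}\) from the model-factor rate, and the absorption of the polylog into the sample-size condition (strictly, this needs \(c\) at least twice the polylog exponent, which Assumption~\ref{ass:sample-app} allows).
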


\begin{proof}
Since \(L^{\star\top}L^\star=\Sigma^{\star2}\) and
\(\sigma_r^2(M^\star)\asymp\dt\dm\), Weyl's inequality and
\[
 \norm{\overline L^\top\overline L-L^{\star\top}L^\star}_{\rm op}
 \le2\norm{L^\star}_{\rm op}\Fnorm{\overline L-L^\star}
      +\Fnorm{\overline L-L^\star}^2
\]
prove the claim under the sample-size assumption.
\end{proof}

\subsection{Blocks V--VI: task-column update}\label{app:entrywise-proof-right}

For \(i\in\mathcal I_t=\{i\in\mathcal D_3:t_i=t\}\), set
\(x_i=\overline\ell_{m_i}-\overline\ell_{m_i'}\),
\(x_i^\star=\ell_{m_i}^\star-\ell_{m_i'}^\star\), and
\(h_i=x_i-x_i^\star\), and define
\(\varepsilon_i:=Y_i-\sigma((x_i^\star)^\top a_t^\star)\).

\begin{lemma}[Task-column deterministic existence]\label{lem:right-deterministic-app}
With
\[
 N_t=\sum_{i\in\mathcal I_t}x_i\varepsilon_i,\quad
 H_t=\sum_{i\in\mathcal I_t}\sigma'((x_i^\star)^\top a_t^\star)x_ix_i^\top,
 \quad\lambda_t=\lambda_{\min}(H_t),
\]
\[
 B_t=\sum_{i\in\mathcal I_t}\sigma'((x_i^\star)^\top a_t^\star)
           x_i(h_i^\top a_t^\star),
\quad
 \beta_t=\sup_{\norm v=1}\sum_i|x_i^\top v|(h_i^\top a_t^\star)^2,
\quad
 \gamma_t=\sup_{\norm v=1}\sum_i|x_i^\top v|^3,
\]
the condition
\(\norm{N_t}+\norm{B_t}+L_3\beta_t
 \le\lambda_t^2/(4L_3\gamma_t)\)
implies a solution with
\[
 \norm{\widetilde a_t-a_t^\star}
 \le {2\{\norm{N_t}+\norm{B_t}+L_3\beta_t\}\over\lambda_t}.
\]
\end{lemma}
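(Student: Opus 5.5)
The plan mirrors Lemma~\ref{lem:left-deterministic-app}: Taylor-expand the column score around $a_t^\star$ and then apply the Brouwer inward-pointing lemma (Lemma~\ref{lem:inward-pointing-app}) on a sphere of radius $\xi:=2R_t/\lambda_t$, where $R_t:=\norm{N_t}+\norm{B_t}+L_3\beta_t$.

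\textbf{Step 1: rewrite the residual.} Write $a=a_t^\star+\delta$. Since $x_i=x_i^\star+h_i$, we have
\[
x_i^\top a=(x_i^\star)^\top a_t^\star+h_i^\top a_t^\star+x_i^\top\delta .
\]
With $\eta_i^\star:=(x_i^\star)^\top a_t^\star$, this gives
\[
Y_i-\sigma(x_i^\top a)=\varepsilon_i-\{\sigma(\eta_i^\star+d_i+x_i^\top\delta)-\sigma(\eta_i^\star)\},
\qquad d_i:=h_i^\top a_t^\star .
\]

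\textbf{Step 2: second-order expansion.} Expand $\sigma$ to second order with an integral remainder, using $|\sigma''|\le L_3$ from Assumption~\ref{ass:score-app}(iii). This yields
\[
S_t(a_t^\star+\delta)=N_t-H_t\delta-B_t-\mathcal R_t(\delta),
\]
where $H_t$ and $B_t$ are exactly as defined in the statement. The remainder $\mathcal R_t(\delta)$ consists of the terms $\sum_i x_i\,\tfrac12\sigma''(\tilde\eta_i)(d_i+x_i^\top\delta)^2$.

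\textbf{Step 3: bound the remainder along $\delta$.} Use $(d_i+x_i^\top\delta)^2\le 2d_i^2+2(x_i^\top\delta)^2$ to get
\[
|\delta^\top\mathcal R_t(\delta)|\le L_3\sum_i|x_i^\top\delta|\{d_i^2+(x_i^\top\delta)^2\}.
\]
By homogeneity in $\norm\delta$ and the definitions of $\beta_t$ and $\gamma_t$, this is at most
\[
L_3\{\beta_t\norm\delta+\gamma_t\norm\delta^3\}.
\]
This is where the definitions of $\beta_t$ and $\gamma_t$ (via the $\sup_{\norm v=1}$) enter.

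\textbf{Step 4: inward pointing on the sphere.} On the sphere $\norm\delta=\xi$,
\[
\delta^\top S_t\le \norm\delta\bigl(\norm{N_t}+\norm{B_t}+L_3\beta_t\bigr)-\lambda_t\norm\delta^2+L_3\gamma_t\norm\delta^3 .
\]
Divide by $\norm\delta$ and substitute $\norm\delta=2R_t/\lambda_t$. The right side becomes
\[
R_t-2R_t+\frac{4L_3\gamma_tR_t^2}{\lambda_t^2}=R_t\Bigl(-1+\frac{4L_3\gamma_tR_t}{\lambda_t^2}\Bigr),
\]
which is $\le0$ exactly under the hypothesis $R_t\le\lambda_t^2/(4L_3\gamma_t)$.

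\textbf{Step 5: conclude.} Continuity of $S_t$ and Lemma~\ref{lem:inward-pointing-app} give a zero $\widetilde a_t$ with $\norm{\widetilde a_t-a_t^\star}\le 2R_t/\lambda_t$.

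The only delicate point is the constant bookkeeping in the remainder. The factors $\tfrac12$ and $2$ cancel, so the constant is precisely $L_3$. If a stray factor appears, it can be absorbed by redefining $L_3$, as the paper does implicitly in Lemma~\ref{lem:left-deterministic-app}.
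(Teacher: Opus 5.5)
Your proposal is correct and takes the same route as the paper. The paper's proof repeats the second-order Taylor expansion $S_t(a_t^\star+\delta)=N_t-H_t\delta-B_t-\mathcal R_t(\delta)$ with $|\delta^\top\mathcal R_t(\delta)|\le L_3\{\beta_t\norm{\delta}+\gamma_t\norm{\delta}^3\}$ from Lemma~\ref{lem:left-deterministic-app}, and then applies the Brouwer inward-pointing lemma on the sphere of radius $2R_t/\lambda_t$; your bookkeeping (the factors $\tfrac12$ and $2$ cancelling to give exactly $L_3$, and the sign check on that sphere) is accurate and supplies the details the paper leaves implicit.
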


\begin{proof}
Repeat the Taylor expansion in the proof of
Lemma~\ref{lem:left-deterministic-app}, now with regressor \(x_i\), and
apply Lemma~\ref{lem:inward-pointing-app}.
\end{proof}

\begin{proposition}[Uniform task-factor bound]\label{prop:right-final-app}
Let
\(\epsilon_L=\twoninfnorm{\overline L-L^\star}\).  With probability at
least \(1-n^{-a}\), uniformly in \(t\in[\dt]\),
\[
 |\mathcal I_t|\asymp n_3/\dt,\quad
 \lambda_t\ge cn_3,\quad
 \norm{N_t}\le C\sqrt{n_3}\operatorname{polylog}(n\bar d),
\]
\[
 \norm{B_t}\le C{n_3\epsilon_L\over\dt}\operatorname{polylog}(n\bar d),
 \quad
 \beta_t\le C{n_3\epsilon_L^2\over\dt^{3/2}}\operatorname{polylog}(n\bar d),
 \quad
 \gamma_t\le Cn_3\sqrt\dt\operatorname{polylog}(n\bar d),
\]
and hence
\begin{equation}\label{eq:task-factor-rate-app}
 \twoninfnorm{\widetilde A-A^\star}
 \le C\left\{{1\over\sqrt{n_3}}+
              {\epsilon_L\over\dt}+
              {\epsilon_L^2\over\dt^{3/2}}\right\}
       \operatorname{polylog}(n\bar d).
\end{equation}
\end{proposition}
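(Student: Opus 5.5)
The proof runs parallel to Proposition~\ref{prop:left-final-app}. We work conditionally on $\mathcal D_1\cup\mathcal D_2$, so that $\overline L$ is fixed and the observations in $\mathcal D_3$ are i.i.d. We verify each displayed bound for a fixed task $t$ with tail parameter $x=(a+C)\log(n\bar d)$, take a union bound over $t\in[\dt]$, and then plug the bounds into Lemma~\ref{lem:right-deterministic-app}.

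\textbf{Coverage and curvature.} Since $\nu_t\asymp1/\dt$, scalar Bernstein gives $|\mathcal I_t|\asymp n_3/\dt$ uniformly in $t$, because $n_3\gtrsim\dt x$. For the curvature, the pairs are near-uniform given $t$, and $\overline L$ is exactly centered. Lemma~\ref{lem:pairwise-gram-app} then gives
\[
\E_3[x_ix_i^\top\mid t_i=t]\succeq \frac{c}{\dm}\,\overline L^\top\overline L .
\]
Corollary~\ref{cor:pairwise-gram-bar-app} supplies $\lambda_{\min}(\overline L^\top\overline L)\ge c\dt\dm$. Together with $\sigma'\ge c_B$, the conditional mean of $H_t$ is therefore $\succeq c\,(n_3/\dt)\cdot\dt\, I_r=c\,n_3 I_r$.

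The summands satisfy $\|x_i\|\le2\twoninfnorm{\overline L}\lesssim\sqrt{\dt}$, since $\twoninfnorm{L^\star}\lesssim\sigma_1\sqrt{\mu r/\dm}\asymp\sqrt{\dt}$ and $\epsilon_L$ is of smaller order. So each summand of $H_t$ has operator norm $\lesssim\dt$, and the variance proxy is $\lesssim n_3\dt$. Matrix Bernstein gives deviations of order $\sqrt{n_3\dt x}+\dt x=o(n_3)$, and hence $\lambda_t\ge cn_3$. The same envelope gives $\gamma_t\le|\mathcal I_t|\max_i\|x_i\|^3\lesssim (n_3/\dt)\dt^{3/2}=n_3\sqrt{\dt}$.

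\textbf{Noise and bias terms.} The noise $\varepsilon_i$ is centered given $X_i$. Vector Bernstein with envelope $\sqrt{\dt}$ and variance $\lesssim n_3\operatorname{tr}(\overline L^\top\overline L)/(\dt\dm)\lesssim n_3$ gives $\|N_t\|\lesssim\sqrt{n_3x}+\sqrt{\dt}\,x$, which is at most $C\sqrt{n_3}\operatorname{polylog}$ since $n_3\gtrsim\dt x^2$.

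For $B_t$, note $|h_i^\top a_t^\star|\le2\epsilon_L\|a_t^\star\|\lesssim\epsilon_L/\sqrt{\dt}$ by incoherence of $A^\star$. A crude bound is then
\[
\|B_t\|\le C|\mathcal I_t|\sqrt{\dt}\,\epsilon_L/\sqrt{\dt}\lesssim n_3\epsilon_L/\dt .
\]
Similarly, $\beta_t\le|\mathcal I_t|\sqrt{\dt}\,\epsilon_L^2/\dt\lesssim n_3\epsilon_L^2/\dt^{3/2}$. Both are deterministic given coverage, so no further concentration is needed.

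\textbf{Assembly and main obstacle.} The step I expect to be the main obstacle is checking the sufficient condition of Lemma~\ref{lem:right-deterministic-app}. Its right-hand side is
\[
\frac{\lambda_t^2}{4L_3\gamma_t}\gtrsim \frac{n_3^2}{n_3\sqrt{\dt}}=\frac{n_3}{\sqrt{\dt}} .
\]
The left-hand side is $\sqrt{n_3}+n_3\epsilon_L/\dt+n_3\epsilon_L^2/\dt^{3/2}$, up to polylog factors. By \eqref{eq:model-factor-rate-app} and Lemma~\ref{lem:centering-app}, $\epsilon_L\lesssim\sqrt{\dt\bar d/n}\,\operatorname{polylog}$. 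Then $\epsilon_L/\dt\ll\dt^{-1/2}$ requires $\bar d\,\operatorname{polylog}\ll n$. The noise term requires $\sqrt{n_3}\ll n_3/\sqrt{\dt}$, i.e.\ $n_3\gg\dt\operatorname{polylog}$. Both follow from Assumption~\ref{ass:sample-app} after enlarging $C_0$ to absorb the polylog factors.

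The lemma then yields $\|\widetilde a_t-a_t^\star\|\le2(\|N_t\|+\|B_t\|+L_3\beta_t)/\lambda_t$, which is exactly \eqref{eq:task-factor-rate-app}. The care needed is in matching polylog powers and in confirming that $\twoninfnorm{\overline L}\lesssim\sqrt{\dt}$ holds on the same event. A union bound over the $\dt$ tasks and the $O(1)$ concentration events costs a factor $\dt e^{-x}\le n^{-a}$.
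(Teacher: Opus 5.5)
Your proposal is correct and follows the paper's proof. You condition on $\mathcal D_1\cup\mathcal D_2$, obtain coverage by scalar Bernstein and curvature from Lemma~\ref{lem:pairwise-gram-app}, Corollary~\ref{cor:pairwise-gram-bar-app} and matrix Bernstein, bound the noise by vector Bernstein with envelope $\sqrt{\dt}$ and variance $n_3$, and finish with Lemma~\ref{lem:right-deterministic-app} and a union bound over tasks. The one local difference is that you bound $B_t$, $\beta_t$, $\gamma_t$ deterministically as $|\mathcal I_t|$ times an envelope, whereas the paper computes conditional means from pair moments and adds Bernstein and net fluctuation terms. Because the envelopes $\norm{x_i}\lesssim\sqrt{\dt}$ and $|h_i^\top a_t^\star|\lesssim\epsilon_L/\sqrt{\dt}$ have the same order as the paper's root-mean-square pair moments, your shortcut gives the same rates with less work.
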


\begin{proof}
Scalar Bernstein gives \(|\mathcal I_t|\asymp n_3/\dt\).
Conditional on the first two splits, the score-noise summands have
envelope \(C\sqrt\dt\) and variance sum \(Cn_3\), giving the displayed
noise bound.  Lemma~\ref{lem:pairwise-gram-app} and
Corollary~\ref{cor:pairwise-gram-bar-app} give
\[
 \E(H_t\mid\mathcal D_1,\mathcal D_2)
 \succeq {c n_3\over\dt}\,
 {2\over\dm-1}\overline L^\top\overline L
 \succeq cn_3I_r.
\]
Matrix Bernstein proves the same lower bound empirically.

Here are the explicit conditional calculations.  Write
\(\E_3,\Pr_3\) for expectation and probability conditional on
\((\mathcal D_1,\mathcal D_2)\).  On the event of
Proposition~\ref{prop:left-final-app}, the pairwise Gram identity and
near-uniform pair sampling imply, uniformly over unit \(v\),
\begin{equation}\label{eq:task-pair-moments-app}
 \E_{\rm pair}\norm{x}^2\le C\dt,\qquad
 \E_{\rm pair}(x^\top v)^2\le C\dt,\qquad
 \E_{\rm pair}(h^\top a_t^\star)^2
 \le {C\epsilon_L^2\over\dt},
\end{equation}
while
\[
 \norm{x}\le C\sqrt\dt,\qquad
 \norm{h}\le2\epsilon_L,\qquad
 |h^\top a_t^\star|\le {C\epsilon_L\over\sqrt\dt}.
\]
For example, the last expectation in
\eqref{eq:task-pair-moments-app} follows from
\(\Fnorm{\overline L-L^\star}^2\le\dm\epsilon_L^2\) and the pairwise
Gram identity applied to \(\overline L-L^\star\).

For curvature, a summand has operator norm at most \(C\dt\), and the
conditional matrix-variance proxy is at most \(Cn_3\dt\).  Therefore
\begin{equation}\label{eq:task-curvature-tail-app}
 \Pr_3\!\left[
  \norm{H_t-\E_3H_t}_{\rm op}>
  C\{\sqrt{n_3\dt x}+\dt x\}\right]
 \le2r e^{-x}.
\end{equation}
Together with the preceding lower bound for \(\E_3H_t\), this yields
\(\lambda_t\ge cn_3\) whenever \(n_3\ge C\dt x\).

For \(B_t\), Cauchy--Schwarz and
\eqref{eq:task-pair-moments-app} first give
\begin{equation}\label{eq:task-bias-mean-app}
 \norm{\E_3B_t}
 \le {Cn_3\over\dt}
 \left\{\E_{\rm pair}\norm{x}^2\right\}^{1/2}
 \left\{\E_{\rm pair}(h^\top a_t^\star)^2\right\}^{1/2}
 \le {Cn_3\epsilon_L\over\dt}.
\end{equation}
If \(\zeta_{i,t}^{(B)}\) denotes its centered vector summand, then
\begin{equation}\label{eq:task-bias-moments-app}
 \sum_{i\in\mathcal D_3}\E_3\norm{\zeta_{i,t}^{(B)}}^2
 \le {Cn_3\epsilon_L^2\over\dt},
 \qquad
 \max_i\norm{\zeta_{i,t}^{(B)}}\le C\epsilon_L.
\end{equation}
Thus vector Bernstein gives
\begin{equation}\label{eq:task-bias-tail-app}
 \Pr_3\!\left[
  \norm{B_t-\E_3B_t}>
  C\left\{\epsilon_L\sqrt{\frac{n_3x}{\dt}}
            +\epsilon_Lx\right\}\right]
 \le2e^{-x}.
\end{equation}

For \(\beta_t\), take a \(1/4\)-net
\(\mathcal N\subset\mathbb S^{r-1}\), \(|\mathcal N|\le9^r\), and
for fixed \(v\in\mathcal N\) put
\(Y_{i,t,v}=\mathbf1_{\{t_i=t\}}|x_i^\top v|
(h_i^\top a_t^\star)^2\).  The mean, variance proxy, and envelope are
\begin{equation}\label{eq:task-beta-moments-app}
 \sum_i\E_3Y_{i,t,v}\le {Cn_3\epsilon_L^2\over\dt^{3/2}},
 \qquad
 \sum_i\Var_3(Y_{i,t,v})\le {Cn_3\epsilon_L^4\over\dt^2},
 \qquad
 \max_iY_{i,t,v}\le {C\epsilon_L^2\over\sqrt\dt}.
\end{equation}
Scalar Bernstein and the net argument imply
\begin{equation}\label{eq:task-beta-tail-app}
 \Pr_3\!\left[
 \beta_t>C\left\{
 {n_3\epsilon_L^2\over\dt^{3/2}}
 +{\epsilon_L^2\sqrt{n_3x}\over\dt}
 +{\epsilon_L^2x\over\sqrt\dt}\right\}\right]
 \le C_r e^{-x}.
\end{equation}

The same net controls \(\gamma_t\).  For
\(G_{i,t,v}=\mathbf1_{\{t_i=t\}}|x_i^\top v|^3\),
\begin{equation}\label{eq:task-gamma-moments-app}
 \sum_i\E_3G_{i,t,v}\le Cn_3\sqrt\dt,\qquad
 \sum_i\Var_3(G_{i,t,v})\le Cn_3\dt^2,\qquad
 \max_iG_{i,t,v}\le C\dt^{3/2},
\end{equation}
and hence
\begin{equation}\label{eq:task-gamma-tail-app}
 \Pr_3\!\left[
 \gamma_t>C\{n_3\sqrt\dt+\dt\sqrt{n_3x}+\dt^{3/2}x\}
 \right]\le C_r e^{-x}.
\end{equation}
For completeness, the noise calculation used above is
\[
 \Pr_3\!\left[
 \norm{N_t}>C\{\sqrt{n_3x}+\sqrt\dt\,x\}\right]\le2e^{-x},
\]
because its variance proxy is \(Cn_3\) and its envelope is
\(C\sqrt\dt\).

With \(x=(a+C)\log(n\bar d)\), a union bound over
\(t\in[\dt]\) and the fixed nets has failure probability at most
\(C_r\dt e^{-x}\le n^{-a}\).  Since \(n_3\ge C\bar d\log^c(n\bar d)\),
the fluctuation terms in
\eqref{eq:task-bias-tail-app}--\eqref{eq:task-gamma-tail-app} are
absorbed by their respective leading terms, giving exactly the three
bounds in the proposition.  The deterministic lemma now proves
\eqref{eq:task-factor-rate-app}; its sufficient condition follows from
the same sample-size inequality.
\end{proof}
\subsection{Final assembly: proof of Theorem~\ref{thm:entrywise}}\label{app:entrywise-proof-final}

\begin{theorem}[Entrywise refinement; restatement of Theorem~\ref{thm:entrywise}]\label{thm:pairwise-max-app}
Under the assumptions of Section~\ref{sec:setup} and
Assumptions~\ref{ass:score-app}, \ref{ass:design-app},
\ref{ass:signal-app}, and~\ref{ass:sample-app}, fix any \(a>0\)
and let
\(n\ge C\,\mathrm{poly}(\mu,r,\kappa,B)\,\bar d\log^c(n\bar d)\).
Then with probability at least \(1-n^{-a}\),
\[
    \infnorm{\Thetahat-\Thetastar}
    \;\le\;
    C\sqrt{\frac{\bar d\,\mathrm{polylog}(n\bar d)}{n}}.
\]
\end{theorem}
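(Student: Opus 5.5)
The plan is to assemble the two factor bounds already established, Proposition~\ref{prop:left-final-app} for the model factor and Proposition~\ref{prop:right-final-app} for the task factor, through a bilinear product decomposition. Work on the intersection of the events of Theorem~\ref{thm:convex-main-app}, Proposition~\ref{prop:left-final-app} and Proposition~\ref{prop:right-final-app}. Each holds with probability at least \(1-n^{-a}\) after relabelling \(a\), so the union bound costs only a constant factor. Throughout, \((L^\star,A^\star)\) denotes the Procrustes-aligned factorization used in those propositions. The alignment is an orthogonal rotation, so \(M^\star=L^\star(A^\star)^\top\) is unchanged, and the target is \(\Thetahat-\Thetastar=(\overline L\widetilde A^\top-L^\star A^{\star\top})^\top\).

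First, write
\[
\overline L\widetilde A^\top-L^\star A^{\star\top}
=(\overline L-L^\star)A^{\star\top}+L^\star(\widetilde A-A^\star)^\top+(\overline L-L^\star)(\widetilde A-A^\star)^\top .
\]
The \((u,t)\) entry of each term is bounded by Cauchy--Schwarz on rows. The first term is at most \(\twoninfnorm{\overline L-L^\star}\,\twoninfnorm{A^\star}\). By Lemma~\ref{lem:centering-app} and \eqref{eq:model-factor-rate-app}, \(\epsilon_L\lesssim\sqrt{\dt\bar d/n}\,\mathrm{polylog}\), and incoherence gives \(\twoninfnorm{A^\star}\le\sqrt{\mu r/\dt}\). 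This term is therefore \(\lesssim\sqrt{\bar d/n}\,\mathrm{polylog}\).

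The second term is at most \(\twoninfnorm{L^\star}\,\twoninfnorm{\widetilde A-A^\star}\). We have \(L^\star=V^\star\Sigma^\star\), so \(\twoninfnorm{L^\star}\le\sigma_1^\star\sqrt{\mu r/\dm}\). Using \(\|\Thetastar\|_F\le B\sqrt{\dt\dm}\) and bounded \(\kappa\), this gives \(\twoninfnorm{L^\star}\lesssim\sqrt{\dt}\). Combined with \eqref{eq:task-factor-rate-app}, the second term is bounded by
\[
\sqrt{\dt}\,\bigl\{n^{-1/2}+\epsilon_L/\dt+\epsilon_L^2/\dt^{3/2}\bigr\}\,\mathrm{polylog}
\lesssim\bigl\{\sqrt{\dt/n}+\sqrt{\bar d/n}+\bar d/n\cdot\sqrt{\dt}\bigr\}\,\mathrm{polylog}.
\]
Substituting \(\epsilon_L\) shows that \(\epsilon_L^2/\dt=\bar d\,\mathrm{polylog}/n\), and this is \(o(\sqrt{\bar d/n})\) under Assumption~\ref{ass:sample-app}. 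The cross term is at most \(\epsilon_L\twoninfnorm{\widetilde A-A^\star}\), which is of smaller order by the same substitution. Summing the three terms gives \(\infnorm{\Thetahat-\Thetastar}\lesssim\sqrt{\bar d\,\mathrm{polylog}(n\bar d)/n}\).

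The main obstacle is bookkeeping rather than a new estimate. One must check that the Procrustes alignment from Block~I is used consistently for both the model-row and task-column comparisons. The task-column update reestimates \(A\) with regressors built from \(\overline L\), so its comparison point is the same aligned \(A^\star\). The propositions state this alignment, and I would verify it explicitly. One must also confirm that the square-root of \(\dt\) in \(\twoninfnorm{L^\star}\) is exactly cancelled, with no stray dimension factor in the rectangular case, matching the remark in Appendix~\ref{app:notation-balanced}.

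Finally, I would note that \(\Thetahat\mathbf 1_{\dm}=0\) holds because \(\mathbf 1^\top\overline L=0\). Exact rank \(r\) follows from Corollary~\ref{cor:pairwise-gram-bar-app} together with \(\widetilde A\) being close to the orthonormal \(A^\star\) in operator norm. These two facts are needed only for the later inference sections, not for the max-norm bound itself.
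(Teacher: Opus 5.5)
Your proposal follows the paper's proof essentially verbatim: the same bilinear factor decomposition (you isolate the cross term $(\overline L-L^\star)(\widetilde A-A^\star)^\top$ instead of bounding $\twoninfnorm{\overline L}\le 2\twoninfnorm{L^\star}$ as the paper does), the same $2,\infty$ factor bounds from Propositions~\ref{prop:left-final-app} and~\ref{prop:right-final-app}, the same incoherence bounds $\twoninfnorm{A^\star}\lesssim \dt^{-1/2}$ and $\twoninfnorm{L^\star}\lesssim \dt^{1/2}$, and the same Weyl argument for exact rank. One small slip: the last term in your display should be $\epsilon_L^2/\dt\asymp\bar d\,\mathrm{polylog}(n\bar d)/n$, as your next sentence correctly says, not $\sqrt{\dt}\,\bar d/n$, which would not be $o(\sqrt{\bar d/n})$ under the stated sample size.
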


\begin{proof}
On
\(\mathcal E_{\rm init}\cap\mathcal E_{\rm refine}\) (the intersection of
the convex-initialization and refinement events), apply the
factorization
\[
    \widetilde M-M^\star
    =
    (\overline L-L^\star)(A^\star)^\top
    +
    \overline L\,(\widetilde A-A^\star)^\top.
\]
The entrywise norm is bounded by
\[
    \infnorm{\Thetahat-\Thetastar}
    \le
    \twoninfnorm{\overline L-L^\star}\cdot\twoninfnorm{A^\star}
    +\twoninfnorm{\widetilde A-A^\star}\cdot\twoninfnorm{\overline L}.
\]
Substituting
\(\twoninfnorm{\overline L-L^\star}\lesssim\sqrt{\dt\bar d/n_2}\,\mathrm{polylog}\)
(Proposition~\ref{prop:left-final-app} via Lemma~\ref{lem:centering-app}),
the bound~\eqref{eq:task-factor-rate-app},
\(\twoninfnorm{A^\star}\le C\dt^{-1/2}\) (incoherence), and
\(\twoninfnorm{\overline L}\le 2\twoninfnorm{L^\star}\le C\dt^{1/2}\)
(incoherence + recentering bound) yields
\[
    \infnorm{\Thetahat-\Thetastar}
    \;\lesssim\;
    \sqrt{\frac{\dt\bar d}{n}}\cdot\frac{1}{\sqrt\dt}
      +\sqrt\dt\left\{\frac1{\sqrt n}
       +\frac{\sqrt{\dt\bar d/n}}{\dt}
       +\frac{\dt\bar d/n}{\dt^{3/2}}\right\}
    \;\lesssim\;
    \sqrt{\frac{\bar d}{n}}+\frac{\bar d}{n}
    \;\lesssim\;
    \sqrt{\frac{\bar d}{n}}\,\mathrm{polylog}(n\bar d).
\]
The factor bounds also preserve exact rank.  In the normalization used
above, \(\sigma_r(L^\star)\asymp\sqrt{\dt\dm}\) and
\(\sigma_r(A^\star)=1\).  The Frobenius bounds implied by
Propositions~\ref{prop:left-final-app} and~\ref{prop:right-final-app}
are \(o(\sigma_r(L^\star))\) and \(o(1)\), respectively, under the
stated sample-size and signal conditions.  Weyl's inequality therefore
gives
\[
 \sigma_r(\overline L)>0,\qquad \sigma_r(\widetilde A)>0
\]
on the same event.  Both factors have column rank \(r\), so
\(\widetilde M=\overline L\widetilde A^\top\), and hence
\(\Thetahat\), has exactly rank \(r\).
The probability calibration to \(1-n^{-a}\) follows from
Appendix~\ref{app:notation-probcalib}.
\end{proof}

This completes the proof of Theorem~\ref{thm:entrywise}.\qed

\section{Proof of Proposition~\ref{prop:topk-hamming}: top-\texorpdfstring{$K$}{K} Hamming and exact recovery}\label{app:topk-hamming}

This appendix proves the deterministic reduction from entrywise score
estimation to the taskwise top-\(K\) Hamming bound, establishes exact
recovery under the margin condition, and discusses minimax optimality.

\subsection{Setup}\label{app:topk-setup}

For task \(t\in[\dt]\), abbreviate \(\theta_m:=\Thetastar_{t,m}\),
\(\widehat\theta_m:=\Thetahat_{t,m}\), and let
\(S_t:=\TopK(t)\), \(\widehat S_t:=\widehat{\TopK}(t)\).
Let \(\theta_{(1)}\ge\theta_{(2)}\ge\cdots\ge\theta_{(\dm)}\) denote the
sorted true scores for task \(t\), and define the \(K\)-boundary midpoint
\(\tau_K(t):=(\theta_{(K)}+\theta_{(K+1)})/2\).
The normalized top-\(K\) Hamming loss is
\(\mathsf{Ham}_{K,t}:=(2K)^{-1}|\widehat S_t\triangle S_t|\),
and the boundary mass at radius \(r\) is
\(\mathcal R_{K,t}(r;\Thetastar)
:=(2K)^{-1}|\{m:|\theta_m-\tau_K(t)|\le r\}|\).
On the event
\(\mathcal E_\infty:=\{\infnorm{\Thetahat-\Thetastar}\le\varepsilon_n\}\)
of Theorem~\ref{thm:entrywise}, all statements below are deterministic.

\subsection{Hamming bound from entrywise error}\label{app:topk-hamming-prop}

\begin{proposition}[Restatement of Proposition~\ref{prop:topk-hamming}]\label{prop:topk-hamming-app}
On \(\mathcal E_\infty\), for every \(t\in[\dt]\),
\(\mathsf{Ham}_{K,t}\le\mathcal R_{K,t}(2\varepsilon_n;\Thetastar)\).
Hence if \(\Pr(\mathcal E_\infty)\ge 1-n^{-a}\), the Hamming bound holds
simultaneously over all tasks with probability at least \(1-n^{-a}\).
\end{proposition}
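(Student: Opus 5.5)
The plan is a deterministic counting argument on the event \(\mathcal E_\infty\). Fix a task \(t\). The goal is to show that every model in the symmetric difference \(\widehat S_t\triangle S_t\) has its true score within \(2\varepsilon_n\) of \(\tau_K(t)\). Granting this, we have \(|\widehat S_t\triangle S_t|\le|\{m:|\theta_m-\tau_K(t)|\le2\varepsilon_n\}|\). Dividing by \(2K\) gives \(\mathsf{Ham}_{K,t}\le\mathcal R_{K,t}(2\varepsilon_n;\Thetastar)\). The simultaneous statement then follows with no extra union bound, because the same event \(\mathcal E_\infty\) serves all tasks and Theorem~\ref{thm:entrywise} gives \(\Pr(\mathcal E_\infty)\ge1-n^{-a}\).

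First I use that both \(S_t\) and \(\widehat S_t\) have exactly \(K\) elements, by the deterministic tie-breaking via \(\prec_0\). Hence \(|\widehat S_t\setminus S_t|=|S_t\setminus\widehat S_t|\). If these sets are empty there is nothing to prove. Otherwise, take any \(m\in\widehat S_t\setminus S_t\) and any \(m'\in S_t\setminus\widehat S_t\). Because \(m\notin S_t\), its true rank exceeds \(K\), so \(\theta_m\le\theta_{(K+1)}\le\tau_K(t)\). Because \(m'\in S_t\), we have \(\theta_{m'}\ge\theta_{(K)}\ge\tau_K(t)\).

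Next I use the estimated ranking. Since \(m\) is in the estimated top-\(K\) set and \(m'\) is not, \(m\) is placed ahead of \(m'\) in the sorted estimated scores, which forces \(\widehat\theta_m\ge\widehat\theta_{m'}\). Ties only occur with equality, so this inequality holds whichever way a tie is broken. On \(\mathcal E_\infty\) this gives
\[
\theta_m\ge\widehat\theta_m-\varepsilon_n\ge\widehat\theta_{m'}-\varepsilon_n\ge\theta_{m'}-2\varepsilon_n .
\]
Combining with the previous paragraph yields \(\tau_K(t)-2\varepsilon_n\le\theta_{m'}-2\varepsilon_n\le\theta_m\le\tau_K(t)\). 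The symmetric chain gives \(\tau_K(t)\le\theta_{m'}\le\theta_m+2\varepsilon_n\le\tau_K(t)+2\varepsilon_n\). Thus every element of either difference set lies in the boundary band of radius \(2\varepsilon_n\).

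The only delicate point is the handling of ties, both in the true and in the estimated scores. This is why I pair an element of \(\widehat S_t\setminus S_t\) with an element of \(S_t\setminus\widehat S_t\), rather than comparing \(\widehat\theta_m\) directly to a threshold. The pairing uses only the fact that each set has exactly \(K\) elements and the weak inequalities above, so ties cause no difficulty.

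The exact-recovery remark follows from the same argument. If \(\Delta_K(t)>2\varepsilon_n\), then for \(m\notin S_t\) and \(m'\in S_t\) we have \(\theta_{m'}-\theta_m\ge\Delta_K(t)>2\varepsilon_n\). This contradicts the displayed chain \(\theta_m\ge\theta_{m'}-2\varepsilon_n\), so both difference sets must be empty and \(\widehat S_t=S_t\).
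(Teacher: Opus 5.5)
Your proof is correct and is essentially the paper's argument. You pair each false positive with a false negative (possible since $|\widehat S_t|=|S_t|=K$), use $\widehat\theta_m\ge\widehat\theta_{m'}$ together with the max-norm event to get $\theta_m\ge\theta_{m'}-2\varepsilon_n$, and then place both models in the band of radius $2\varepsilon_n$ around $\tau_K(t)\in[\theta_{(K+1)},\theta_{(K)}]$. Your exact-recovery remark, obtained by contradiction from the same chain, is an equally valid variant of the paper's direct comparison $\widehat\theta_m-\widehat\theta_{m'}\ge\Delta_K(t)-2\varepsilon_n>0$.
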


\begin{proof}
We prove the deterministic inclusion
\(\widehat S_t\triangle S_t\subseteq\{m:|\theta_m-\tau_K(t)|\le 2\varepsilon_n\}\)
on \(\mathcal E_\infty\); the cardinality bound then follows by dividing by
\(2K\).  Take a false positive \(u\in\widehat S_t\setminus S_t\); since
\(|\widehat S_t|=|S_t|=K\), there exists a false negative
\(v\in S_t\setminus\widehat S_t\).  Because \(u\in\widehat S_t\) and
\(v\notin\widehat S_t\), \(\widehat\theta_u\ge\widehat\theta_v\) under the
deterministic tie-breaking rule.  Hence
\(\theta_u\ge\widehat\theta_u-\varepsilon_n\ge\widehat\theta_v-\varepsilon_n\ge\theta_v-2\varepsilon_n\).
Since \(v\in S_t\), \(\theta_v\ge\theta_{(K)}\), so
\(\theta_u\ge\theta_{(K)}-2\varepsilon_n\); and since \(u\notin S_t\),
\(\theta_u\le\theta_{(K+1)}\).  Combining,
\(\theta_{(K)}-2\varepsilon_n\le\theta_u\le\theta_{(K+1)}\), and using
\(\tau_K(t)\in[\theta_{(K+1)},\theta_{(K)}]\),
\(|\theta_u-\tau_K(t)|\le 2\varepsilon_n\).  The false-negative case is
symmetric.
\end{proof}

\subsection{Exact recovery under a margin condition}\label{app:topk-exact}

\begin{corollary}[Exact top-\(K\) recovery]\label{cor:topk-exact-app}
Define
$
\Delta_K(t):=\theta_{(K)}-\theta_{(K+1)}.
$
If \(\Delta_K(t)>2\varepsilon_n\), then on \(\mathcal E_\infty\),
$
\widehat S_t=S_t.
$
In particular, if
$
\min_{t\in[\dt]}\Delta_K(t)>2\varepsilon_n,
$
then exact top-\(K\) recovery holds simultaneously over all tasks with probability
at least \(1-n^{-a}\).
\end{corollary}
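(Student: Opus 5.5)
The plan is to deduce Corollary~\ref{cor:topk-exact-app} from the deterministic inclusion established in the proof of Proposition~\ref{prop:topk-hamming-app}, namely that on \(\mathcal E_\infty\)
\[
\widehat S_t\triangle S_t\subseteq\{m:|\theta_m-\tau_K(t)|\le 2\varepsilon_n\}.
\]
Since \(|\widehat S_t|=|S_t|=K\), it suffices to show that \(\widehat S_t\setminus S_t\) is empty. Arguing by contradiction, suppose \(u\in\widehat S_t\setminus S_t\). The argument in that proof then produces a false negative \(v\in S_t\setminus\widehat S_t\) with \(\widehat\theta_u\ge\widehat\theta_v\), and consequently
\[
\theta_{(K)}-2\varepsilon_n\le\theta_v-2\varepsilon_n\le\theta_u\le\theta_{(K+1)}.
\]
This gives \(\Delta_K(t)=\theta_{(K)}-\theta_{(K+1)}\le 2\varepsilon_n\), contradicting the margin hypothesis. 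Hence \(\widehat S_t\setminus S_t=\emptyset\), and the equal cardinalities give \(S_t\setminus\widehat S_t=\emptyset\) as well, so \(\widehat S_t=S_t\).

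For clarity I would restate the chain directly rather than route it through \(\tau_K(t)\). On \(\mathcal E_\infty\), every \(v\in S_t\) satisfies \(\widehat\theta_v\ge\theta_{(K)}-\varepsilon_n\), and every \(u\notin S_t\) satisfies \(\widehat\theta_u\le\theta_{(K+1)}+\varepsilon_n<\theta_{(K)}-\varepsilon_n\), using \(\Delta_K(t)>2\varepsilon_n\). Thus every true top-\(K\) model has a strictly larger estimated score than every other model, and the estimated ranking places exactly \(S_t\) in the first \(K\) positions regardless of \(\prec_0\). The inequality is strict, so ties never occur across the boundary. Within-\(S_t\) ties may still occur, but they do not affect membership.

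For the simultaneous statement, the argument above is deterministic on \(\mathcal E_\infty\) for each fixed task. Under \(\min_t\Delta_K(t)>2\varepsilon_n\), it therefore applies to all \(t\in[\dt]\) on the single event \(\mathcal E_\infty\), which has probability at least \(1-n^{-a}\) by Theorem~\ref{thm:entrywise}, and no union bound over tasks is needed.

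I expect no genuine obstacle. The only point requiring care is the tie-breaking convention, which is handled by the strict separation shown above.
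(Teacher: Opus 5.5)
Your proposal is correct, and your direct chain in the second paragraph is essentially the paper's proof: on \(\mathcal E_\infty\), every \(m\in S_t\) and \(m'\notin S_t\) satisfy \(\widehat\theta_m-\widehat\theta_{m'}\ge\Delta_K(t)-2\varepsilon_n>0\), and this is applied to all tasks on the single event \(\mathcal E_\infty\). Your framing in the first sentence, via the inclusion \(\widehat S_t\triangle S_t\subseteq\{m:|\theta_m-\tau_K(t)|\le2\varepsilon_n\}\), would by itself only give exact recovery when \(\Delta_K(t)>4\varepsilon_n\); you are right to use the sharper intermediate inequality \(\theta_{(K)}-2\varepsilon_n\le\theta_u\le\theta_{(K+1)}\) instead.
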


\begin{proof}
Fix a task \(t\). On \(\mathcal E_\infty\), for every
\(m\in S_t\) and \(m'\notin S_t\),
\[
\widehat\theta_m-\widehat\theta_{m'}
\ge
\theta_m-\theta_{m'}-2\varepsilon_n
\ge
\theta_{(K)}-\theta_{(K+1)}-2\varepsilon_n
=
\Delta_K(t)-2\varepsilon_n.
\]
Hence, if \(\Delta_K(t)>2\varepsilon_n\), then
\(\widehat\theta_m>\widehat\theta_{m'}\) for every
\(m\in S_t\) and \(m'\notin S_t\), which implies
\(\widehat S_t=S_t\).

Because \(\mathcal E_\infty\) controls the entrywise error uniformly over all
tasks and models, the same argument applies simultaneously to every task.
Therefore, if
\(\min_{t\in[\dt]}\Delta_K(t)>2\varepsilon_n\), exact top-\(K\) recovery holds
for all tasks on \(\mathcal E_\infty\), which has probability at least
\(1-n^{-a}\).
\end{proof}

% \begin{corollary}[Exact top-\(K\) recovery]\label{cor:topk-exact-app}
% Define \(\Delta_K(t):=\theta_{(K)}-\theta_{(K+1)}\).  If
% \(\Delta_K(t)>4\varepsilon_n\), then on \(\mathcal E_\infty\) the boundary
% mass \(\mathcal R_{K,t}(2\varepsilon_n;\Thetastar)=0\), hence
% \(\widehat S_t=S_t\).  In particular, if
% \(\min_{t\in[\dt]}\Delta_K(t)>4\varepsilon_n\), then exact top-\(K\)
% recovery holds simultaneously over all tasks with probability at least
% \(1-n^{-a}\).
% \end{corollary}

% \begin{proof}
% Under \(\Delta_K(t)>4\varepsilon_n\), every \(m\in S_t\) has
% \(\theta_m-\tau_K(t)\ge\Delta_K(t)/2>2\varepsilon_n\), and symmetrically
% for \(m\notin S_t\); so
% \(\mathcal R_{K,t}(2\varepsilon_n;\Thetastar)=0\), and
% Proposition~\ref{prop:topk-hamming-app} applies.  The simultaneous
% statement follows by union bound over \(t\in[\dt]\).
% \end{proof}

\subsection{Minimax-optimality remark}\label{app:topk-minimax}

The proposition is a deterministic reduction from entrywise score
estimation to top-\(K\) Hamming accuracy.  For minimax optimality we
appeal to the single-task BTL ranking literature:
\cite{chen2022partial} characterize the minimax rate for normalized
Hamming partial recovery in BTL top-\(K\) ranking and show that MLE
attains both partial and exact recovery thresholds, while
\cite{chen2019spectral} establish the minimax sample complexity for
exact top-\(K\) identification through entrywise score control combined
with a \(K\)-versus-\((K+1)\) margin condition.  Our entrywise rate
\(\varepsilon_n\asymp\sqrt{\bar d\,\mathrm{polylog}(n\bar d)/n}\) and the
exact recovery margin \(2\varepsilon_n\) match these single-task minimax
characterizations up to logarithmic factors.  The gain from low-rank
structure is the factor \(\bar d\) instead of \(\bar d^{\,2}\) in the
per-task sample complexity; the dependence on \(\dt\) is only through
the union bound and is logarithmic.

\section{Proof of Theorem~\ref{thm:fdim-clt}: finite-dimensional inference and efficiency}\label{app:fdim-proof}

This appendix proves Theorem~\ref{thm:fdim-clt} of
Section~\ref{sec:fdim} together with the efficiency claim referenced in
the discussion of \(V_{\rm eff}(\Gamma)\).  Throughout we work under the
assumptions of Appendix~\ref{app:notation}.  The structure of the
appendix is as follows.

\begin{itemize}[leftmargin=1.6em,topsep=2pt,itemsep=3pt]
\item Appendix~\ref{app:fdim-proof-onestep} restates the one-step
estimator algorithm in matrix form (essentially a copy of
Section~\ref{sec:fdim} for self-containment).
\item Appendix~\ref{app:fdim-proof-1d} proves the scalar
(1-dimensional) semiparametric efficiency lower bound directly by an information
inequality.
\item Appendix~\ref{app:fdim-proof-loewner} extends the lower bound to
the joint Loewner inequality
\(\bar\Sigma\succeq\Sigma\) for fixed \(q=O(1)\) contrasts via reduction
to the scalar bound applied at every linear combination
\(\Gamma_u=\sum_j u_j\Gamma_j\), \(u\in\R^q\).
\item Appendix~\ref{app:fdim-proof-projector} writes out the
closed-form row-centered matrix tangent projector used throughout.
\item Appendix~\ref{app:fdim-proof-decomp} gives the exact foldwise
one-step identity, proves the second-order exact-rank normal leakage,
and establishes the single-contrast remainder bound by conditional
Taylor expansion and transported-tangent information perturbation.
\item Appendix~\ref{app:fdim-proof-uniform} extends the single-contrast
bound to a uniform statement over a polynomial-size contrast family by
union bound, with the per-term envelope/variance accounting carried out
explicitly.  This is the input to Appendix~\ref{app:ranking-proof}.
\item Appendix~\ref{app:fdim-proof-be} proves the Berry--Esseen rate
\(\rho_n\lesssim\sqrt{\bar d/n}\) for the standardized leading term,
with full computation of the second and third moments.
\item Appendix~\ref{app:fdim-proof-clt} combines the uniform one-step expansion of
Appendix~\ref{app:fdim-proof-uniform} with the oracle Gaussian approximation of
Appendix~\ref{app:fdim-proof-be} to obtain the multivariate rectangle CLT proving
Theorem~\ref{thm:fdim-clt}.
\item Appendices~\ref{app:fdim-proof-varcons} and~\ref{app:fdim-proof-cov}
prove the relative variance consistency and the covariance consistency
in correlation form, respectively, both of which are inputs to
Appendix~\ref{app:ranking-proof}.
\end{itemize}

\subsection{The one-step estimator and its plug-in operators}\label{app:fdim-proof-onestep}

%\begin{revisionblock}
Recall the cross-fitting one-step procedure in Section~\ref{sec:fdim}.
For a fixed fold \(\mathcal I_k\), let
\((\widehat\Theta^{(-k)},\widehat P_{\mathbb T}^{(-k)})\) be trained on
\(\mathcal I_k^c\).  Construct \(\widehat G_k\) by averaging the
plug-in Fisher weights over the known design law, as in
\eqref{eq:Ghat-def-app}; let
\(\widehat H_{\Gamma,k}=\widehat{\mathcal A}_k^\dagger\Gamma\).  The
fold estimator for
\(\psi_\Gamma(\Thetastar)=\ip{\Gamma}{\Thetastar}\) is
\begin{equation}\label{eq:onestep-app}
    \widehat\psi_{\Gamma,k}
    :=
    \psi_\Gamma(\widehat\Theta^{(-k)})
    +
    \frac1{|\mathcal I_k|}\sum_{i\in\mathcal I_k}
    s_\eta\bigl(Y_i,\widehat\eta_i^{(-k)}\bigr)\,
    \ip{\widehat H_{\Gamma,k}}{X_i},
    \qquad
    \widehat\eta_i^{(-k)}:=\ip{\widehat\Theta^{(-k)}}{X_i}.
\end{equation}
The final estimate is
\(\widehat\psi_\Gamma=\sum_k|\mathcal I_k|\widehat\psi_{\Gamma,k}/n\).
Conditional on the training sample, the complete plug-in
direction is fixed and the observations in each evaluation fold remain
i.i.d.  For the algebra below we suppress \(k\); every identity is
applied foldwise and then averaged.
%\end{revisionblock}

For a finite contrast family \(\Gamma_1,\ldots,\Gamma_q\), let
\[
    \psi_j:=\psi_{\Gamma_j}(\Thetastar),
    \quad
    \widehat\psi_j:=\widehat\psi_{\Gamma_j},
    \quad
    H_j^\star:=\mathcal A^\dagger\Gamma_j,
    \quad
    \widehat H_j:=\widehat{\mathcal A}^{\dagger}\Gamma_j,
\]
\[
    \phi_j(W_i):=s_\eta(Y_i,\eta_i^\star)\ip{H_j^\star}{X_i},
    \quad
    \widehat\phi_j(W_i):=s_\eta(Y_i,\widehat\eta_i)\ip{\widehat H_j}{X_i},
\]
\[
    \sigma_j^2:=V_{\rm eff}(\Gamma_j)=\E^\star[\phi_j^2]
    =\ip{P_{\mathbb T}\Gamma_j}{A_{\mathbb T}^{-1}P_{\mathbb T}\Gamma_j},
    \quad
    \Sigma_{jk}:=\E^\star[\phi_j\phi_k]
    =\ip{P_{\mathbb T}\Gamma_j}{A_{\mathbb T}^{-1}P_{\mathbb T}\Gamma_k},
\]
and the standardized oracle coordinate
\(Z_{ij}:=\phi_j(W_i)/\sigma_j\), so \(\E^\star Z_{ij}=0\) and
\(\E^\star Z_{ij}^2=1\).

\subsection{Single-contrast (1D) semiparametric efficiency lower bound}\label{app:fdim-proof-1d}

For any fixed contrast \(\Gamma\in\R^{\dt\times\dm}\), the
\emph{semiparametric efficiency bound} for any regular estimator
\(\widehat\psi\) of \(\psi_\Gamma(\Thetastar)\) is
\begin{equation}\label{eq:1d-eff-bound-app}
    \operatorname{avar}^\star\!\left\{
      \sqrt n\bigl(\widehat\psi-\psi_\Gamma(\Thetastar)\bigr)
    \right\}
    \;\ge\;
    V_{\rm eff}(\Gamma),
    \qquad
    V_{\rm eff}(\Gamma)
    =
    \ip{P_{\mathbb T}\Gamma}{A_{\mathbb T}^{-1}P_{\mathbb T}\Gamma}.
\end{equation}
We give the complete information-inequality argument for the present centered
low-rank matrix model.

\begin{theorem}[1D semiparametric efficiency bound]\label{thm:fdim-1d-lb-app}
For every regular estimator of \(\psi_\Gamma\) in the exact-rank,
row-centered model, its asymptotic variance is at least
\(V_{\rm eff}(\Gamma)\).  The influence function
\[
 \phi_\Gamma=s_\eta(Y,\eta^\star)\ip{H_\Gamma^\star}{X},
 \qquad H_\Gamma^\star=\mathcal A^\dagger\Gamma,
\]
has variance \(V_{\rm eff}(\Gamma)\) and attains this bound.
\end{theorem}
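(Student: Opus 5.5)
The plan is the classical parametric-submodel (information inequality) argument, specialized to the row-centered rank-\(r\) manifold.

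\textbf{Step 1: submodels and scores.} For every direction \(H\in\mathbb T\), I would build a smooth curve \(\Theta_h\) in the exact-rank-\(r\), row-centered model with \(\Theta_0=\Thetastar\) and \(\dot\Theta_0=H\). A concrete choice is
\[
\Theta_h=(U^\star+hD(\Sigma^\star)^{-1})\Sigma^\star(V^\star+hC(\Sigma^\star)^{-1})^\top,
\]
with \(H=U^\star C^\top+D(V^\star)^\top\) and \(C^\top\mathbf 1_{\dm}=0\). This curve preserves row-centering and has rank \(r\) for small \(h\); its tangent is \(H\) up to \(O(h^2)\). The design law \((\nu,\pi)\) is fixed (known), so the only perturbation enters through \(Y\mid X\). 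The score of the submodel at \(h=0\) is then \(S_H(W)=s_\eta(Y,\eta^\star)\ip{H}{X}\). By Assumption~\ref{ass:score-app}(i), its covariance is
\[
\E^\star[S_{H_1}S_{H_2}]=\E^\star[I(\eta^\star)\ip{H_1}{X}\ip{H_2}{X}]=\ip{GH_1}{H_2}.
\]
Hence the tangent set is \(\{S_H:H\in\mathbb T\}\), a closed finite-dimensional linear space. Along the curve, \(\psi_\Gamma(\Theta_h)\) has derivative \(\ip{\Gamma}{H}=\ip{P_{\mathbb T}\Gamma}{H}\).

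\textbf{Step 2: the one-dimensional bound.} Along each submodel, the Cram\'er--Rao / H\'ajek--Le Cam convolution bound says that any regular estimator has
\[
\operatorname{avar}\ge \frac{\ip{P_{\mathbb T}\Gamma}{H}^2}{\ip{GH}{H}}.
\]
Because of Assumption~\ref{ass:score-app}(ii), bounded \(|\eta|\le 2B\) gives differentiability in quadratic mean, and local asymptotic normality of the i.i.d.\ product follows directly. Taking the supremum over \(H\in\mathbb T\setminus\{0\}\) and writing \(\ip{GH}{H}=\ip{A_{\mathbb T}H}{H}\) for \(H\in\mathbb T\), Cauchy--Schwarz in the \(A_{\mathbb T}\)-inner product gives
\[
\sup_{H\in\mathbb T\setminus\{0\}}\frac{\ip{P_{\mathbb T}\Gamma}{H}^2}{\ip{A_{\mathbb T}H}{H}}=\ip{P_{\mathbb T}\Gamma}{A_{\mathbb T}^{-1}P_{\mathbb T}\Gamma}=V_{\rm eff}(\Gamma).
\]
The supremum is attained at \(H=H_\Gamma^\star\). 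Lemma~\ref{lem:weighted-moment-app} guarantees that \(A_{\mathbb T}\) is invertible, so this step is well defined.

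\textbf{Step 3: attainment.} It remains to check that \(\phi_\Gamma=S_{H^\star_\Gamma}\) is the efficient influence function. It lies in the tangent space. For every \(H\in\mathbb T\) it satisfies the pathwise-derivative identity
\[
\E^\star[\phi_\Gamma S_H]=\ip{GH_\Gamma^\star}{H}=\ip{A_{\mathbb T}H^\star_\Gamma}{H}=\ip{P_{\mathbb T}\Gamma}{H},
\]
where the middle equality uses that \(H,H^\star_\Gamma\in\mathbb T\). Its variance is \(\ip{GH^\star_\Gamma}{H^\star_\Gamma}=V_{\rm eff}(\Gamma)\). Any regular asymptotically linear estimator with this influence function therefore attains the bound.

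\textbf{Main obstacle.} The step I expect to need the most care is making Step 1 rigorous on the curved manifold. Three things must be confirmed: the tangent cone of the exact-rank row-centered set at \(\Thetastar\) is exactly \(\mathbb T\), so that no larger tangent set is available; the curves stay inside \(\mathcal C_B\)-type bounded sets for small \(h\); and the identification constraint removes the null directions \(c\mathbf 1_{\dm}^\top\). The last point holds because \(P_{\mathbb T}H\,\mathbf 1_{\dm}=0\). I would also restrict attention to a fixed \(n\)-indexed triangular array with local parameters \(h/\sqrt n\), so that the convolution theorem applies at each \(n\) with constants uniform in \(\bar d\).
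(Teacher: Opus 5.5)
Your proposal is correct and follows essentially the same route as the paper. Both arguments use the same factorized curve through the exact-rank, row-centered manifold, the score $s_\eta(Y,\eta^\star)\langle H,X\rangle$ with information $\langle H,A_{\mathbb T}H\rangle$, a Cauchy--Schwarz evaluation of the supremum at $H_\Gamma^\star$, and the canonical-gradient check. The paper adds only a remark that treating $(\nu,\pi)$ as an unknown nuisance contributes scores $a(X)$ that are orthogonal to the BTL scores, so the bound is unchanged; since you take the design as known, as the paper does for inference, you do not need that step.
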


\begin{proof}
First verify that every tangent direction is generated by a path that
stays inside the exact-rank model.  The bounded-logit constant can be
enlarged by a fixed slack around \(\Thetastar\), so the local paths
below remain in the same bounded-logit neighborhood.  Write
\(H=U^\star A^\top+C(V^\star)^\top\), with
\(A^\top\mathbf1_{\dm}=0\), and define
\[
 \Theta_\varepsilon
 =(U^\star\Sigma^\star+\varepsilon C)
  (V^\star+\varepsilon A(\Sigma^\star)^{-1})^\top .
\]
For all sufficiently small \(\varepsilon\), both factors have column
rank \(r\), so \(\Theta_\varepsilon\) has exactly rank \(r\).
Moreover,
\[
 \Theta_\varepsilon\mathbf1_{\dm}=0,\qquad
 \Theta_0=\Thetastar,\qquad
 {d\over d\varepsilon}\Theta_\varepsilon\big|_{\varepsilon=0}=H,
\]
because \((V^\star)^\top\mathbf1_{\dm}=0\) and
\((A(\Sigma^\star)^{-1})^\top\mathbf1_{\dm}
=(\Sigma^\star)^{-1}A^\top\mathbf1_{\dm}=0\).

Along this path the one-observation score at zero is
\[
 \dot\ell_H(X,Y)=s_\eta(Y,\eta^\star)\ip{H}{X},
\]
and its Fisher information is
\(\E\dot\ell_H^2=\ip{H}{GH}=\ip{H}{A_{\mathbb T}H}\).
Bounded logits and the bounded score derivatives imply differentiability
in quadratic mean, hence the \(n^{-1/2}\)-local submodels are LAN.
If the sampling law of \(X\) is also treated as an unknown nuisance,
its tangent scores have the form \(a(X)\) with \(\E a(X)=0\).  They are
orthogonal to every conditional BTL score because
\[
 \E\!\left[a(X)s_\eta(Y,\eta^\star)\ip{H}{X}\right]
 =\E\!\left[a(X)\ip{H}{X}\E\{s_\eta(Y,\eta^\star)\mid X\}\right]=0.
\]
Thus enlarging the nuisance tangent space to include unknown
\((\nu,\pi)\) does not change the canonical gradient or the efficiency
bound below.
The pathwise derivative of the target is
\[
 \dot\psi_\Gamma(H)=\ip{\Gamma}{H}
                   =\ip{P_{\mathbb T}\Gamma}{H}.
\]
The H\'ajek--Le Cam convolution bound for every regular estimator therefore
gives the variance lower bound
\[
 \sup_{H\in\mathbb T\setminus\{0\}}
 {\,\ip{P_{\mathbb T}\Gamma}{H}^2\over
    \ip{H}{A_{\mathbb T}H}}
 =
 \ip{P_{\mathbb T}\Gamma}
     {A_{\mathbb T}^{-1}P_{\mathbb T}\Gamma}
 =V_{\rm eff}(\Gamma).
\]
The maximizing direction is
\(H_\Gamma^\star=A_{\mathbb T}^{-1}P_{\mathbb T}\Gamma\).

Finally,
\[
 \E^\star\phi_\Gamma^2
 =\ip{H_\Gamma^\star}{GH_\Gamma^\star}
 =\ip{H_\Gamma^\star}{A_{\mathbb T}H_\Gamma^\star}
 =V_{\rm eff}(\Gamma),
\]
and, for every \(H\in\mathbb T\),
\[
 \E^\star[\phi_\Gamma\dot\ell_H]
 =\ip{H_\Gamma^\star}{GH}
 =\ip{P_{\mathbb T}\Gamma}{H}
 =\dot\psi_\Gamma(H).
\]
Thus \(\phi_\Gamma\) is the canonical gradient and the one-step
estimator attaining it is semiparametrically efficient.
\end{proof}
\subsection{Multivariate (Loewner) lower bound by 1D + arbitrary \texorpdfstring{$u$}{u}}\label{app:fdim-proof-loewner}

For a fixed finite contrast family
\(\Gamma_1,\ldots,\Gamma_q\) (\(q=O(1)\)), the candidate efficient
covariance is
\(\Sigma_{jk}=\ip{P_{\mathbb T}\Gamma_j}{A_{\mathbb T}^{-1}P_{\mathbb T}\Gamma_k}\).

\begin{proposition}[Joint semiparametric efficiency lower bound]\label{prop:fdim-loewner-app}
Let \(\bar\psi=(\bar\psi_1,\ldots,\bar\psi_q)^\top\) be any regular
estimator of \(\psi=(\psi_1,\ldots,\psi_q)^\top\), with
\(\sqrt n(\bar\psi-\psi)\rightsquigarrow\mathcal N(0,\bar\Sigma)\) for
some covariance \(\bar\Sigma\succeq 0\).  Then
\(\bar\Sigma-\Sigma\succeq 0\) in Loewner order; equivalently,
\(u^\top\bar\Sigma u\ge u^\top\Sigma u\) for every \(u\in\R^q\).
\end{proposition}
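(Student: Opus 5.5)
The plan is to reduce the Loewner inequality to the scalar bound of Theorem~\ref{thm:fdim-1d-lb-app}, applied to every fixed linear combination of the contrasts. Fix \(u\in\R^q\) and set \(\Gamma_u:=\sum_j u_j\Gamma_j\), so that \(u^\top\psi=\ip{\Gamma_u}{\Thetastar}=\psi_{\Gamma_u}(\Thetastar)\). This target is itself a linear contrast. The only structure Theorem~\ref{thm:fdim-1d-lb-app} uses is the tangent-space paths and the pathwise derivative \(\ip{P_{\mathbb T}\Gamma_u}{H}\); the bounded-support assumption on \(\Gamma\) plays no role there. The scalar bound therefore applies to \(\Gamma_u\) without modification.

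The first step is to check that \(u^\top\bar\psi\) is a regular estimator of \(\psi_{\Gamma_u}\). Regularity of the vector estimator \(\bar\psi\) means that along every local path \(\Theta_{h/\sqrt n}\) with \(h\in\mathbb T\), the limit law of \(\sqrt n\{\bar\psi-\psi(\Theta_{h/\sqrt n})\}\) does not depend on \(h\). Applying the continuous linear map \(x\mapsto u^\top x\) preserves this invariance. By the continuous mapping theorem, \(\sqrt n(u^\top\bar\psi-u^\top\psi)\rightsquigarrow\mathcal N(0,u^\top\bar\Sigma u)\), so the asymptotic variance of \(u^\top\bar\psi\) is \(u^\top\bar\Sigma u\).

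The second step is to identify the scalar bound. By linearity of \(P_{\mathbb T}\) and \(A_{\mathbb T}^{-1}\),
\[
V_{\rm eff}(\Gamma_u)=\ip{P_{\mathbb T}\Gamma_u}{A_{\mathbb T}^{-1}P_{\mathbb T}\Gamma_u}=\sum_{j,k}u_ju_k\ip{P_{\mathbb T}\Gamma_j}{A_{\mathbb T}^{-1}P_{\mathbb T}\Gamma_k}=u^\top\Sigma u .
\]
Theorem~\ref{thm:fdim-1d-lb-app} then gives \(u^\top\bar\Sigma u\ge u^\top\Sigma u\). Since \(u\) was arbitrary, \(\bar\Sigma-\Sigma\succeq0\).

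The main point requiring care is the notion of regularity. The scalar statement must be invoked in the form of the H\'ajek--Le Cam convolution theorem. That theorem asserts that the limit law is the efficient Gaussian \(\mathcal N(0,V_{\rm eff})\) convolved with an independent noise law. It yields the variance inequality \emph{when a Gaussian limit is assumed}, which matches the hypothesis here. One should also note that \(u\) is fixed independently of \(n\), so no uniformity in \(u\) is needed; this is where \(q=O(1)\) enters.

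If \(\Gamma_u\) happens to satisfy \(P_{\mathbb T}\Gamma_u=0\), both sides vanish on the \(\Sigma\) side and the inequality is trivial. Finally, the efficient influence functions \(\phi_{\Gamma_j}\) satisfy \(\sum_ju_j\phi_{\Gamma_j}=\phi_{\Gamma_u}\). Hence \(\Sigma\) is exactly the covariance of the canonical gradients, which confirms that the bound is attained by the one-step estimator.
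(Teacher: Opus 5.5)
Your proposal is correct and takes essentially the same route as the paper. You reduce to the scalar bound of Theorem~\ref{thm:fdim-1d-lb-app} at the combined contrast \(\Gamma_u=\sum_j u_j\Gamma_j\), transfer regularity and the asymptotic variance \(u^\top\bar\Sigma u\) through the linear map \(x\mapsto u^\top x\), and use bilinearity to get \(V_{\rm eff}(\Gamma_u)=u^\top\Sigma u\); your side remarks (the convolution-theorem reading of regularity, the degenerate case \(P_{\mathbb T}\Gamma_u=0\)) are consistent with Remark~\ref{rem:fdim-singular-app} and do not change the argument.
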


\begin{proof}
We reduce to the scalar bound applied at every linear combination
\(u^\top\psi=\psi_{\Gamma_u}\) for the combined contrast
\(\Gamma_u:=\sum_{j=1}^q u_j\Gamma_j\).

\textbf{Step 1: \(u^\top\bar\psi\) is a regular estimator of \(\psi_{\Gamma_u}\).}
Since \(\bar\psi\) is regular for the vector \(\psi\), the linear
functional \(u^\top\bar\psi\) is regular for the corresponding scalar
target \(u^\top\psi=\sum_j u_j\psi_j=\sum_j u_j\ip{\Gamma_j}{\Thetastar}=\ip{\Gamma_u}{\Thetastar}=\psi_{\Gamma_u}(\Thetastar)\).
Its asymptotic variance is \(u^\top\bar\Sigma u\) by the continuous
mapping theorem.

\textbf{Step 2: scalar lower bound.}
Apply Theorem~\ref{thm:fdim-1d-lb-app} to the scalar functional
\(\psi_{\Gamma_u}\): for every regular estimator,
\[
    \operatorname{avar}^\star\!\left\{
      \sqrt n\bigl(u^\top\bar\psi-u^\top\psi\bigr)
    \right\}
    \ge
    V_{\rm eff}(\Gamma_u).
\]
Since the left side equals \(u^\top\bar\Sigma u\), we obtain
\(u^\top\bar\Sigma u\ge V_{\rm eff}(\Gamma_u)=\ip{P_{\mathbb T}\Gamma_u}{A_{\mathbb T}^{-1}P_{\mathbb T}\Gamma_u}\).

\textbf{Step 3: bilinearity.}
By linearity of \(P_{\mathbb T}\),
\(P_{\mathbb T}\Gamma_u=\sum_j u_j P_{\mathbb T}\Gamma_j\), and by
bilinearity of \(\ip{\cdot}{A_{\mathbb T}^{-1}\cdot}\),
\[
    \ip{P_{\mathbb T}\Gamma_u}{A_{\mathbb T}^{-1}P_{\mathbb T}\Gamma_u}
    =
    \sum_{j,k}u_j u_k\ip{P_{\mathbb T}\Gamma_j}{A_{\mathbb T}^{-1}P_{\mathbb T}\Gamma_k}
    =
    u^\top\Sigma u.
\]
Combining steps 2 and 3 gives \(u^\top(\bar\Sigma-\Sigma)u\ge 0\).
Since this holds for every \(u\in\R^q\), \(\bar\Sigma\succeq\Sigma\) in
Loewner order.
\end{proof}

\begin{remark}[Singular covariance and redundant contrasts]\label{rem:fdim-singular-app}
The covariance \(\Sigma\) may be singular without invalidating the
proof.  Suppose \(\Gamma_3=\Gamma_1+\Gamma_2\) so that
\(\phi_3=\phi_1+\phi_2\) and the third row of \(\Sigma\) is the exact
sum of the first two.  The Loewner bound still holds: the proof above
does not require invertibility, only the scalar efficiency bound for
every linear combination \(u^\top\psi\).  If \(u^\top\Sigma u=0\) for
some \(u\ne 0\), then the corresponding contrast has zero efficient
variance; this is an exact local redundancy and produces a degenerate
Gaussian limit on a lower-dimensional subspace.  In particular, the
inverse \(A_{\mathbb T}^{-1}\) is well-defined on \(\mathbb T\),
so the proof is unaffected.
\end{remark}

\subsection{Closed-form matrix tangent projector}\label{app:fdim-proof-projector}

We write out the closed form of \(P_{\mathbb T}\) before proceeding to
the decomposition.

\begin{lemma}[Closed-form matrix tangent projector]\label{lem:fdim-projector-app}
Let \(\Thetastar=U^\star\Sigma^\star(V^\star)^\top\) with
\(U^\star\in\R^{\dt\times r}\), \(V^\star\in\R^{\dm\times r}\) and
\((V^\star)^\top\mathbf 1_{\dm}=0\), which follows from
\(\Thetastar\mathbf1_{\dm}=0\).  Define
\(P_{U^\star}:=U^\star(U^\star)^\top\),
\(P_{V^\star}:=V^\star(V^\star)^\top\), and
\(R_{\rm m}:=I_{\dm}-\dm^{-1}\mathbf 1\mathbf 1^\top\).  Then
\begin{equation}\label{eq:tangent-projector-matrix}
    P_{\mathbb T}\Gamma
    =
    P_{U^\star}\Gamma R_{\rm m}
    +(I-P_{U^\star})\Gamma P_{V^\star},
    \qquad\Gamma\in\R^{\dt\times\dm}.
\end{equation}
\end{lemma}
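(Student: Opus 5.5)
To prove Lemma~\ref{lem:fdim-projector-app}, I will verify directly that the map
\(\mathcal P(\Gamma):=P_{U^\star}\Gamma R_{\rm m}+(I-P_{U^\star})\Gamma P_{V^\star}\)
is the orthogonal projector onto \(\mathbb T\) in the Frobenius inner product. Three facts suffice: (i) \(\mathcal P\) maps into \(\mathbb T\); (ii) \(\mathcal P\) fixes every element of \(\mathbb T\); (iii) \(\mathcal P\) is self-adjoint. Together, (i)–(ii) make \(\mathcal P\) an idempotent map with range exactly \(\mathbb T\), and (iii) makes it the orthogonal projector. Throughout I use the identity \(P_{V^\star}R_{\rm m}=R_{\rm m}P_{V^\star}=P_{V^\star}\), which follows from \((V^\star)^\top\mathbf 1_{\dm}=0\). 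That in turn holds because \(\Thetastar\mathbf1=0\) and \(U^\star\Sigma^\star\) has full column rank.

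For (i), write
\[
P_{U^\star}\Gamma R_{\rm m}=U^\star\bigl(R_{\rm m}\Gamma^\top U^\star\bigr)^\top .
\]
This has the form \(U^\star A^\top\) with \(A:=R_{\rm m}\Gamma^\top U^\star\), and \(A^\top\mathbf1=(U^\star)^\top\Gamma R_{\rm m}\mathbf1=0\). The second term equals \(C(V^\star)^\top\) with \(C:=(I-P_{U^\star})\Gamma V^\star\). Hence \(\mathcal P(\Gamma)\in\mathbb T\).

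For (ii), take \(H=U^\star A^\top+C(V^\star)^\top\) with \(A^\top\mathbf1=0\). Then \(A^\top R_{\rm m}=A^\top\) and \(P_{V^\star}R_{\rm m}=P_{V^\star}\), so
\[
P_{U^\star}HR_{\rm m}=U^\star A^\top+P_{U^\star}C(V^\star)^\top .
\]
Also \((I-P_{U^\star})U^\star=0\), which gives \((I-P_{U^\star})HP_{V^\star}=(I-P_{U^\star})C(V^\star)^\top\). Adding the two terms returns \(H\).

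For (iii), note that \(P_{U^\star}\), \(I-P_{U^\star}\), \(R_{\rm m}\), and \(P_{V^\star}\) are all symmetric. Therefore \(\langle\mathcal P\Gamma,\Lambda\rangle=\mathrm{tr}(\Gamma^\top P_{U^\star}\Lambda R_{\rm m})+\mathrm{tr}(\Gamma^\top(I-P_{U^\star})\Lambda P_{V^\star})\), which equals \(\langle\Gamma,\mathcal P\Lambda\rangle\) by cyclicity of the trace. As a byproduct, \(\mathcal P(\Gamma)\mathbf1=0\), consistent with the main text's description of \(P_{\mathbb T}\).

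There is no real obstacle. The only step needing care is (ii): one must check that the centering constraint on \(A\), together with \(V^\star\perp\mathbf1\), makes \(R_{\rm m}\) act trivially on both pieces of \(H\). This is exactly where the identification constraint enters.
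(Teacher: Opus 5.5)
Your proof is correct. Steps (i) and (ii) show that $\mathcal P$ is idempotent with range exactly $\mathbb T$, and (iii) makes it the orthogonal projector. The constraint $(V^\star)^\top\mathbf 1_{\dm}=0$ is used exactly where it must be: in (ii), via $(V^\star)^\top R_{\rm m}=(V^\star)^\top$. The paper argues differently, by a block decomposition. It splits an arbitrary $\Gamma$ into mutually orthogonal pieces $L\Gamma R$, using the complementary left projectors $P_{U^\star},\,I-P_{U^\star}$ and the right projectors $P_{V^\star},\,R_{\rm m}-P_{V^\star},\,I-R_{\rm m}$. The middle right factor is a projector precisely because $(V^\star)^\top\mathbf 1_{\dm}=0$. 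The paper then keeps the tangent pieces $P_{U^\star}\Gamma R_{\rm m}$ and $(I-P_{U^\star})\Gamma P_{V^\star}$. In that route orthogonality and idempotence come for free, since each map $\Gamma\mapsto L\Gamma R$ is an orthogonal projector and the pieces are mutually orthogonal. It also exposes a finer orthogonal splitting of $\mathbb T$ into three blocks: $P_{U^\star}(\cdot)P_{V^\star}$, $P_{U^\star}(\cdot)(R_{\rm m}-P_{V^\star})$ and $(I-P_{U^\star})(\cdot)P_{V^\star}$. What it leaves implicit is that the kept blocks span exactly the parametrized set $\{U^\star A^\top+C(V^\star)^\top:\ A^\top\mathbf 1_{\dm}=0\}$. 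That is exactly what your steps (i) and (ii) verify. Your route gives up the structural picture in exchange for a short trace computation in (iii), but it is fully explicit and leaves no identification step unchecked.
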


\begin{proof}
The (unconstrained) rank-\(r\) tangent space at \(\Thetastar\) is
\(\{U^\star A^\top+B(V^\star)^\top:A\in\R^{\dm\times r},B\in\R^{\dt\times r}\}\).
Differentiating \(\Theta\mathbf1_{\dm}=0\) restricts the first
component by \(A^\top\mathbf1_{\dm}=0\); the second is already centered
because \((V^\star)^\top\mathbf1_{\dm}=0\).  Decompose an arbitrary
matrix into the four mutually orthogonal blocks determined by
\((P_{U^\star},I-P_{U^\star})\) on the left and
\((P_{V^\star},R_{\rm m}-P_{V^\star},I-R_{\rm m})\) on the right.
Keeping exactly the two tangent blocks gives
\eqref{eq:tangent-projector-matrix}.
\end{proof}

%\begin{revisionblock}
\begin{lemma}[Foldwise geometry and information bounds]
\label{lem:foldwise-geometry-app}
Let \(\widehat\Theta\) be the exact-rank-\(r\), row-centered estimator
constructed by the refinement procedure on the training sample of a
cross-fitting fold, and let \(\widehat{\mathbb T}\) be its tangent
space.  Write \(\mathbb P_{\rm ev}\) for the empirical measure on
that fold's independent evaluation subsample and define
\[
 \ip{\widehat G_{\rm ev}H_1}{H_2}
 :=\mathbb P_{\rm ev}\!\left[
 I(\widehat\eta)\ip{H_1}{X}\ip{H_2}{X}\right].
\]
Under the conditions of
Theorem~\ref{thm:entrywise}, uniformly over the fixed number of folds
and with probability at least \(1-n^{-a}\),
\begin{align}
 \frac{c}{\dstar}\Fnorm D^2
 \le \ip{D}{QD}
 \le \frac{C}{\dstar}\Fnorm D^2
 &\quad(D\in\mathbb T+\widehat{\mathbb T},\ 
         Q\in\{\widehat G,\widehat G_{\rm ev}\}),
         \label{eq:foldwise-energy-app}\\
 \max_{Q\in\{\widehat G,\widehat G_{\rm ev}\}}
 \norm{\widehat P_{\mathbb T}(Q-G)
       \widehat P_{\mathbb T}}_{\rm op}
 &\le \frac{Cr_n}{\dstar}.\label{eq:foldwise-info-conc-app}
\end{align}
There is an ambient orthogonal Procrustes map \(W\), equal to the
identity on \((\mathbb T+\widehat{\mathbb T})^\perp\), such that
\begin{equation}\label{eq:foldwise-transport-app}
 W\mathbb T=\widehat{\mathbb T},\qquad
 \norm{W-I}_{\rm op}\le Cr_n,\qquad
 \max_{\Gamma\in\mathcal F_{\rm gap}}
 \frac{\Fnorm{W^*\widehat P_{\mathbb T}\Gamma-P_{\mathbb T}\Gamma}}
      {\Fnorm{P_{\mathbb T}\Gamma}}
 \le \frac{Cr_n}{\alpha_0}.
\end{equation}
The same rowwise perturbation argument gives the induced-norm bounds
\begin{equation}\label{eq:foldwise-projector-infty-app}
 \norm{\widehat P_{\mathbb T}-P_{\mathbb T}}_{\infty\to\infty}
 \le Cr_n,
 \qquad
 \norm{\widehat P_{\mathbb T}}_{\infty\to\infty}
 \vee\norm{P_{\mathbb T}}_{\infty\to\infty}\le C.
\end{equation}
Moreover, for \(\Delta=\widehat\Theta-\Thetastar\),
\begin{equation}\label{eq:derived-normal-leakage-app}
 \infnorm{(I-P_{\mathbb T})\Delta}
 +\infnorm{(I-\widehat P_{\mathbb T})\Delta}
 \le Cr_n^2.
\end{equation}
\end{lemma}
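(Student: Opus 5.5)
The plan is to derive every claim of Lemma~\ref{lem:foldwise-geometry-app} from the entrywise event of Theorem~\ref{thm:entrywise} applied on the training complement of the fold. On that event \(\infnorm{\widehat\Theta-\Thetastar}\le Cr_n\), \(\widehat\Theta\) has exact rank \(r\), and \(\widehat\Theta\mathbf1_{\dm}=0\). The first step is to convert this into subspace control. Since \(\Fnorm{\widehat\Theta-\Thetastar}\le\sqrt{\dstar}\,Cr_n\) and \(\sigma_r^\star\asymp\sqrt{\dstar}\) (Assumption~\ref{ass:signal-app}), Wedin's theorem gives \(\norm{\widehat U\widehat U^\top-P_{U^\star}}_{\rm op}+\norm{\widehat V\widehat V^\top-P_{V^\star}}_{\rm op}\le Cr_n\). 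For rowwise control I will use the factor bounds of Propositions~\ref{prop:left-final-app} and~\ref{prop:right-final-app}: \(\widehat\Theta^\top=\overline L\widetilde A^\top\) exhibits the column spaces directly, so after Procrustes alignment \(\twoninfnorm{\widehat U O-U^\star}\le Cr_n\dt^{-1/2}\) and \(\twoninfnorm{\widehat VO'-V^\star}\le Cr_n\dm^{-1/2}\). Both estimated factors remain incoherent, and \(\widehat V^\top\mathbf1_{\dm}=0\) holds exactly.

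Next I will plug these rowwise bounds into the closed form \eqref{eq:tangent-projector-matrix}. Applied to an atom \(E_\omega\), this gives \(\oneNorm{(\widehat P_{\mathbb T}-P_{\mathbb T})E_\omega}\le Cr_n\): each term such as \(\widehat P_U E_\omega R_{\rm m}-P_UE_\omega R_{\rm m}\) is a rank-one product of a row difference with incoherent factors, and its \(\ell_1\) mass is at most \(\sqrt{\dt}\,\twoninfnorm{\cdot}\) times \(O(1)\). Together with self-adjointness this proves \eqref{eq:foldwise-projector-infty-app}. For the transport map \eqref{eq:foldwise-transport-app}, I will take \(W\) to be the direct-rotation (Davis--Kahan canonical) orthogonal map between \(\mathbb T\) and \(\widehat{\mathbb T}\) inside \(\mathbb T+\widehat{\mathbb T}\), extended by the identity. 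Then \(\norm{W-I}_{\rm op}\lesssim\norm{\widehat P_{\mathbb T}-P_{\mathbb T}}_{\rm op}\le Cr_n\), and
\[
\Fnorm{W^*\widehat P_{\mathbb T}\Gamma-P_{\mathbb T}\Gamma}\le 2\norm{W-I}_{\rm op}\Fnorm{\Gamma}+\Fnorm{(\widehat P_{\mathbb T}-P_{\mathbb T})\Gamma}.
\]
This is too crude, because \(\Fnorm\Gamma=\sqrt2\) while \(\Fnorm{P_{\mathbb T}\Gamma}\asymp\sqrt{\bar d/\dstar}\). I will therefore bound \(\Fnorm{(\widehat P_{\mathbb T}-P_{\mathbb T})\Gamma}\) for sparse \(\Gamma\) using rowwise leverage: every term carries a factor \(\twoninfnorm{\widehat U-U^\star}\) or \(\twoninfnorm{\widehat V-V^\star}\), giving \(Cr_n\sqrt{\bar d/\dstar}\). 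Similarly, \(W-I\) restricted to the range of \(\widehat P_{\mathbb T}\Gamma\) contributes the same order. Dividing by the alignment lower bound (Assumption~\ref{ass:alignment-app}) then yields \(Cr_n/\alpha_0\). I expect this sparse-relative transport estimate to be the main obstacle, since operator-norm perturbation alone loses a factor \(\sqrt{\dstar/\bar d}\).

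For the energy bounds \eqref{eq:foldwise-energy-app}, note that every \(D\in\mathbb T+\widehat{\mathbb T}\) is row-centered. Lemma~\ref{lem:weighted-moment-app}, with \(I(\widehat\eta)\in[c_{B'},1/4]\) (valid because \(|\widehat\eta|\le 2B+2Cr_n\)), then handles \(Q=\widehat G\) directly. For \(\widehat G_{\rm ev}\), I will use the quadratic RSC argument of Theorem~\ref{thm:pairwise-quadratic-rsc-app}. Elements of \(\mathbb T+\widehat{\mathbb T}\) have rank at most \(4r\), and by incoherence their spikiness is \(O(\sqrt{\bar d})\), so \(\alpha_{\rm sp}\beta_{\rm ra}\lesssim\sqrt{\bar d}\). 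This lies in the cone \(\mathcal C_{\rm pw}\) once \(n\gtrsim\bar d^2\)? That requirement is too strong, so I will instead rerun Bousquet's inequality over the finite-dimensional class directly. Conditional on training, the space is fixed with dimension \(O(r\bar d)\) and the envelope is \(\bar d/\dstar\), so matrix Bernstein on \(\widehat P_{\mathbb T}X\otimes X\widehat P_{\mathbb T}\) gives relative error \(\sqrt{\bar d\log/n}\le r_n\). This also proves \eqref{eq:foldwise-info-conc-app} for \(\widehat G_{\rm ev}\). For \(\widehat G\), I write \(\widehat G-G\) as an expectation with weights \(I(\widehat\eta)-I(\eta^\star)=O(r_n)\) uniformly, which gives \(Cr_n/\dstar\) by Lemma~\ref{lem:weighted-moment-app}.

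Finally, for the normal leakage \eqref{eq:derived-normal-leakage-app}, I will use the exact identity \((I-P_{\mathbb T})\Delta=(I-P_{U^\star})\widehat\Theta(I-P_{V^\star})\), valid since \(\Delta\) is row-centered and \(\Thetastar\) is annihilated. Writing \(\widehat\Theta=\widehat U\widehat\Sigma\widehat V^\top\), this equals \((I-P_{U^\star})\widehat U\,\widehat\Sigma\,\widehat V^\top(I-P_{V^\star})\). Each entry is therefore bounded by
\[
\twoninfnorm{(I-P_{U^\star})\widehat U}\,\norm{\widehat\Sigma}_{\rm op}\,\twoninfnorm{(I-P_{V^\star})\widehat V}\lesssim (r_n\dt^{-1/2})\sqrt{\dstar}\,(r_n\dm^{-1/2})=r_n^2,
\]
using the rowwise subspace bounds from the first step. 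The term \((I-\widehat P_{\mathbb T})\Delta=-(I-\widehat P_U)\Thetastar(I-\widehat P_V)\) is handled symmetrically. A union bound over the fixed number of folds completes the proof.
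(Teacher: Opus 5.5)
Your proposal follows the paper's proof essentially step for step: factor bounds from Propositions~\ref{prop:left-final-app} and~\ref{prop:right-final-app} plus Wedin for subspace control; the closed-form projector \eqref{eq:tangent-projector-matrix} for atomwise $\ell_1$ and Frobenius perturbation bounds, with self-adjointness for the $\infty\to\infty$ norm; a canonical principal-angle rotation $W$ followed by division by $\alpha_0$; Lipschitz Fisher weights for $\widehat G-G$ and conditional matrix Bernstein for $\widehat G_{\rm ev}$; and the exact-rank identity for the normal leakage, bounded by a product of two $2,\infty$ norms. Your deviations are cosmetic: you split $W^*\widehat P_{\mathbb T}\Gamma-P_{\mathbb T}\Gamma=(W^*-I)\widehat P_{\mathbb T}\Gamma+(\widehat P_{\mathbb T}-P_{\mathbb T})\Gamma$ instead of using the paper's polar-correction argument, which makes your anticipated ``main obstacle'' routine, and you factor the leakage through $\widehat U\widehat\Sigma\widehat V^\top$ rather than $(\widehat A-A^\star)(\widehat L-L^\star)^\top$.

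One caveat applies equally to the paper's argument. The $\widehat G_{\rm ev}$ energy bound on the sum space $\mathbb T+\widehat{\mathbb T}$ needs an envelope $\sup_X\Fnorm{P_{\mathbb T+\widehat{\mathbb T}}X}^2\lesssim\bar d/\dstar$. This follows neither from Bernstein on $\widehat P_{\mathbb T}X$ alone nor from the rowwise factor bounds, because two nearby incoherent subspaces can span a coherent one, so that step needs its own justification in both versions.
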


\begin{proof}
We condition first on the training sample.  The model- and task-factor
bounds in Propositions~\ref{prop:left-final-app} and
\ref{prop:right-final-app}, together with the Gram lower bounds used in
their proofs, imply after their common Procrustes alignment
\[
 \twoninfnorm{\widehat L-L^\star}
 \le C\sqrt{\frac{\dt\bar d}{n}}\,\operatorname{polylog}(n\bar d),
 \qquad
 \twoninfnorm{\widehat A-A^\star}
 \le C\sqrt{\frac{\bar d}{\dt n}}\,
       \operatorname{polylog}(n\bar d).
\]
The second display follows by substituting the first bound into
\eqref{eq:task-factor-rate-app}; its \(n^{-1/2}\) term is no larger
because \(\bar d\ge\dt\).  The corresponding Frobenius bounds and the
signal-strength condition give, by Wedin's theorem,
\[
 \norm{P_{\widehat U}-P_{U^\star}}_{\rm op}
 +\norm{P_{\widehat V}-P_{V^\star}}_{\rm op}\le Cr_n.
\]
The closed-form projector~\eqref{eq:tangent-projector-matrix} and the
rowwise factor bounds sharpen this, for every score-gap atom, to
\[
 \Fnorm{(\widehat P_{\mathbb T}-P_{\mathbb T})\Gamma}
 \le Cr_n\sqrt{\frac{\bar d}{\dstar}}\,\Fnorm\Gamma.
\]
Applying the same formula to a canonical atom and summing the two
fixed-rank factor blocks shows
\[
 \max_\omega\oneNorm{(\widehat P_{\mathbb T}-P_{\mathbb T})E_\omega}
 \le Cr_n.
\]
Self-adjointness converts this row bound into the first assertion of
\eqref{eq:foldwise-projector-infty-app}; the second follows by the
triangle inequality and
\(\max_\omega\oneNorm{P_{\mathbb T}E_\omega}\le C\), proved directly
from~\eqref{eq:tangent-projector-matrix} and incoherence.
Let \(Q:\mathbb T\to\widehat{\mathbb T}\) be the canonical
principal-angle isometry
\[
 Q=\widehat P_{\mathbb T}P_{\mathbb T}
   (P_{\mathbb T}\widehat P_{\mathbb T}P_{\mathbb T})^{-1/2}
   \quad\text{on }\mathbb T.
\]
The polar correction differs from the identity by \(O(r_n^2)\), while
the preceding atomwise bound controls the first-order projector term.
Extending \(Q\) by the inverse principal-angle rotations and by the
identity on \((\mathbb T+\widehat{\mathbb T})^\perp\) gives \(W\).
Dividing the resulting
atomwise bound by Assumption~\ref{ass:alignment-app} proves
\eqref{eq:foldwise-transport-app}.

For the normal components, write the refinement factorization as
\(\widehat\Theta=\widehat A\widehat L^\top\) and
\(\Thetastar=A^\star(L^\star)^\top\), where the column spaces are the
corresponding left and right singular spaces.  Exact rank gives
\[
 (I-P_{U^\star})\Delta(I-P_{V^\star})
 =(I-P_{U^\star})(\widehat A-A^\star)
   (\widehat L-L^\star)^\top(I-P_{V^\star}),
\]
and the analogous identity with the true and estimated spaces
interchanged.  Applying
\(|e_t^\top BC^\top e_m|\le\twoninfnorm B\twoninfnorm C\)
and the two factor bounds proves
\eqref{eq:derived-normal-leakage-app}.

It remains to prove the information statements.  Conditional on the
nuisance-training sample, \(\widehat{\mathbb T}\) is fixed, has
dimension at most \(r(\dt+\dm)\), and the evaluation subsample is
independent of it.  The estimated-factor
incoherence gives
\[
 \sup_X\Fnorm{\widehat P_{\mathbb T}X}^2
 \le C\frac{\bar d}{\dstar}.
\]
Since \(I\) is Lipschitz and
\(\infnorm{\widehat\Theta-\Thetastar}\le Cr_n\), the finite design
average~\eqref{eq:Ghat-def-app} and
Lemma~\ref{lem:frob-reduction-app} give
\[
 \norm{\widehat P_{\mathbb T}(\widehat G-G)
       \widehat P_{\mathbb T}}_{\rm op}
 \le \frac{Cr_n}{\dstar}.
\]
Matrix Bernstein on the fixed sum space
\(\mathbb T+\widehat{\mathbb T}\), whose dimension is at most
\(2r(\dt+\dm)\), gives
\[
 \norm{\widehat P_{\mathbb T}
       (\widehat G_{\rm ev}-\widehat G)
       \widehat P_{\mathbb T}}_{\rm op}
 \le \frac{Cr_n}{\dstar}.
\]
The two displays prove~\eqref{eq:foldwise-info-conc-app}.  The
population lower and upper bounds hold on the whole row-centered space
by Lemma~\ref{lem:weighted-moment-app}; perturbing them first by the
Fisher-weight bias and then by the evaluation-sample Bernstein bound
proves~\eqref{eq:foldwise-energy-app} for all sufficiently large \(n\).
The fixed-fold union bound completes the proof.
\end{proof}

\begin{lemma}[Relative perturbation of the design-averaged Fisher map]
\label{lem:preconditioned-fisher-infty-app}
Condition on one fold's nuisance-training sample and abbreviate its
estimated tangent projector by \(\widehat P\).  Define the population-
weight operator on the estimated tangent space by
\[
 \widetilde A
 :=(\widehat P G\widehat P)|_{\widehat{\mathbb T}},
 \qquad
 \widetilde{\mathcal A}^{\dagger}
 :=\widehat P\widetilde A^{-1}\widehat P.
\]
Suppose
\[
 \frac{\norm{\widetilde{\mathcal A}^{\dagger}}_{\infty\to\infty}}
      {\dstar}\le C_0C_A,
 \qquad
 \norm{\widehat P}_{\infty\to\infty}\le C_0 .
\]
Then, on the foldwise refinement event,
\begin{equation}\label{eq:preconditioned-fisher-infty-app}
 \norm{\widetilde{\mathcal A}^{\dagger}\widehat P
       (\widehat G-G)\widehat P}_{\infty\to\infty}
 \le C C_A r_n .
\end{equation}
The conclusion holds simultaneously over the fixed number of folds.
\end{lemma}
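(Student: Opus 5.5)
The strategy is submultiplicativity of the induced $\ell_\infty\to\ell_\infty$ norm combined with an entrywise bound on the Fisher-weight perturbation. First I would write
\[
 \norm{\widetilde{\mathcal A}^{\dagger}\widehat P(\widehat G-G)\widehat P}_{\infty\to\infty}
 \le
 \norm{\widetilde{\mathcal A}^{\dagger}}_{\infty\to\infty}\,
 \norm{\widehat P}_{\infty\to\infty}^2\,
 \norm{\widehat G-G}_{\infty\to\infty}.
\]
By hypothesis the first factor is at most $C_0C_A\dstar$ and the second at most $C_0^2$. The claim then reduces to showing $\norm{\widehat G-G}_{\infty\to\infty}\le Cr_n/\dstar$ on the foldwise refinement event.

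For that bound I would use the pairwise graph representation of the design-averaged operators. Since $\widehat G$ and $G$ share the same known design law $(\nu,\pi)$,
\[
 (\widehat G-G)H=\E_X\bigl[\{I(\ip{X}{\widehat\Theta})-I(\ip{X}{\Thetastar})\}\ip{H}{X}X\bigr],
\]
which is a finite sum over task--pair atoms $X=e_t(e_m-e_{m'})^\top$. For each atom the weight difference is at most $\sup|I'|\cdot|\ip{X}{\widehat\Theta-\Thetastar}|\le\tfrac14\cdot 2\infnorm{\widehat\Theta-\Thetastar}\le Cr_n$, using Theorem~\ref{thm:entrywise} (equivalently $\mathcal E_{\rm refine}$). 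Applied to the fold's training sample, whose size is $\asymp n$, this gives the rate $r_n$.

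Next I bound the induced norm by the maximal absolute row sum. Output coordinate $(t,m)$ receives contributions only from atoms involving task $t$ and model $m$. There are at most $\dm-1$ such pairs, each with mass at most $C_\nu C_\pi/(\dt\binom{\dm}{2})$. Each contributes coefficients of absolute size at most $Cr_n$ on at most two input coordinates. Hence the row sum is at most
\[
 Cr_n\cdot \dm\cdot \frac{1}{\dt\dm^2}\cdot 2\lesssim \frac{r_n}{\dstar}.
\]
This is exactly the computation behind the remark $\norm G_{\infty\to\infty}\le C/\dstar$ preceding \eqref{eq:population-composite-CA-app}, now carried out with weights $|\widehat I-I|\le Cr_n$ in place of $I\le C_B$.

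Multiplying the three factors gives $C_0^3C\,C_A\dstar\cdot r_n/\dstar=CC_Ar_n$, which is \eqref{eq:preconditioned-fisher-infty-app}. Simultaneity over the fixed $K_{\rm cf}$ folds follows from a union bound on the foldwise refinement events, each holding with probability $1-n^{-a}$.

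The main obstacle is the bookkeeping that makes the graph-operator row-sum bound exact. One must track the ordering and orientation of pairs, the diagonal self-term $(t,m)\mapsto(t,m)$, and the fact that $\widehat P$ is not a contraction in $\ell_\infty$. That is why the hypothesis $\norm{\widehat P}_{\infty\to\infty}\le C_0$ is needed, rather than simply dropping the projectors. The argument deliberately avoids inverting anything: the conditional hypothesis on $\widetilde{\mathcal A}^\dagger$ supplies all the inverse control.
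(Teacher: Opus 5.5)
Your proposal is correct and follows essentially the same route as the paper. Submultiplicativity of the induced $\ell_\infty\to\ell_\infty$ norm reduces the claim to $\|\widehat G-G\|_{\infty\to\infty}\le Cr_n/\dstar$, and the paper obtains that bound as you do: it treats $\widehat G-G$ taskwise as a weighted graph Laplacian whose edge weights differ by at most $2L_I\|\widehat\Theta-\Thetastar\|_\infty$ times the sampling probabilities, then sums one row under near-uniform sampling, of which your explicit row-sum bookkeeping is just a more detailed version.
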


\begin{proof}
Let \(\Delta=\widehat\Theta-\Thetastar\).  Since the design atom is
\(X=e_t(e_m-e_{m'})^\top\), Lipschitz continuity of \(I\) gives
\[
 \left|
 I(\langle X,\widehat\Theta\rangle)
 -I(\langle X,\Thetastar\rangle)
 \right|
 \le L_I|\Delta_{t,m}-\Delta_{t,m'}|
 \le 2L_I\infnorm\Delta.
\]
For each task, \(\widehat G-G\) is therefore a weighted graph
Laplacian whose edge weights differ by at most
\(C\infnorm\Delta\) times their sampling probabilities.  Summing the
absolute entries in one graph-Laplacian row and using near-uniform
task and pair sampling yields the exact induced-norm estimate
\begin{equation}\label{eq:Ghat-G-infty-app}
 \norm{\widehat G-G}_{\infty\to\infty}
 \le \frac{C}{\dstar}\infnorm\Delta
 \le \frac{Cr_n}{\dstar}.
\end{equation}
Consequently,
\[
 \begin{split}
 &\norm{\widetilde{\mathcal A}^{\dagger}\widehat P
       (\widehat G-G)\widehat P}_{\infty\to\infty}\\
 &\qquad\le
 \norm{\widetilde{\mathcal A}^{\dagger}}_{\infty\to\infty}
 \norm{\widehat P}_{\infty\to\infty}^2
 \norm{\widehat G-G}_{\infty\to\infty}
 \le C C_A r_n.
 \end{split}
\]
The \(C_A\dstar\) scale of the inverse cancels the \(1/\dstar\) scale
of the pairwise design, yielding a relative perturbation of order
\(C_A r_n\).
\end{proof}

\begin{theorem}[Foldwise plug-in inverse stability]
\label{thm:derived-foldwise-CA-app}
Under Assumptions~\ref{ass:score-app}--\ref{ass:population-CA-app}
and the conditions of Theorem~\ref{thm:entrywise}, suppose
\(C_A r_n=o(1)\).  Then, uniformly over the fixed number of
cross-fitting folds and with probability at least \(1-n^{-a}\) for
every fixed \(a>0\),
\begin{equation}\label{eq:CA-def-app}
\begin{aligned}
 \frac{\norm{\mathcal A^\dagger}_{\infty\to\infty}}{\dstar}
 \vee
 \norm{\mathcal A^\dagger P_{\mathbb T}G}_{\infty\to\infty}
 &\le C C_A,\\
 \max_k\left\{
 \frac{\norm{\widehat{\mathcal A}^{\dagger}_k}
             _{\infty\to\infty}}{\dstar}
 \vee
 \norm{\widehat{\mathcal A}^{\dagger}_k
       \widehat P_{\mathbb T}^{(-k)}G}_{\infty\to\infty}
 \right\}
 &\le C C_A,\\
 \max_k
 \norm{\widehat{\mathcal A}^{\dagger}_k
       \widehat P_{\mathbb T}^{(-k)}(\widehat G_k-G)
       \widehat P_{\mathbb T}^{(-k)}}_{\infty\to\infty}
 &\le C C_A r_n .
\end{aligned}
\end{equation}
\end{theorem}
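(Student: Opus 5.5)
The argument has three layers, one per line of \eqref{eq:CA-def-app}: the population bounds, the foldwise bounds obtained by transporting the population inverse to the estimated tangent space, and the relative plug-in perturbation.

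\textbf{Population bounds.} The first line is essentially already available. By Assumption~\ref{ass:population-CA-app}, $\|\mathcal A^\dagger\|_{\infty\to\infty}\le C_A\dstar$. The composite bound $\|\mathcal A^\dagger P_{\mathbb T}G\|_{\infty\to\infty}\le CC_A$ is \eqref{eq:population-composite-CA-app}. It combines the graph-Laplacian estimate $\|G\|_{\infty\to\infty}\le C/\dstar$ with the projector bound $\|P_{\mathbb T}\|_{\infty\to\infty}\le C$. No probability is needed at this stage.

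\textbf{Foldwise bounds, first step.} For the second and third lines I work conditionally on one fold's training sample, on the event $\mathcal E_{\rm re}$ of Lemma~\ref{lem:foldwise-geometry-app}. I proceed in two substeps. First, I pass from $\mathcal A^\dagger$ to the population-weight operator on the estimated space, $\widetilde{\mathcal A}^\dagger=\widehat P\widetilde A^{-1}\widehat P$ with $\widetilde A=(\widehat PG\widehat P)|_{\widehat{\mathbb T}}$. To do this I use the Procrustes map $W$ from \eqref{eq:foldwise-transport-app} and write
\[
\widetilde A=W\bigl(A_{\mathbb T}+E\bigr)W^*,\qquad E:=W^*\widehat PG\widehat PW-P_{\mathbb T}GP_{\mathbb T}\ \text{restricted to }\mathbb T.
\]
A Neumann series then gives $\widetilde A^{-1}=W\bigl(I+A_{\mathbb T}^{-1}E\bigr)^{-1}A_{\mathbb T}^{-1}W^*$. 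For this series to be useful in $\infty\to\infty$ norm, I need $\|\mathcal A^\dagger E\|_{\infty\to\infty}\le CC_Ar_n$. I would obtain this by splitting $E$ into terms of the form $(\widehat P-P_{\mathbb T})GP$ and $PG(\widehat P-P_{\mathbb T})$. Each term is controlled with \eqref{eq:foldwise-projector-infty-app} ($\|\widehat P-P_{\mathbb T}\|_{\infty\to\infty}\le Cr_n$), $\|G\|_{\infty\to\infty}\le C/\dstar$, and $\|\mathcal A^\dagger\|_{\infty\to\infty}\le C_A\dstar$. Since $C_Ar_n=o(1)$, the series converges and yields $\|\widetilde{\mathcal A}^\dagger\|_{\infty\to\infty}\le 2C_A\dstar$.

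\textbf{Main obstacle.} The difficulty is that $W$ and the restriction to $\mathbb T$ are controlled only in operator norm, not in $\infty\to\infty$ norm. To avoid this, I would not apply $W$ in $\infty\to\infty$ norm at all. Instead I would work directly with ambient operators and the resolvent identity
\[
\widetilde{\mathcal A}^\dagger-\mathcal A^\dagger=\widetilde{\mathcal A}^\dagger\bigl(\widehat P-P_{\mathbb T}\bigr)+\ldots-\widetilde{\mathcal A}^\dagger\bigl(\widehat PG\widehat P-P_{\mathbb T}GP_{\mathbb T}\bigr)\mathcal A^\dagger,
\]
which can be checked by multiplying through by the restricted operators on their respective ranges. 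This identity is a self-referential inequality in $x:=\|\widetilde{\mathcal A}^\dagger\|_{\infty\to\infty}/\dstar$. Bounding each difference term by $Cr_n$ (projectors) or $Cr_n/\dstar$ (the $G$-sandwich) gives
\[
x\le C_A+Cx\,r_nC_A+Cxr_n .
\]
Absorbing the right-hand $x$ terms, using $C_Ar_n\to0$, gives $x\le CC_A$.

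\textbf{Foldwise bounds, second step, and the plug-in perturbation.} Having verified the hypotheses of Lemma~\ref{lem:preconditioned-fisher-infty-app}, that lemma gives
\[
\|\widetilde{\mathcal A}^\dagger\widehat P(\widehat G-G)\widehat P\|_{\infty\to\infty}\le CC_Ar_n .
\]
I then write $\widehat A_k=\widetilde A+\widehat P(\widehat G-G)\widehat P$ on $\widehat{\mathbb T}$. A second Neumann expansion,
\[
\widehat{\mathcal A}^\dagger_k=\bigl(I+\widetilde{\mathcal A}^\dagger\widehat P(\widehat G-G)\widehat P\bigr)^{-1}\widetilde{\mathcal A}^\dagger,
\]
transfers the bound $\|\widehat{\mathcal A}^\dagger_k\|_{\infty\to\infty}\le CC_A\dstar$. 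Combining this with $\|\widehat P\|_{\infty\to\infty}\le C$ and $\|G\|_{\infty\to\infty}\le C/\dstar$ gives the composite bound $\|\widehat{\mathcal A}^\dagger_k\widehat PG\|_{\infty\to\infty}\le CC_A$. Finally, the product bound for the third line is the same computation as in Lemma~\ref{lem:preconditioned-fisher-infty-app}, now with $\widehat{\mathcal A}^\dagger_k$ in place of $\widetilde{\mathcal A}^\dagger$, using \eqref{eq:Ghat-G-infty-app}.

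\textbf{Probability.} A union bound over the fixed number of folds, together with the $1-n^{-a}$ probability of the refinement and geometry events, completes the argument.
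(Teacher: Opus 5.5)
Your proposal is correct and shares the paper's skeleton: the population line comes from Assumption~\ref{ass:population-CA-app} and \eqref{eq:population-composite-CA-app}; you then control the population-weight inverse $\widetilde{\mathcal A}^\dagger$ on $\widehat{\mathbb T}$; and finally you write $\widehat{\mathcal A}^\dagger_k=(I+R)^{-1}\widetilde{\mathcal A}^\dagger$ with $R=\widetilde{\mathcal A}^\dagger\widehat P(\widehat G-G)\widehat P$ controlled by Lemma~\ref{lem:preconditioned-fisher-infty-app}. The key middle step, however, is done differently. You rightly observe that the Procrustes map $W$ is of no use in $\infty\to\infty$ norm, which is also why the paper avoids it. You replace it by a direct pseudo-inverse resolvent identity, and that identity is correct. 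The elided term is $(\widehat P-P_{\mathbb T})\mathcal A^\dagger$, the identity follows from $(P_{\mathbb T}GP_{\mathbb T})\mathcal A^\dagger=P_{\mathbb T}$ and $\widetilde{\mathcal A}^\dagger(\widehat PG\widehat P)=\widehat P$, and the extra term only adds a harmless $Cr_nC_A$ to your inequality for $x$. The paper instead regularizes both restricted operators into invertible ambient ones, $B=P_{\mathbb T}GP_{\mathbb T}+\lambda_\perp(I-P_{\mathbb T})$ and $\widetilde B=\widehat PG\widehat P+\lambda_\perp(I-\widehat P)$ with $\lambda_\perp=c_\perp/\dstar$. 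It notes $B^{-1}=\mathcal A^\dagger+\lambda_\perp^{-1}(I-P_{\mathbb T})$ and $\widetilde{\mathcal A}^\dagger=\widehat P\widetilde B^{-1}\widehat P$, and runs a standard Neumann series on $B^{-1}(\widetilde B-B)$, whose $\infty\to\infty$ norm is $O(C_Ar_n)$. The paper's device yields invertibility of $\widetilde B$ (hence of $\widetilde A$) as a by-product and needs no a priori finiteness, at the price of the auxiliary $\lambda_\perp$ and the check $\lambda_\perp^{-1}\|I-P_{\mathbb T}\|_{\infty\to\infty}\le C\dstar$. Your absorption argument is more transparent about which projector and sandwich errors enter, but it is valid only once $x<\infty$. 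You should therefore state that $\widetilde A$ is invertible on $\widehat{\mathbb T}$; this is immediate from Lemma~\ref{lem:weighted-moment-app}, since $\widehat{\mathbb T}$ consists of row-centered matrices (because $\widehat V^\top\mathbf 1_{\dm}=0$). In a write-up, drop the abandoned Procrustes paragraph. For the third line, your direct product bound and the paper's $\|(I+R)^{-1}R\|_{\infty\to\infty}\le2\|R\|_{\infty\to\infty}$ are interchangeable.
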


\begin{proof}
The first inverse bound is the definition of \(C_A\), and the first
composite bound is~\eqref{eq:population-composite-CA-app}.  We prove
the two foldwise statements for one fold and suppress \(k\).

Set \(\widehat P=\widehat P_{\mathbb T}\), and define
\(\widetilde A=(\widehat P G\widehat P)|_{\widehat{\mathbb T}}\).
To compare inverses acting on different tangent spaces without
choosing coordinates, introduce the invertible full-space operators
\[
 B=P_{\mathbb T}GP_{\mathbb T}
   +\lambda_\perp(I-P_{\mathbb T}),
 \qquad
 \widetilde B=\widehat P G\widehat P
   +\lambda_\perp(I-\widehat P),
 \qquad
 \lambda_\perp=\frac{c_\perp}{\dstar},
\]
where \(c_\perp>0\) is fixed.  Their inverses satisfy
\[
 B^{-1}=\mathcal A^\dagger
         +\lambda_\perp^{-1}(I-P_{\mathbb T}),
 \qquad
 \widetilde{\mathcal A}^{\dagger}
 =\widehat P\widetilde B^{-1}\widehat P.
\]
By~\eqref{eq:population-CA-app},
\eqref{eq:foldwise-projector-infty-app}, and
\(\norm G_{\infty\to\infty}\le C/\dstar\),
\[
 \norm{B^{-1}}_{\infty\to\infty}\le C C_A\dstar,
 \qquad
 \norm{\widetilde B-B}_{\infty\to\infty}
 \le \frac{Cr_n}{\dstar}.
\]
Consequently
\(\norm{B^{-1}(\widetilde B-B)}_{\infty\to\infty}
\le C C_A r_n=o(1)\).  The resolvent identity, or equivalently the
Neumann series, now gives
\begin{equation}\label{eq:tilde-inverse-derived-app}
 \frac{\norm{\widetilde{\mathcal A}^{\dagger}}
             _{\infty\to\infty}}{\dstar}
 \le C C_A,
 \qquad
 \norm{\widetilde{\mathcal A}^{\dagger}\widehat P G}
             _{\infty\to\infty}
 \le C C_A.
\end{equation}
The second inequality follows either from the displayed inverse bound
and \(\norm{\widehat P G}_{\infty\to\infty}\le C/\dstar\), or directly
from the restricted normal equation.

Let
\[
 R
 :=\widetilde{\mathcal A}^{\dagger}\widehat P
   (\widehat G-G)\widehat P .
\]
Lemma~\ref{lem:preconditioned-fisher-infty-app} and
\eqref{eq:tilde-inverse-derived-app} imply
\(\norm R_{\infty\to\infty}\le C C_A r_n=o(1)\).
On \(\widehat{\mathbb T}\),
\[
 \widehat A
 =\widetilde A(I+R),
 \qquad
 \widehat{\mathcal A}^{\dagger}
 =(I+R)^{-1}\widetilde{\mathcal A}^{\dagger}.
\]
Therefore \(\norm{(I+R)^{-1}}_{\infty\to\infty}\le2\) for all
sufficiently large \(n\), and
\[
 \frac{\norm{\widehat{\mathcal A}^{\dagger}}
             _{\infty\to\infty}}{\dstar}
 \vee
 \norm{\widehat{\mathcal A}^{\dagger}\widehat P G}
             _{\infty\to\infty}
 \le C C_A,
 \qquad
 \norm{\widehat{\mathcal A}^{\dagger}\widehat P
       (\widehat G-G)\widehat P}_{\infty\to\infty}
 \le 2\norm R_{\infty\to\infty}.
\]
This is~\eqref{eq:CA-def-app}.  The fixed-fold union bound finishes the
proof.
\end{proof}
%\end{revisionblock}

This closed form makes each remainder term in
Appendix~\ref{app:fdim-proof-decomp} computable in closed form.  In
particular, for a sparse score-gap contrast
\(\Gamma=e_t(e_m-e_{m'})^\top\), the projection
\(P_{\mathbb T}\Gamma\) is a low-rank row-centered matrix.  Incoherence
gives its upper envelope and Assumption~\ref{ass:alignment-app} supplies
the lower bound needed for studentization.

\subsection{One-step expansion and single-contrast remainder}\label{app:fdim-proof-decomp}

We give the argument for one fold and condition on its training samples.  The resulting
\(\widehat\Theta,\widehat{\mathbb T},\widehat G\), and
\(\widehat H_\Gamma\) are then fixed and are independent of the
evaluation observations.  Put
\(\Delta=\widehat\Theta-\Thetastar\),
\(\widehat s=s_\eta(Y,\ip{X}{\widehat\Theta})\), and
\(s^\star=s_\eta(Y,\ip{X}{\Thetastar})\).  If \(P\) denotes expectation
over a fresh observation and
\[
 g_\Gamma(W)
 :=\widehat s\ip{\widehat H_\Gamma}{X}
   -s^\star\ip{H_\Gamma^\star}{X},
\]
then \(P[s^\star\ip{H_\Gamma^\star}{X}]=0\), and the exact
decomposition is
\begin{align}
 \widehat\psi_\Gamma-\psi_\Gamma(\Thetastar)
 ={}&\Pn\!\left[s^\star\ip{H_\Gamma^\star}{X}\right]
    +(\Pn-P)g_\Gamma+B_\Gamma,                     \label{eq:fdim-decomp-app}\\
 B_\Gamma
 :={}&\ip{\Gamma}{\Delta}
       +P\!\left[\widehat s\ip{\widehat H_\Gamma}{X}\right].
 \nonumber
\end{align}
%\begin{revisionblock}
This form is the reason for cross-fitting: the direction is fixed
under the conditional law governing \((\Pn-P)g_\Gamma\).
%\end{revisionblock}

Let \(G\) be the population Fisher operator at \(\Thetastar\).
Conditional expectation over \(Y\) and Taylor's formula give
\begin{align}
 P\!\left[\widehat s\ip{\widehat H_\Gamma}{X}\right]
 ={}&-\ip{G\Delta}{\widehat H_\Gamma}+R_{2,\Gamma},
                                                        \label{eq:population-taylor-app}\\
 |R_{2,\Gamma}|
 \le{}& {L_I\over2}
 P\!\left[\ip{X}{\Delta}^2
            \left|\ip{\widehat H_\Gamma}{X}\right|\right],
 \nonumber
\end{align}
where \(L_I:=\sup_\eta|I'(\eta)|<\infty\).  Consequently,
\begin{align}
 B_\Gamma
 ={}&\ip{\Gamma-G\widehat H_\Gamma}
              {(I-\widehat P_{\mathbb T})\Delta}       \nonumber\\
 &+\ip{\widehat P_{\mathbb T}(\widehat G-G)
                 \widehat H_\Gamma}
             {\widehat P_{\mathbb T}\Delta}
   +R_{2,\Gamma}.                                      \label{eq:bias-decomp-app}
\end{align}
Indeed, the estimating equation
\(\widehat P_{\mathbb T}\widehat G\widehat H_\Gamma
=\widehat P_{\mathbb T}\Gamma\) replaces the tangent part of
\(\Gamma-G\widehat H_\Gamma\) by
\(\widehat P_{\mathbb T}(\widehat G-G)\widehat H_\Gamma\).

The exact-rank geometry makes the apparently linear normal term
second order.  If
\(\widehat\Theta=\widehat U\widehat\Sigma\widehat V^\top\) is the
rank-\(r\), row-centered fold estimate, then
\[
 (I-P_{\mathbb T})\Delta
 =(I-P_{U^\star})\widehat\Theta(I-P_{V^\star}).
\]
Likewise, because
\((I-P_{\widehat U})\widehat\Theta(I-P_{\widehat V})=0\),
\[
 (I-\widehat P_{\mathbb T})\Delta
 =-(I-P_{\widehat U})\Thetastar(I-P_{\widehat V}).
\]
Write each right-hand side as a product of its two projected
square-root singular factors and apply
\(|e_t^\top AB^\top e_m|\le\twoninfnorm A\twoninfnorm B\).
 Lemma~\ref{lem:foldwise-geometry-app} then gives
\begin{equation}\label{eq:normal-leakage-app}
 \infnorm{(I-P_{\mathbb T})\Delta}
 +\infnorm{(I-\widehat P_{\mathbb T})\Delta}
 \le C r_n^2 .
\end{equation}
Both sides contain one left and one right subspace error; this is the
source of the square.

We next control the two non-leading terms in
\eqref{eq:fdim-decomp-app}.  To compare operators living on different
tangent spaces, use the ambient orthogonal Procrustes map \(W\) from
Lemma~\ref{lem:foldwise-geometry-app}.  Put
\(E_A:=W^*\widehat A_{\widehat{\mathbb T}}W-A_{\mathbb T}\).
Adding and subtracting
\(W^*P_{\widehat{\mathbb T}}GP_{\widehat{\mathbb T}}W\), then using
\eqref{eq:foldwise-info-conc-app} and
\eqref{eq:foldwise-transport-app}, yields
\[
 \norm{E_A}_{\rm op}\le {C r_n\over\dstar},
\]
because \(\norm G_{\rm op}\le C/\dstar\) on the row-centered space.
The restricted-energy lower bound gives
\(\norm{A_{\mathbb T}^{-1}}_{\rm op}\vee
\norm{(W^*\widehat A_{\widehat{\mathbb T}}W)^{-1}}_{\rm op}
\le C\dstar\).  The common-space resolvent identity
\[
 (W^*\widehat A_{\widehat{\mathbb T}}W)^{-1}-A_{\mathbb T}^{-1}
 =-(W^*\widehat A_{\widehat{\mathbb T}}W)^{-1}E_AA_{\mathbb T}^{-1},
\]
together with the transported-contrast bound in
\eqref{eq:foldwise-transport-app}, gives
\[
 {\Fnorm{\widehat H_\Gamma-H_\Gamma^\star}
       \over\Fnorm{H_\Gamma^\star}}
 \le {C r_n\over\alpha_0}.
 \tag{D.1}
\]
Here we first obtain the bound for
\(W^*\widehat H_\Gamma-H_\Gamma^\star\); the identity
\(\widehat H_\Gamma-H_\Gamma^\star
=W(W^*\widehat H_\Gamma-H_\Gamma^\star)+(W-I)H_\Gamma^\star\)
and \eqref{eq:foldwise-transport-app} give the displayed
ambient Frobenius bound.
The lower alignment bound prevents division by a vanishing direction.
Thus the common \(A_{\mathbb T}^{-1}\) scale cancels; no inverse on the
full ambient space is used.

For completeness, the bounds needed for the population bias are as
follows.  Near-uniform comparison sampling gives, for every
row-centered matrix \(D\),
\[
 P|\ip{D}{X}|\le {C\over\dstar}\oneNorm D.
\]
The closed-form projector, incoherence, and its foldwise analogue give
\[
 \max_E\{\oneNorm{P_{\mathbb T}E},
          \oneNorm{\widehat P_{\mathbb T}E}\}\le C,
 \qquad
 \max_{\Gamma\in\mathcal F_{\rm gap}}
 \{\infnorm{P_{\mathbb T}\Gamma},
   \infnorm{\widehat P_{\mathbb T}\Gamma}\}
 \le C{\bar d\over\dstar}\oneNorm\Gamma,              \tag{D.2}
\]
where the maximum in the first display is over canonical matrix
atoms.  Indeed,
\(\oneNorm{P_Ue_t}\le\sqrt{\mu r}\),
\(\oneNorm{(I-P_U)e_t}\le1+\sqrt{\mu r}\),
\(\oneNorm{P_Ve_m}\le\sqrt{\mu r}\), and
\(\oneNorm{R_{\rm m}e_m}\le2\); substituting these four bounds into
\eqref{eq:tangent-projector-matrix} proves the first assertion, and
the same formula gives the second.  The estimated-space assertions
follow from the foldwise incoherence bounds.
Self-adjointness, the definition of the restricted inverse,
(D.2), and \eqref{eq:CA-def-app} therefore imply
\[
 \oneNorm{G\widehat H_\Gamma}
 \le C C_A\oneNorm\Gamma,
 \qquad
 \oneNorm{\widehat H_\Gamma}
 \le C C_A\dstar\oneNorm\Gamma.                       \tag{D.3}
\]
The first inequality uses
\(\|G\widehat P_{\mathbb T}\widehat A_{\widehat{\mathbb T}}^{-1}
\widehat P_{\mathbb T}\|_{1\to1}
=\|\widehat P_{\mathbb T}\widehat A_{\widehat{\mathbb T}}^{-1}
\widehat P_{\mathbb T}G\|_{\infty\to\infty}\); the second uses the
normalized inverse bound and the first part of (D.2).  Combining (D.3) with
\eqref{eq:normal-leakage-app} bounds the first term in
\eqref{eq:bias-decomp-app} by
\(C C_A\oneNorm\Gamma r_n^2\).

The projector bound in (D.2) gives
\(\infnorm{\widehat P_{\mathbb T}\Delta}\le Cr_n\).
By symmetry and the last composite bound in
\eqref{eq:CA-def-app}, the second term in
\eqref{eq:bias-decomp-app} is therefore at most
\[
 \oneNorm\Gamma\,
 \norm{\widehat A_{\widehat{\mathbb T}}^{-1}
       \widehat P_{\mathbb T}(\widehat G-G)
       \widehat P_{\mathbb T}}_{\infty\to\infty}
 \infnorm{\widehat P_{\mathbb T}\Delta}
 \le C C_A\oneNorm\Gamma r_n^2.
\]
Finally, \(\infnorm\Delta\le Cr_n\), (D.3), and
\eqref{eq:population-taylor-app} give the same bound for
\(R_{2,\Gamma}\).  Hence
\begin{equation}
 |B_\Gamma|\le C C_A\oneNorm\Gamma r_n^2.              \label{eq:bias-bound-app}
\end{equation}

It remains only to center \(g_\Gamma\).  Decompose it as
\[
 g_\Gamma
 =s^\star\ip{\widehat H_\Gamma-H_\Gamma^\star}{X}
  +(\widehat s-s^\star)\ip{\widehat H_\Gamma}{X}.
\]
The score is bounded,
\(|\widehat s-s^\star|\le C|\ip{X}{\Delta}|\le Cr_n\), and
(D.1)--(D.3), Fisher comparability, and the projected-atom envelope
control the conditional variance and envelope of the two summands.
Conditional Bernstein
therefore gives, for every fixed \(a>0\),
\begin{equation}
 |(\Pn-P)g_\Gamma|
 \le C C_A\oneNorm\Gamma
       \left\{r_n\sqrt{\frac{\bar d\log(n\bar d)}{n}}
                    +\frac{\bar d\log(n\bar d)}{n}\right\}
 \le C C_A\oneNorm\Gamma r_n^2                       \label{eq:centered-g-bound-app}
\end{equation}
with probability at least \(1-n^{-a}\).  %This conditioning is valid because neither training subsample contains an evaluation observation.
This proves the following theorem.

\begin{theorem}[Single-contrast remainder bound]\label{thm:single-contrast-remainder-app}
Under Assumptions~\ref{ass:score-app}--\ref{ass:sample-app},
\ref{ass:alignment-app}, and
{Assumption~\ref{ass:population-CA-app}}, for every admissible
score-gap contrast and every fixed \(a>0\), with probability at least
\(1-n^{-a}\),
\begin{equation}\label{eq:single-contrast-remainder-app}
 \left|
 \widehat\psi_\Gamma-\psi_\Gamma(\Thetastar)
 -\frac1n\sum_{i=1}^n\phi_\Gamma(W_i)
 \right|
 \le
 C C_A\oneNorm\Gamma\,
 {\,\bar d\log^c(n\bar d)\over n}.
\end{equation}
The same conclusion holds after averaging the fixed number of folds.
\end{theorem}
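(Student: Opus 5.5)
The plan is to assemble the theorem from the foldwise decomposition \eqref{eq:fdim-decomp-app}, which the preceding discussion has already set up. Fix a fold $k$ and condition on its training sample $\mathcal I_k^c$. Then $\widehat\Theta^{(-k)}$, $\widehat P_{\mathbb T}^{(-k)}$, $\widehat G_k$ and $\widehat H_{\Gamma,k}$ are deterministic. The evaluation observations in $I_k$ remain i.i.d.\ from the true law. The exact identity
\[
 \widehat\psi_{\Gamma,k}-\psi_\Gamma(\Thetastar)
 =\frac1{n_k}\sum_{i\in I_k}\phi_\Gamma(W_i)+(\Pn-P)g_\Gamma+B_\Gamma
\]
then separates the oracle average from two remainder terms, and it suffices to bound each remainder by $CC_A\oneNorm\Gamma r_n^2$. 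Since $r_n^2=\bar d\log^c(n\bar d)/n$, this is exactly the right-hand side of \eqref{eq:single-contrast-remainder-app}.

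\emph{Step 1: the bias term $B_\Gamma$.} I would use the decomposition \eqref{eq:bias-decomp-app}, obtained from the population Taylor formula \eqref{eq:population-taylor-app} and the restricted normal equation $\widehat P_{\mathbb T}\widehat G\widehat H_\Gamma=\widehat P_{\mathbb T}\Gamma$. The three pieces are handled separately.
\begin{itemize}
\item The normal-leakage piece is bounded by the $\ell_1$ bound (D.3) on $G\widehat H_\Gamma$ and $\Gamma$ times the exact-rank estimate \eqref{eq:normal-leakage-app}. That estimate comes from Lemma~\ref{lem:foldwise-geometry-app} together with the rowwise factor bounds of Propositions~\ref{prop:left-final-app}--\ref{prop:right-final-app}.
\item The information-perturbation piece is bounded by $\oneNorm\Gamma$ times the induced-norm bound on $\widehat{\mathcal A}^\dagger\widehat P(\widehat G-G)\widehat P$ from Theorem~\ref{thm:derived-foldwise-CA-app}, times $\infnorm{\widehat P_{\mathbb T}\Delta}\le Cr_n$. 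The last bound follows from \eqref{eq:foldwise-projector-infty-app} and Theorem~\ref{thm:entrywise}.
\item The Taylor remainder $R_{2,\Gamma}$ is at most $\infnorm\Delta^2\,P|\langle\widehat H_\Gamma,X\rangle|$. This is $\lesssim r_n^2\,\dstar^{-1}\oneNorm{\widehat H_\Gamma}\le CC_Ar_n^2\oneNorm\Gamma$, using (D.3).
\end{itemize}
This yields \eqref{eq:bias-bound-app}.

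\emph{Step 2: the centered empirical term.} I would split $g_\Gamma=s^\star\langle\widehat H_\Gamma-H^\star_\Gamma,X\rangle+(\widehat s-s^\star)\langle\widehat H_\Gamma,X\rangle$ and apply conditional scalar Bernstein with tail $x=(a+C)\log(n\bar d)$.
\begin{itemize}
\item The variance of the first summand is at most $\langle D,GD\rangle$ with $D=\widehat H_\Gamma-H_\Gamma^\star$. By Lemma~\ref{lem:weighted-moment-app} and (D.1), this is $\lesssim (r_n/\alpha_0)^2\,\dstar^{-1}\Fnorm{H^\star_\Gamma}^2\asymp r_n^2V_{\rm eff}(\Gamma)$.
\item The second summand carries an extra factor $|\widehat s-s^\star|\le Cr_n$. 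Its variance is therefore $\lesssim r_n^2V_{\rm eff}$ by Fisher comparability on the estimated tangent space \eqref{eq:foldwise-energy-app}.
\item The envelopes are controlled by $\infnorm{\widehat H_\Gamma}\lesssim\|\widehat{\mathcal A}^\dagger\|_{\infty\to\infty}\infnorm{\widehat P\Gamma}\lesssim C_A\bar d\oneNorm\Gamma$, using (D.2).
\item Since $V_{\rm eff}(\Gamma)\lesssim \dstar\Fnorm{P_{\mathbb T}\Gamma}^2\lesssim\bar d$ by the tangent envelope, Bernstein gives $r_n\sqrt{\bar d x/n}+C_A\bar d x/n$. Both are $\lesssim C_A\oneNorm\Gamma r_n^2$ once $c$ in $r_n$ absorbs the logarithms.
\end{itemize}
This is \eqref{eq:centered-g-bound-app}. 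Because all nuisances are fixed under the conditional law, no empirical-process uniformity over nuisances is required.

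\emph{Step 3: assembly.} I would intersect the conditional events with the training events $\mathcal E_{\rm refine}\cap\mathcal E_{\rm re}\cap\mathcal E_A$ and take a union bound over the fixed $K_{\rm cf}$ folds. Averaging with weights $n_k/n$ turns the foldwise oracle averages into $n^{-1}\sum_i\phi_\Gamma(W_i)$ and preserves the remainder bound, with failure probability $\le n^{-a}$ after relabeling $a$. I expect the main obstacle to be Step 1, specifically keeping the normal-leakage and information-perturbation pieces at order $r_n^2$ rather than $r_n$. This requires the $\infty\to\infty$ control of the plug-in restricted inverse, not merely operator-norm control. I would lean entirely on Theorem~\ref{thm:derived-foldwise-CA-app} and the exact-rank factorization identities for $(I-P_{\mathbb T})\Delta$ to obtain it.
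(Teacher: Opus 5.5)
Your proposal is correct and follows the paper's own argument essentially step for step. You use the same foldwise exact decomposition into the oracle average, the centered term $(\Pn-P)g_\Gamma$, and the bias $B_\Gamma$; the same three-piece bound on $B_\Gamma$ via (D.3), the exact-rank normal-leakage estimate, and the induced-norm perturbation bound of Theorem~\ref{thm:derived-foldwise-CA-app}; and the same conditional Bernstein step with a union bound over the fixed number of folds, with your explicit envelope $\infnorm{\widehat H_\Gamma}\lesssim C_A\bar d\oneNorm\Gamma$ simply spelling out what the paper leaves implicit in \eqref{eq:centered-g-bound-app}.
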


\begin{proof}
Apply the preceding expansion on every fold with
\(x=(a+C)\log(n\bar d)\).  The fold estimates are trained off their
evaluation folds, and {the information directions are deterministic
functions of those training estimates and the known design law}, so the stated conditional
Bernstein bounds apply.  Combine
\eqref{eq:fdim-decomp-app}, \eqref{eq:bias-bound-app}, and
\eqref{eq:centered-g-bound-app}.  Since
\(r_n^2=\bar d\log^c(n\bar d)/n\), the displayed bound follows.  A union
bound over the fixed number of folds preserves failure probability
\(n^{-a}\) after increasing \(C\).
\end{proof}
\subsection{Uniform single-contrast remainder over a contrast family}\label{app:fdim-proof-uniform}

The bound in
Theorem~\ref{thm:single-contrast-remainder-app} is per-contrast.
We extend it uniformly over a polynomial-size family by a careful union
bound.

\begin{theorem}[Uniform one-step remainder]\label{thm:uniform-remainder-app}
Let \(\mathcal F\subset\R^{\dt\times\dm}\) be any family of contrasts of
size \(|\mathcal F|\le(\dt\dm)^{C_F}\) for an absolute constant
\(C_F\), such that each \(\Gamma\in\mathcal F\) satisfies
Assumption~\ref{ass:alignment-app} with a uniform alignment constant
\(\alpha_{\rm min}>0\) and Assumption~\ref{ass:bounded-Gamma-app} with
uniform constants \(M,C_\psi\).  Fix any \(a>0\).  Then with probability
at least \(1-n^{-a}\),
\begin{equation}\label{eq:uniform-remainder-app}
    \max_{\Gamma\in\mathcal F}\,
    \frac{\sqrt n\,|R_n^\Gamma|}{\sigma_\Gamma}
    \;\le\;
    C\,C_A\,\sqrt{\frac{\bar d\,\mathrm{polylog}(n\bar d)}{n}},
\end{equation}
where \(\sigma_\Gamma^2=V_{\rm eff}(\Gamma)\).
The polylog absorbs both the logarithmic factor inherited from
Theorem~\ref{thm:single-contrast-remainder-app} and the
\(\log|\mathcal F|\le C_F\log(\bar d)\) factor from the union bound.
\end{theorem}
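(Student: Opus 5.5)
The plan is to run the single-contrast argument of Theorem~\ref{thm:single-contrast-remainder-app} with tail parameter $x=(a+C_F+C)\log(n\bar d)$, apply a union bound over $\mathcal F$, and then studentize using a uniform lower bound on $\sigma_\Gamma$.

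\textbf{Step 1 (shared good event).} Most of the single-contrast bound is deterministic given the training folds. Recall the decomposition \eqref{eq:fdim-decomp-app}, $R_n^\Gamma=(\Pn-P)g_\Gamma+B_\Gamma$, foldwise averaged. The bias bound \eqref{eq:bias-bound-app}, $|B_\Gamma|\le CC_A\oneNorm\Gamma r_n^2$, uses only three ingredients: the events of Lemma~\ref{lem:foldwise-geometry-app}, Theorem~\ref{thm:derived-foldwise-CA-app}, and the entrywise event of Theorem~\ref{thm:entrywise}. None of these depends on $\Gamma$. The constants also depend on $\Gamma$ only through $\oneNorm\Gamma\le C_\psi$ and the alignment constant $\alpha_{\min}$, via (D.1)--(D.3). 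Hence on this single event, of probability at least $1-n^{-a}$, the bias bound holds for all $\Gamma\in\mathcal F$ simultaneously, with no union bound needed.

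\textbf{Step 2 (union bound for the centered term).} Condition on the training sample of each fold. For each $\Gamma$, the conditional Bernstein bound \eqref{eq:centered-g-bound-app} holds with failure probability $2e^{-x}$. I would recheck its envelope and variance. The envelope of $g_\Gamma$ is bounded by $\sup_X|\ip{\widehat H_\Gamma}{X}|\le 2\infnorm{\widehat H_\Gamma}\le CC_A\bar d$, using (D.2)--(D.3). The conditional variance is at most $Cr_n^2\,\Fnorm{H^\star_\Gamma}^2/\dstar$ plus $(Cr_n/\alpha_0)^2\Fnorm{H^\star_\Gamma}^2/\dstar$, which is of order $r_n^2\sigma_\Gamma^2$ by Fisher comparability.

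A union bound over $|\mathcal F|\le(\dt\dm)^{C_F}$ contrasts and the $K_{\rm cf}$ folds costs $e^{-x}(\dt\dm)^{C_F}K_{\rm cf}\le n^{-a}$. The enlarged $x$ only multiplies the bound by powers of $\log(n\bar d)$, and these are absorbed into the polylog. The result is, uniformly over $\mathcal F$,
\[
|(\Pn-P)g_\Gamma|\le C\Big\{r_n\sigma_\Gamma\sqrt{x/n}+C_A\bar d\,x/n\Big\}.
\]

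\textbf{Step 3 (studentization, the main obstacle).} The remaining step is a uniform lower bound on $\sigma_\Gamma$. Lemma~\ref{lem:weighted-moment-app} gives that the eigenvalues of $A_{\mathbb T}$ are at most $C/\dstar$. Therefore
\[
\sigma_\Gamma^2=\ip{P_{\mathbb T}\Gamma}{A_{\mathbb T}^{-1}P_{\mathbb T}\Gamma}\ge c\,\dstar\Fnorm{P_{\mathbb T}\Gamma}^2\ge c\,\alpha_{\min}^2\dstar(\dt^{-1}+\dm^{-1})\asymp\bar d,
\]
by Assumption~\ref{ass:alignment-app}.

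Dividing by $\sigma_\Gamma$ and multiplying by $\sqrt n$:
\begin{itemize}
\item the bias term gives $\sqrt n\,C_A r_n^2/\sqrt{\bar d}=C_A\sqrt{\bar d}\,\mathrm{polylog}/\sqrt n$;
\item the first centered term gives $r_n\sqrt x$;
\item the second centered term gives $C_A\sqrt{\bar d}\,x/\sqrt n$.
\end{itemize}
All three are at most $CC_A\sqrt{\bar d\,\mathrm{polylog}(n\bar d)/n}$.

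The delicate point is checking that the variance proxy in Step 2 is genuinely relative to $\sigma_\Gamma$, rather than bounded crudely through the $\ell_1$ envelope. The crude route would cost an extra $\sqrt{\bar d}$ in the middle term. I would handle this with the relative bound (D.1) together with $\E[\ip{H}{X}^2]\asymp\Fnorm H^2/\dstar$ from Lemma~\ref{lem:frob-reduction-app}, applied to the row-centered tangent matrices. Finally, I would relabel $a$ to absorb the intersection of the events from Steps 1 and 2.
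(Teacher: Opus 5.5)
Your proposal is correct and follows essentially the same route as the paper. That route is the single-contrast bound $|R_n^\Gamma|\le CC_A\|\Gamma\|_1 r_n^2$, studentized through the alignment-based lower bound $\sigma_\Gamma^2\gtrsim\bar d$ (so $\|\Gamma\|_1/\sigma_\Gamma\lesssim\bar d^{-1/2}$), followed by a union bound over $\mathcal F$ with the tail parameter inflated by $\log|\mathcal F|$. Splitting off a $\Gamma$-free event for the bias term and union-bounding only the conditional Bernstein step is slightly cleaner bookkeeping than the paper's (which union-bounds the whole per-contrast statement), but it changes nothing substantive.
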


\begin{proof}
The proof is by union bound on top of
Theorem~\ref{thm:single-contrast-remainder-app}, with care taken so
that no \(\Gamma\)-dependent constants degrade.

\textbf{Step 1: per-contrast studentization.}
Fisher comparability and uniform alignment give
\[
 \sigma_\Gamma^2
 \asymp \dstar\Fnorm{P_{\mathbb T}\Gamma}^2
 \ge c\alpha_{\rm min}^2\bar d\Fnorm\Gamma^2.
\]
Since every contrast has at most \(M\) nonzero entries,
\(\oneNorm\Gamma\le\sqrt M\Fnorm\Gamma\), and hence
\begin{equation}\label{eq:gamma-over-sigma-app}
 {\oneNorm\Gamma\over\sigma_\Gamma}\le {C\over\sqrt{\bar d}}
 \qquad(\Gamma\in\mathcal F).
\end{equation}
Combining~\eqref{eq:gamma-over-sigma-app} with the raw remainder
bound~\eqref{eq:single-contrast-remainder-app}, with its free tail
parameter \(x\), gives
\[
 \frac{\sqrt n|R_n^\Gamma|}{\sigma_\Gamma}
 \le K C_A\sqrt{\frac{\bar d\,\log^{c'}(n\bar d)}{n}}
\]
outside an event of probability at most \(2e^{-x}\); the fixed exponent
\(c'\) absorbs the logarithms generated by the constituent Bernstein
bounds.

\textbf{Step 2: union bound over \(\mathcal F\).}
Set \(x=Ca\log(n\bar d)+C_F\log\bar d\) and apply the per-contrast bound
to each \(\Gamma\in\mathcal F\); a union bound over the
\(|\mathcal F|\le\bar d^{C_F}\) contrasts gives failure probability
at most \(2|\mathcal F|e^{-x}\le2n^{-a}\).

\textbf{Step 3: rate.}
Absorbing the tail and family-size logarithms into the fixed polylog
gives~\eqref{eq:uniform-remainder-app}.
\end{proof}

\subsection{Multivariate Berry--Esseen for the leading term}\label{app:fdim-proof-be}

We now establish the rate \(O(\sqrt{\bar d/n})\) directly for the leading
i.i.d.\ EIF average.
This scalar result is useful for pointwise inference; the fixed-\(q\)
rectangle result below is obtained separately by coordinate-standardized
CCK approximation, without polarization through potentially cancelling
contrasts.

\begin{theorem}[Berry--Esseen for the standardized leading term]\label{thm:fdim-be-app}
Fix a single contrast \(\Gamma\) satisfying
Assumptions~\ref{ass:bounded-Gamma-app}--\ref{ass:alignment-app}, and
let \(Z_i:=\phi_\Gamma(W_i)/\sigma_\Gamma\) be the standardized oracle
EIF coordinates with \(\sigma_\Gamma^2=V_{\rm eff}(\Gamma)\).  Then
\begin{equation}\label{eq:fdim-be-app}
    \rho_n
    :=
    \sup_{t\in\R}
    \Bigl|
    \Pr\Bigl(\frac{1}{\sqrt n}\sum_{i=1}^n Z_i\le t\Bigr)
    -
    \Phi(t)
    \Bigr|
    \;\le\;
    C\,\sqrt{\frac{\bar d}{n}}.
\end{equation}
\end{theorem}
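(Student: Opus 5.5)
The plan is to apply the classical univariate Berry--Esseen inequality for i.i.d.\ summands and reduce everything to a third-moment bound. The $W_i$ are i.i.d.\ under the model, and $\E^\star Z_i=0$ by Assumption~\ref{ass:score-app}(i). Also $\E^\star Z_i^2=1$ by the definition of $\sigma_\Gamma^2=V_{\rm eff}(\Gamma)=\E^\star\phi_\Gamma^2$. Berry--Esseen then gives $\rho_n\le C_{\rm BE}\,\E^\star|Z_1|^3/\sqrt n$. It therefore suffices to show $\E^\star|Z_1|^3\le C\sqrt{\bar d}$.

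For the third moment I would use $|s_\eta(Y,\eta^\star)|\le1$ and the envelope bound
\[
\E^\star|Z_1|^3\le \frac{\sup_X|\ip{H_\Gamma^\star}{X}|}{\sigma_\Gamma}\cdot\frac{\E^\star[s_\eta^2\ip{H_\Gamma^\star}{X}^2]}{\sigma_\Gamma^2}
=\frac{\sup_X|\ip{H_\Gamma^\star}{X}|}{\sigma_\Gamma}.
\]
This reduces the task to bounding the ratio of the sup-envelope to the standard deviation.

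For the denominator, Lemma~\ref{lem:weighted-moment-app} gives that the eigenvalues of $A_{\mathbb T}$ are of order $1/\dstar$. Assumption~\ref{ass:alignment-app} then gives
\[
\sigma_\Gamma^2\asymp\dstar\Fnorm{P_{\mathbb T}\Gamma}^2\ge c\alpha_0^2\,\dstar(\dt^{-1}+\dm^{-1})\asymp\bar d.
\]
This is exactly the computation in Step~1 of Theorem~\ref{thm:uniform-remainder-app}.

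For the numerator, $\ip{H_\Gamma^\star}{X}=(H^\star_\Gamma)_{t,m}-(H^\star_\Gamma)_{t,m'}$, so $\sup_X|\ip{H_\Gamma^\star}{X}|\le2\infnorm{H_\Gamma^\star}$. The crude route via $\infnorm{H}\le\norm{\mathcal A^\dagger}_{1\to\infty}\oneNorm\Gamma$ would bring in $C_A$. That loses the clean rate unless $C_A=O(1)$, and the stated bound has no $C_A$ factor.

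Instead I would write $H_\Gamma^\star=A_{\mathbb T}^{-1}P_{\mathbb T}\Gamma$ and use the tangent structure. Every $H\in\mathbb T$ has the form $U^\star C^\top+D(V^\star)^\top$, so incoherence gives
\[
\infnorm H\le C\Big(\sqrt{\tfrac{\mu r}{\dt}}\twoninfnorm{C}+\sqrt{\tfrac{\mu r}{\dm}}\twoninfnorm{D}\Big).
\]
I would aim to show $\infnorm{H_\Gamma^\star}\le C\,\dstar\,\bar d/\dstar\cdot$(const) $=C\bar d$, which yields the envelope ratio $C\bar d/\sqrt{\bar d}=C\sqrt{\bar d}$. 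Heuristically, $A_{\mathbb T}^{-1}$ acts like $\dstar$ times an operator that preserves the incoherent spread, and $\infnorm{P_{\mathbb T}\Gamma}\le C\bar d/\dstar$ by (D.2). To make this rigorous I would use (D.2) together with the $\infty\to\infty$ bound on $\mathcal A^\dagger P_{\mathbb T}G$ in \eqref{eq:population-composite-CA-app} where available. Otherwise I would fall back on an $\ell_2$ argument: take $\Fnorm{H_\Gamma^\star}\le C\dstar\Fnorm{P_{\mathbb T}\Gamma}\asymp\sqrt{\bar d\,\dstar}$, then convert to a row-wise bound on the factors $C,D$ via incoherence.

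The main obstacle is this sup-norm envelope for $H_\Gamma^\star$. If it cannot be established free of $C_A$, I would instead bound $\E|Z|^3$ through the fourth moment: $\E|Z|^3\le(\E Z^4)^{1/2}$. I would then control $\E^\star\ip{H}{X}^4$ by the pairwise-graph structure. One averages $(H_{t,m}-H_{t,m'})^4$ over near-uniform pairs, bounded by $\max_t\norm{H[t]}_\infty^2$ times the second moment. This again needs only a row-wise $2,\infty$-type control of $H_\Gamma^\star$, which incoherence of $\mathbb T$ plus the spectral bound on $A_{\mathbb T}^{-1}$ should supply. The result is the claimed $C\sqrt{\bar d/n}$, with constants depending on $(\mu,r,\kappa,B,\alpha_0)$ and the design constants.
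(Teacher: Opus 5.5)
Your architecture is the paper's: classical Berry--Esseen, $\E^\star Z_i=0$, $\E^\star Z_i^2=1$, and the reduction $\E^\star|Z_1|^3\le\sup_X|\ip{H_\Gamma^\star}{X}|/\sigma_\Gamma$. The gap is that the one nontrivial step is never proved. That step is the $C_A$-free envelope $\sup_X|\ip{H_\Gamma^\star}{X}|\lesssim\sqrt{\bar d}\,\sigma_\Gamma$, and you leave it as an open ``obstacle''. The rigorous route you lead with, (D.2) plus the $\infty\to\infty$ composite bound \eqref{eq:population-composite-CA-app}, carries the factor $C_A$. It would therefore give $C_A\sqrt{\bar d/n}$, not the stated rate.

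You also dismiss the $1\to\infty$ route for the wrong reason. $\norm{\mathcal A^\dagger}_{1\to\infty}$ is the largest entry of $\mathcal A^\dagger$, not its $\infty\to\infty$ norm, which is what defines $C_A$. By Cauchy--Schwarz,
$|\ip{E_\omega}{\mathcal A^\dagger E_{\omega'}}|\le\norm{A_{\mathbb T}^{-1}}_{\rm op}\Fnorm{P_{\mathbb T}E_\omega}\Fnorm{P_{\mathbb T}E_{\omega'}}\lesssim\dstar\cdot\bar d/\dstar=\bar d$.
This uses Lemma~\ref{lem:weighted-moment-app} and the atom envelope $\Fnorm{P_{\mathbb T}E_\omega}\lesssim\sqrt{\bar d/\dstar}$. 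Hence $\infnorm{H_\Gamma^\star}\lesssim\bar d\,\oneNorm\Gamma$ with no $C_A$, which is exactly the target you wanted.

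The paper closes the step more directly. Since $H_\Gamma^\star\in\mathbb T$, you have $|\ip{H_\Gamma^\star}{X}|=|\ip{H_\Gamma^\star}{P_{\mathbb T}X}|\le\Fnorm{H_\Gamma^\star}\Fnorm{P_{\mathbb T}X}\lesssim\Fnorm{H_\Gamma^\star}\sqrt{\bar d/\dstar}$ by Lemma~\ref{lem:tangent-envelope-app}. Lemmas~\ref{lem:frob-reduction-app}--\ref{lem:weighted-moment-app} give $\sigma_\Gamma^2\ge c\Fnorm{H_\Gamma^\star}^2/\dstar$. The factor $\Fnorm{H_\Gamma^\star}$ cancels, and the ratio is $\lesssim\sqrt{\bar d}$ with no $C_A$ and not even the alignment assumption.

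Your $\ell_2$ fallback is this same argument in factor form, and it closes as soon as you see that no refined row-wise control of the factors is needed. Take $C=R_{\rm m}(H_\Gamma^\star)^\top U^\star$ and $D=(I-P_{U^\star})H_\Gamma^\star V^\star$, so that $H_\Gamma^\star=U^\star C^\top+D(V^\star)^\top$. The trivial bounds $\twoninfnorm C\le\Fnorm C\le\Fnorm{H_\Gamma^\star}$ and $\twoninfnorm D\le\Fnorm{H_\Gamma^\star}$, together with incoherence, give $\infnorm{H_\Gamma^\star}\le 2\sqrt{\mu r/(\dt\wedge\dm)}\,\Fnorm{H_\Gamma^\star}$, and $\dstar/(\dt\wedge\dm)=\bar d$.

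The fourth-moment detour adds nothing: $(\E Z^4)^{1/2}\le\sup|Z|$, so it reduces to the same envelope.
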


\begin{proof}
We compute the second and third moments of \(Z_i\) explicitly and
apply the classical (univariate) Berry--Esseen theorem.

\textbf{Step 1: mean zero.}
\(\E^\star Z_i=\sigma_\Gamma^{-1}\E^\star\phi_\Gamma(W_i)=\sigma_\Gamma^{-1}\E^\star[s_\eta(Y,\eta^\star)\ip{H_\Gamma^\star}{X}]\),
which vanishes by the score-centering identity
\(\E^\star[s_\eta(Y,\eta^\star)\mid X]=0\).

\textbf{Step 2: second moment.}
By the definition of \(\sigma_\Gamma^2\) and Fisher comparability
(Lemma~\ref{lem:weighted-moment-app}),
\[
    \E^\star Z_i^2
    =
    \frac{1}{\sigma_\Gamma^2}\E^\star[s_\eta^2\ip{H_\Gamma^\star}{X}^2]
    =
    \frac{1}{\sigma_\Gamma^2}\ip{H_\Gamma^\star}{AH_\Gamma^\star}
    =
    \frac{V_{\rm eff}(\Gamma)}{\sigma_\Gamma^2}=1.
\]
By the Frobenius reduction (Lemma~\ref{lem:frob-reduction-app}) applied
to \(H_\Gamma^\star\),
\(\E^\star\ip{H_\Gamma^\star}{X}^2\asymp\Fnorm{H_\Gamma^\star}^2/\dstar\),
and Fisher comparability gives
\(\sigma_\Gamma^2\asymp\Fnorm{H_\Gamma^\star}^2/\dstar\), so
\begin{equation}\label{eq:fdim-be-step2}
    \frac{c_B}{\dstar}\Fnorm{H_\Gamma^\star}^2\le\sigma_\Gamma^2\le\frac{C_B}{\dstar}\Fnorm{H_\Gamma^\star}^2.
\end{equation}

\textbf{Step 3: third absolute moment.}
\[
    \E^\star|Z_i|^3
    =
    \frac{1}{\sigma_\Gamma^3}\E^\star[|s_\eta|^3|\ip{H_\Gamma^\star}{X}|^3]
    \le
    \frac{C_3}{\sigma_\Gamma^3}\E^\star|\ip{H_\Gamma^\star}{X}|^3
\]
where \(C_3:=\E^\star[|s_\eta(Y,\eta^\star)|^3\mid X]\le 1\) under
Assumption~\ref{ass:score-app}(iv) (since
\(|s_\eta|\le 1\)).  Now use the elementary inequality
\(\E|W|^3\le(\sup|W|)\E|W|^2\) with \(W=\ip{H_\Gamma^\star}{X}\):
\[
    \E^\star|\ip{H_\Gamma^\star}{X}|^3
    \le
    \Bigl(\sup_x|\ip{H_\Gamma^\star}{x}|\Bigr)\,
    \E^\star\ip{H_\Gamma^\star}{X}^2.
\]
Since \(H_\Gamma^\star\in\mathbb T\),
\(\ip{H_\Gamma^\star}{x}=\ip{H_\Gamma^\star}{P_{\mathbb T}x}\), so by
Cauchy--Schwarz and the tangent-projection envelope
\(\sup_x\Fnorm{P_{\mathbb T}x}\lesssim\sqrt{\bar d/\dstar}\) (which
follows from the basis-tensor projection bound and triangle inequality
for pairwise differences; see
Lemma~\ref{lem:tangent-envelope-app} below),
\begin{equation}\label{eq:fdim-be-tangent-envelope}
    \sup_{x\in\mathcal X}|\ip{H_\Gamma^\star}{x}|
    \le
    \Fnorm{H_\Gamma^\star}\sup_x\Fnorm{P_{\mathbb T}x}
    \;\lesssim\;
    \Fnorm{H_\Gamma^\star}\sqrt{\frac{\bar d}{\dstar}}.
\end{equation}
Combined with \(\E^\star\ip{H_\Gamma^\star}{X}^2\asymp\Fnorm{H_\Gamma^\star}^2/\dstar\),
\[
    \E^\star|\ip{H_\Gamma^\star}{X}|^3
    \;\lesssim\;
    \Fnorm{H_\Gamma^\star}\sqrt{\frac{\bar d}{\dstar}}\cdot\frac{\Fnorm{H_\Gamma^\star}^2}{\dstar}
    =
    \frac{\Fnorm{H_\Gamma^\star}^3\sqrt{\bar d}}{(\dstar)^{3/2}}.
\]
Substituting into the expression for \(\E^\star|Z_i|^3\), and using
\(\sigma_\Gamma^3\ge c_B^{3/2}\Fnorm{H_\Gamma^\star}^3/(\dstar)^{3/2}\)
from~\eqref{eq:fdim-be-step2},
\[
    \E^\star|Z_i|^3
    \;\le\;
    \frac{C_3}{\sigma_\Gamma^3}\cdot\frac{\Fnorm{H_\Gamma^\star}^3\sqrt{\bar d}}{(\dstar)^{3/2}}
    \;\le\;
    C_3 c_B^{-3/2}\,\sqrt{\bar d}.
\]

\textbf{Step 4: Berry--Esseen.}
By the classical univariate Berry--Esseen theorem (e.g.\ Shevtsova
2010 with constant \(C_{\rm BE}=0.4748\)),
\[
    \rho_n
    \le
    \frac{C_{\rm BE}}{\sqrt n}\E^\star|Z_i|^3
    \le
    C_3 c_B^{-3/2}\,C_{\rm BE}\,\sqrt{\frac{\bar d}{n}}.
\]
\end{proof}

\begin{lemma}[Pairwise tangent-projection envelope]\label{lem:tangent-envelope-app}
Under \(\mu\)-incoherence and the row-centering gauge, for every
admissible design tensor \(X=e_t(e_m-e_{m'})^\top\),
\(\Fnorm{P_{\mathbb T}X}\le C(\mu,r)\sqrt{\bar d/\dstar}\).  The same
bound holds for \(\widehat P_{\mathbb T}X\) under the estimated
incoherence guarantee.
\end{lemma}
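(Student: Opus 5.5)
The plan is to expand $P_{\mathbb T}X$ using the closed form of Lemma~\ref{lem:fdim-projector-app} and bound each block with incoherence. Write $X=e_t(e_m-e_{m'})^\top=e_t w^\top$, where $w:=e_m-e_{m'}$. Since $\mathbf 1_{\dm}^\top w=0$, we have $R_{\rm m}w=w$. Then
\[
 P_{\mathbb T}X=(P_{U^\star}e_t)w^\top+\bigl((I-P_{U^\star})e_t\bigr)(P_{V^\star}w)^\top .
\]
These two blocks are orthogonal in the Frobenius inner product, because their left factors are orthogonal. Hence
\[
 \Fnorm{P_{\mathbb T}X}^2=\norm{P_{U^\star}e_t}_2^2\,\norm{w}_2^2+\norm{(I-P_{U^\star})e_t}_2^2\,\norm{P_{V^\star}w}_2^2 .
\]

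Each factor is then controlled directly. Incoherence gives $\norm{P_{U^\star}e_t}_2^2=\norm{e_t^\top U^\star}_2^2\le\mu r/\dt$. We also have $\norm{w}_2^2=2$ and $\norm{(I-P_{U^\star})e_t}_2\le1$. Finally, by the triangle inequality and incoherence of $V^\star$, $\norm{P_{V^\star}w}_2\le\norm{e_m^\top V^\star}_2+\norm{e_{m'}^\top V^\star}_2\le2\sqrt{\mu r/\dm}$.

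Substituting these bounds gives $\Fnorm{P_{\mathbb T}X}^2\le 2\mu r/\dt+4\mu r/\dm\le 6\mu r(\dt^{-1}+\dm^{-1})$. This equals $6\mu r(\dt+\dm)/\dstar\le 12\mu r\,\bar d/\dstar$, which is the claimed bound with $C(\mu,r)=\sqrt{12\mu r}$.

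For the estimated projector $\widehat P_{\mathbb T}$, the same identity holds with $\widehat U,\widehat V$ in place of $U^\star,V^\star$. This requires $\widehat V^\top\mathbf 1_{\dm}=0$, which follows from the exact row-centering $\widehat\Theta\mathbf 1_{\dm}=0$ together with exact rank $r$, both established in Theorem~\ref{thm:pairwise-max-app}.

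The only genuine step is therefore the estimated incoherence, i.e.\ $\twoninfnorm{\widehat U}\lesssim\sqrt{\mu r/\dt}$ and $\twoninfnorm{\widehat V}\lesssim\sqrt{\mu r/\dm}$. I expect this to be the main obstacle, and I would derive it from the rowwise factor bounds used in Lemma~\ref{lem:foldwise-geometry-app}. Those bounds give $\twoninfnorm{\widehat A-A^\star}\lesssim r_n\dt^{-1/2}$ and $\twoninfnorm{\widehat L-L^\star}\lesssim r_n\sqrt{\dt}$. The column spaces of these factors are the singular spaces, so I would orthonormalize them using the Gram lower bounds from Corollary~\ref{cor:pairwise-gram-bar-app}, namely $\sigma_r(\widehat L)\gtrsim\sqrt{\dt\dm}$ and $\sigma_r(\widehat A)\approx1$. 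This yields row norms of order $(1+o(1))\sqrt{\mu r}$ over the square root of the corresponding dimension, on the high-probability event. The constants then change only by a factor $1+o(1)$.
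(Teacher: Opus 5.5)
Your proof is correct and is essentially the paper's. Both expand $P_{\mathbb T}X$ with the closed-form projector of Lemma~\ref{lem:fdim-projector-app} and bound each block by incoherence. The paper instead splits $X=E_{(t,m)}-E_{(t,m')}$ and applies the triangle inequality to a basis-atom bound, while you work directly with $w=e_m-e_{m'}$, using $R_{\rm m}w=w$ and Frobenius orthogonality of the two blocks to get the explicit constant $\sqrt{12\mu r}$. For $\widehat P_{\mathbb T}$, the paper (and the statement itself) simply takes estimated incoherence as given, so your sketch deriving it from the rowwise factor bounds is a supplement rather than a required step.
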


\begin{proof}
Decompose \(X=E_{(t,m)}-E_{(t,m')}\) where \(E_\omega=e_te_m^\top\) is
the canonical basis tensor.  By the basis-tensor projection bound (a
direct consequence of the closed-form
projector~\eqref{eq:tangent-projector-matrix} together with
\(\mu\)-incoherence),
\(\Fnorm{P_{\mathbb T}E_\omega}\le C(\mu,r)\sqrt{\bar d/\dstar}\) for
every basis tensor.  By linearity and the triangle inequality,
\(\Fnorm{P_{\mathbb T}X}\le\Fnorm{P_{\mathbb T}E_{(t,m)}}+\Fnorm{P_{\mathbb T}E_{(t,m')}}\le 2C(\mu,r)\sqrt{\bar d/\dstar}\).
\end{proof}

\subsection{Multivariate rectangle CLT for fixed \texorpdfstring{$q$}{q}}\label{app:fdim-proof-clt}

We combine a coordinate-standardized rectangle Gaussian approximation
with the uniform remainder bound of
Theorem~\ref{thm:uniform-remainder-app} to obtain the rectangle CLT.
All objects in this subsection may depend on \(n\); subscripts \(n\) are
suppressed unless they are needed to emphasize the triangular-array limit.

Let \(\Phi_q(W_i):=(\phi_1(W_i),\ldots,\phi_q(W_i))^\top\),
\(S_n:=n^{-1/2}\sum_{i=1}^n\Phi_q(W_i)\),
\(T_n:=\sqrt n(\widehat\psi-\psi)\), and
\(\mathsf R_n:=\sqrt n(R_n^{\Gamma_1},\ldots,R_n^{\Gamma_q})^\top\),
so that \(T_n=S_n+\mathsf R_n\).

\begin{theorem}[Oracle rectangle Gaussian approximation]\label{thm:fdim-be-multi-app}
For any rectangle \(B\in\mathcal R_q\),
\[
    \rho_n^{\rm orac}
    :=
    \sup_{B\in\mathcal R_q}|\Pr(S_n\in B)-\Pr(Z_\Gamma\in B)|
    \;\le\;
    \delta_n^{\rm orac}
    :=C\left\{\frac{\bar d\log^7(nq)}n\right\}^{1/6},
\]
where \(Z_\Gamma\sim\mathcal N(0,\Sigma)\).  No lower bound on the
smallest eigenvalue of \(\Sigma\), or of its correlation matrix, is
required.
\end{theorem}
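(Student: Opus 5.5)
The plan is to apply the high-dimensional CLT for maxima/rectangles of Chernozhukov, Chetverikov and Kato (CCK; \citealp{chernozhukov2017central}) to the coordinate-standardized vectors
\[
\widetilde Z_i:=D^{-1}\Phi_q(W_i),\qquad D=\operatorname{diag}(\sigma_1,\ldots,\sigma_q).
\]
Rectangles are invariant under coordinatewise positive scaling, and $\{S_n\in B\}=\{n^{-1/2}\sum_i\widetilde Z_i\in D^{-1}B\}$ with $D^{-1}B$ again a rectangle. So it suffices to bound
\[
\sup_{B}\Bigl|\Pr\Bigl(n^{-1/2}\textstyle\sum_i\widetilde Z_i\in B\Bigr)-\Pr(\widetilde Z\in B)\Bigr|,
\qquad \widetilde Z\sim\mathcal N(0,D^{-1}\Sigma D^{-1}).
\]
The target covariance has unit diagonal. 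The CCK rectangle bound under conditions (M.1), (M.2) and (E.1) needs only a lower bound $\E \widetilde Z_{ij}^2\ge b>0$ on each coordinate, which holds since the diagonal is exactly $1$. It does not need any eigenvalue condition on the correlation matrix, which is precisely the "no lower bound on the smallest eigenvalue" claim. I will also cite the version in which the Gaussian vector is only required to match the covariance, so singular $\Sigma$ (redundant gaps, Remark~\ref{rem:fdim-singular-app}) is allowed.

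The key step is verifying the moment and envelope conditions with an explicit envelope $B_n$. Each coordinate is
\[
\widetilde Z_{ij}=s_\eta(Y_i,\eta_i^\star)\,\ip{H_j^\star}{X_i}/\sigma_j,
\qquad |s_\eta|\le1 .
\]
By the tangent envelope \eqref{eq:fdim-be-tangent-envelope} together with \eqref{eq:fdim-be-step2}, exactly as in Step~3 of Theorem~\ref{thm:fdim-be-app},
\[
|\widetilde Z_{ij}|\le \Fnorm{H_j^\star}\sqrt{\bar d/\dstar}\,/\,\sigma_j\le C\sqrt{\bar d}\quad\text{almost surely.}
\]
So $B_n:=C\sqrt{\bar d}$ is a deterministic bounded envelope, and (E.1) holds trivially, with $\max_j|\widetilde Z_{ij}|\le B_n$ almost surely for every fixed $q$. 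For (M.2), the bound $\E|\widetilde Z_{ij}|^{2+k}\le B_n^k\E\widetilde Z_{ij}^2=B_n^k$ for $k=1,2$ gives the required growth. The CCK rate is then
\[
\left(\frac{B_n^2\log^7(qn)}{n}\right)^{1/6}=C\left(\frac{\bar d\log^7(nq)}{n}\right)^{1/6},
\]
which is $\delta_n^{\rm orac}$. Independence across $i$ holds because the oracle influence functions involve only the fixed $\Thetastar$ and i.i.d.\ $W_i$, with no cross-fitting nuisance.

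The main obstacle I expect is the envelope step, since everything else is bookkeeping. One must make sure the bound $\sup_x|\ip{H_j^\star}{x}|\lesssim\Fnorm{H_j^\star}\sqrt{\bar d/\dstar}$ is uniform over the support of the design, uses $H_j^\star\in\mathbb T$, and uses Lemma~\ref{lem:tangent-envelope-app}. The companion lower bound $\sigma_j^2\ge c_B\Fnorm{H_j^\star}^2/\dstar$ must also be available for every $j$; this needs $H_j^\star\ne0$, which follows from Assumption~\ref{ass:alignment-app}. If one prefers to avoid the exact CCK constants, a fallback is the Bentkus or Fang–Koike bound for fixed $q$, but that would require the nondegeneracy the statement explicitly avoids. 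So I would commit to the CCK route, in which $q$ enters only logarithmically.
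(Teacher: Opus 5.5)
Your proposal is correct and is essentially the paper's proof: standardize by $D=\operatorname{diag}(\sigma_1,\ldots,\sigma_q)$, and use the tangent-projection envelope together with the Fisher lower bound $\sigma_j^2\gtrsim\Fnorm{H_j^\star}^2/\dstar$ to obtain the deterministic envelope $B_n=C\sqrt{\bar d}$. Then verify the moment and exponential-tail conditions of the CCK rectangle theorem, which needs only unit coordinate variances and not a nonsingular correlation matrix, and map rectangles back through $D$; the paper adds only the remark that the case $q=1$ can be handled by the scalar Berry--Esseen bound or by duplicating the coordinate.
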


\begin{proof}
Let
\[
 D:=\operatorname{diag}(\sigma_1,\ldots,\sigma_q),\qquad
 \widetilde\Phi_q:=D^{-1}\Phi_q,
 \qquad R:=D^{-1}\Sigma D^{-1}.
\]
Uniform alignment implies \(\sigma_j>0\).  Each coordinate of
\(\widetilde\Phi_q\) is mean zero and has variance one.  Moreover, the
tangent-projection envelope and the Fisher lower bound used in
Theorem~\ref{thm:fdim-be-app} give, coordinatewise,
\[
 \left|\frac{\phi_j(W_i)}{\sigma_j}\right|^2\le C\bar d.
\]
Put \(B_n=C\sqrt{\bar d}\).  Since the standardized coordinates have
variance one and are bounded by \(B_n\), for \(k=1,2\),
\[
 \E\left|\frac{\phi_j}{\sigma_j}\right|^{2+k}
 \le B_n^k\E\left|\frac{\phi_j}{\sigma_j}\right|^2=B_n^k,
 \qquad
 \E\exp\!\left(\frac{|\phi_j|}{B_n\sigma_j}\right)\le e.
\]
Thus the moment and exponential-tail conditions of the bounded-coordinate
specialization of the rectangle Gaussian approximation theorem
\citep[Theorem~2.1]{chernozhukov2017central} hold, and it yields
\[
 \sup_{B\in\mathcal R_q}
 \left|
 \Pr\!\left\{\frac1{\sqrt n}\sum_{i=1}^n
                 \widetilde\Phi_q(W_i)\in B\right\}
 -\Pr\{N(0,R)\in B\}
 \right|
 \le C\left\{\frac{\bar d\log^7(nq)}n\right\}^{1/6}.
\]
This theorem requires only the coordinate variance lower bound, here
equal to one; it permits singular and arbitrarily nearly singular
correlation matrices.  Finally, multiplication by the positive diagonal
matrix \(D\) maps rectangles to rectangles, which gives the asserted
bound for \(S_n\) and \(Z_\Gamma\) without whitening or dividing by a
small eigenvalue of \(R\).
When \(q=1\), the same conclusion follows from the scalar Berry--Esseen
bound in Theorem~\ref{thm:fdim-be-app} (equivalently, one may duplicate the
single coordinate before applying the rectangle theorem).
\end{proof}

\begin{theorem}[Rectangle CLT for the feasible statistic; restatement of Theorem~\ref{thm:fdim-clt}]\label{thm:fdim-clt-app}
Under
Assumptions~\ref{ass:score-app}--\ref{ass:sample-app},
\[
    \sup_{B\in\mathcal R_q}|\Pr(T_n\in B)-\Pr(Z_\Gamma\in B)|
    \;\lesssim\;
    C_A\sqrt{\frac{\bar d\,\mathrm{polylog}(n\bar d)}{n}}
    +\delta_n^{\rm orac}.
\]
Consequently, if \(C_A\sqrt{\bar d\,\mathrm{polylog}(n\bar d)/n}\to 0\),
the rectangle distance to the triangular-array Gaussian law
\(N(0,\Sigma_n)\) tends to zero.  Here \(Z_\Gamma=Z_{\Gamma,n}\) and
\(\Sigma=\Sigma_n\) in the preceding display, and
Assumption~\ref{ass:sample-app}, with its fixed \(c_0\ge7\), also
implies \(\delta_n^{\rm orac}\to0\).
\end{theorem}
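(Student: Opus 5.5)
The plan is to combine the decomposition \(T_n=S_n+\mathsf R_n\) with the oracle rectangle approximation of Theorem~\ref{thm:fdim-be-multi-app}. The remainder is absorbed through Gaussian anti-concentration over rectangles.

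\emph{Step 1: coordinate standardization.} I would first work with the studentized vector \(D^{-1}T_n=D^{-1}S_n+D^{-1}\mathsf R_n\), where \(D=\operatorname{diag}(\sigma_1,\ldots,\sigma_q)\). Rectangles are preserved under positive diagonal scaling, so nothing is lost. Theorem~\ref{thm:uniform-remainder-app} applies to the finite family \(\{\Gamma_1,\ldots,\Gamma_q\}\), which has polynomial size and satisfies Assumptions~\ref{ass:bounded-Gamma-app}--\ref{ass:alignment-app}. It gives an event \(\mathcal E\) with \(\Pr(\mathcal E^c)\le n^{-a}\) on which
\[
\max_{j\le q}\frac{|\sqrt n R_n^{\Gamma_j}|}{\sigma_j}\le \delta_n^{\rm rem}:=C C_A\sqrt{\bar d\,\mathrm{polylog}(n\bar d)/n}.
\]

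\emph{Step 2: sandwich.} Write a rectangle as \(B=\prod_j[a_j,b_j]\) in standardized coordinates, and let \(B^{\pm\delta}\) denote its enlargement or shrinkage by \(\delta\) in every coordinate. On \(\mathcal E\), with \(\delta=\delta_n^{\rm rem}\),
\[
\Pr(D^{-1}S_n\in B^{-\delta})-n^{-a}\le\Pr(D^{-1}T_n\in B)\le\Pr(D^{-1}S_n\in B^{+\delta})+n^{-a}.
\]
Applying Theorem~\ref{thm:fdim-be-multi-app} to the rectangles \(B^{\pm\delta}\) replaces \(S_n\) by \(N(0,R)\), where \(R=D^{-1}\Sigma D^{-1}\), at cost \(\delta_n^{\rm orac}\).

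\emph{Step 3: anti-concentration, the main obstacle.} It remains to bound \(\Pr\{N(0,R)\in B^{+\delta}\setminus B^{-\delta}\}\). This must be done without assuming \(\lambda_{\min}(R)>0\), because the paper explicitly allows singular \(\Sigma\), for example redundant gaps. I would use Nazarov's inequality, in the form recorded in \citet{chernozhukov2017central}. It requires only unit coordinate variances, which holds by construction, and gives the bound \(C\delta\sqrt{\log(2q)}\). This works uniformly for singular correlation matrices, since it is a union over \(2q\) one-dimensional boundary strips, each of Gaussian width \(O(\delta)\). For fixed \(q\), the crude per-coordinate bound \(2q\cdot 2\delta/\sqrt{2\pi}\) already suffices. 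This yields a total error of order \(\delta_n^{\rm rem}+\delta_n^{\rm orac}+n^{-a}\), and \(n^{-a}\) is absorbed into the first term.

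\emph{Step 4: conclusion.} Undoing the scaling by \(D\) gives the stated bound. Under \(C_A r_n\to0\) the first term vanishes. By Assumption~\ref{ass:sample-app} with \(c\ge7\), \(\bar d\log^7(nq)/n\to0\) for fixed \(q\), so \(\delta_n^{\rm orac}\to0\) as well.

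The covariance consistency claim in the main-text statement follows separately from Proposition~\ref{prop:fdim-cov-cons-app}. I would cite it rather than reprove it here.
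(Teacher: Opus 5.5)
Your proposal is correct and takes essentially the same route as the paper. You standardize by \(D\), invoke Theorem~\ref{thm:uniform-remainder-app} on the fixed family to sandwich \(B\) between \(B^{-\delta_n}\) and \(B^{+\delta_n}\), apply the oracle rectangle approximation, and bound the Gaussian boundary band by the crude \(2q\)-face estimate of order \(q\delta_n\); Nazarov's inequality is an unnecessary refinement when \(q\) is fixed.

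One small precision concerns the last step. The limit \(\delta_n^{\rm orac}\to0\) follows from \(r_n\to0\), which \(C_Ar_n\to0\) and \(C_A\ge1\) imply, combined with \(c\ge7\). The sample-size lower bound alone does not give it: when \(c=7\), that bound only keeps \(\bar d\log^7(n\bar d)/n\) bounded.
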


\begin{proof}
Let \(D:=\mathrm{diag}(\sqrt{\Sigma_{11}},\ldots,\sqrt{\Sigma_{qq}})\).
Uniform alignment makes every diagonal entry positive.  Define standardized vectors
\(\bar T_n:=D^{-1}T_n\), \(\bar S_n:=D^{-1}S_n\),
\(\bar Z:=D^{-1}Z_\Gamma\).  The standardized remainder event is
\[
    \mathcal E_r:=\bigl\{\infnorm{D^{-1}\mathsf R_n}\le\delta_n\bigr\},
    \qquad
    \delta_n:=C\,C_A\sqrt{\bar d\,\mathrm{polylog}(n\bar d)/n}.
\]
By Theorem~\ref{thm:uniform-remainder-app} applied to the fixed family
\(\{\Gamma_1,\ldots,\Gamma_q\}\),
\(\pi_n:=\Pr(\mathcal E_r^c)\le n^{-a}\) after relabelling.

For any rectangle \(B=\prod_{j=1}^q[\alpha_j,\beta_j]\), define enlarged
and shrunk rectangles
\(B^{\pm\delta_n}:=\prod[\alpha_j\mp\delta_n,\beta_j\pm\delta_n]\) (with
the convention that an empty interval makes
\(B^{-\delta_n}=\emptyset\)).
On \(\mathcal E_r\), \(\bar T_n\in B\) implies
\(\bar S_n\in B^{+\delta_n}\), and \(\bar S_n\in B^{-\delta_n}\) implies
\(\bar T_n\in B\).  Hence
\[
    \Pr(\bar S_n\in B^{-\delta_n})-\pi_n
    \le
    \Pr(\bar T_n\in B)
    \le
    \Pr(\bar S_n\in B^{+\delta_n})+\pi_n.
\]
Combining with the oracle Berry--Esseen
(Theorem~\ref{thm:fdim-be-multi-app}, applied to standardized
\(\bar S_n\)),
\[
    |\Pr(\bar T_n\in B)-\Pr(\bar Z\in B)|
    \le
    \rho_n^{\rm orac}+\pi_n+\Pr(\bar Z\in B^{+\delta_n}\setminus B^{-\delta_n}).
\]
The Gaussian boundary band is bounded by Gaussian anti-concentration:
each face contributes at most \(\delta_n\sqrt{2/\pi}\) (one-dimensional
standard normal density at most \(\sqrt{1/(2\pi)}\)), and there are
\(2q\) faces, so the band is at most \(C q\delta_n\).  Since
\(q=O(1)\), \(\pi_n\le n^{-a}\), and
\(\rho_n^{\rm orac}\le\delta_n^{\rm orac}\), the total is
\(\lesssim C_A\sqrt{\bar d\,\mathrm{polylog}(n\bar d)/n}
+\delta_n^{\rm orac}\).  Linear
coordinate rescaling maps rectangles to rectangles, so the bound
transfers from \(\bar T_n\) to \(T_n\) without loss.
\end{proof}

\subsection{Plug-in variance and covariance consistency}\label{app:fdim-proof-varcons}

Let
\(\overline\phi_\Gamma:=\Pn\widehat\phi_\Gamma\) and
\(\widehat V_\Gamma=\Pn[(\widehat\phi_\Gamma-\overline\phi_\Gamma)^2]\),
with foldwise quantities understood.
The transported-operator comparison (D.1) and
Fisher comparability imply, uniformly over the score-gap family,
\begin{equation}\label{eq:relative-direction-app}
 {\Fnorm{\widehat H_\Gamma-H_\Gamma^\star}
  \over\Fnorm{H_\Gamma^\star}}\le Cr_n/\alpha_0.
\end{equation}

\begin{proposition}[Relative variance consistency]\label{prop:fdim-var-cons-app}
For every fixed \(a>0\), with probability at least \(1-n^{-a}\),
\begin{equation}\label{eq:fdim-var-cons-app}
 \max_{\Gamma}
 \left|\frac{\widehat V_\Gamma}{V_\Gamma}-1\right|
 \le Cr_n,\qquad
 r_n=\sqrt{\frac{\bar d\log^c(n\bar d)}n}.
\end{equation}
No \(C_A\) factor is required.
\end{proposition}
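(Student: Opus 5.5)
The plan is to compare $\widehat V_\Gamma$ with the oracle empirical variance $\Pn\phi_\Gamma^2$, and then to compare that with $V_\Gamma=\E^\star\phi_\Gamma^2$. All deviations will be measured relative to $V_\Gamma$. Since $V_\Gamma\asymp\Fnorm{H_\Gamma^\star}^2/\dstar$ by \eqref{eq:fdim-be-step2}, every error is a relative perturbation of $H^\star_\Gamma$ in Frobenius norm, and $C_A$ never enters. This is why the statement carries no $C_A$ factor. We work foldwise, conditional on the training sample, so that $\widehat H_\Gamma$ and $\widehat\Theta$ are fixed. We then take a union bound over the fixed number of folds and over the score-gap family, which has polynomial size. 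This costs only the logarithms already contained in $r_n$.

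\textbf{Step 1 (oracle term).} Let $Z=\phi_\Gamma/\sigma_\Gamma$. As in Theorem~\ref{thm:fdim-be-app}, $|Z|^2\le C\bar d$ and $\E Z^2=1$, so $\Var(Z^2)\le C\bar d$. Scalar Bernstein with $x=(a+C)\log(n\bar d)$ then gives
\[
|\Pn Z^2-1|\le C\sqrt{\bar d x/n}+C\bar d x/n\le Cr_n .
\]
The squared mean $(\Pn Z)^2$ is of order $x/n$ by the same argument. Hence centering by $\overline\phi_\Gamma$ costs only $O(r_n^2)$, once we also show that $\Pn(\widehat\phi_\Gamma-\phi_\Gamma)$ is small (Step 2).

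\textbf{Step 2 (plug-in perturbation).} Decompose
\[
\widehat\phi_\Gamma-\phi_\Gamma=s^\star\langle\widehat H_\Gamma-H_\Gamma^\star,X\rangle+(\widehat s-s^\star)\langle\widehat H_\Gamma,X\rangle=:g_1+g_2,
\]
which is the same split as for $g_\Gamma$ in Appendix~\ref{app:fdim-proof-decomp}. By Cauchy--Schwarz,
\[
|\Pn\widehat\phi^2-\Pn\phi^2|\le \Pn(g_1+g_2)^2+2\{\Pn\phi^2\}^{1/2}\{\Pn(g_1+g_2)^2\}^{1/2},
\]
so it suffices to show $\Pn(g_1+g_2)^2\le Cr_n^2V_\Gamma$.

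For $g_1$, we have $|s^\star|\le1$. Lemma~\ref{lem:frob-reduction-app} applies to the row-centered matrix $\widehat H_\Gamma-H_\Gamma^\star$ and gives $P g_1^2\le C\Fnorm{\widehat H_\Gamma-H^\star_\Gamma}^2/\dstar$. By \eqref{eq:relative-direction-app} this is at most $Cr_n^2\Fnorm{H^\star_\Gamma}^2/\dstar\asymp r_n^2V_\Gamma$.

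For $g_2$, we have $|\widehat s-s^\star|\le|\langle X,\Delta\rangle|/4\le Cr_n$ by Theorem~\ref{thm:entrywise}. Moreover $P\langle\widehat H_\Gamma,X\rangle^2\le C\Fnorm{\widehat H_\Gamma}^2/\dstar\le CV_\Gamma$, again by \eqref{eq:relative-direction-app}. Hence $Pg_2^2\le Cr_n^2V_\Gamma$.

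Finally, we pass from $P$ to $\Pn$ by conditional Bernstein. The envelope of $g_j^2/V_\Gamma$ is bounded by $C\bar d$ times the corresponding relative error, using the tangent envelope of Lemma~\ref{lem:tangent-envelope-app} together with its estimated-space analogue, since both $\widehat H_\Gamma$ and $H^\star_\Gamma$ lie in tangent spaces. The fluctuation is therefore $O(r_n^2\sqrt{\bar d x/n})$, which is negligible.

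\textbf{Main obstacle.} The delicate point is to obtain \eqref{eq:relative-direction-app} with relative rather than absolute accuracy, uniformly over the family and without an $\infty\to\infty$ inverse bound. We take this from (D.1), which relies on the Procrustes transport $W$ and on the operator-norm resolvent. That argument uses only spectral bounds of order $\dstar$ and the alignment constant $\alpha_0$, both of which are absorbed into $C$. One must also check that $\Pn$ of the cross terms does not reintroduce $\oneNorm{\widehat H_\Gamma}$, which is the quantity carrying $C_A$. Working purely with $L^2(P)$ norms together with envelopes avoids this. Combining Steps 1 and 2 gives $|\widehat V_\Gamma/V_\Gamma-1|\le Cr_n$, uniformly in $\Gamma$, with probability at least $1-n^{-a}$.
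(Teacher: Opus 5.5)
Your argument is correct, but it takes a different route from the paper's.

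\textbf{The paper's route.} It conditions on the training sample and on the evaluation covariates, and uses $\E(\widehat s_i^2\mid\cdot)=I(\eta_i^\star)+\{\sigma(\eta_i^\star)-\sigma(\widehat\eta_i)\}^2$ to split off a pure response-noise term, which conditional Bernstein handles. It then identifies the remaining mean $\Pn[I(\widehat\eta_i)\ip{\widehat H_\Gamma}{X_i}^2]$ with $\ip{\widehat H_\Gamma}{\widehat G_{\rm ev}\widehat H_\Gamma}$. From there it passes to $\ip{\widehat H_\Gamma}{\widehat G\widehat H_\Gamma}=\ip{\widehat P_{\mathbb T}\Gamma}{\widehat H_\Gamma}$ through the information equation, and finally to $V_\Gamma$ through \eqref{eq:foldwise-info-conc-app} and \eqref{eq:relative-direction-app}. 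The square-loss bound \eqref{eq:if-square-var-app} appears there only as a by-product, for later use in the covariance result.

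\textbf{Your route.} You make that square-loss bound the central estimate and pair it with an oracle Bernstein step for $\Pn\phi_\Gamma^2$. This is essentially the diagonal case of the paper's proof of Proposition~\ref{prop:fdim-cov-cons-app}, made self-contained: you bound $\Pn\widehat\phi_\Gamma^2$ through $\Pn\phi_\Gamma^2$, rather than borrowing that bound from the variance proposition.

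\textbf{What each buys.} Your route needs only (D.1), the population Frobenius reduction, the tangent envelopes, and per-contrast scalar Bernstein with a union bound. It avoids both the information equation and the uniform concentration of $\widehat G_{\rm ev}$ on $\mathbb T+\widehat{\mathbb T}$. The paper's route, in exchange, ties the empirical variance directly to the model-based plug-in variance $\ip{\widehat P_{\mathbb T}\Gamma}{\widehat H_\Gamma}$.

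\textbf{One small imprecision.} You claim the envelope of $g_1^2/V_\Gamma$ is of order $\bar d$ times the relative error. This does not follow from the two individual tangent envelopes alone, because $\widehat H_\Gamma-H_\Gamma^\star$ lies in the sum space $\mathbb T+\widehat{\mathbb T}$. To get it you need the atomwise bound $\Fnorm{(\widehat P_{\mathbb T}-P_{\mathbb T})X}\le Cr_n\sqrt{\bar d/\dstar}$ from Lemma~\ref{lem:foldwise-geometry-app}, via the split $\ip{D}{X}=\ip{D}{\widehat P_{\mathbb T}X}+\ip{H_\Gamma^\star}{(\widehat P_{\mathbb T}-P_{\mathbb T})X}$. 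The gap is harmless, however. The crude envelope $C\bar d$, together with variance proxy $C\bar d r_n^2$, already gives a Bernstein fluctuation of order $r_n\sqrt{\bar d x/n}+\bar d x/n\lesssim r_n^2$, which is all you need.
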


\begin{proof}
It suffices to treat one evaluation fold; averaging the fixed number of
folds changes only constants.  Let
\[
\mathcal G_k:=\sigma\!\left({\mathcal D_{\mathcal I_k^c}},
                    \{X_i:i\text{ is in the evaluation fold}\}\right).
\]
Conditional on \(\mathcal G_k\), the fitted matrix and
\(\widehat H_\Gamma\) are fixed, while the evaluation responses are
independent.  Write
\[
 p_i^\star:=\sigma(\eta_i^\star),\quad
 \widehat p_i:=\sigma(\widehat\eta_i),\quad
 h_i:=\ip{\widehat H_\Gamma}{X_i},\quad
 h_i^\star:=\ip{H_\Gamma^\star}{X_i}.
\]
The conditional second moment of the plug-in score is
\begin{equation}\label{eq:plugin-score-cond-second-app}
 m_i:=\E(\widehat s_i^2\mid\mathcal G_k)
 =I(\eta_i^\star)+(p_i^\star-\widehat p_i)^2.
\end{equation}

\textbf{Step 1: empirical scale of the fitted direction.}
The bounded-logit event, the combined-tangent restricted-energy
condition, \eqref{eq:relative-direction-app}, and Fisher comparability
give
\begin{equation}\label{eq:Hhat-empirical-scale-app}
 \Pn h_i^2\le C V_\Gamma,\qquad
 \max_i\frac{h_i^2}{V_\Gamma}\le C\bar d.
\end{equation}
For the first inequality, use
\(\Pn h_i^2\le(c_B')^{-1}\Pn[I(\widehat\eta_i)h_i^2]
\le C\Fnorm{\widehat H_\Gamma}^2/\dstar\) and
\(\Fnorm{\widehat H_\Gamma}\asymp\Fnorm{H_\Gamma^\star}\).
For the second, use
\(h_i=\ip{\widehat H_\Gamma}{\widehat P_{\mathbb T}X_i}\), the estimated
tangent envelope, and
\(V_\Gamma\asymp\Fnorm{H_\Gamma^\star}^2/\dstar\).

\textbf{Step 2: response fluctuation conditional on all evaluation
covariates.}
By \eqref{eq:plugin-score-cond-second-app},
\[
 Q_{1,\Gamma}:=\Pn[(\widehat s_i^2-m_i)h_i^2]
\]
is a sum of conditionally independent, centered variables.  The two
bounds in \eqref{eq:Hhat-empirical-scale-app} give conditional envelope
\(C\bar d V_\Gamma\) and variance proxy \(C\bar d V_\Gamma^2\).
Conditional Bernstein, with
\(x=(a+C)\log(n\bar d)\), and a union bound over the polynomial-size
contrast family therefore yield
\begin{equation}\label{eq:response-square-fluctuation-app}
 \max_\Gamma\frac{|Q_{1,\Gamma}|}{V_\Gamma}
 \le C\left\{\sqrt{\frac{\bar d x}{n}}
              +\frac{\bar d x}{n}\right\}
 \le Cr_n
\end{equation}
with probability at least \(1-n^{-a}\).

\textbf{Step 3: conditional mean and the information equation.}
The Lipschitz properties of \(I\) and \(\sigma\), together with the
entrywise rate for \(\Delta=\widehat\Theta-\Thetastar\), imply
\[
 \max_i|m_i-I(\widehat\eta_i)|\le Cr_n.
\]
Consequently, by \eqref{eq:Hhat-empirical-scale-app},
\[
 \left|\Pn[m_i h_i^2]-\Pn[I(\widehat\eta_i)h_i^2]\right|
 \le Cr_nV_\Gamma.
\]
Let \(\widehat G_{\rm ev}\) be the evaluation-fold operator in
Lemma~\ref{lem:foldwise-geometry-app}.  The information-subsample
estimating equation and \eqref{eq:foldwise-info-conc-app} give
\begin{align*}
 \Pn[I(\widehat\eta_i)h_i^2]
 &=\ip{\widehat H_\Gamma}{\widehat G_{\rm ev}\widehat H_\Gamma},\\
 \left|\ip{\widehat H_\Gamma}
              {(\widehat G_{\rm ev}-\widehat G)\widehat H_\Gamma}\right|
 &\le {Cr_n\over\dstar}\Fnorm{\widehat H_\Gamma}^2
 \le Cr_nV_\Gamma,\\
 \ip{\widehat H_\Gamma}{\widehat G\widehat H_\Gamma}
 &=\ip{\widehat P_{\mathbb T}\Gamma}{\widehat H_\Gamma}.
\end{align*}
Put \(D_\Gamma=\widehat H_\Gamma-H_\Gamma^\star\).  Since
\(D_\Gamma\in\mathbb T+\widehat{\mathbb T}\), the local information
 bound \eqref{eq:foldwise-info-conc-app}, the relative
direction bound \eqref{eq:relative-direction-app}, and the upper Fisher
bound imply
\[
 \left|
 \ip{\widehat H_\Gamma}{\widehat G\widehat H_\Gamma}
 -\ip{H_\Gamma^\star}{G H_\Gamma^\star}
 \right|
 \le {Cr_n\over\dstar}\Fnorm{H_\Gamma^\star}^2
 \le Cr_nV_\Gamma.
\]
Because
\(\ip{H_\Gamma^\star}{G H_\Gamma^\star}=V_\Gamma\), Steps 2--3 show
\[
 \left|\Pn\widehat\phi_\Gamma^2-V_\Gamma\right|
 \le Cr_nV_\Gamma.
\]

\textbf{Step 4: empirical centering.}
Conditional on \(\mathcal G_k\),
\(\E(\widehat s_i h_i\mid\mathcal G_k)
=(p_i^\star-\widehat p_i)h_i\).  Cauchy--Schwarz and
\eqref{eq:Hhat-empirical-scale-app} bound this conditional mean by
\(Cr_n\sqrt{V_\Gamma}\).  Conditional Bernstein applied to
\(\Pn[(Y_i-p_i^\star)h_i]\) gives, uniformly,
\[
 \frac{|\Pn\widehat\phi_\Gamma|}{\sqrt{V_\Gamma}}
 \le C\left\{r_n+\sqrt{\frac{x}{n}}
                  +\frac{\sqrt{\bar d}\,x}{n}\right\}
 \le Cr_n.
\]
Thus \((\Pn\widehat\phi_\Gamma)^2\le Cr_n^2V_\Gamma\), which is absorbed
by the preceding \(Cr_nV_\Gamma\) bound and proves
\eqref{eq:fdim-var-cons-app}.

\textbf{Step 5: square-loss bound used for covariance estimation.}
The same event also gives the load-bearing comparison
\begin{equation}\label{eq:if-square-var-app}
 \Pn(\widehat\phi_\Gamma-\phi_\Gamma)^2
 \le Cr_n^2V_\Gamma.
\end{equation}
Indeed,
\[
 \widehat\phi_\Gamma-\phi_\Gamma
 =(\widehat s-s^\star)\ip{\widehat H_\Gamma}{X}
  +s^\star\ip{D_\Gamma}{X}.
\]
The first square is at most
\(Cr_n^2\Pn h_i^2\le Cr_n^2V_\Gamma\).  For the second,
 \(|s^\star|\le1\), the derived energy bound
 \eqref{eq:foldwise-energy-app}, and the
bounded-logit lower bound give
\[
 \Pn\ip{D_\Gamma}{X}^2
 \le {C\over\dstar}\Fnorm{D_\Gamma}^2
 \le {Cr_n^2\over\dstar}\Fnorm{H_\Gamma^\star}^2
 \le Cr_n^2V_\Gamma.
\]
\end{proof}

\subsection{Covariance consistency in correlation form}\label{app:fdim-proof-cov}

\begin{proposition}[Covariance consistency]\label{prop:fdim-cov-cons-app}
Let
\(\widehat\Sigma_{jk}=\Pn[(\widehat\phi_j-\Pn\widehat\phi_j)
(\widehat\phi_k-\Pn\widehat\phi_k)]\), where the juxtaposition in this
display denotes multiplication, i.e.
\(\Pn[(\widehat\phi_j-\Pn\widehat\phi_j)
      (\widehat\phi_k-\Pn\widehat\phi_k)]\).
With probability at least \(1-n^{-a}\),
\begin{equation}\label{eq:fdim-cov-abs-app}
 \max_{j,k}|\widehat\Sigma_{jk}-\Sigma_{jk}|
 \le C(\Sigma_{jj}+\Sigma_{kk})r_n.
\end{equation}
If the diagonal variances are comparable, then
\begin{equation}\label{eq:fdim-cov-rho-app}
 \max_{j,k}|\widehat\rho_{jk}-\rho_{jk}|\le Cr_n .
\end{equation}
\end{proposition}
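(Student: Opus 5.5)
The plan is to reduce covariance consistency to the variance-consistency machinery of Proposition~\ref{prop:fdim-var-cons-app}, using bilinearity and Cauchy--Schwarz. Write \(\widehat\Sigma_{jk}=\Pn[\widehat\phi_j\widehat\phi_k]-\Pn\widehat\phi_j\,\Pn\widehat\phi_k\). Step 4 of that proof already gives \(|\Pn\widehat\phi_j|\le Cr_n\sqrt{\Sigma_{jj}}\) uniformly over the family. Hence the centering product is at most \(Cr_n^2\sqrt{\Sigma_{jj}\Sigma_{kk}}\le Cr_n^2(\Sigma_{jj}+\Sigma_{kk})\), which is negligible. The remaining work is to compare \(\Pn[\widehat\phi_j\widehat\phi_k]\) with \(\Sigma_{jk}=\E^\star[\phi_j\phi_k]\).

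I would split this comparison into two pieces. First, \(\Pn[\widehat\phi_j\widehat\phi_k]-\Pn[\phi_j\phi_k]\) equals
\[
\Pn[(\widehat\phi_j-\phi_j)\widehat\phi_k]+\Pn[\phi_j(\widehat\phi_k-\phi_k)].
\]
By Cauchy--Schwarz and the square-loss bound \eqref{eq:if-square-var-app}, the first term is bounded by \((Cr_n^2\Sigma_{jj})^{1/2}(\Pn\widehat\phi_k^2)^{1/2}\). The factor \(\Pn\widehat\phi_k^2\le C\Sigma_{kk}\) follows from Step 3 of the variance proof. The second term is handled symmetrically, using \(\Pn\phi_j^2\le C\Sigma_{jj}\), which I obtain from Bernstein with envelope \(C\bar d\Sigma_{jj}\). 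Together these give a bound of \(Cr_n\sqrt{\Sigma_{jj}\Sigma_{kk}}\le Cr_n(\Sigma_{jj}+\Sigma_{kk})/2\) by AM--GM.

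Second, the oracle fluctuation \((\Pn-P)[\phi_j\phi_k]\) is an i.i.d.\ average. Its envelope follows from the bound \(|\phi_j|\le C\sqrt{\bar d\Sigma_{jj}}\) in the proof of Theorem~\ref{thm:fdim-be-multi-app}, giving \(C\bar d\sqrt{\Sigma_{jj}\Sigma_{kk}}\). Its variance proxy is \(C\bar d\Sigma_{jj}\Sigma_{kk}\), obtained by bounding one factor by its envelope and keeping the second moment of the other. Scalar Bernstein with \(x=(a+C)\log(n\bar d)\), followed by a union bound over the polynomially many pairs \((j,k)\), then gives \(C\sqrt{\Sigma_{jj}\Sigma_{kk}}\{\sqrt{\bar dx/n}+\bar dx/n\}\le Cr_n\sqrt{\Sigma_{jj}\Sigma_{kk}}\). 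Combining these pieces yields \eqref{eq:fdim-cov-abs-app}.

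For the correlation form \eqref{eq:fdim-cov-rho-app}, write \(\widehat\rho_{jk}-\rho_{jk}=(\widehat\Sigma_{jk}-\Sigma_{jk})/\sqrt{\widehat\Sigma_{jj}\widehat\Sigma_{kk}}+\Sigma_{jk}\{(\widehat\Sigma_{jj}\widehat\Sigma_{kk})^{-1/2}-(\Sigma_{jj}\Sigma_{kk})^{-1/2}\}\). Under diagonal comparability, the first term is \(O(r_n)\) by \eqref{eq:fdim-cov-abs-app}. The relative variance consistency \eqref{eq:fdim-var-cons-app} gives \(\widehat\Sigma_{jj}/\Sigma_{jj}=1+O(r_n)\), so the second term is \(|\rho_{jk}|\cdot O(r_n)\le Cr_n\).

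The main obstacle is keeping every bound uniform over the family and on a single \(1-n^{-a}\) event. In particular, \eqref{eq:if-square-var-app} and the bound on \(\Pn\widehat\phi_k^2\) must hold simultaneously for all contrasts. Since each is derived conditionally foldwise with free tail parameter \(x\), I would rerun those steps with the union bound enlarged to pairs, which only changes polylog factors.
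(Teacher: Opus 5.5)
Your proposal is correct and follows essentially the same route as the paper. You split the product as \((\widehat\phi_j-\phi_j)\widehat\phi_k+\phi_j(\widehat\phi_k-\phi_k)\) and control it by Cauchy--Schwarz against \eqref{eq:if-square-var-app} and the second-moment bounds, use scalar Bernstein for the oracle product average, Step~4 of Proposition~\ref{prop:fdim-var-cons-app} for the centering product, and relative variance consistency for the correlation form. Your explicit envelope and variance bookkeeping makes the argument more concrete. Keeping the geometric-mean bound \(Cr_n\sqrt{\Sigma_{jj}\Sigma_{kk}}\) also shows that the correlation bound does not actually need diagonal comparability.
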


\begin{proof}
We do not polarize through \(\Gamma_j+\Gamma_k\), since that contrast
can cancel and need not satisfy alignment.  Instead write
\[
 \widehat\phi_j\widehat\phi_k-\phi_j\phi_k
 =(\widehat\phi_j-\phi_j)\widehat\phi_k
  +\phi_j(\widehat\phi_k-\phi_k).
\]
Equation~\eqref{eq:if-square-var-app} and the uncentered second-moment
bound in Proposition~\ref{prop:fdim-var-cons-app} give
\[
 \Pn(\widehat\phi_j-\phi_j)^2\le C\Sigma_{jj}r_n^2,\quad
 \Pn\widehat\phi_j^2\le C\Sigma_{jj},\quad
 \Pn\phi_j^2\le C\Sigma_{jj},
\]
where the last inequality follows from the same oracle Bernstein bound.
Cauchy--Schwarz therefore bounds the plug-in part by
\(C\sqrt{\Sigma_{jj}\Sigma_{kk}}r_n\).
Bernstein applied directly to
\(\phi_j\phi_k-\E(\phi_j\phi_k)\) gives the same bound for the empirical
part, uniformly over \((j,k)\).  Since
\(2\sqrt{\Sigma_{jj}\Sigma_{kk}}\le\Sigma_{jj}+\Sigma_{kk}\),
the corresponding uncentered empirical covariance has the bound in
\eqref{eq:fdim-cov-abs-app}.  Passing to the centered covariance
subtracts \((\Pn\widehat\phi_j)(\Pn\widehat\phi_k)\).  The uniform
conditional mean bound in Step~4 of
Proposition~\ref{prop:fdim-var-cons-app}, followed by
\(2|xy|\le x^2+y^2\), shows that this additional term is
\(O\{(\Sigma_{jj}+\Sigma_{kk})r_n^2\}\), which is absorbed by
the right-hand side of~\eqref{eq:fdim-cov-abs-app}.

For correlations, Proposition~\ref{prop:fdim-var-cons-app} gives
\(\widehat\Sigma_{jj}/\Sigma_{jj}=1+O(r_n)\).  Use
\[
 |\sqrt{x}-1|=\frac{|x-1|}{\sqrt{x}+1}\le|x-1|,\qquad x>0,
\]
in both diagonal factors and combine with
\eqref{eq:fdim-cov-abs-app}.  Diagonal comparability yields
\eqref{eq:fdim-cov-rho-app}.
\end{proof}

\begin{remark}[Why no relative-error bound on \(\Sigma_{jk}\)]\label{rem:cov-no-relative-app}
Off-diagonal covariances may vanish, so the diagonal-normalized absolute
bound~\eqref{eq:fdim-cov-abs-app}, rather than relative error in
\(\Sigma_{jk}\), is the uniform statement used below.
\end{remark}
\subsection{Joint Loewner efficiency restatement}\label{app:fdim-proof-eff}

Combining
Proposition~\ref{prop:fdim-loewner-app} with
Theorem~\ref{thm:fdim-clt-app}, our one-step estimator attains the
Loewner-minimal asymptotic covariance \(\Sigma\), so it is jointly
semiparametrically efficient for the fixed finite contrast family
\(\{\Gamma_1,\ldots,\Gamma_q\}\).  This proves the efficiency claim
referenced in Section~\ref{sec:fdim}.

\subsection{Diagonal scale of \texorpdfstring{$\Sigma_{jj}$}{Sigma\_jj} for score-gap contrasts}\label{app:fdim-proof-Sigmajj}

For score-gap contrasts \(\Gamma=e_t(e_m-e_{m'})^\top\), we record the
explicit scaling of \(\Sigma_{jj}\) used in
Appendix~\ref{app:ranking-proof}.

\begin{lemma}[Diagonal scale of efficient variance for score gaps]\label{lem:Sigma-scale-app}
Under Assumptions~\ref{ass:score-app}--\ref{ass:alignment-app},
\(\Sigma_{jj}=V_{\rm eff}(\Gamma)\asymp \dt+\dm\) for every score-gap
contrast \(\Gamma=e_t(e_m-e_{m'})^\top\).  The standard error of the one-step
estimator is therefore \(\sigma_j/\sqrt n\asymp\sqrt{(\dt+\dm)/n}\), and the
simultaneous calibration over \(p\) score gaps gives a band width of
order \(\sqrt{d\log p/n}\).
\end{lemma}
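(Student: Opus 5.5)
The plan is to reduce $\Sigma_{jj}$ to a Rayleigh-type quantity on the tangent space and then sandwich $\Fnorm{P_{\mathbb T}\Gamma}^2$ between the alignment lower bound and the tangent-projection envelope. Since $P_{\mathbb T}\Gamma\in\mathbb T$ and $A_{\mathbb T}$ is self-adjoint and positive definite on $\mathbb T$, we have
\[
\Sigma_{jj}=\ip{P_{\mathbb T}\Gamma}{A_{\mathbb T}^{-1}P_{\mathbb T}\Gamma}.
\]
By Lemma~\ref{lem:weighted-moment-app}, all eigenvalues of $A_{\mathbb T}$ lie in $[c/\dstar,C/\dstar]$. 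Hence all eigenvalues of $A_{\mathbb T}^{-1}$ on $\mathbb T$ lie in $[\dstar/C,\dstar/c]$, and therefore
\[
\frac{\dstar}{C}\Fnorm{P_{\mathbb T}\Gamma}^2\le\Sigma_{jj}\le\frac{\dstar}{c}\Fnorm{P_{\mathbb T}\Gamma}^2 .
\]
This reduces the lemma to the two-sided estimate $\Fnorm{P_{\mathbb T}\Gamma}^2\asymp \dt^{-1}+\dm^{-1}$ for every score-gap atom $\Gamma=e_t(e_m-e_{m'})^\top$.

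For the upper bound, I will invoke Lemma~\ref{lem:tangent-envelope-app}. Every score-gap contrast has exactly the form of an admissible design tensor, so it gives $\Fnorm{P_{\mathbb T}\Gamma}\le C(\mu,r)\sqrt{\bar d/\dstar}$. Moreover $\bar d/\dstar=\max(\dt,\dm)/(\dt\dm)\le \dt^{-1}+\dm^{-1}$. For the lower bound, Assumption~\ref{ass:alignment-app} gives $\Fnorm{P_{\mathbb T}\Gamma}^2\ge\alpha_0^2(\dt^{-1}+\dm^{-1})$ uniformly in $(t,m,m')$. Multiplying by $\dstar=\dt\dm$ and using $\dstar(\dt^{-1}+\dm^{-1})=\dt+\dm$ yields
\[
\Sigma_{jj}\asymp \dt+\dm ,
\]
with constants depending only on $(\mu,r,\alpha_0,c_B,C_B)$ and the design constants. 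The standard-error statement follows by taking square roots and dividing by $\sqrt n$.

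For the band-width claim, I will take $p$ studentized gaps, whose coordinates have unit variance. The Gaussian maximal inequality gives $\E\max_{j\le p}|Z_j|\le\sqrt{2\log(2p)}$, and Borell--TIS then bounds the $(1-\alpha)$-quantile of $\max_j|Z_j|$ by $C\sqrt{\log p}$ for fixed $\alpha$. To pass from the Gaussian quantile to the bootstrap critical value $c^\ast_{1-\alpha}$, I will use the covariance consistency of Proposition~\ref{prop:fdim-cov-cons-app} together with the Gaussian comparison argument employed in Appendix~\ref{app:ranking-proof}. Multiplying by $\sigma_j/\sqrt n\asymp\sqrt{(\dt+\dm)/n}$ then gives half-widths of order $\sqrt{d\log p/n}$ with $d\asymp\dt+\dm$.

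I do not expect a real obstacle in the variance scaling; the only care needed is that the inverse is taken on $\mathbb T$, which holds because $P_{\mathbb T}\Gamma\in\mathbb T$. The main step I would check is the bootstrap-quantile transfer. If "order" is meant two-sided, I would add a matching lower bound on the quantile. For that, I would use that the family contains many weakly correlated gaps (distinct tasks), or otherwise state only the upper bound.
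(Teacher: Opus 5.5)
Your argument for $\Sigma_{jj}\asymp\dt+\dm$ is exactly the paper's proof. You sandwich $\ip{P_{\mathbb T}\Gamma}{A_{\mathbb T}^{-1}P_{\mathbb T}\Gamma}$ between $\dstar\Fnorm{P_{\mathbb T}\Gamma}^2$ multiples using the eigenvalue bounds of Lemma~\ref{lem:weighted-moment-app}, take the upper bound on $\Fnorm{P_{\mathbb T}\Gamma}^2$ from the tangent envelope and the lower bound from alignment, and multiply by $\dstar$.

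On the band-width remark, which the paper only treats informally in its critical-value discussion, two points. First, each bootstrap coordinate has conditional variance exactly one given the data, so a union bound gives $c^\ast_{1-\alpha}\le\sqrt{2\log(2p/\alpha)}$ directly, without covariance consistency or Gaussian comparison. Second, the band is built from $\widehat\sigma_j$, so you also need Proposition~\ref{prop:fdim-var-cons-app} to replace $\widehat\sigma_j$ by $\sigma_j$ before applying your scaling.
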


\begin{proof}
Under near-uniform pairwise sampling and Fisher comparability,
\(\Sigma_{jj}=V_{\rm eff}(\Gamma)=
\ip{P_{\mathbb T}\Gamma}{A_{\mathbb T}^{-1}P_{\mathbb T}\Gamma}\).
Lemma~\ref{lem:weighted-moment-app} puts the eigenvalues of
\(A_{\mathbb T}^{-1}\) between constant multiples of \(\dstar\).
The tangent-projector upper envelope and
Assumption~\ref{ass:alignment-app} give
\(\Fnorm{P_{\mathbb T}\Gamma}^2\asymp1/\dt+1/\dm\).  Hence
\(\Sigma_{jj}\asymp\dstar(1/\dt+1/\dm)=\dt+\dm\).
\end{proof}

\section{Proof of Theorem~\ref{thm:rank-one-task}, Corollary~\ref{cor:rank-all-tasks}, and the top-\texorpdfstring{$K$}{K} extension}\label{app:ranking-proof}

This appendix proves the simultaneous ranking-inference results of
Section~\ref{sec:ranking}.  We use the Chernozhukov--Chetverikov--Kato
(CCK) high-dimensional approximate-means framework, which we state in
the form needed and then verify each constituent error term explicitly,
in order, in subsequent subsections.  We condition throughout on the
master good event \(\mathcal E_n\) of
Appendix~\ref{app:notation-probcalib}.

\subsection{Setup: contrast family and statistics}\label{app:ranking-proof-setup}

For a contrast family \(\mathcal J\) (indexed by score-gap contrasts as
in the three applications below), let
\(\Delta_j:=\psi_{\Gamma_j}(\Thetastar)\), and adopt the standardized
oracle and plug-in coordinates from
Appendix~\ref{app:fdim-proof-onestep}:
\[
    \phi_j(W_i)=s_\eta(Y_i,\eta_i^\star)\ip{H_j^\star}{X_i},
    \qquad
    Z_{ij}=\frac{\phi_j(W_i)}{\sigma_j},
    \qquad
    \widehat Z_{ij}=\frac{\widehat\phi_j(W_i)}{\widehat\sigma_j},
\]
with
\(\widehat\sigma_j^2:=\Pn[(\widehat\phi_j-\Pn\widehat\phi_j)^2]\),
\(\sigma_j^2=\Sigma_{jj}=V_{\rm eff}(\Gamma_j)\).  Define the cardinality
\(p:=|\mathcal J|\), which will be polynomial in \(\bar d\) for each
application below.  For the multiplier statistics, center both arrays:
\[
 Z_{ij}^{\rm c}:=Z_{ij}-\Pn Z_j,
 \qquad
 \widehat Z_{ij}^{\rm c}
 :=\frac{\widehat\phi_j(W_i)-\Pn\widehat\phi_j}{\widehat\sigma_j}.
\]

\textbf{Oracle and feasible test statistics.}
\begin{equation}\label{eq:T0-T-app}
    T_0:=\max_{j\in\mathcal J}\Bigl|\frac{1}{\sqrt n}\sum_{i=1}^n Z_{ij}\Bigr|,
    \qquad
    T:=\max_{j\in\mathcal J}\frac{|\sqrt n(\widehat\Delta_j-\Delta_j)|}{\widehat\sigma_j}.
\end{equation}

\textbf{Oracle and feasible multiplier-bootstrap statistics.}
With i.i.d.\ multipliers \(\xi_i\sim N(0,1)\) independent of the data,
\begin{equation}\label{eq:W0-Tstar-app}
    W_0^{\rm u}:=\max_{j\in\mathcal J}\Bigl|\frac{1}{\sqrt n}\sum_{i=1}^n\xi_i Z_{ij}\Bigr|,
    \qquad
    W_0^{\rm c}:=\max_{j\in\mathcal J}\Bigl|\frac{1}{\sqrt n}\sum_{i=1}^n\xi_i Z_{ij}^{\rm c}\Bigr|,
    \qquad
    T^\ast:=\max_{j\in\mathcal J}\Bigl|\frac{1}{\sqrt n}\sum_{i=1}^n\xi_i\widehat Z_{ij}^{\rm c}\Bigr|.
\end{equation}
Here \(W_0^{\rm u}\), rather than \(W_0^{\rm c}\), is the exact oracle
multiplier statistic to which the CCK theorem is applied; the centered
statistic is only an intermediate device for comparison with the implemented
bootstrap \(T^\ast\).
Let \(c_{1-\alpha}^\ast\) denote the conditional \((1-\alpha)\)-quantile
of \(T^\ast\) given the data.

\textbf{Reference Gaussian.}
Let \(\Sigma_Z:=(\Sigma_{Z,jk})_{j,k\in\mathcal J}\) with
\(\Sigma_{Z,jk}:=\E^\star[Z_jZ_k]=\Sigma_{jk}/(\sigma_j\sigma_k)\), and
let \(G\sim N(0,\Sigma_Z)\), \(Z_0:=\max_{j\in\mathcal J}|G_j|\).

\subsection{The CCK approximate-means theorem and the master decomposition}\label{app:ranking-proof-cck-master}

Apply the result to the \(2p\) coordinates \((Z_{ij},-Z_{ij})\), so
absolute maxima are ordinary coordinate maxima.  The relevant result is
Theorem~3.2 of \citet{chernozhukov2013gaussian}, not Theorem~3.1:
Theorem~3.1 treats exact averages, whereas both our statistic and our
feasible bootstrap are approximate averages.  In the notation of that
theorem we must verify numbers \(\zeta_1,\zeta_2\) such that
\begin{equation}\label{eq:cck-approx-conditions-app}
 \Pr(|T-T_0|>\zeta_1)<\zeta_2,\qquad
 \Pr\{\Pr_\xi(|T^\ast-W_0^{\rm u}|>\zeta_1)>\zeta_2\}<\zeta_2 .
\end{equation}

Let
\[
 a_n=\max_j\left|
 {\sqrt n(\widehat\Delta_j-\Delta_j)\over\sigma_j}
 -{1\over\sqrt n}\sum_iZ_{ij}\right|,\quad
 b_n=\max_j\left|{\widehat\sigma_j\over\sigma_j}-1\right|,
\]
and
\(c_n^2=\max_j\Pn(\widehat Z_j^{\rm c}-Z_j^{\rm c})^2\),
\(q_n=\max_j|\Pn Z_j|\).  The lemmas below provide deterministic
envelopes
\[
 \bar a_n=C C_A r_n,\qquad \bar b_n=C r_n,\qquad
 \bar c_n=C r_n,\qquad
 \bar q_n=C\left\{\sqrt{\frac{\log(2p/\zeta_2)}n}
       +{L_n\log(2p/\zeta_2)\over n}\right\},
\]
such that \(a_n\le\bar a_n\), \(b_n\le\bar b_n\),
\(c_n\le\bar c_n\), and \(q_n\le\bar q_n\) except on an event of
probability at most \(C_0n^{-a}\).  Increase the tail parameter in the
constituent lemmas so that each failure probability is at most
\(n^{-a}/C_0\), and set the deterministic number
\(\zeta_2=C_0n^{-a}\).

On \(b_n\le1/2\), algebra gives
\(|T-T_0|\le C(a_n+b_nT_0)\).  Coordinatewise Bernstein and a union
bound give, after calibrating this constituent event to have probability
at most \(\zeta_2/2\),
\[
 T_0\le M_n:=C\left\{\sqrt{\log(2p/\zeta_2)}
          +{L_n\log(2p/\zeta_2)\over\sqrt n}\right\}.
\]
Consequently \(|T-T_0|\le C(\bar a_n+\bar b_nM_n)\) on the stated
event.
Conditionally on the data,
\[
 \Var_\xi\!\left\{
 {1\over\sqrt n}\sum_i\xi_i(\widehat Z_{ij}^{\rm c}-Z_{ij}^{\rm c})\right\}
 =\Pn(\widehat Z_j^{\rm c}-Z_j^{\rm c})^2\le c_n^2.
\]
A union bound for the \(2p\) centered Gaussian coordinates therefore
gives, with conditional failure probability at most \(\zeta_2/2\),
\[
 |T^\ast-W_0^{\rm c}|\le C\bar c_n\sqrt{\log(4p/\zeta_2)}.
\]
Moreover,
\[
 |W_0^{\rm c}-W_0^{\rm u}|
 \le q_n\left|n^{-1/2}\sum_i\xi_i\right|
 \le \bar q_n\sqrt{2\log(4/\zeta_2)}
\]
with conditional probability at least \(1-\zeta_2/2\).  A conditional
union bound therefore makes the total multiplier-comparison failure
probability at most \(\zeta_2\).  Thus
\eqref{eq:cck-approx-conditions-app} holds with the deterministic numbers
\begin{equation}\label{eq:zeta-map-app}
 \zeta_1=C\max\{\bar a_n+\bar b_nM_n,\
       \bar c_n\sqrt{\log(4p/\zeta_2)}
       +\bar q_n\sqrt{\log(4/\zeta_2)}\},
 \qquad \zeta_2=C_0n^{-a}.
\end{equation}

Let
\(\Delta_n=\max_{j,k}|\Pn(Z_jZ_k)-\Pstar(Z_jZ_k)|\) and
\(\pi(\vartheta)=C\vartheta^{1/3}
\{1\vee\log(2p/\vartheta)\}^{2/3}\).
Theorem~3.2 of \citet{chernozhukov2013gaussian}, together with its
Gaussian comparison lemma, now yields
\begin{equation}\label{eq:cck-master-app}
 \sup_{\alpha\in(0,1)}
 \left|\Pr\{T\le c_{1-\alpha}^\ast\}-(1-\alpha)\right|
 \le
 3\rho_n+2\{\pi(\vartheta)+\Pr(\Delta_n>\vartheta)\}
 +C\zeta_1\sqrt{1\vee\log(2p/\zeta_1)}+5\zeta_2 ,
\end{equation}
where
\(\rho_n=\sup_z|\Pr(T_0\le z)-\Pr(Z_0\le z)|\).
For later reference denote the right side by
\begin{equation}\label{eq:En-master-app}
 \mathfrak E_n:=
 3\rho_n+2\{\pi(\vartheta)+\Pr(\Delta_n>\vartheta)\}
 +C\zeta_1\sqrt{1\vee\log(2p/\zeta_1)}+5\zeta_2 .
\end{equation}
The next subsections verify every term and show
\(\mathfrak E_n=o(1)\).
\subsection{Bounding the standardized envelope \(L_n\) (and ingredient I)}\label{app:ranking-proof-Ln}

We compute an explicit envelope for the standardized oracle coordinates.

\begin{lemma}[Envelope of standardized oracle coordinates]\label{lem:Ln-envelope-app}
Under Assumptions~\ref{ass:score-app}--\ref{ass:alignment-app},
for every \(j\in\mathcal J\) and every admissible design \(X_i\),
\begin{equation}\label{eq:Ln-envelope-app}
    |Z_{ij}|
    \;\le\;
    L_n,
    \qquad
    L_n^2
    \;\le\;
    C(\mu,r,\kappa,B,c_B,C_B)\,\bar d,
\end{equation}
Consequently \(L_n^2\log^7(pn)/n\le C\bar d\log^7(pn)/n\to 0\)
under Assumption~\ref{ass:sample-app} and the CLT condition
\(C_A\sqrt{\bar d\log^c(n\bar d)/n}\to 0\) of
Theorem~\ref{thm:fdim-clt}, and so the oracle CCK Gaussian
approximation error
\[
    \rho_n
    :=
    \sup_z|\Pr(T_0\le z)-\Pr(Z_0\le z)|
    \;\to\;0.
\]
\end{lemma}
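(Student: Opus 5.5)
The plan is to bound $|Z_{ij}|=|\phi_j(W_i)|/\sigma_j$ by splitting it into a numerator envelope and a denominator lower bound. The first step is the numerator. Since $|s_\eta(Y,\eta^\star)|\le1$ by Assumption~\ref{ass:score-app}(iv), we have $|\phi_j(W_i)|\le|\ip{H_j^\star}{X_i}|$. Because $H_j^\star=\mathcal A^\dagger\Gamma_j\in\mathbb T$, we can write $\ip{H_j^\star}{X_i}=\ip{H_j^\star}{P_{\mathbb T}X_i}$. Cauchy--Schwarz together with the pairwise tangent-projection envelope (Lemma~\ref{lem:tangent-envelope-app}) then gives
\[
|\phi_j(W_i)|\le \Fnorm{H_j^\star}\,C(\mu,r)\sqrt{\bar d/\dstar}.
\]
This is exactly the computation in \eqref{eq:fdim-be-tangent-envelope}, and the bound holds uniformly over admissible designs and over $j$.

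The second step is the denominator. By \eqref{eq:fdim-be-step2}, which rests on Fisher comparability (Lemma~\ref{lem:weighted-moment-app}) and the Frobenius reduction (Lemma~\ref{lem:frob-reduction-app}) applied to the row-centered $H_j^\star\in\mathbb T$,
\[
\sigma_j^2\ge c_B\Fnorm{H_j^\star}^2/\dstar.
\]
We also need $\Fnorm{H_j^\star}>0$. This follows from uniform alignment (Assumption~\ref{ass:alignment-app}): it gives $P_{\mathbb T}\Gamma_j\neq0$, and $A_{\mathbb T}$ is invertible on $\mathbb T$, so $H_j^\star\ne0$ and the ratio is well defined. Dividing the numerator bound by the denominator bound, the unknown norm $\Fnorm{H_j^\star}$ cancels:
\[
|Z_{ij}|^2\le C(\mu,r)^2c_B^{-1}\,\bar d.
\]
Setting $L_n^2$ equal to this constant times $\bar d$ gives \eqref{eq:Ln-envelope-app}, with constants depending only on the structural parameters.

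The third step converts the envelope into the Gaussian approximation. From $L_n^2\le C\bar d$ we get $L_n^2\log^7(pn)/n\le C\bar d\log^7(pn)/n$. Since $p$ is polynomial in $\bar d$, we have $\log(pn)\lesssim\log(n\bar d)$. Assumption~\ref{ass:sample-app} with $c\ge c_0\ge7$ then gives $\bar d\log^7(n\bar d)/n\le C_0^{-1}\log^{7-c}(n\bar d)$, which tends to zero when $c>7$. When $c=7$ we instead use the CLT condition $C_A\sqrt{\bar d\log^c(n\bar d)/n}\to0$ together with $C_A\ge1$, which forces $\bar d\log^c(n\bar d)/n\to0$. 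To obtain $\rho_n\to0$, apply the oracle max-statistic Gaussian approximation of \citet{chernozhukov2013gaussian} (bounded-coordinate case, in the form used in Theorem~\ref{thm:fdim-be-multi-app}) to the $2p$ coordinates $(Z_{ij},-Z_{ij})$. These coordinates are i.i.d. over $i$, centered by score centering, have unit variance, and are bounded by $L_n$. The theorem yields
\[
\rho_n\lesssim\{L_n^2\log^7(pn)/n\}^{1/6}\to0.
\]

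The main obstacle is checking that the envelope is genuinely uniform in $j$. The lower bound on $\sigma_j$ must not degrade for particular contrasts, and this is the reason the bound is routed through $\Fnorm{H_j^\star}$ rather than $\Fnorm{P_{\mathbb T}\Gamma_j}$. Because the factor $\Fnorm{H_j^\star}$ cancels exactly, no $C_A$ factor enters the bound on $L_n$.
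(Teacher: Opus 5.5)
Your proposal is correct and follows the paper's proof step for step: the same numerator envelope via $\ip{H_j^\star}{X_i}=\ip{H_j^\star}{P_{\mathbb T}X_i}$ and Lemma~\ref{lem:tangent-envelope-app}, the same lower bound $\sigma_j^2\gtrsim\Fnorm{H_j^\star}^2/\dstar$, the cancellation of $\Fnorm{H_j^\star}$, and a bounded-coordinate CCK approximation for $\rho_n$. Your checks that $H_j^\star\neq0$ and that the CLT condition covers the case $c=7$ are useful details the paper leaves implicit. One small attribution point: the rate $\{L_n^2\log^7(pn)/n\}^{1/6}$, with no polynomial-decay hypothesis, comes from the rectangle theorem of \citet{chernozhukov2017central}, which is what the paper (and Theorem~\ref{thm:fdim-be-multi-app}) actually invokes; it applies directly because $\{T_0\le z\}$ is already a rectangle. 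The bounded-coordinate corollary of \citet{chernozhukov2013gaussian} is instead stated under $L_n^2\log^7(pn)/n\le Cn^{-c}$.
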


\begin{proof}
The proof is a direct calculation.  Recall \(Z_{ij}=\phi_j(W_i)/\sigma_j\).

\textbf{Step 1: bound \(|\phi_j(W_i)|\) using the tangent envelope.}
Since \(s_\eta(Y_i,\eta_i^\star)\in[-1,1]\) almost surely
(Assumption~\ref{ass:score-app}(iv)),
\[
    |\phi_j(W_i)|
    =
    |s_\eta(Y_i,\eta_i^\star)|\cdot|\ip{H_j^\star}{X_i}|
    \le
    |\ip{H_j^\star}{X_i}|.
\]
Using \(H_j^\star\in\mathbb T\), \(\ip{H_j^\star}{X_i}=\ip{H_j^\star}{P_{\mathbb T}X_i}\),
and Cauchy--Schwarz with the tangent envelope
(Lemma~\ref{lem:tangent-envelope-app}),
\[
    |\ip{H_j^\star}{X_i}|
    \le
    \Fnorm{H_j^\star}\cdot\Fnorm{P_{\mathbb T}X_i}
    \;\le\;
    C(\mu,r)\,\Fnorm{H_j^\star}\sqrt{\bar d/\dstar}.
\]

\textbf{Step 2: bound \(\sigma_j^2\) from below using the Frobenius reduction.}
By Lemma~\ref{lem:weighted-moment-app},
\(\sigma_j^2=\E^\star[s_\eta^2\ip{H_j^\star}{X}^2]\ge c_B\E^\star\ip{H_j^\star}{X}^2\),
and by the Frobenius reduction (Lemma~\ref{lem:frob-reduction-app})
\(\E^\star\ip{H_j^\star}{X}^2\asymp\Fnorm{H_j^\star}^2/\dstar\), so
\[
    \sigma_j^2
    \;\ge\;
    \frac{c_B}{C}\,\frac{\Fnorm{H_j^\star}^2}{\dstar}.
\]

\textbf{Step 3: combine.}
\[
    Z_{ij}^2
    =
    \frac{\phi_j(W_i)^2}{\sigma_j^2}
    \;\le\;
    \frac{C(\mu,r)\,\Fnorm{H_j^\star}^2(\bar d/\dstar)}{(c_B/C)\,\Fnorm{H_j^\star}^2/\dstar}
    \;=\;
    C(\mu,r,c_B)\,\bar d.
\]
The factor \(\Fnorm{H_j^\star}^2\) appears in numerator and denominator
and cancels, and this cancellation is the same mechanism that drives the
empirical square-loss bound in
Appendix~\ref{app:ranking-proof-eif}.

\textbf{Step 4: oracle CCK approximation.}
The envelope implies, for \(k=1,2\),
\[
 \E^\star|Z_{ij}|^{2+k}\le L_n^k\E^\star Z_{ij}^2=L_n^k,
 \qquad \E^\star\exp(|Z_{ij}|/L_n)\le e.
\]
Hence the bounded-coordinate moment and tail conditions of the CCK
Gaussian approximation theorem
\citep[Theorem~2.1]{chernozhukov2017central} hold.  Applied to the
i.i.d.\ mean-zero coordinates \(Z_{ij}\), with
\(\E^\star Z_{ij}^2=1\), it shows that if
\(L_n^2\log^7(pn)/n\to 0\) then
\(\rho_n=o(1)\).  Substituting \(L_n^2\le C\bar d\) and
\(p\le\bar d^{O(1)}\) yields the displayed sufficient condition.
\end{proof}

\subsection{Bounding the empirical covariance error \(\Delta_n\) (ingredient II)}\label{app:ranking-proof-cov}

We now bound the maximum standardized covariance estimation error
\(\Delta_n\) by a Bernstein argument, with explicit calculation of the
variance proxy and envelope.

\begin{lemma}[Maximum standardized covariance error]\label{lem:Deltan-app}
Fix any \(a>0\).  With probability at least \(1-n^{-a}\),
\begin{equation}\label{eq:Deltan-app}
    \Delta_n
    :=
    \max_{j,k\in\mathcal J}\bigl|\Pn[Z_jZ_k]-\Pstar[Z_jZ_k]\bigr|
    \;\le\;
    C\Bigl(L_n\sqrt{\frac{\log(pn)}{n}}+L_n^2\frac{\log(pn)}{n}\Bigr)
    \;\lesssim\;
    \sqrt{\frac{\bar d\,\log(pn)}{n}}+\frac{\bar d\log(pn)}{n}.
\end{equation}
Consequently, with the same probability, the Gaussian comparison step
contributes
\[
    \pi(\vartheta_n)
    :=
    C\vartheta_n^{1/3}\{1\vee\log(p/\vartheta_n)\}^{2/3}
    =
    o(1)
\]
upon the choice, for a sufficiently large absolute constant \(C_\vartheta\),
\(\vartheta_n=C_\vartheta\{L_n\sqrt{\log(pn)/n}
+L_n^2\log(pn)/n\}\).
\end{lemma}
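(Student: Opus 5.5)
The plan is a coordinatewise Bernstein argument for each product $Z_jZ_k$, followed by a union bound over the $p^2$ pairs $(j,k)\in\mathcal J^2$. The envelope in Lemma~\ref{lem:Ln-envelope-app} bounds every coordinate: $|Z_{ij}|\le L_n$ almost surely. For a fixed pair $(j,k)$, define the centered summands
\[
 \zeta_i^{(jk)}:=Z_{ij}Z_{ik}-\Pstar[Z_jZ_k].
\]
These are i.i.d.\ across $i$ and mean zero. They are bounded by $2L_n^2$, since $|Z_{ij}Z_{ik}|\le L_n^2$.

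For the variance proxy, the key point is to avoid the crude $L_n^4$ bound. Write
\[
 \E^\star(Z_jZ_k)^2\le L_n^2\,\E^\star Z_j^2=L_n^2 .
\]
This uses the envelope on one factor and the unit variance $\E^\star Z_{ij}^2=1$ on the other. Scalar Bernstein then gives, for each $(j,k)$ and tail parameter $x$,
\[
 \Pr\Bigl(|\Pn[Z_jZ_k]-\Pstar[Z_jZ_k]|>C\bigl\{L_n\sqrt{x/n}+L_n^2x/n\bigr\}\Bigr)\le 2e^{-x}.
\]
We then take $x=(a+2)\log(pn)+\log 2$ and union bound over the $p^2$ pairs. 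The failure probability becomes at most $2p^2e^{-x}\le n^{-a}$, which yields the first inequality in \eqref{eq:Deltan-app}. Substituting $L_n^2\le C\bar d$ from Lemma~\ref{lem:Ln-envelope-app} gives the second inequality.

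For the Gaussian comparison term, set
\[
 \vartheta_n=C_\vartheta\Bigl\{L_n\sqrt{\log(pn)/n}+L_n^2\log(pn)/n\Bigr\}.
\]
Choosing $C_\vartheta$ at least the Bernstein constant makes $\Delta_n\le\vartheta_n$ on the event above, so $\Pr(\Delta_n>\vartheta_n)\le n^{-a}$. Next, $p\le\bar d^{O(1)}$ and Assumption~\ref{ass:sample-app} give $\vartheta_n\lesssim\sqrt{\bar d\log(n\bar d)/n}\to0$. Also $\vartheta_n\ge n^{-1}$, so $\log(p/\vartheta_n)\lesssim\log(n\bar d)$. Hence
\[
 \pi(\vartheta_n)\lesssim\Bigl(\frac{\bar d\log(n\bar d)}{n}\Bigr)^{1/6}\log^{2/3}(n\bar d),
\]
which equals $\{\bar d\log^5(n\bar d)/n\}^{1/6}$. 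This tends to zero because the exponent $c\ge7>5$ in Assumption~\ref{ass:sample-app}.

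The only delicate step is the variance proxy. Using $L_n^4$ instead of $L_n^2$ would lose a factor of $\sqrt{\bar d}$ in the leading term and break the rate. Everything else is a routine union bound and substitution.
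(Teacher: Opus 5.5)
Your proposal is correct and follows the paper's proof essentially step for step. The shared steps are: a per-pair Bernstein bound with envelope \(L_n^2\) and variance proxy \(L_n^2\) (one envelope factor times the unit variance \(\E^\star Z_{ij}^2=1\)), a union bound over the \(p^2\) pairs, substitution of \(L_n^2\lesssim\bar d\), and \(\pi(\vartheta_n)=o(1)\) from the sample-size assumption with \(c\ge7\). You bound \(\log(p/\vartheta_n)\) through the lower bound \(\vartheta_n\gtrsim n^{-1}\), which implicitly uses \(L_n\ge1\) from unit variance and is worth stating; this makes that step slightly more explicit than the paper's Step~4.
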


\begin{proof}
We first apply Bernstein's inequality to the uncentered product required
by the exact CCK oracle.  We also record the coordinate-mean bound,
which controls the difference between uncentered and centered feasible
covariances.

\textbf{Step 1: variance proxy.}
Each \(Z_{ij}Z_{ik}\) is bounded by \(L_n^2\) in absolute value
(Lemma~\ref{lem:Ln-envelope-app}), and has variance
\[
    \Var(Z_{ij}Z_{ik})
    \le
    \Pstar[Z_j^2 Z_k^2]
    \le
    L_n^2\,\Pstar[Z_j^2]
    =
    L_n^2,
\]
using \(\Pstar Z_j^2=1\) by definition.

\textbf{Step 2: per-pair Bernstein.}
For each pair \((j,k)\), Bernstein's inequality gives, for any \(x>0\),
\[
    \Pr\Bigl[\bigl|\Pn[Z_jZ_k]-\Pstar[Z_jZ_k]\bigr|\ge\sqrt{2 L_n^2\,x/n}+L_n^2 x/(3 n)\Bigr]
    \le 2 e^{-x}.
\]
This is the standard Bernstein bound with envelope \(L_n^2\) and
variance \(L_n^2\).

\textbf{Step 3: union bound.}
Set \(x=Ca\log(pn)\) with \(C\) sufficiently large, so the tail is at
most \(2\,p^2 e^{-Ca\log(pn)}\le 2 p^{-Ca+2}n^{-Ca}\le n^{-a-1}\) for
\(C\) large enough (using \(p\le\bar d^{O(1)}\)).  This gives
the stated bound for the uncentered product covariance.  The same
scalar Bernstein inequality gives, simultaneously in \(j\),
\[
 |\Pn Z_j|\le C\sqrt{\log(pn)/n}+CL_n\log(pn)/n.
\]
The product of two such mean bounds is absorbed by the right-hand side
of~\eqref{eq:Deltan-app}; hence the same order also holds for the
centered empirical covariance used by the implemented estimator.

\textbf{Step 4: \(\pi(\vartheta_n)=o(1)\).}
With the stated sufficiently-large-constant choice of \(\vartheta_n\), we have
\(\vartheta_n^{1/3}\le(L_n^2\log(pn)/n)^{1/6}+(L_n^2\log(pn)/n)^{1/3}\)
(treating \(L_n\sqrt{x}\) as the dominant term for small \(x\)), so
\(\pi(\vartheta_n)\lesssim(L_n^2\log(pn)/n)^{1/6}\log^{2/3}(p)\).  Under
\(L_n^2\log^7(pn)/n\to 0\), this is \(o(1)\), confirming
ingredient (II) is asymptotically negligible.
\end{proof}

\subsection{Bounding the one-step plug-in errors \(a_n\) and \(b_n\) (ingredients III, IV)}\label{app:ranking-proof-onestep}

We carry over the bounds from
Appendix~\ref{app:fdim-proof-uniform} and
Proposition~\ref{prop:fdim-var-cons-app}, applied uniformly over the
contrast family \(\mathcal J\).

\begin{lemma}[One-step transfer]\label{lem:an-app}
Under Theorem~\ref{thm:uniform-remainder-app} applied to
\(\mathcal F=\mathcal J\), with probability at least \(1-n^{-a}\),
\begin{equation}\label{eq:an-app}
    a_n
    :=
    \max_{j\in\mathcal J}\Bigl|\frac{\sqrt n(\widehat\Delta_j-\Delta_j)}{\sigma_j}-\frac{1}{\sqrt n}\sum_{i=1}^n Z_{ij}\Bigr|
    \;\le\;
    C\,C_A\,\sqrt{\frac{\bar d\,\mathrm{polylog}(n\bar d)}{n}}.
\end{equation}
\end{lemma}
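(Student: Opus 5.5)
The bound in Lemma~\ref{lem:an-app} is a direct transfer of Theorem~\ref{thm:uniform-remainder-app}, applied with $\mathcal F=\mathcal J$. The only things to verify are that $\mathcal J$ is an admissible family for that theorem and that the studentized remainder there is exactly the quantity inside the maximum defining $a_n$.

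For each $j\in\mathcal J$, write $R_n^{\Gamma_j}:=\widehat\Delta_j-\Delta_j-n^{-1}\sum_i\phi_j(W_i)$. Since $Z_{ij}=\phi_j(W_i)/\sigma_j$ and $\sigma_j^2=V_{\rm eff}(\Gamma_j)=\sigma_{\Gamma_j}^2$, we have
\[
\frac{\sqrt n(\widehat\Delta_j-\Delta_j)}{\sigma_j}-\frac1{\sqrt n}\sum_{i=1}^n Z_{ij}
=\frac{\sqrt n\,R_n^{\Gamma_j}}{\sigma_{\Gamma_j}} .
\]
Hence $a_n=\max_{\Gamma\in\mathcal J}\sqrt n|R_n^\Gamma|/\sigma_\Gamma$. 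This is precisely the left side of \eqref{eq:uniform-remainder-app}, provided the estimator $\widehat\Delta_j$ is the same cross-fitted one-step estimator $\widehat\psi_{\Gamma_j}$ from Section~\ref{sec:fdim}, which holds by construction.

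Next, check the hypotheses of Theorem~\ref{thm:uniform-remainder-app}.
\begin{enumerate}[label=(\roman*),leftmargin=2.4em,topsep=2pt,itemsep=2pt]
\item \emph{Cardinality.} In each application, $\mathcal J$ consists of score gaps $e_t(e_\ell-e_m)^\top$, so $|\mathcal J|\le \dt\dm^2\le(\dt\dm)^{2}$. Thus $C_F=2$ suffices.
\item \emph{Support and $\ell_1$ bounds.} Every member satisfies Assumption~\ref{ass:bounded-Gamma-app} with $M=C_\psi=2$.
\item \emph{Alignment.} Every member satisfies Assumption~\ref{ass:alignment-app} with the common constant $\alpha_0$, so we may take $\alpha_{\min}=\alpha_0$.
\end{enumerate}
The standing assumptions (score regularity, design, signal, sample size, and population $C_A$ stability with $C_Ar_n=o(1)$) are those of Theorem~\ref{thm:rank-one-task}. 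These are in force on the master event $\mathcal E_n$, in particular through $\mathcal E_{\rm rem}$.

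Invoking Theorem~\ref{thm:uniform-remainder-app} with tail parameter chosen so that the failure probability is at most $n^{-a}$ gives the displayed bound $C\,C_A\sqrt{\bar d\,\mathrm{polylog}(n\bar d)/n}$. The $\log|\mathcal J|=O(\log\bar d)$ cost of the union bound is absorbed into the polylog. The per-contrast factor $\oneNorm\Gamma/\sigma_\Gamma\lesssim\bar d^{-1/2}$ from \eqref{eq:gamma-over-sigma-app} turns the raw rate $C_A r_n^2$ of Theorem~\ref{thm:single-contrast-remainder-app} into $C_A r_n$ after multiplication by $\sqrt n$.

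The only point needing care, and the one I expect to be the main obstacle, is uniformity. The foldwise good events used in Theorem~\ref{thm:single-contrast-remainder-app} (geometry, inverse stability, information concentration) are contrast-free; only the conditional Bernstein step depends on $\Gamma$. So one conditions on the training folds, applies the $\Gamma$-specific Bernstein bounds on each fold with a common tail parameter $x=(a+C)\log(n\bar d)$, and unions over $\mathcal J$ and the fixed number of folds. Increasing $C$ then restores the probability $1-n^{-a}$.
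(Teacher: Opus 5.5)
Your proposal is correct and matches the paper's proof. Like the paper, you use the exact identity $a_n=\max_j\sqrt n\,|R_n^{\Gamma_j}|/\sigma_j$ and then invoke the already-studentized bound of Theorem~\ref{thm:uniform-remainder-app}. Your explicit checks of the family's cardinality, support, and alignment hypotheses, and your remark that only the conditional Bernstein step depends on $\Gamma$, are extra detail. The paper leaves these implicit, and the union-bound point is already handled inside the proof of Theorem~\ref{thm:uniform-remainder-app}.
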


\begin{proof}
By definition of \(R_n^{\Gamma_j}\) in
Appendix~\ref{app:fdim-proof-decomp},
\(\sqrt n(\widehat\Delta_j-\Delta_j)=\frac{1}{\sqrt n}\sum_i\phi_j(W_i)+\sqrt n R_n^{\Gamma_j}\),
and dividing by \(\sigma_j\),
\(\sqrt n(\widehat\Delta_j-\Delta_j)/\sigma_j=\frac{1}{\sqrt n}\sum_i Z_{ij}+\sqrt n R_n^{\Gamma_j}/\sigma_j\).
Hence \(a_n=\max_j|\sqrt n R_n^{\Gamma_j}/\sigma_j|\).
By the already studentized bound in
Theorem~\ref{thm:uniform-remainder-app},
\(a_n\le C\,C_A\sqrt{\bar d\,\mathrm{polylog}(n\bar d)/n}\) on the
uniform-remainder event.
\end{proof}

\begin{lemma}[Standard-error transfer]\label{lem:bn-app}
With probability at least \(1-n^{-a}\),
\begin{equation}\label{eq:bn-app}
    b_n
    :=
    \max_{j\in\mathcal J}\Bigl|\frac{\widehat\sigma_j}{\sigma_j}-1\Bigr|
    \;\le\;
    C\sqrt{\frac{\bar d\,\mathrm{polylog}(n\bar d)}{n}}.
\end{equation}
\end{lemma}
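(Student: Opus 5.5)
The bound is a direct transfer of Proposition~\ref{prop:fdim-var-cons-app} from variances to standard deviations, applied uniformly over the family $\mathcal J$. Recall that $\widehat\sigma_j^2=\Pn[(\widehat\phi_j-\Pn\widehat\phi_j)^2]$ is exactly the centered plug-in variance $\widehat V_{\Gamma_j}$ of Appendix~\ref{app:fdim-proof-varcons}. Also $\sigma_j^2=V_{\rm eff}(\Gamma_j)=V_{\Gamma_j}$. Every $\Gamma_j\in\mathcal J$ is a score-gap atom $e_t(e_\ell-e_m)^\top$, so Assumptions~\ref{ass:bounded-Gamma-app} and~\ref{ass:alignment-app} hold with uniform constants ($M=C_\psi=2$, alignment constant $\alpha_0$). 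Moreover $p=|\mathcal J|\le \bar d^{O(1)}$ in each application.

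\emph{Step 1.} Invoke Proposition~\ref{prop:fdim-var-cons-app} with the maximum taken over $\mathcal J$. In Step~2 of its proof, the union bound is already carried out over a polynomial-size contrast family with tail parameter $x=(a+C)\log(n\bar d)$. Enlarging $C$ to absorb $\log p\le C\log\bar d$ therefore gives, with probability at least $1-n^{-a}$,
\[
\max_{j\in\mathcal J}\left|\frac{\widehat\sigma_j^2}{\sigma_j^2}-1\right|\le C r_n ,
\qquad r_n=\sqrt{\frac{\bar d\log^c(n\bar d)}{n}} .
\]
Note that no $C_A$ factor enters, consistent with the statement.

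\emph{Step 2.} Convert to square roots. For $x>0$ we have $|\sqrt x-1|=|x-1|/(\sqrt x+1)\le|x-1|$. Applying this with $x=\widehat\sigma_j^2/\sigma_j^2$ gives $b_n\le\max_j|\widehat\sigma_j^2/\sigma_j^2-1|\le Cr_n$, which is~\eqref{eq:bn-app} with $\mathrm{polylog}(n\bar d)=\log^c(n\bar d)$. Because Assumption~\ref{ass:sample-app} makes $r_n\to0$, this also puts us on the event $b_n\le 1/2$ used in the master decomposition.

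\emph{Main obstacle.} The only delicate point is confirming that Proposition~\ref{prop:fdim-var-cons-app} is genuinely uniform over the growing family $\mathcal J$ rather than over a fixed one. Two ingredients need checking. First, the empirical-scale bounds~\eqref{eq:Hhat-empirical-scale-app} and the relative direction bound~\eqref{eq:relative-direction-app} must hold simultaneously over all score-gap atoms. They do, because they rest on the deterministic foldwise event $\mathcal E_{\rm re}$ of Lemma~\ref{lem:foldwise-geometry-app} and on the uniform alignment constant, not on contrast-specific randomness. Second, the only contrast-indexed stochastic steps are the conditional Bernstein bounds in Steps~2 and~4 of that proof. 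Their union bound over $p\le\bar d^{O(1)}$ contrasts costs only a $\log p$ factor, which is absorbed into the polylog. The fixed number of cross-fitting folds adds a finite union bound, and the failure probability is recalibrated to $n^{-a}$ as in Appendix~\ref{app:notation-probcalib}.
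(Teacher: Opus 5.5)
Your proposal is correct and follows essentially the same route as the paper. It applies Proposition~\ref{prop:fdim-var-cons-app} uniformly over the polynomial-size family $\mathcal J$ (the paper phrases this as a per-contrast bound plus a union bound whose $\log p$ cost is absorbed into the polylog), then passes to standard deviations via $|\sqrt x-1|=|x-1|/(\sqrt x+1)\le|x-1|$. One small slip in a side remark: Assumption~\ref{ass:sample-app} alone only gives $r_n\le C_0^{-1/2}$, which suffices for $b_n\le 1/2$ once $C_0$ is large, whereas $r_n\to0$ actually comes from $C_A r_n=o(1)$ with $C_A\ge1$ in Assumption~\ref{ass:population-CA-app}.
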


\begin{proof}
By Proposition~\ref{prop:fdim-var-cons-app} applied to each
\(\Gamma_j\), \(|\widehat V_j/V_j-1|\le C r_n\) with probability
\(1-n^{-a}\) per contrast, where \(V_j=\sigma_j^2\) and
\(\widehat V_j=\widehat\sigma_j^2\).  Take a union bound over the
\(p\le\bar d^{O(1)}\) contrasts at the cost of an extra
\(\sqrt{\log\bar d}\) factor absorbed into the polylog.  Taking square
roots and using the exact identity gives
\[
 \left|\frac{\widehat\sigma_j}{\sigma_j}-1\right|
 =\frac{|\widehat V_j/V_j-1|}{\sqrt{\widehat V_j/V_j}+1}
 \le |\widehat V_j/V_j-1|.
\]
\end{proof}

\subsection{Bounding the feasible-bootstrap error \(c_n\) (ingredient V)}\label{app:ranking-proof-eif}

We now prove the load-bearing bound on the empirical square-loss
between the standardized plug-in and oracle EIF coordinates.

\begin{proposition}[Estimated-EIF squared-loss bound]\label{prop:eif-sqr-bound}
Under the conditions of
Theorem~\ref{thm:fdim-clt-app}, with probability at least \(1-n^{-a}\),
\begin{equation}\label{eq:cn-app}
    c_n^2
    :=
    \max_{j\in\mathcal J}\Pn(\widehat Z_j^{\rm c}-Z_j^{\rm c})^2
    \;\le\;
    C\,r_n^2,
    \qquad
    r_n^2:=\frac{\bar d\,\mathrm{polylog}(n\bar d)}{n}.
\end{equation}
The bound has \emph{no} \(C_A\) factor.
\end{proposition}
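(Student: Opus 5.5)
The plan is to reduce the standardized, centered square loss to the unstandardized square-loss bound \eqref{eq:if-square-var-app} from Step~5 of Proposition~\ref{prop:fdim-var-cons-app}, together with the relative variance consistency of Lemma~\ref{lem:bn-app}. For a fixed $j$, write
\[
 \widehat Z_{ij}^{\rm c}-Z_{ij}^{\rm c}
 =\frac{(\widehat\phi_j(W_i)-\phi_j(W_i))-\Pn(\widehat\phi_j-\phi_j)}{\widehat\sigma_j}
 +\Bigl(\frac{\sigma_j}{\widehat\sigma_j}-1\Bigr)\frac{\phi_j(W_i)-\Pn\phi_j}{\sigma_j}.
\]
Using $(x+y)^2\le 2x^2+2y^2$, and the fact that centering only decreases the empirical second moment, gives
\[
 \Pn(\widehat Z_j^{\rm c}-Z_j^{\rm c})^2
 \le \frac{2\sigma_j^2}{\widehat\sigma_j^2}\cdot\frac{\Pn(\widehat\phi_j-\phi_j)^2}{\sigma_j^2}
 +2\Bigl(\frac{\sigma_j}{\widehat\sigma_j}-1\Bigr)^2\Pn Z_j^2 .
\]

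The three factors are then bounded separately. First, \eqref{eq:if-square-var-app} gives $\Pn(\widehat\phi_j-\phi_j)^2\le Cr_n^2\sigma_j^2$; its derivation is uniform over the polynomial-size family, since it uses only the relative direction bound \eqref{eq:relative-direction-app}, the entrywise rate, and the foldwise energy bound \eqref{eq:foldwise-energy-app}. Second, Lemma~\ref{lem:bn-app} gives $b_n\le Cr_n$, so on $b_n\le 1/2$ we have $\sigma_j/\widehat\sigma_j\le 2$ and $|\sigma_j/\widehat\sigma_j-1|\le 2b_n\le Cr_n$. Third, $\max_j\Pn Z_j^2\le 2$ follows from the Bernstein bound of Lemma~\ref{lem:Deltan-app} applied with $k=j$: we have $\Pstar Z_j^2=1$ and $\Delta_n=o(1)$. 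Combining these, each term is at most $Cr_n^2$, uniformly in $j$.

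None of these ingredients involves $C_A$. This is because the square loss depends only on the relative direction error $\Fnorm{\widehat H-H^\star}/\Fnorm{H^\star}$, which is $O(r_n/\alpha_0)$ by the transported resolvent argument. That argument works in operator norm on $\mathbb T+\widehat{\mathbb T}$ and never uses the $\infty\to\infty$ inverse bounds.

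The main obstacle is the uniformity of \eqref{eq:if-square-var-app} over $\mathcal J$, in particular the term $\Pn\ip{D_j}{X}^2$. I would handle it as follows. Conditional on the training folds, $D_j\in\mathbb T+\widehat{\mathbb T}$ is fixed, and the empirical energy bound \eqref{eq:foldwise-energy-app} holds simultaneously over the whole subspace. This gives a bound uniform in $j$ without any per-contrast union bound. The plug-in-score term is uniform because $|\widehat s-s^\star|\le C\infnorm\Delta\le Cr_n$ deterministically on $\mathcal E_n$, combined with the uniform bound $\Pn h_i^2\le C\sigma_j^2$. The final step is to intersect the finitely many events and relabel $a$, which yields probability at least $1-n^{-a}$.
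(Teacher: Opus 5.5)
Your proposal is correct and follows essentially the paper's route. You use the same split of $\widehat Z_{ij}-Z_{ij}$ into a $(\widehat\phi_j-\phi_j)/\widehat\sigma_j$ piece and a $(\sigma_j/\widehat\sigma_j-1)Z_{ij}$ piece, remove centering by the $L_2(\Pn)$ projection property, and use $b_n\le Cr_n$ and $\max_j\Pn Z_j^2\le 2$. The only difference is that you cite \eqref{eq:if-square-var-app} directly, whereas the paper re-derives it inline as the score plug-in term $T_{1j}$ and the direction term $T_{2j}$, using the subspace-uniform energy bound on $\mathbb T+\widehat{\mathbb T}$; this is exactly the uniformity argument you sketch.
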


\begin{proof}
The proof has four steps.
Since empirical centering is an orthogonal projection in
\(L_2(P_n)\),
\[
 \Pn\{(\widehat Z_j-Z_j)-\Pn(\widehat Z_j-Z_j)\}^2
 \le \Pn(\widehat Z_j-Z_j)^2.
\]
Thus it suffices to prove the uncentered upper bound used in the steps
below.

\textbf{Step 1: standardize away \(\widehat\sigma_j\).}
On the event \(b_n\le 1/2\) (Lemma~\ref{lem:bn-app}),
\[
    |\widehat Z_{ij}-Z_{ij}|
    =
    \Bigl|\frac{\widehat\phi_j(W_i)}{\widehat\sigma_j}-\frac{\phi_j(W_i)}{\sigma_j}\Bigr|
    \le
    \frac{2|\widehat\phi_j(W_i)-\phi_j(W_i)|}{\sigma_j}+2 b_n|Z_{ij}|.
\]
Squaring and averaging,
\[
    \Pn(\widehat Z_j-Z_j)^2
    \le
    \frac{8\Pn(\widehat\phi_j-\phi_j)^2}{\sigma_j^2}+8 b_n^2\Pn Z_j^2.
\]
By definition \(\Pn Z_j^2=1+(\Pn-\Pstar)Z_j^2\), and Bernstein with
envelope \(L_n^2\) and variance \(L_n^2\) gives
\((\Pn-\Pstar)Z_j^2\le L_n\sqrt{\log(pn)/n}+L_n^2\log(pn)/n=o(1)\); hence
\(\max_j\Pn Z_j^2\le 2\) on this event.  Combined with
\(b_n^2\le C r_n^2\) (Lemma~\ref{lem:bn-app}),
\[
    c_n^2\le \frac{8\,\max_j\Pn(\widehat\phi_j-\phi_j)^2/\sigma_j^2}{1}+16\,r_n^2.
\]
The second term is already at the \(r_n^2\) scale; the first term is
the main object to bound below.

\textbf{Step 2: decompose the EIF error.}
Write
\(\widehat\phi_j(W)-\phi_j(W)=\bigl(s_\eta(Y,\widehat\eta)-s_\eta(Y,\eta^\star)\bigr)\ip{\widehat H_j}{X}+s_\eta(Y,\eta^\star)\ip{\widehat H_j-H_j^\star}{X}\),
so
\(\Pn(\widehat\phi_j-\phi_j)^2/\sigma_j^2\le 2(T_{1j}+T_{2j})\) with
\[
    T_{1j}
    :=
    \frac{\Pn[(s_\eta(Y,\widehat\eta)-s_\eta(Y,\eta^\star))^2\ip{\widehat H_j}{X}^2]}{\sigma_j^2},
    \qquad
    T_{2j}
    :=
    \frac{\Pn[s_\eta(Y,\eta^\star)^2\ip{\widehat H_j-H_j^\star}{X}^2]}{\sigma_j^2}.
\]

\textbf{Step 3: score plug-in term \(T_{1j}\).}
By the Lipschitz bound on \(s_\eta\) in \(\eta\)
(Assumption~\ref{ass:score-app}(iii)),
\(|s_\eta(Y,\widehat\eta)-s_\eta(Y,\eta^\star)|\le|\widehat\eta-\eta^\star|\le 2\infnorm{\widehat\Theta-\Thetastar}\).
Hence
\[
    T_{1j}
    \le
    \frac{4\infnorm{\widehat\Theta-\Thetastar}^2\Pn\ip{\widehat H_j}{X}^2}{\sigma_j^2}.
\]
On the bounded-logit event,
\(I(\widehat\eta_i)\ge c_B'>0\).  The evaluation-fold restricted-energy
bound and the relative direction bound therefore give
\[
 \Pn\ip{\widehat H_j}{X}^2
 \le {C\over\dstar}\Fnorm{\widehat H_j}^2
 \le C\sigma_j^2 .
\]
By the entrywise theorem
\eqref{eq:init-entrywise-app},
\(\infnorm{\widehat\Theta-\Thetastar}^2\le C\bar d\,\mathrm{polylog}(n\bar d)/n=Cr_n^2\).
Combining,
\(\max_j T_{1j}\le C r_n^2\) on \(\mathcal E_n\).
\textbf{Step 4: Empirical Fluctuation $T_{2,j}$}
\begin{lemma}[Empirical square bound for estimated efficient directions]
\label{lem:no-ca-direction-square}
Let \(\mathcal J\) be a polynomial-size family of score-gap contrasts, and for each
\(j\in\mathcal J\) write
\[
    H_j^\star := \mathcal A^\dagger\Gamma_j,
    \qquad
    \widehat H_j := \widehat{\mathcal A}^{\dagger}\Gamma_j,
    \qquad
    D_j := \widehat H_j-H_j^\star .
\]
Let
\[
    \sigma_j^2
    :=
    P^\star\!\left[
        \{s_i^\star\}^2\langle H_j^\star,X_i\rangle^2
    \right],
    \qquad
    s_i^\star := s(Y_i,\eta_i^\star),
    \qquad
    \eta_i^\star := \langle X_i,\Theta^\star\rangle .
\]
Suppose that on the good event \(\mathcal E_n\), uniformly over \(j\in\mathcal J\),
\[
    \frac{\|D_j\|_F}{\|H_j^\star\|_F}
    \le r_n,
    \qquad
    r_n^2 :=
    \frac{\bar d\log^c(n\bar d)}{n},
\]
and that the tangent-projection envelope satisfies
\[
    \sup_X \|P_{\mathbb T}X\|_F
    \vee
    \sup_X \|\widehat P_{\mathbb T}X\|_F
    \le
    C\sqrt{\frac{\bar d}{\dstar}} .
\]
For the fixed exponent \(c\ge c_0\) in
Assumption~\ref{ass:sample-app}, with probability at
least \(1-n^{-a}\),
\[
    \max_{j\in\mathcal J}
    P_n
    \frac{
        \{s_i^\star\}^2
        \langle \widehat H_j-H_j^\star,X_i\rangle^2
    }{\sigma_j^2}
    \le
    C r_n^2 .
\]
In particular, this bound does not involve the inverse-information stability factor
\(C_A\).
\end{lemma}

\begin{proof}
The direction \(\widehat H_j\) is independent of the evaluation fold,
and both \(\widehat H_j\in\widehat{\mathbb T}\) and
\(H_j^\star\in\mathbb T\); hence
\(D_j\in\mathbb T+\widehat{\mathbb T}\).  Apply
Lemma~\ref{lem:foldwise-geometry-app} and the relative direction bound
to obtain
\begin{equation}\label{eq:empirical-direction-energy-app}
 \Pn\!\left[I(\widehat\eta)\ip{D_j}{X}^2\right]
 \le {C\over\dstar}\Fnorm{D_j}^2
 \le {Cr_n^2\over\dstar}\Fnorm{H_j^\star}^2
 \le Cr_n^2\sigma_j^2
 \qquad(j\in\mathcal J).
\end{equation}

The bounded-logit event gives
\(I(\widehat\eta_i)\ge c_B'>0\) and \(|s_i^\star|\le1\).
Consequently, \eqref{eq:empirical-direction-energy-app} yields
\[
 \max_{j\in\mathcal J}
 \Pn{(s_i^\star)^2\ip{D_j}{X_i}^2\over\sigma_j^2}
 \le {C\over c_B'}r_n^2.
\]
This bound is deterministic on the master event.
\end{proof}

\textbf{Conclusion.}
Combining steps 1--4,
\[
    c_n^2\le 8(T_{1j}+T_{2j})+16 r_n^2\le C r_n^2.
\]
The probability calibration to \(1-n^{-a}\) follows from
Appendix~\ref{app:notation-probcalib}. 
\end{proof}

\subsection{Aggregate CCK approximate-means error}\label{app:ranking-proof-aggregate}

Combining
Lemmas~\ref{lem:Ln-envelope-app}--\ref{lem:bn-app} and
Proposition~\ref{prop:eif-sqr-bound} with
the master decomposition~\eqref{eq:En-master-app}, we obtain the
aggregate CCK error.

\begin{theorem}[Aggregate CCK approximate-means error]\label{thm:cck-aggregate-app}
Under
Assumptions~\ref{ass:score-app}--\ref{ass:sample-app} and the CLT
condition \(C_A\sqrt{\bar d\log^c(n\bar d)/n}\to 0\) of
Theorem~\ref{thm:fdim-clt}, the aggregate CCK error satisfies
\[
    \mathfrak E_n
    \;=\;
    3\rho_n+2\{\pi(\vartheta_n)+\Pr(\Delta_n>\vartheta_n)\}
    +C\zeta_1\sqrt{1\vee\log(2p/\zeta_1)}+5\zeta_2
    \;=\;o(1),
\]
where \((\zeta_1,\zeta_2)\) is the deterministic map in
\eqref{eq:zeta-map-app} with \(\zeta_2=C_0n^{-a}\).
Consequently
\[
    \Pr(T\le c_{1-\alpha}^\ast)\;\ge\;1-\alpha-o(1),
\]
and the simultaneous score-gap bands
\(\widehat I_j:=[\widehat\Delta_j\pm c_{1-\alpha}^\ast\widehat\sigma_j/\sqrt n]\),
\(j\in\mathcal J\), satisfy
\(\Pr(\Delta_j\in\widehat I_j\;\forall j\in\mathcal J)\ge 1-\alpha-o(1)\).
\end{theorem}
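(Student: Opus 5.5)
The plan is to check the five constituents of $\mathfrak E_n$ in \eqref{eq:En-master-app} one at a time, substituting the envelopes from the previous subsections. After that, the coverage statement follows directly from \eqref{eq:cck-master-app} evaluated at the given $\alpha$.

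First, Lemma~\ref{lem:Ln-envelope-app} gives $L_n^2\le C\bar d$, and $p\le\bar d^{O(1)}$, so $L_n^2\log^7(pn)/n\lesssim \bar d\log^7(n\bar d)/n$. Assumption~\ref{ass:sample-app} with $c\ge7$ makes this vanish, which gives $\rho_n=o(1)$. Second, we choose $\vartheta=\vartheta_n$ as in Lemma~\ref{lem:Deltan-app}. That lemma gives $\Pr(\Delta_n>\vartheta_n)\le n^{-a}$ and $\pi(\vartheta_n)\lesssim(\bar d\log(pn)/n)^{1/6}\log^{2/3}p=o(1)$, again under the sample-size condition.

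Third, $\zeta_2=C_0n^{-a}$ is trivially $o(1)$, and $\log(2p/\zeta_2)\lesssim\log(n\bar d)$.

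The main step is $\zeta_1\sqrt{\log(2p/\zeta_1)}$. We substitute $\bar a_n=CC_Ar_n$ (Lemma~\ref{lem:an-app}), $\bar b_n=Cr_n$ (Lemma~\ref{lem:bn-app}) and $\bar c_n=Cr_n$ (Proposition~\ref{prop:eif-sqr-bound}). We also use $M_n\lesssim\sqrt{\log(n\bar d)}+\sqrt{\bar d}\log(n\bar d)/\sqrt n\lesssim\sqrt{\log(n\bar d)}$ and $\bar q_n\lesssim\sqrt{\log(n\bar d)/n}$. Together these give
\[
\zeta_1\lesssim (C_A+\sqrt{\log(n\bar d)})\,r_n .
\]
Multiplying by $\sqrt{\log(2p/\zeta_1)}$ adds at most a factor $\sqrt{\log(n\bar d)}$, because $\zeta_1\ge n^{-O(1)}$ once the constants are fixed (or $\zeta_1$ can be floored at $n^{-1}$, which only increases it). The result is bounded by $C_A\,r_n\log(n\bar d)+r_n\log(n\bar d)$.

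The obstacle is the bookkeeping of these extra logarithms. The CLT condition $C_A\sqrt{\bar d\log^c(n\bar d)/n}\to0$ is stated with a fixed exponent $c$, while $r_n$ already carries $\log^{c}$. I will handle this by defining $r_n$ in the envelopes with a slightly smaller exponent $c'<c$. This is possible because every constituent lemma holds with any polylog exponent $\ge c_0$, and $c_0\ge7$ leaves room. The leftover $\log(n\bar d)$ factor is then absorbed, so $C_A r_n\log(n\bar d)\to0$. The term without $C_A$ vanishes since $C_A\ge1$. If that absorption looks delicate, an alternative is to note that $c\ge c_0$ can be taken large enough in Assumption~\ref{ass:sample-app}, so that the stated condition dominates $r_n\log(n\bar d)$.

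Summing the five terms gives $\mathfrak E_n=o(1)$. For the conclusion, \eqref{eq:cck-master-app} yields $\Pr(T\le c^\ast_{1-\alpha})\ge1-\alpha-\mathfrak E_n$. By the definition of $T$ in \eqref{eq:T0-T-app}, the event $\{T\le c^\ast_{1-\alpha}\}$ is exactly $\{|\widehat\Delta_j-\Delta_j|\le c^\ast_{1-\alpha}\widehat\sigma_j/\sqrt n\ \forall j\}$, that is, $\{\Delta_j\in\widehat I_j\ \forall j\in\mathcal J\}$. The complement of the master event, of probability at most $n^{-a}$, is absorbed into the $o(1)$.
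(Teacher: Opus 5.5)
Your proposal is correct and follows essentially the paper's route: substitute the envelopes $\bar a_n,\bar b_n,\bar c_n,\bar q_n$ and $M_n$ into \eqref{eq:zeta-map-app}, take $\rho_n$ and $\pi(\vartheta_n)+\Pr(\Delta_n>\vartheta_n)$ from Lemmas~\ref{lem:Ln-envelope-app} and~\ref{lem:Deltan-app}, absorb the leftover logarithms into the exponent slack $c\ge c_0$, and read off coverage from \eqref{eq:cck-master-app} by inverting $\{T\le c^\ast_{1-\alpha}\}$; your explicit choice of a smaller exponent in $r_n$ is a more detailed version of the paper's remark that $c$ dominates these factors. One small correction: Assumption~\ref{ass:sample-app} alone only gives $\bar d\log^c(n\bar d)/n\le 1/C_0$, so at $c=7$ it does not make $\rho_n$ or $\pi(\vartheta_n)$ vanish, and you should instead cite the CLT condition together with $C_A\ge 1$ and $c\ge 7$, as Lemma~\ref{lem:Ln-envelope-app} does.
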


\begin{proof}
Substitute the bounds:
\[
    \rho_n=o(1)\text{ (Lemma~\ref{lem:Ln-envelope-app})},
    \quad
    \pi(\vartheta_n)+\Pr(\Delta_n>\vartheta_n)=o(1)\text{ (Lemma~\ref{lem:Deltan-app})},
\]
\[
    \bar a_n\vee\bar b_n\vee\bar c_n
    \le C C_A\sqrt{\bar d\,\mathrm{polylog}(n\bar d)/n}=:\delta_n=o(1),
\]
where we used \(C_A\ge1\).  Since \(p\le\bar d^{O(1)}\) and
\(\zeta_2=C_0n^{-a}\), \eqref{eq:zeta-map-app} gives
\[
 \zeta_1\le C\delta_n\sqrt{\log(2pn^a)}.
\]
The fixed exponent \(c\ge c_0\) in Assumption~\ref{ass:sample-app}
dominates these displayed logarithmic factors, so the CLT condition yields
\[
 \zeta_1\sqrt{1\vee\log(2p/\zeta_1)}=o(1),
 \qquad \zeta_2=o(1).
\]
Together with the first two displayed bounds, every term in the exact
master expression is \(o(1)\).  Thus \(\mathfrak E_n=o(1)\), and
the master CCK theorem
\eqref{eq:cck-master-app} delivers the coverage statement.  Inverting
\(T\le c_{1-\alpha}^\ast\) gives the simultaneous-band coverage.
\end{proof}

\subsection{Application 1: rank confidence band for one task / one model}\label{app:ranking-proof-app1}

Fix a task \(t\in[\dt]\) and a model \(m\in[\dm]\).  Take
\(\mathcal J_{t,m}:=\{(t,\ell):\ell\ne m\}\), so the gaps are
\(\Delta_{t,\ell}^{(m)}=\Thetastar_{t,\ell}-\Thetastar_{t,m}\) for
\(\ell\ne m\) and \(p=|\mathcal J_{t,m}|=\dm-1\).  Let
\(\widehat I_{t,\ell}^{(m)}=[\widehat L_{t,\ell}^{(m)},\widehat U_{t,\ell}^{(m)}]\)
be the simultaneous bands of
Theorem~\ref{thm:cck-aggregate-app}.  Define
\[
    A_t(m):=|\{\ell\ne m:\widehat L_{t,\ell}^{(m)}>0\}|,
    \qquad
    B_t(m):=|\{\ell\ne m:\widehat U_{t,\ell}^{(m)}<0\}|,
\]
and
\(\widehat{\mathcal R}_t(m):=[1+A_t(m),\,\dm-B_t(m)]\).

\begin{theorem}[Rank confidence band for one task; restatement of Theorem~\ref{thm:rank-one-task}]\label{thm:rank-one-task-app}
Under the conditions of Theorem~\ref{thm:cck-aggregate-app},
\[
    \Pr\bigl\{\rk_t(m)\in\widehat{\mathcal R}_t(m)\bigr\}\;\ge\;1-\alpha-o(1).
\]
\end{theorem}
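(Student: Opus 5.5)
The plan is to reduce the rank-band statement to the simultaneous score-gap coverage already established in Theorem~\ref{thm:cck-aggregate-app}. Apply that theorem with the family $\mathcal J_{t,m}$. Its cardinality $p=\dm-1$ is polynomial in $\bar d$, and every member is a score-gap atom $e_t(e_\ell-e_m)^\top$. Hence Assumptions~\ref{ass:bounded-Gamma-app} and~\ref{ass:alignment-app} hold uniformly, with $M=C_\psi=2$ and constant $\alpha_0$. This gives the coverage event
\[
\mathcal C:=\bigl\{\Delta_{t,\ell}^{(m)}\in[\widehat L_{t,\ell}^{(m)},\widehat U_{t,\ell}^{(m)}]\ \text{for all }\ell\ne m\bigr\},\qquad \Pr(\mathcal C)\ge 1-\alpha-o(1).
\]
It then suffices to show the deterministic inclusion $\mathcal C\subseteq\{\rk_t(m)\in\widehat{\mathcal R}_t(m)\}$.

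For the inclusion, work on $\mathcal C$ and use the sign representation
\[
\rk_t(m)=1+\sum_{\ell\ne m}\mathbf 1\{\Delta_{t,\ell}^{(m)}>0\}+\sum_{\ell\ne m}\mathbf 1\{\Delta_{t,\ell}^{(m)}=0,\ \ell\prec_0 m\}.
\]

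For the lower endpoint, take any $\ell$ with $\widehat L_{t,\ell}^{(m)}>0$. Coverage gives $\Delta_{t,\ell}^{(m)}\ge\widehat L_{t,\ell}^{(m)}>0$, so $\ell$ contributes to the first sum. Therefore $\rk_t(m)\ge 1+A_t(m)$.

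For the upper endpoint, take any $\ell$ with $\widehat U_{t,\ell}^{(m)}<0$. Coverage gives $\Delta_{t,\ell}^{(m)}<0$, so $\ell$ contributes to neither sum; the strict inequality also rules out the tie term. At most $\dm-1-B_t(m)$ indices can therefore contribute, which gives $\rk_t(m)\le \dm-B_t(m)$.

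The strictness of the two certification rules is what makes the tie-breaking convention harmless. A zero gap can never be certified in either direction, because its band contains $0$ on $\mathcal C$. The two counts are also disjoint on $\mathcal C$, since an index cannot satisfy both $\widehat L>0$ and $\widehat U<0$. So the interval is nonempty, as $1+A_t(m)\le\dm-B_t(m)$.

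The main work sits inside Theorem~\ref{thm:cck-aggregate-app}, which we take as given. What remains is to check that its hypotheses hold for this particular family. That theorem and the master decomposition were proved conditionally on $\mathcal E_n$, whose complement has probability at most $n^{-a}=o(1)$, and this enters additively. The bands in Algorithm~\ref{alg:multiplier-ranking} use the empirical bootstrap quantile over $B_{\rm mb}$ draws rather than the exact conditional quantile. I would either take $B_{\rm mb}\to\infty$ or note that the Monte Carlo error enters additively as $o(1)$. Combining these gives $\Pr\{\rk_t(m)\in\widehat{\mathcal R}_t(m)\}\ge\Pr(\mathcal C)\ge 1-\alpha-o(1)$.
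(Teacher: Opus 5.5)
Your proposal is correct and follows the paper's proof. Like the paper, you apply the simultaneous coverage of Theorem~\ref{thm:cck-aggregate-app} to $\mathcal J_{t,m}$ and then show deterministically that coverage forces $1+A_t(m)\le\rk_t(m)\le\dm-B_t(m)$. Your use of the strict certification rules to show that the tie-breaking term never interferes is a small, useful addition that the paper leaves implicit.
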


\begin{proof}
On the simultaneous coverage event of
Theorem~\ref{thm:cck-aggregate-app} applied to
\(\mathcal J_{t,m}\), every \(\ell\) with
\(\widehat L_{t,\ell}^{(m)}>0\) satisfies
\(\Delta_{t,\ell}^{(m)}>0\), i.e.\ \(\ell\) is certified above \(m\), so
\(\rk_t(m)\ge 1+A_t(m)\).  Symmetrically, every \(\ell\) with
\(\widehat U_{t,\ell}^{(m)}<0\) is certified below \(m\), so
\(\rk_t(m)\le\dm-B_t(m)\).  Combining, the simultaneous coverage event
implies \(\rk_t(m)\in\widehat{\mathcal R}_t(m)\).  The probability bound
follows.
\end{proof}

\subsection{Application 2: rank confidence band for one model, all tasks}\label{app:ranking-proof-app2}

For a fixed model \(m\), enlarge to
\(\mathcal J(m):=\{(t,\ell):t\in[\dt],\ell\ne m\}\), so
\(p=\dt(\dm-1)\le\bar d^2\).  This family is still polynomial in
\(\bar d\), so all conditions of
Appendix~\ref{app:ranking-proof-cck-master} are met with the same
scaling.

\begin{corollary}[Simultaneous taskwise rank inference; restatement of Corollary~\ref{cor:rank-all-tasks}]\label{cor:rank-all-tasks-app}
Under the conditions of
Theorem~\ref{thm:rank-one-task-app}, with the bootstrap maximum taken
over \(\mathcal J(m)\),
\[
    \Pr\Bigl\{\rk_t(m)\in\widehat{\mathcal R}_t(m)\;\forall t\in[\dt]\Bigr\}
    \;\ge\;
    1-\alpha-o(1).
\]
\end{corollary}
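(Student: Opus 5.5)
The plan is to rerun the argument of Theorem~\ref{thm:rank-one-task-app} with the single enlarged family \(\mathcal J(m)\) in place of \(\mathcal J_{t,m}\). The work splits into two parts: first show that Theorem~\ref{thm:cck-aggregate-app} still applies to \(\mathcal J(m)\), and then carry out the deterministic inversion separately for each task \(t\), all on the single coverage event produced by that theorem.

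\textbf{Step 1: the aggregate theorem holds for \(\mathcal J(m)\).} Here \(p=\dt(\dm-1)\le\bar d^2\). Every contrast \(\Gamma_{t,\ell}^{(m)}=e_t(e_\ell-e_m)^\top\) lies in \(\mathcal F_{\rm gap}\), so two structural conditions hold uniformly over the family: Assumption~\ref{ass:alignment-app} with constant \(\alpha_0\), and Assumption~\ref{ass:bounded-Gamma-app} with \(M=C_\psi=2\). I will check each ingredient of \(\mathfrak E_n\) in turn.
\begin{itemize}[leftmargin=1.6em,topsep=2pt,itemsep=2pt]
\item The envelope \(L_n^2\le C\bar d\) from Lemma~\ref{lem:Ln-envelope-app} does not depend on \(p\). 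The oracle term \(\rho_n\) and the covariance term \(\pi(\vartheta_n)\) (Lemma~\ref{lem:Deltan-app}) depend on \(p\) only through \(\log p\lesssim\log\bar d\).
\item Theorem~\ref{thm:uniform-remainder-app} allows any family of size at most \((\dt\dm)^{C_F}\), so it gives \(\bar a_n=CC_Ar_n\).
\item Lemma~\ref{lem:bn-app} and Proposition~\ref{prop:eif-sqr-bound} give \(\bar b_n,\bar c_n\le Cr_n\). Their union bounds cost one more \(\log p\) factor, which is absorbed into the polylog.
\end{itemize}
Because the fixed exponent \(c\ge c_0\) in Assumption~\ref{ass:sample-app} dominates these logarithms, \(\zeta_1\sqrt{\log(2p/\zeta_1)}=o(1)\). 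We therefore obtain \(\Pr(T\le c^\ast_{1-\alpha})\ge1-\alpha-o(1)\), with the maximum in both \(T\) and \(T^\ast\) taken over \(\mathcal J(m)\). Equivalently, on an event \(\mathcal C\) of probability at least \(1-\alpha-o(1)\), we have \(\Delta_{t,\ell}^{(m)}\in\widehat I_{t,\ell}^{(m)}\) simultaneously for all \((t,\ell)\in\mathcal J(m)\).

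\textbf{Step 2: deterministic inversion on \(\mathcal C\).} Fix any \(t\). If \(\widehat L_{t,\ell}^{(m)}>0\), then \(\Delta_{t,\ell}^{(m)}>0\), so \(\ell\) is strictly above \(m\) and is counted in \(\rk_t(m)\). If \(\widehat U_{t,\ell}^{(m)}<0\), then \(\Theta^\star_{t,\ell}<\Theta^\star_{t,m}\), so \(\ell\) is not counted, whatever the tie order \(\prec_0\) says. Counting gives
\[
1+A_t(m)\le \rk_t(m)\le \dm-B_t(m).
\]
This uses no probabilistic input beyond \(\mathcal C\), so it holds for every \(t\) at once. Hence \(\{\rk_t(m)\in\widehat{\mathcal R}_t(m)\ \forall t\}\supseteq\mathcal C\), which proves the displayed bound.

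\textbf{The three-way top-\(K\) claim.} On the same event, \(\dm-B_t(m)\le K\) implies \(\rk_t(m)\le K\), and \(1+A_t(m)>K\) implies \(\rk_t(m)>K\). So every certification is correct simultaneously across tasks.

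The only step needing care is Step 1: confirming that no constant in the lemmas depends on \(p\) other than through \(\log p\). In particular, the uniform envelope and the alignment constant must be uniform over the whole of \(\mathcal F_{\rm gap}\) rather than merely fixed per contrast. This holds by Lemma~\ref{lem:tangent-envelope-app} and Assumption~\ref{ass:alignment-app}.
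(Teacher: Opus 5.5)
Your proposal is correct and follows the paper's proof. You apply Theorem~\ref{thm:cck-aggregate-app} to the enlarged family \(\mathcal J(m)\), whose size \(p=\dt(\dm-1)\le\bar d^2\) enters only through logarithmic factors absorbed by the polylog, and then perform the deterministic inversion of Theorem~\ref{thm:rank-one-task-app} for every \(t\in[\dt]\) on the single simultaneous coverage event; your ingredient-by-ingredient check and your remark that strict sign certification makes the inversion insensitive to the tie order \(\prec_0\) spell out what the paper's short proof leaves implicit.
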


\begin{proof}
Apply Theorem~\ref{thm:cck-aggregate-app} to the larger family
\(\mathcal J(m)\); the inversion of
Theorem~\ref{thm:rank-one-task-app} now holds simultaneously for every
\(t\in[\dt]\) under the same simultaneous coverage event.  The price
relative to Application~1 is only the additional \(\sqrt{\log\bar d}\)
inflation in \(c_{1-\alpha}^\ast\), already absorbed in the polylog.
\end{proof}

\subsection{Application 3: simultaneous top-\texorpdfstring{$K$}{K} set inference}\label{app:topk-set}

For inference on the entire task-specific top-\(K\) set, enlarge to
\[
    \mathcal J_{\rm all}
    :=
    \{(t,m,\ell):t\in[\dt],\ m,\ell\in[\dm],\ \ell\ne m\},
    \qquad
    p=\dt\dm(\dm-1)\le\bar d^3.
\]
Each test in \(\mathcal J_{\rm all}\) corresponds to the studentized
score-gap statistic for the contrast
\(\Gamma_{t,m,\ell}=e_t(e_\ell-e_m)^\top\).  Define inner and outer
top-\(K\) sets by
\[
    \widehat{\TopK}_{\rm in}(t)
    :=
    \bigl\{m:|\{\ell\ne m:\widehat U_{t,m,\ell}<0\}|\ge\dm-K\bigr\},
\]
\[
    \widehat{\TopK}_{\rm out}(t)
    :=
    \bigl\{m:|\{\ell\ne m:\widehat L_{t,m,\ell}>0\}|<K\bigr\}.
\]

\begin{theorem}[Simultaneous top-\(K\) set inference]\label{thm:topk-set-app}
Under the conditions of
Theorem~\ref{thm:cck-aggregate-app} applied to \(\mathcal J_{\rm all}\),
\[
    \Pr\Bigl\{\widehat{\TopK}_{\rm in}(t)\subseteq\TopK(t)\subseteq\widehat{\TopK}_{\rm out}(t)\;\forall t\in[\dt]\Bigr\}
    \;\ge\;
    1-\alpha-o(1).
\]
\end{theorem}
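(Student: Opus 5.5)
We obtain the statement as a direct inversion of the simultaneous score-gap bands of Theorem~\ref{thm:cck-aggregate-app}, now applied to the enlarged family \(\mathcal J_{\rm all}\). Since \(p=\dt\dm(\dm-1)\le\bar d^3\) is still polynomial in \(\bar d\), every ingredient of Appendix~\ref{app:ranking-proof-cck-master} applies with the same rates. These ingredients are the envelope \(L_n^2\lesssim\bar d\) of Lemma~\ref{lem:Ln-envelope-app}, the covariance error of Lemma~\ref{lem:Deltan-app}, the one-step and standard-error transfers of Lemmas~\ref{lem:an-app}--\ref{lem:bn-app}, and the square-loss bound of Proposition~\ref{prop:eif-sqr-bound}. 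Every contrast \(e_t(e_\ell-e_m)^\top\) satisfies the uniform alignment and bounded-\(\ell_1\) assumptions, and \(\log p\) only changes the polylog. Hence the event
\[
\mathcal C:=\{\Delta_{t,m,\ell}\in[\widehat L_{t,m,\ell},\widehat U_{t,m,\ell}]\ \forall (t,m,\ell)\in\mathcal J_{\rm all}\}
\]
has probability at least \(1-\alpha-o(1)\). It therefore suffices to show deterministically that on \(\mathcal C\) both inclusions hold for every task \(t\).

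For the inner inclusion, we fix \(t\) and \(m\in\widehat{\TopK}_{\rm in}(t)\). At least \(\dm-K\) competitors \(\ell\) have \(\widehat U_{t,m,\ell}<0\). On \(\mathcal C\) this gives \(\Thetastar_{t,\ell}-\Thetastar_{t,m}<0\), so each such \(\ell\) is strictly below \(m\). Such \(\ell\) contribute neither to the strict-inequality count nor to the tie-breaking count in \(\rk_t(m)\). Consequently at most \((\dm-1)-(\dm-K)=K-1\) models can precede \(m\), which gives \(\rk_t(m)\le K\), that is, \(m\in\TopK(t)\).

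For the outer inclusion, we argue by contraposition. Suppose \(m\notin\widehat{\TopK}_{\rm out}(t)\), so at least \(K\) competitors have \(\widehat L_{t,m,\ell}>0\). On \(\mathcal C\), each such \(\ell\) satisfies \(\Thetastar_{t,\ell}>\Thetastar_{t,m}\) strictly, and so adds one to the first sum in \(\rk_t(m)\). Then \(\rk_t(m)\ge1+K>K\), so \(m\notin\TopK(t)\). Equivalently, \(\TopK(t)\subseteq\widehat{\TopK}_{\rm out}(t)\).

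The only delicate points are the following. First, the strict inequalities exclude exact ties, so the tie-breaking order \(\prec_0\) never interferes. Second, the coverage event must be the single event \(\mathcal C\) over \(\mathcal J_{\rm all}\), rather than a union of per-task events. The main potential obstacle is checking that Theorem~\ref{thm:cck-aggregate-app} indeed tolerates \(p\asymp\bar d^3\). I would verify this by tracking \(\log(2p/\zeta_2)\lesssim\log(n\bar d)\) through \eqref{eq:zeta-map-app} and the term \(L_n^2\log^7(pn)/n\), both of which are absorbed by Assumption~\ref{ass:sample-app} together with \(C_Ar_n\to0\).
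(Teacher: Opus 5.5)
Your proposal is correct and follows essentially the same route as the paper. Both arguments work on the single simultaneous coverage event for \(\mathcal J_{\rm all}\) from Theorem~\ref{thm:cck-aggregate-app}, and both invert the bands via \(1+A_t(m)\le\rk_t(m)\le\dm-B_t(m)\) to obtain the inner and outer inclusions; your remark that the strict band inequalities rule out ties (so \(\prec_0\) never interferes), and your check that \(p\le\bar d^3\) only affects the polylog, just spell out details the paper leaves implicit.
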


\begin{proof}
On the simultaneous coverage event for \(\mathcal J_{\rm all}\),
the preceding rank-band argument gives
\[
  1+A_t(m)\le \rk_t(m)\le \dm-B_t(m),
\]
where
\(A_t(m)=|\{\ell\ne m:\widehat L_{t,m,\ell}>0\}|\) and
\(B_t(m)=|\{\ell\ne m:\widehat U_{t,m,\ell}<0\}|\).
If \(m\in\widehat{\TopK}_{\rm in}(t)\), then
\(B_t(m)\ge\dm-K\), so \(\rk_t(m)\le\dm-B_t(m)\le K\), proving
\(m\in\TopK(t)\).  Conversely, if \(m\in\TopK(t)\), then
\(1+A_t(m)\le\rk_t(m)\le K\), so \(A_t(m)<K\) and therefore
\(m\in\widehat{\TopK}_{\rm out}(t)\).
The simultaneous validity over \(t\) follows from the simultaneous
coverage of \(\mathcal J_{\rm all}\) at level \(1-\alpha-o(1)\).
\end{proof}

\subsection{Critical-value vs.\ dimension discussion}\label{app:ranking-proof-cn}

The bootstrap critical value \(c_{1-\alpha}^\ast\) is the
\((1-\alpha)\)-quantile of a maximum of \(p\) correlated approximately
standard-normal coordinates.  In the worst case (weakly dependent or
independent),
\(c_{1-\alpha}^\ast\asymp\sqrt{2\log(2p/\alpha)}\); under strong
correlation, it can be substantially smaller.  Since
\(p\in\{\dm-1,\dt(\dm-1),\dt\dm(\dm-1)\}\le\bar d^3\),
\(\log p\lesssim\log\bar d\) and therefore
\(c_{1-\alpha}^\ast=O(\sqrt{\log\bar d})\).  Combining with
\(\Sigma_{jj}\asymp\bar d\) and
\(\widehat\sigma_j^2/\Sigma_{jj}=1+o_{\Pr}(1)\)
(Lemma~\ref{lem:Sigma-scale-app}), the worst-case simultaneous band
width is of order \(c_{1-\alpha}^\ast\widehat\sigma_j/\sqrt n\asymp\sqrt{\bar d\log\bar d/n}\),
matching the rate predicted by the entrywise estimation theorem
(Theorem~\ref{thm:entrywise}) up to logarithmic factors.

\end{document}